\pgfplotsset{width=8cm,compat=newest}
\def\colorful{1}
\crefname{claim}{claim}{claims}
\newcommand{\pparagraph}[1]{\bigskip \noindent {\bf {#1}}}
\begin{document}

\title{Differential privacy from axioms \vspace{10pt}}

\author{ 
Guy Blanc \vspace{6pt} \\ 
 {\sl Stanford University} \and 
William Pires \vspace{6pt} \\ 
 { {\sl Columbia University}} \and 
 Toniann Pitassi \vspace{6pt} \\
 { {\sl Columbia University}} \vspace{10pt}
}

\date{\small{\today}}

\maketitle
\begin{abstract}

Differential privacy (DP) is the de facto notion of privacy both in theory and in practice. However, despite its popularity, DP imposes strict requirements which guard against strong worst-case scenarios. For example, it guards against seemingly unrealistic scenarios where an attacker has full information about all but one point in the data set, and still nothing can be learned about the remaining point. While preventing such a strong attack is desirable, many works have explored whether average-case relaxations of DP are easier to satisfy \cite{HWR13,WLF16,BF16,LWX23}.

In this work, we are motivated by the question of whether   alternate, weaker  notions of privacy are possible: can a weakened privacy notion still guarantee some basic level of privacy, and on the other hand, achieve privacy more efficiently and/or for a substantially broader set of tasks? Our main result shows the answer is no: even in the statistical setting, any reasonable measure of privacy satisfying nontrivial composition is equivalent to DP. To prove this, we identify a core set of four axioms or desiderata: pre-processing invariance, prohibition of blatant non-privacy, strong composition, and linear scalability. Our main theorem shows that any privacy measure satisfying our axioms is equivalent to DP, up to polynomial factors in sample complexity. We complement this result by showing our axioms are minimal: removing any one of our axioms enables ill-behaved measures of privacy.

\end{abstract}
 \thispagestyle{empty}
 \newpage

 \setcounter{tocdepth}{2}
 \tableofcontents
 \thispagestyle{empty}
 \newpage

 \setcounter{page}{1}

\newcommand{\bTp}{\mathbf{T'}}
\newcommand{\calJ}{\mathcal{J}}
\newcommand{\calA}{\mathcal{A}}
\newcommand{\calM}{\mathcal{M}}
\newcommand{\calS}{\mathcal{S}}
\newcommand{\overlap}{\mathrm{overlap}}
\newcommand{\Unif}{\mathrm{Unif}}
\newcommand{\disjoint}{\mathrm{disjoint}}
\newcommand{\stabtv}{\mathrm{stab}_{\mathrm{tv}}}
\newcommand{\Ber}{\mathrm{Ber}}

\newcommand{\FindElement}{\textsc{FindElement}}
\newcommand{\FindLightElement}{\textsc{FindLightElement}}

\thispagestyle{empty}

\section{Introduction}
\label{sec:intro}
Differential privacy has won. It is the de facto formalization of privacy both in theory (see, e.g., the textbooks \cite{DR14book,Vad17book,NH21book}) and in practice (see, e.g., its use in the U.S. Census \cite{census} and by various technology companies \cite{AppleDP,Google23,DJY17}).

\begin{definition}[$(\eps,\delta)$-Differential Privacy, \cite{DMNS06,DKMMN06}]
    \label{def:approx-DP}
    A randomized algorithm $\mcM:X^n \to Y$ is $(\eps,\delta)$-DP if, for every $S, S' \in X^n$ differing in only one of the $n$ coordinates and $Y' \subseteq Y$,
    \begin{equation*}
        \Pr[\mcM(S) \in Y'] \leq e^{\eps} \cdot \Pr[\mcM(S') \in Y'] + \delta.
    \end{equation*}
\end{definition}
A large part of the reason that differential privacy (DP) has been so successful is the extensive toolkit of DP algorithms for a variety of basic primitives \cite{DR14book}. This toolkit can then be combined with \emph{strong composition}: The sequential combination of $k$-many of these primitives has a privacy loss ($\eps$ in \Cref{def:approx-DP}) that scales sublinearly in $k$ \cite{DRV10,KOV15}. This allows for efficient and simple construction of DP algorithms for a variety of tasks (see e.g. \cite{ACGMMTZ16} for how strong composition enables differentially private deep learning). This work is motivated by the following question.
\begin{question}
    \label{question:main}
    How inevitable was \Cref{def:approx-DP}? Is it possible to construct a materially different formulation of privacy that still satisfies strong composition?
\end{question}
A natural reason to suspect alternative definitions of privacy may be useful is that \Cref{def:approx-DP} guards against an incredibly strong, and in some cases unrealistic, attack. Even if the attacker is able to freely manipulate all but one point in the dataset, corresponding to the $n-1$ points $S$ and $S'$ agree on, they must still learn almost nothing about the one unknown point. In statistical settings, we model the entire dataset as being drawn from some unknown distribution $\bS \sim \mcD^n$, in which case the attacker is not nearly as strong as \Cref{def:approx-DP} suggests. That observation has motivated a number of  
relaxations of DP in which privacy must only be preserved on more ``typical" datasets \cite{HWR13,WLF16,BF16,LWX23}.

Our main result shows that we may as well use the worst-case definition of differential privacy.
\begin{quote}
     \textsl{Even in the statistical setting, any reasonable measure of privacy that satisfies strong composition is equivalent to \Cref{def:approx-DP} up to polynomial factors in the sample complexity.}
\end{quote}
To formalize this, we define the following four privacy axioms that we posit should be satisfied by any measure of privacy that is both  reasonable and useful.  
\footnote{By a privacy measure, we mean a scalar quantity ${\cal P}({\cal M})$ associated with an algorithm ${\cal M}$, and we say that ${\cal M}$ is ${\cal P}$-private if ${\cal P}({\cal M})$ is at most 1.} 
 
\begin{enumerate}
\item {\it Preprocessing:} Privacy is preserved under preprocessing.
Specifically, privacy should hold regardless of the ordering of the dataset, and regardless of the ordering of the domain.
 
\item {\it Prohibits blatant non-privacy:}
 
a private algorithm should not reveal almost all of the dataset. 
\item {\it Strong composition:} the privacy measure should grow sublinearly under composition. I.e., the composition of $\ell$-many  $\epsilon$-private  algorithms should be $O(\epsilon \ell^\delta)$-private, for some  $\delta<1$.
\item {\it Linear scalability:} the privacy measure should decrease linearly with the number of samples.
\end{enumerate}  
See Section \ref{sectionOurFramework} for a more detailed description of these axioms, and justification for why we view these axioms as both reasonable and usable.
 
With these axioms in place, our main results are captured by the following three theorems.
The first and most important theorem states that any algorithm satisfying our axioms is also differentially private:
\begin{theorem}[Our axioms imply differential privacy]
    \label{thm:axioms-imply-dp-intro}
    Let $\mcP$ be any privacy measure that satisfies \Cref{axiom:preprocessing,axiom:blalant,axiom:strong-composition,axiom:amplification} and $\mcM:X^n \to Y$ be any algorithm that is $\mcP$-private. For any $\eps, \delta> 0$ and $m \coloneqq \poly(n, 1/\eps, \log(1/\delta))$ there is an ($\eps$,$\delta$)-DP algorithm $\mcM':X^m \to Y$ that is equivalent (as in \Cref{def:equivalent}) to $\mcM$.
\end{theorem}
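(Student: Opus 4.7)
The plan is to argue by contrapositive: if there is no $(\eps,\delta)$-DP algorithm on $m=\poly(n,1/\eps,\log(1/\delta))$ samples equivalent to $\mcM$, then $\mcM$ cannot be $\mcP$-private. Concretely, take the natural candidate $\mcM'=\mcM$ itself; its failure to be $(\eps,\delta)$-DP yields a neighboring pair $S,S^\ast\in X^n$ and an event $Y'\subseteq Y$ with $\Pr[\mcM(S)\in Y']$ and $\Pr[\mcM(S^\ast)\in Y']$ separated by more than $e^{\eps}$ and $\delta$ together allow. This is precisely a \emph{one-record distinguisher}: given $\mcM$'s output on an otherwise-fixed dataset, we can nontrivially bias a guess for the value of a designated record.

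The heart of the proof is to leverage this weak primitive into a blatant reconstruction attack on a larger dataset $\mathbf{T}\in X^N$, and to show via the four axioms that this attack must itself be $\mcP$-private. For $\ell$ carefully chosen target coordinates I would build an algorithm $\mcA$ that, for each target, permutes $\mathbf{T}$ using \Cref{axiom:preprocessing} to place the target in the designated coordinate, feeds an $n$-sized sub-sample into $\mcM$, and outputs the tuple of $\ell$ results. \Cref{axiom:amplification} (linear scalability) guarantees that each invocation, run on a subsample of an $N$-record dataset, has $\mcP$-value only a factor $N/n$ worse than $\mcM$, and \Cref{axiom:strong-composition} then bounds the total $\mcP$-value of $\mcA$ by $O\!\bigl((N/n)\cdot \ell^{\,1-\gamma}\bigr)$ for some $\gamma>0$. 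Choosing $N=\poly(n,1/\eps,\log(1/\delta))$ and $\ell$ appropriately drives this total below $1$, making $\mcA$ $\mcP$-private. On the other hand, the $\ell$ independent distinguishers let an attacker --- by a standard Chernoff / majority-vote amplification --- recover almost all records of $\mathbf{T}$ with high probability, contradicting \Cref{axiom:blalant}.

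The main obstacle is the quantitative balancing in this reduction: linear scalability buys only linear per-query savings and strong composition only sublinear aggregate savings, so the parameters $N,\ell$ must be tuned so that simultaneously (i) $\mcA$'s $\mcP$-value stays below $1$ and (ii) $\ell$ is still large enough for reconstruction to beat the noise of a single distinguisher. This works precisely because $\gamma>0$ is \emph{strictly} positive --- it is the reason basic composition would not suffice --- and it is where the polynomial dependence of $m$ on $n,1/\eps,\log(1/\delta)$ emerges. A minor secondary issue is ensuring that the resulting $\mcM'$ is \emph{equivalent} in the sense of \Cref{def:equivalent} to $\mcM$; I expect this to be handled by defining $\mcM'$ on $X^m$ as $\mcM$ composed with a uniform sub-sample of size $n$, so that the output distributions match under the induced distribution over the restricted input.
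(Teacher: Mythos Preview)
Your outline has a structural gap at the very first step. The theorem asserts the \emph{existence} of an $(\eps,\delta)$-DP $\mcM'$ equivalent to $\mcM$; its contrapositive is that if \emph{no} equivalent algorithm on $\poly$-many samples is DP, then $\mcM$ is not $\mcP$-private. You instead take ``the natural candidate $\mcM'=\mcM$'' and extract a distinguisher from \emph{its} failure to be DP. That is not the contrapositive: $\mcM$ failing to be DP is entirely compatible with some other equivalent $\mcM'$ being DP (indeed, this is exactly what the theorem says must happen), so the hypothesis you extract is not available.

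Even setting that aside, the reduction you sketch does not go through. A DP violation hands you \emph{one} worst-case neighboring pair $(S,S^\ast)$ and an event $Y'$; the distinguisher only fires when the other $n-1$ records match the fixed template $S$. But \Cref{axiom:blalant} is an \emph{average-case} axiom: the adversary must reconstruct most of a dataset $\bS\sim\mcD^n$ drawn from a high-entropy distribution with $\|\mcD\|_\infty\le 1/(100n^2)$. Permuting coordinates and subsampling a random $\bT$ will essentially never land the non-target records on your template, so the one-record distinguisher gives you nothing on a random input. Your use of \Cref{axiom:amplification} is also inverted: that axiom says a $\mcP$-private $\mcM$ on $n$ samples admits an equivalent $\mcM'$ on $kn$ samples with $\mcP(\mcM')\le O(1/k)$; it does not say that running $\mcM$ on a subsample of a size-$N$ dataset has any particular $\mcP$-value, and it certainly does not let you freely compose preprocessed copies of $\mcM$ itself with a budget that scales like $N/n$.

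The paper avoids both problems by inserting an intermediate \emph{average-case} stability notion. It first uses \Cref{axiom:amplification} to amplify $\mcM$ to $\mcM^{\mathsf A}$ with $\mcP(\mcM^{\mathsf A})\le\tilde O(m^{-c})$, then argues (\Cref{lem:key-lemma-overview}, \Cref{claim:random-walk-overview}) that if the symmetrized $\widetilde{\mcM^{\mathsf A}}$ were \emph{not} TV-stable under a uniform distribution, then composing $\ell\approx m$ copies of $\mcM^{\mathsf A}\circ\sigma_i$ for well-chosen domain permutations $\sigma_i$ lets a hypothesis-selection adversary reconstruct a random $\bS\sim\Unif(X)^n$, violating \Cref{axiom:blalant}. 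The key technical content---the random-walk lemma showing that a single random domain permutation gives $\Omega(\rho\cdot\dist(S,S')/m)$ TV-distance between $\mcM\circ\bsigma(S)$ and $\mcM\circ\bsigma(S')$---is precisely what bridges the gap between ``distinguishing one fixed pair'' and ``distinguishing all pairs at once on a random input.'' Once TV-stability is in hand, the passage to $(\eps,\delta)$-DP is delegated to the replicability/perfect-generalization machinery of \cite{BGHILPSS23} (\Cref{lem:TV-to-DP}); that step, not a subsampling trick, is where the equivalent DP $\mcM'$ is actually constructed.
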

The exact polynomial in \Cref{thm:axioms-imply-dp-intro} depends on the constant $c$ in our our strong composition axiom (\Cref{axiom:strong-composition}). The best known constant for strong-composition is $c = 1/2$, in which case the sample-size in \Cref{thm:axioms-imply-dp-intro} would be $m \approx n^2$, provided the domain $X$ is not too small.\footnote{Our actual analysis case splits on the size of the domain, and gets a worse polynomial on very small domains.} We refer the reader to \Cref{sec:axioms-to-DP} for the full version of \Cref{thm:axioms-imply-dp-intro}.
We note that given the known equivalences between DP, replicability, and various notions of stability, 
\Cref{thm:axioms-imply-dp-intro} shows that these other notions are also implied by our axioms.

Second, we show that differential privacy satisfies our axioms. 
 
\begin{theorem}[Informal]\label{thm:DP-fits-axioms-intro}
Approximate differential privacy (\Cref{def:approx-DP}) satisfies \Cref{axiom:preprocessing-intro,axiom:blalant-intro,axiom:strong-composition-intro,axiom:amplification-intro}.

\end{theorem}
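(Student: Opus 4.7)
The plan is to verify each of the four axioms in turn for differential privacy, working with the natural privacy measure $\mcP_{\mathrm{DP}}(\mcM) \coloneqq \eps$ where $\mcM$ is $(\eps,\delta)$-DP with $\delta$ set to a small function of $n$ (e.g., $\delta = 1/n^{\omega(1)}$). Three of the four axioms correspond to classical results in the DP literature, so the bulk of the work is in carefully matching the formal statements of our axioms to the precise forms of those theorems.

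For preprocessing (Axiom 1), I would observe that any bijection $X^n \to X^n$ arising from a permutation of coordinates or a relabeling of the domain maps neighboring datasets to neighboring datasets, so composing such a bijection with an $(\eps,\delta)$-DP mechanism preserves DP with the same parameters. For prohibition of blatant non-privacy (Axiom 2), I would argue the contrapositive via a reconstruction attack: if $\mcM$ produces output from which most input coordinates can be recovered, then on two datasets $S, S'$ differing only in a reliably recovered coordinate, the reconstruction procedure gives a distinguisher whose total variation distance is close to $1$, violating $(\eps,\delta)$-DP for any reasonable parameters.

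For strong composition (Axiom 3), I would invoke the advanced composition theorem of Dwork--Rothblum--Vadhan, with the tight form due to Kairouz--Oh--Viswanath: the $\ell$-fold adaptive composition of $(\eps,\delta)$-DP mechanisms is $(O(\eps\sqrt{\ell\log(1/\delta')}), \ell\delta + \delta')$-DP, supplying the required sublinear scaling $O(\eps \ell^c)$ with constant $c = 1/2$. For linear scalability (Axiom 4), I would appeal to privacy amplification by subsampling: running an $(\eps,\delta)$-DP algorithm on a uniformly random subset of size $m$ drawn from a dataset of size $n$ yields an $(O((m/n)\eps), (m/n)\delta)$-DP mechanism on the larger dataset, so that increasing $n$ with the internal sample size fixed improves the effective privacy linearly.

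The main obstacle I anticipate is Axiom 2, since ``reveals almost all of the dataset'' is a semantic notion that must be converted into a quantitative distinguishing attack matching the precise formal statement in Section~\ref{sectionOurFramework}. The delicate point is handling adversaries that recover an arbitrary constant fraction of coordinates rather than prespecified ones; here I would likely combine the reconstruction argument above with group privacy to convert a many-coordinate reconstruction into a single-coordinate DP violation at the cost of a controlled multiplicative increase in $\eps$.
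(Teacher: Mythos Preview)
Your handling of Axioms~1 and~3 matches the paper's approach and is correct: preprocessing preserves the neighbor relation, and advanced composition (DRV) supplies strong composition with $c=1/2$.

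There is, however, a genuine gap in your treatment of Axiom~4 (linear scalability). You propose subsampling, but this does not suffice once $\eps$ and $\delta$ are collapsed into a single parameter as required by \Cref{def:privacy-measure}. The paper's collapse sets $\delta = \Theta(\eps^2/n^3)$ (so that being $\mcP_{\mathrm{DP}}$-private means being roughly $(0.1, O(1/n^3))$-DP). Subsampling by a factor $k$ turns an $(\eps,\delta)$-DP algorithm on $n$ samples into an $(\eps/k,\delta/k)$-DP algorithm on $m=kn$ samples; but now the target is $\delta' \le (\eps')^2/m^3$, which is smaller by a factor polynomial in $k$, and $\delta/k$ does not reach it. The paper addresses this explicitly in \Cref{subsec:overview-DP-fits-axioms}: ``we cannot simply use subsampling \ldots\ as subsampling's effect on $\delta$ is too mild,'' and instead invokes a (computationally inefficient) DP-to-DP amplification based on \cite{BGHILPSS23} (their \Cref{lem:DP-to-DP}), which routes through replicability and DP selection to drive $\delta$ down arbitrarily at polynomial sample cost. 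Your suggestion of $\delta = 1/n^{\omega(1)}$ does not sidestep this: any choice of $\delta$ that depends on the sample size faces the same obstruction, since subsampling only divides $\delta$ by $k$ while the target at sample size $kn$ shrinks much faster.

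For Axiom~2, your sketch takes a different route from the paper. The paper does not use group privacy or a distinguisher on a specific coordinate; instead it gives a direct expectation bound: for each $i$, it compares $\Pr[\bS_i \in \mcA(\mcM(\bS))]$ to $\Pr[\bS_i \in \mcA(\mcM(\bS_{\bx\to i}))]$ via the DP guarantee with $\bx$ a fresh draw from $\mcD$, uses symmetry to rewrite the latter as $\Pr[\bx \in \mcA(\mcM(\bS))]$, and then bounds this by $n\cdot\|\mcD\|_\infty \le 1/(100n)$ since $\mcD$ is high-entropy. Summing over $i$ gives the expected overlap at most $e^\eps/100 + \delta n \ll 0.9n$. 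Your contrapositive-via-reconstruction idea is plausible but would need care: the blatant-non-privacy definition counts \emph{set} overlap $\sum_{x\in\bS}\Ind[x\in\bS']$, not coordinate-wise recovery, so identifying a single ``reliably recovered coordinate'' from a high-average-overlap guarantee is not immediate.
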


Lastly, we show that removal of any one of our axioms would allow for measures of privacy that do not intuitively align with any reasonable notion of privacy. To keep this overview succinct, we defer an in-depth discussion of what those nonsensical privacy measures are to \Cref{sec:axioms-min} (with a briefer overview in \Cref{subsec:overview-axioms-necessary}). We do have the following simple implication of those results.
\begin{theorem}[Minimality of our axioms]
    \label{thm:minimal-intro}
    \Cref{thm:axioms-imply-dp-intro} does not hold if $\mcP$ is allowed to not satisfy any one of \Cref{axiom:preprocessing-intro,axiom:blalant-intro,axiom:strong-composition-intro,axiom:amplification-intro}.
\end{theorem}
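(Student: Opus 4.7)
The plan is to produce, for each of the four axioms, a privacy measure $\mathcal{P}$ together with an algorithm $\mathcal{M}:X^n\to Y$ such that (i)~$\mathcal{P}$ satisfies the other three axioms, (ii)~$\mathcal{M}$ is $\mathcal{P}$-private, and (iii)~$\mathcal{M}$ is not equivalent (in the sense of \Cref{def:equivalent}) to any $(\varepsilon,\delta)$-DP algorithm with sample size $\mathrm{poly}(n,1/\varepsilon,\log(1/\delta))$. The existence of even a single such $(\mathcal{P},\mathcal{M})$ per axiom directly negates the conclusion of \Cref{thm:axioms-imply-dp-intro}.

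Three of the four counterexamples I expect to be essentially routine. For \emph{prohibits blatant non-privacy}, the ``all algorithms are private'' measure $\mathcal{P}(\mathcal{M})\equiv 0$ trivially satisfies preprocessing, strong composition, and linear scalability, and witnesses the identity map $\mathcal{M}(S)=S$, which is manifestly not DP-equivalent at any polynomial sample size. For \emph{preprocessing}, I would define $\mathcal{P}$ to be ordinary $(\varepsilon,\delta)$-DP evaluated after applying a fixed coordinate-reordering (say, the measure that only looks at the last coordinate), so that the algorithm which outputs the first coordinate is deemed private; this inherits strong composition, linear scalability, and non-triviality from DP itself, while clearly breaking ordering invariance. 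For \emph{linear scalability}, I would take the measure that ignores sample size, e.g.\ $\mathcal{P}(\mathcal{M}) = \mathcal{P}_{\mathrm{DP}}(\mathcal{M}')$ where $\mathcal{M}'$ is the restriction of $\mathcal{M}$ to its first input only; then an algorithm that outputs $S_2,\dots,S_n$ verbatim is ``private'' yet blatantly non-DP on any larger sample. In each case I must check that the remaining three axioms still go through, but these checks reduce to standard DP facts applied to a modified but still well-behaved primitive.

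The main obstacle is the \emph{strong composition} counterexample: I need a privacy notion that is preprocessing-invariant, rules out blatant non-privacy, scales properly with $n$, and yet composes only linearly (so that repeated applications are allowed to degrade privacy faster than $k^c$ for any $c<1$). A natural candidate is an information-theoretic measure such as $\mathcal{P}(\mathcal{M}) = \max_i I(X_i;\mathcal{M}(\bS))/\varepsilon_0$ under a fixed data distribution, or a total-variation stability notion with linear composition; both are known to be strictly weaker than DP and to admit algorithms (e.g.\ certain noisy histogram or random-response variants) that leak enough information to defeat any polynomial-sample DP simulator. The difficulty is twofold: first, verifying that the candidate measure genuinely fails to upgrade to DP via any black-box reduction of polynomial sample blowup (this likely requires an explicit distinguishing attack on the candidate $\mathcal{M}$, showing some reconstruction-style lower bound against DP), and second, confirming the other three axioms---in particular, that prohibiting blatant non-privacy is preserved once strong composition is dropped, since many natural weakenings of DP collapse to triviality. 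I would handle the first by exhibiting a target function the $\mathcal{P}$-private $\mathcal{M}$ computes but that no DP algorithm with $\mathrm{poly}(n)$ samples can, appealing to known DP sample-complexity lower bounds (e.g.\ for selection or agnostic learning of large concept classes), and the second by arguing directly from the definition of the chosen weak measure.
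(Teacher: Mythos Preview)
Your plan is right in structure, and your counterexample for \Cref{axiom:blalant} (the all-private measure $\mathcal{P}\equiv 0$) matches the paper. But the other three counterexamples share a common flaw: as you have stated them, they fail to satisfy \Cref{axiom:blalant} (prohibition of blatant non-privacy), which they must since that axiom is among the three remaining ones in each case.

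Concretely: your preprocessing counterexample (``DP with respect to the last coordinate only'') deems the algorithm $\mathcal{M}(S)=(S_1,\ldots,S_{n-1})$ perfectly private, yet that algorithm is blatantly non-private in the sense of \Cref{def:blatant}. Your linear-scalability counterexample (DP of the restriction to the first input) similarly assigns privacy zero to $\mathcal{M}(S)=(S_2,\ldots,S_n)$, which is again blatantly non-private---and this measure is not invariant under reordering, so it also fails \Cref{axiom:preprocessing}. In both cases the \emph{measure} must satisfy the other three axioms, not just the particular witness algorithm you name.

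The paper's fix is a unified combinatorial idea you are missing: the \emph{junta privacy measure} $\mathcal{P}_{\mathrm{junta}}(\mathcal{M}) = 2k/n$, where $k$ is the minimum junta size of $\mathcal{M}$. This is manifestly preprocessing-invariant (junta size is unaffected by permuting coordinates or remapping the domain), it prohibits blatant non-privacy (a $\mathcal{P}_{\mathrm{junta}}$-private algorithm depends on at most $n/2$ coordinates, so no adversary can recover $0.9n$ of the points), and it scales linearly by padding with ignored samples. It composes only linearly (the composition of $\ell$ many $k$-juntas is an $\ell k$-junta), so it serves as the \Cref{axiom:strong-composition} counterexample---much simpler than your mutual-information or TV-stability proposals, with no difficulty verifying the remaining axioms. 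Its square root $\mathcal{P}_{\sqrt{\mathrm{junta}}}(\mathcal{M})=\sqrt{2k/n}$ then serves as the \Cref{axiom:amplification} counterexample: rescaling converts linear composition to $c=1/2$ strong composition, at the cost of only square-root (not linear) scaling. For \Cref{axiom:preprocessing} the paper uses a binary measure deeming private exactly those algorithms depending only on the first \emph{half} of the input; the ``half'' threshold (rather than ``all but one coordinate'') is precisely what makes \Cref{axiom:blalant} go through.

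Finally, rather than appealing to unspecified DP lower bounds, the paper shows that each counterexample allows solving a concrete task (\textsc{FindElement}: output any $x$ in the support of $\mathcal{D}$) with $O(1)$ samples, and proves directly that any $(1,1/(10n))$-DP algorithm for this task needs $\Omega(\sqrt{|X|})$ samples. Since $|X|$ can be arbitrarily large relative to $n$, this rules out any polynomial-sample DP simulator.
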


\medskip

\noindent{\bf Organization of Paper.} 
In \Cref{sectionOurFramework}, we present our framework and our axiomatic formulation of privacy;
in \Cref{sec:technical-overview}, we give high level overviews of the proofs of our main theorems, and in \Cref{sec:discussion-and-related-work} we discuss related work and several open problems raised by our framework.
After a brief preliminaries (\Cref{sec:prelims}), the remaining sections (\Cref{sec:axioms-to-DP,sec:DP-fits-axioms,sec:axioms-min}) give formal proofs of \Cref{thm:axioms-imply-dp-intro,thm:DP-fits-axioms-intro,thm:minimal-intro} respectively.

\section{Our Framework}
\label{sectionOurFramework}
All of our equivalences will hold with respect to algorithms that solve statistical tasks.

\begin{definition}[Statistical task, \cite{F17,BGHILPSS23}]
    \label{def:statistical-task}
    A statistical task is defined by a set of distributions $\mrD$ over data domain $X$, an output space $Y$ and a mapping $\mcT$ from distributions $\mcD \in \mrD$ to valid responses $\mcT(\mcD) \subseteq Y$. An algorithm $\mcM:X^n \to Y$ solves $\mcT$ with failure probability $\beta$ if, for all $\mcD \in \mrD$,
    \begin{equation*}
        \Prx_{\bS \sim \mcD^n}[\mcM(\bS) \in \mcT(\mcD)] \geq 1-\beta.
    \end{equation*}
\end{definition}  
Statistical tasks capture essentially any setting where the algorithm is learning from i.i.d. data. We note that in many such tasks, there is an error parameter $\eps$. This parameter is implicit in \Cref{def:statistical-task} as we can define $\mcT(\mcD)$ to only consist of outputs that are ``$\eps$-good." For example, if we aim to capture realizable PAC learning of a concept class $\mcC$ to error $1-\eps$, then $\mrD$ would consist of all distributions over labeled pairs $(\bx, \by)$ where $\by = f(\bx)$ for some single $f \in \mcC$ with probability $1$. The valid responses $\mcT(\mcD)$ would be any hypothesis $h$ satisfying $\Pr_{\bx, \by \sim \mcD}[h(\bx) = \by] \geq 1-\eps$. Our notion of equivalence will be agnostic to the particular statistical task an algorithm wishes to solve, and hence, automatically applies to all goals and error parameters.
\begin{definition}[Equivalent algorithm]
    \label{def:equivalent}
    We say an algorithm $\mcM':X^m \to Y$ is $(\beta,\beta')$-equivalent to $
    \mcM:X^n \to Y$ if, any statistical task that $\mcM$ solves with failure probability $\beta$, $\mcM'$ solves with failure probability $\beta'$.  
\end{definition}

\subsection{Privacy measures and our axioms}

To formalize \Cref{thm:axioms-imply-dp-intro,thm:DP-fits-axioms-intro,thm:minimal-intro} we define a series of axioms that any reasonable and useful \emph{privacy measure} should satisfy.  
\begin{definition}[Privacy measure]
    \label{def:privacy-measure}
    A \emph{privacy measure} is a mapping $\mcP$ from (possibly randomized) algorithms $\mcM:X^n \to Y$ to their level of privacy, parametrized as a number on $\R_{\geq 0}$. We adopt the convention that a lower values for $\mcP(\mcM)$ indicate that $\mcM$ is more private. It will often be useful to succinctly say that $\mcM$ is $\mcP$-private if $\mcP(\mcM) \leq 1$.
\end{definition}
We remark upon a few basic properties about \Cref{def:privacy-measure}. First, as is typical of previous definitions of privacy, a single privacy measure $\mcP$ must provide privacy levels for algorithms taking in samples of all sizes $n \in \N$. Later, our scaling axiom (\Cref{axiom:scaling}) will enforce some amount of consistency between how $\mcP$ behaves on different sample sizes.

Second, \Cref{def:privacy-measure} is a single parameter definition of privacy, in contrast to the two-parameters of DP (\Cref{def:approx-DP}). This single parameter was a deliberate choice. A guiding philosophy in the development of our axioms was to not directly enforce specific meaning to the privacy value $\mcP(\mcM)$, as we did not want our axioms to be biased by the meaning of $\eps$ and $\delta$ in DP. If we had a two (or more) parameter definition of privacy, we would need our axioms to somehow encode the distinction between those parameters, contradicting that guiding philosophy.

Furthermore, despite DP having two parameters, they are not of equal importance. Typical applications of DP simply set $\delta$ small enough to ignore and focus on $\eps$. Indeed, following the intuition that one only needs $\delta$ ``small enough," we show in \Cref{subsec:overview-DP-fits-axioms} how to collapse \Cref{def:approx-DP} into a single parameter in a way that respects all of our axioms.

\subsubsection{Axioms any reasonable definition of privacy should satisfy}
We now proceed to define our axioms, beginning with those that any ``reasonable" definition of privacy should satisfy. The first axioms encodes some basic operations that should maintain privacy.
\begin{axiom}[Preprocessing maintains privacy]
    \label{axiom:preprocessing}
    \label{axiom:preprocessing-intro}
    \label{axiom:preprocessing-formal}
    We say a privacy measure $\mcP$ satisfies the \emph{preprocessing axiom} if the following is true.
    \begin{enumerate}
        \item \textbf{Reordering the input maintains privacy:} For any algorithm $\mcM:X^n \to Y$ and permutation $\pi:[n] \to [n]$, defining
        \begin{equation*}
            \mcM \circ \pi(S) \coloneqq \mcM(S_{\pi(1)}, \ldots ,S_{\pi(n)}),
        \end{equation*}
        we have that $\mcP(\mcM \circ \pi) \leq \mcP(\mcM)$.
        \item \textbf{Remapping the domain maintains privacy:} For any mapping $\sigma:X \to X$ and algorithm $\mcM:X^n \to  Y$, defining
        \begin{equation*}
            \mcM \circ \sigma(S) \coloneqq \mcM(\sigma(S_{1}), \ldots ,\sigma(S_{n})),
        \end{equation*}
        we have that $\mcP(\mcM\circ \sigma) \leq \mcP(\mcM)$.
    \end{enumerate}
\end{axiom}

The first criteria, that reordering the input maintains privacy, says that under $\mcP$ it is equally bad to leak information about the $i^{\text{th}}$ point and $j^{\text{th}}$ point for any $i,j \in [n]$. The second criteria similarly says that it is equally bad to leak information about users $x$ and $x'$ for any $x,x' \in X$.

While both of these criteria are intuitively reasonable, we also provide more formal justification for their inclusion as axioms. In \Cref{sec:axioms-min}, we will show that removing any one of our four axioms would allow for ill-behaved privacy measures, illustrating why these axioms are necessary (see \Cref{subsec:overview-axioms-necessary} for a briefer overview). Since \Cref{axiom:preprocessing-intro} has two criteria, we will furthermore show that removing either of them would similarly result in ill-behaved privacy measures, helping to justify why both are necessary.

Our second axiom requires private mechanisms to not reveal (essentially) the entire dataset. This is the only axiom that directly enforces that $\mcP$ measures some notion of privacy.

\begin{definition}[Blatantly non-private]
    \label{def:blatant}
    A mechanism $\mcM:X^n \to Y$ is \emph{blatantly non-private} if there is a ``high-entropy" distribution $\mcD$ (formally $\mcD(x) \leq 1/(100n^2)$ for all $x \in X$) 
    and adversary $A$ mapping mechanism outputs $y \in Y$ to datasets $S' \in X^n$ for which\footnote{This constant of $0.9$ could be replaced with any $c < 1$.}
    \begin{equation*}
        \Ex_{\substack{\bS \iid \mcD^n\\ \bS' \leftarrow A(\mcM(\bS))}}\bracket*{\,\sum_{x \in \bS}\Ind[x \in \bS']} \geq 0.9n.
    \end{equation*}
\end{definition}
The ``high-entropy" requirement of \Cref{def:blatant} is designed to ensure the adversary's task is not too easy. In particular, it means that if the adversary were not able to see the $\mcM$'s output, it would not even be able to guess a single point in $\bS$. This stands in sharp contrast to the adversary being able to guess nearly all of $\bS$ upon seeing $\mcM$'s output.

\begin{axiom}[Prohibits blatant non-privacy]
    \label{axiom:blalant}
    \label{axiom:blalant-intro}
    \label{axiom:blalant-formal}
    We say a privacy measure $\mcP$ satisfies the \emph{prohibits blatant non-privacy} axiom if any $\mcP$-private algorithm is not blatantly non-private.
\end{axiom}

\subsubsection{Strong-composition axioms}

While the first two axioms were meant to be minimal requirements of any privacy definition to capture some reasonable notion of privacy, our next two axioms together formalize the notion of \emph{strong composition}. As discussed earlier, the fact that the privacy costs of \Cref{def:approx-DP} scale sublinearly with composition is crucial to the widespread adoption of differential privacy. Our next axiom encodes that the composition of $\ell$ many algorithms each of which have privacy level $\eps$ results in an algorithm with privacy level $\eps' \coloneqq \eps \cdot \ell^c$. We will state this in the minimal form we need: In particular, we only need that the composed algorithm is $\mcP$-private whenever $\eps' \leq 1$.

\begin{axiom}[Strong composition]
    \label{axiom:strong-composition}
    \label{axiom:strong-composition-intro}
    \label{axiom:strong-composition-formal}
    For $c < 1$, we say a privacy measure $\mcP$ satisfies $c$-\emph{strong composition} if for any algorithms $\mcM^1,\ldots, \mcM^\ell:X^n \to Y$ all satisfying $\mcP(\mcM^i) \leq \eps$ and
    \begin{equation*}
        \eps' \coloneqq \tilde{O}(\eps \cdot \ell^c) = O(\eps \cdot \ell^c \cdot \polylog(n)),
    \end{equation*}
    if $\eps' \leq 1$, then the composed algorithm $\mcM':X^n\to Y^{\ell}$ that takes in a sample $S \in X^n$ and outputs the $\ell$ responses $(\mcM^1(S), \ldots, \mcM^\ell(S))$ is $\mcP$-private.

\end{axiom}

Interestingly, we are able to define \Cref{axiom:strong-composition} to be qualitatively weaker than the strong composition DP satisfies. DP satisfies \emph{adaptive} strong composition, where the choice of $\mcM_i$ may depend adaptively on the outputs of $\mcM_1, \ldots, \mcM_{i-1}$. In contrast, \Cref{axiom:strong-composition} only requires strong composition to hold when $\mcM_1,\ldots, \mcM_\ell$ are fixed in advance. Yet, we are still able to show that our axioms imply DP. This shows, in some sense, that non-adaptive strong composition is enough to derive adaptive strong composition.

\Cref{axiom:strong-composition} on its own is not enough to enforce any reasonable notion of strong composition because it does not enforce any notion of scaling. For example, suppose we had some privacy measure $\mcP$ that only satisfied linear composition\footnote{As in the case for the average-case variants of DP defined in \cite{HWR13,WLF16,LWX23}} (\Cref{axiom:strong-composition} with $c = 1$). Then, we could simply define a new privacy measure $\mcP'$ as $\mcP'(\mcM) \coloneqq \sqrt{\mcP(\mcM)}$. This new measure would satisfy \Cref{axiom:strong-composition} with $c = 1/2$. Our last axiom rectifies this.

\begin{axiom}[Linear scalability]
 
    \label{axiom:amplification}
    \label{axiom:amplification-intro}
    \label{axiom:amplification-formal}
    \label{axiom:scaling}
    We say a privacy measure $\mcP$ satisfies \emph{linear scaling}, if for some polynomial $p:\R^2 \to \R$, any $\mcP$-private algorithm $\mcM:X^n \to Y$, any failure probability  $\beta >0 $, and any large enough $k \geq p(n,1/\beta)$, there exists a $(\beta, \beta' \coloneqq O(\beta))$-equivalent algorithm $\mcM'$ taking in $m \coloneqq kn$ samples that satisfies $\mcP(\mcM') \leq O(1/k)$.
\end{axiom}
Roughly speaking, linear scalability says that the privacy level can be improved by a factor of $1/k$ by increasing the sample size by a factor of $k$. For example, one common way to amplify privacy is \emph{subsampling}, meaning $\mcM'$ is the randomized algorithm which runs $\mcM$ on a uniform size-$n$ subsample of its size-$m$ input dataset. Indeed, for \Cref{def:approx-DP}, subsampling an $(\eps,\delta)$-DP algorithm by a factor of $k$ leads to an $(\eps/k, \delta/k)$-DP algorithm, though we will need a slightly more complicated amplification algorithm after we collapse $\eps$ and $\delta$ to a single parameter (see \Cref{lem:DP-to-DP}).

\Cref{axiom:strong-composition-intro,axiom:amplification-intro} are best viewed as together enforcing the following notion of strong composition. If the goal is to do a sequence of $\ell$ operations that each require a sample of size $n$ to perform privately, then only need a single sample size of $n \cdot \ell^{1-\Omega(1)}$. {That is, we require some non-trivial improvement over a strategy that, for example, uses $n$ separate samples for each of the $\ell$ operations.} We prefer this definition of strong composition in terms of the sample size required for $\ell$ many operations over explicit definitions that enforce a particular meaning to the value of $\mcP(\mcM)$ in lieu of \Cref{axiom:amplification-intro}.

\section{Technical Overview}
\label{sec:technical-overview}
\subsection{Overview of \Cref{thm:axioms-imply-dp-intro}: Our axioms imply DP}

Given any privacy measure $\mcP$ satisfying our axioms and $\mcP$-private algorithm $\mcM$, we wish to construct an equivalent $\mcM'$ that is $(\eps, \delta)$-DP. To do so, we use the following intermediate notion of stability.
 
\begin{definition}[TV-Stability, also called TV-indistinguishability by \cite{KKMV23})]
    \label{def:TV-stability}
    The \emph{TV-stability} of an algorithm $\mcM:X^n \to Y$ under distribution $\mcD$ is defined as
    \begin{equation*}
        \stabtv(\mcM, \mcD) \coloneq \Ex_{\bS, \bS' \iid \mcD^n}\bracket*{\dtv(\mcM(\bS), \mcM(\bS'))}.
    \end{equation*}
    We simply say $\mcM$ is \emph{$\rho$-TV-stable} if $\stabtv(\mcM, \mcD) \leq \rho$ for all distributions $\mcD$ over $X$.
\end{definition}
This definition is useful because (slight modifications) of the results of \cite{BGHILPSS23} allow us to convert any TV-stable algorithm into an equivalent DP algorithm (see \Cref{lem:TV-to-DP} for a formal statement of that conversion). Most of our effort goes into converting a $\mcP$-private algorithm into a TV-stable algorithm.

 \begin{theorem}[Our privacy axioms imply TV-stability]
      \label{thm:axioms-to-tv-overview}
    Let $\mcP$ be any privacy measure that satisfies \Cref{axiom:preprocessing,axiom:blalant,axiom:strong-composition,axiom:amplification} and $\mcM:X^n \to Y$ be any algorithm that is $\mcP$-private (that is, $\mcP ({\cal M}) \leq 1$.)
    For any constant $\rho > 0$ and $m \coloneqq \poly_{\rho}(n)$, there is a TV-stable algorithm $\mcM':X^m \to Y$ that is equivalent to $\mcM$.
 \end{theorem}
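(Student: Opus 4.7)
\medskip
\noindent\textbf{Proof plan.}

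The plan is to build the TV-stable equivalent of $\mcM$ by first amplifying via linear scalability and then showing TV-stability is forced by the remaining axioms. Applying \Cref{axiom:amplification} to $\mcM$ with a scaling factor $k = p(n,1/\rho)$ for a sufficiently large polynomial $p$, we obtain an equivalent $\mcM^{\star}:X^{kn}\to Y$ with $\mcP(\mcM^{\star})\leq O(1/k)$. The task reduces to showing that $\mcM^{\star}$ itself is $\rho$-TV-stable, after which we set $\mcM' \coloneqq \mcM^{\star}$ and $m = kn$.

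Assume for contradiction that $\mcM^{\star}$ is not $\rho$-TV-stable, witnessed by a distribution $\mcD$ with $\Ex_{\bS,\bS'\iid \mcD^{kn}}[\dtv(\mcM^{\star}(\bS),\mcM^{\star}(\bS'))]>\rho$. I would first arrange that $\mcD$ is \emph{high-entropy} in the sense of \Cref{def:blatant} (i.e.\ $\mcD(x)\leq 1/(100(kn)^2)$) by padding the domain with a uniformly random ``tag'' coordinate and lifting $\mcM^{\star}$ via the domain-remapping clause of \Cref{axiom:preprocessing}, which preserves both $\mcP$-privacy and the witnessed TV-instability. A hybrid argument, swapping the coordinates of $\bS$ into $\bS'$ one at a time and exploiting the symmetry of i.i.d.\ draws, then delivers a per-coordinate instability bound: for each coordinate position,
\[
\Ex_{\bU\iid \mcD^{kn-1},\; \bx,\bx'\iid \mcD}\!\left[\dtv\!\left(\mcM^{\star}(\bU,\bx),\,\mcM^{\star}(\bU,\bx')\right)\right]\;\geq\;\rho/(kn).
\]

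To close the contradiction, invoke \Cref{axiom:strong-composition}: the $\ell$-fold self-composition $(\mcM^{\star})^{\otimes \ell}:X^{kn}\to Y^{\ell}$, which runs $\mcM^{\star}$ on the same input with fresh randomness $\ell$ times, remains $\mcP$-private as long as $\ell \leq \tilde{O}(k^{1/c})$. I would then design an (unbounded) adversary $A:Y^{\ell}\to X^{kn}$ that performs maximum-likelihood estimation over candidates $S^{\star}$ supported on the effective support of $\mcD^{kn}$: for each such $S^{\star}$, it computes the likelihood of the $\ell$ observed outputs under $\mcM^{\star}(S^{\star})$ and returns the argmax. Using Pinsker ($\dtv\geq \rho/(kn)$ implies $\mathrm{KL}\geq \Omega(\rho^2/(kn)^2)$), tensorization of KL across $\ell$ i.i.d.\ observations, and a Fano-type error bound against the input entropy $H(\bS)=O(kn\log(kn))$ (bounded thanks to the high-entropy reduction), one argues that $\ell=\poly(kn,1/\rho)$ observations suffice for $A$ to recover at least $0.9\cdot kn$ of the coordinates of $\bS$ in expectation. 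Choosing $k=\poly(n,1/\rho)$ large enough so that $\tilde{O}(k^{1/c})$ exceeds this $\ell$ makes the reconstruction fit inside the composition budget, yielding the desired contradiction between \Cref{axiom:strong-composition} (which certifies $\mcP$-privacy of $(\mcM^{\star})^{\otimes \ell}$) and \Cref{axiom:blalant} (which forbids its blatant non-privacy).

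The hard part will be the reconstruction step. The per-coordinate TV-instability $\rho/(kn)$ is intrinsically an \emph{average-case} statement over $\bU\sim \mcD^{kn-1}$, while standard Fano bounds only guarantee small \emph{average} error probability. Promoting this to ``recover $0.9\cdot kn$ coordinates of a \emph{typical} $\bS$'' will likely require additional symmetrization via the full strength of \Cref{axiom:preprocessing} (so that an averaged per-coordinate signal implies a per-coordinate signal on most coordinates), or a tailored fingerprinting-style argument that converts pairwise-distinguishability into coordinate recovery. Secondary subtleties include verifying that the high-entropy reduction genuinely preserves $\mcP$-privacy across the padded domain (using both clauses of \Cref{axiom:preprocessing}) and matching the polynomial in the reconstruction sample complexity against the exponent $1/c$ from strong composition, which is precisely what dictates the final polynomial blow-up $m = \poly_\rho(n)$ (with $m\approx n^2$ for the best-known $c=1/2$).
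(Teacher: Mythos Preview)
Your high-level plan (amplify via \Cref{axiom:amplification}, then derive a contradiction from \Cref{axiom:strong-composition} and \Cref{axiom:blalant}) matches the paper, as does reducing to a high-entropy witness distribution via domain remapping. The gap is the composed mechanism. Self-composition $(\mcM^{\star})^{\otimes\ell}$---running $\mcM^{\star}$ with fresh randomness $\ell$ times on the same input---cannot force blatant non-privacy from TV-instability alone: if $\mcM^{\star}(S)=h(S)$ for a fixed random function $h:X^{kn}\to\{0,1\}$, then $\Ex_{\bS,\bS'}[\dtv(h(\bS),h(\bS'))]\approx 1/2$ on any high-entropy $\mcD$, yet $(\mcM^{\star})^{\otimes\ell}(S)$ still reveals only one bit and no adversary recovers $0.9\,kn$ coordinates from it. More generally, $\ell$ i.i.d.\ draws from the single distribution $\mcM^{\star}(S)$ can never extract more than the information $\mcM^{\star}(S)$ already carries; your Pinsker step would need a \emph{worst-case} per-pair bound $\dtv(\mcM^{\star}(S),\mcM^{\star}(S'))\gtrsim\rho/(kn)$, while the hybrid only gives it on average. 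You flag this as the hard part, but no sharper analysis of self-composition closes it---the composed algorithm itself must change.

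The paper composes $\ell=O(kn\log(kn))$ \emph{different} preprocessed copies $\mcM^{\star}\circ\gamma_1,\ldots,\mcM^{\star}\circ\gamma_\ell$, each $\gamma_i$ a domain permutation together with an input reordering; each copy has $\mcP$-value at most $\mcP(\mcM^{\star})$ by \Cref{axiom:preprocessing}, so strong composition still applies. The key technical step (\Cref{lem:key-lemma-overview}), proved via a tailored non-Markovian random walk on datasets (\Cref{claim:random-walk-overview}), shows that for a uniformly random domain permutation $\bsigma$ and \emph{every} fixed pair $S,S'$ one has $\Ex_{\bsigma}[\dtv(\mcM^{\star}\circ\bsigma(S),\mcM^{\star}\circ\bsigma(S'))]\geq\tfrac{\rho}{2}\,\dist(S,S')/(kn)$. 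With this worst-case-pair guarantee, classical hypothesis selection (not a Fano bound, which runs in the impossibility direction) over the $|X|^{kn}$ candidate datasets recovers $S$, after which the $\gamma_i$'s are derandomized by averaging. The paper also case-splits on $|X|$: when $|X|\lesssim (kn)^2$ it bypasses the axioms entirely and simply learns $\mcD$ to TV accuracy $O(\beta/n)$ (\Cref{lem:learn_when_X_small}).
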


To prove \Cref{thm:axioms-to-tv-overview}, we show, roughly speaking, that for any non-TV-stable algorithm $\mcM:X^m \to Y$, there exists algorithms $\mcM_1, \ldots, \mcM_{\ell}$ for $\ell \approx m$ satisfying,
\begin{enumerate}
    \item Each $\mcM_i$ can be formed by preprocessing $\mcM$, and therefore, by the \Cref{axiom:preprocessing} (preprocessing), should have the same privacy.
    \item The composed algorithm $\mcM_{\mathrm{comp}}$ that takes as input $S$ and outputs the tuple $(\mcM_1(S), \ldots, \mcM_{\ell}(S))$ is blatantly non-private.
\end{enumerate}
By \Cref{axiom:blalant} (prohibition of blatant non-privacy), we can conclude that $\mcM_{\mathrm{comp}}$ is not $\mcP$-private. Then, \Cref{axiom:strong-composition} (strong composition) says that at least one $\mcM_i$ must satisfy $\mcP(\mcM_i) \geq \tilde{\Omega}(\ell^{-c})$. By \Cref{axiom:preprocessing} (preprocessing) this in fact means that $\mcP(\mcM) \geq \tilde{\Omega}(\ell^{-c}) = \tilde{\Omega}(m^{-c})$.

By contrapositive, this allows us to prove something just short of our goal: Any $\mcM$ satisfying sufficiently strong privacy, $\mcP(\mcM) \leq \tilde{O}(m^{-c})$, then $\mcM$ itself must be TV-stable.\footnote{We note that there are some caveats to this statement: Briefly, it only holds for \emph{symmetric} algorithms, those whose output does not depend on the order of its input, and assumes the domain is not too small. Both details are handled in the body.} In contrast, \Cref{thm:axioms-to-tv-overview} only assume that $\mcM$ is $\mcP$-private. Here, we can exploit linear scalability: Using \Cref{axiom:amplification}, we can convert any $\mcM:X^n \to Y$ that is $\mcP$-private to an $\mcM':X^m \to Y$ satisfying $\mcP(\mcM') \leq (1/m^{-c})$ with only a polynomial increase in the sample size. This is the step where we crucially utilize the combined power of linear scalability and strong composition: Ultimately, we want to convert any $\mcP$-stable algorithm using $n$ samples into one using $O(m)$ samples with the additional property that it can be composed $m$ times and still be $\mcP$-stable. \Cref{axiom:amplification,axiom:strong-composition} together allow us to do this.

\subsubsection{Exploiting TV-unstable algorithms}
\label{subsec:overview-tv-unstable}
The key step in proving \Cref{thm:axioms-to-tv-overview} is to show that if we compose $\approx m$ many preprocessed copies of a non-TV-stable algorithm $\mcM:X^m \to Y$, we will obtain a blatantly non-private algorithm. To prove this, we show a single random preprocessing reveals much information about the sample. It will be most convenient to state this lemma in terms of algorithms that take as input an unordered size-$m$ set as input, and we will use $\binom{X}{m}$ to denote all such sets.
\begin{restatable}[Key lemma, uniform permutations distinguish far samples]{lemma}{keylemma}
    \label{lem:key-lemma-overview}
    For any $\mcM:\binom{X}{m} \to Y$ where $|X| \geq 2m$, define
    \begin{equation}
        \label{eq:TV-disjoint-overview}
        \rho \coloneqq \Ex_{\bS, \bS' \sim \Unif\paren*{\binom{X}{m}}}\bracket*{\dtv(\mcM(\bS),\mcM(\bS')) \,\Big|\, \abs*{\bS \cap \bS'} =0}.
    \end{equation}
    Then, for any $S,S' \in \binom{X}{m}$ and $\bsigma:X \to X$ a uniform permutation,
    \begin{equation*}
        \Ex\bracket*{\dtv(\mcM \circ\bsigma(S),\mcM \circ \bsigma(S'))} \geq \frac{\rho}{2}\cdot \dist(S, S')/m,
    \end{equation*}
    where $\dist(S,S') \coloneqq m - |S \cap S'|$ is the number of points $S$ and $S'$ differ on.
\end{restatable}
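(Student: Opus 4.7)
Since $\bsigma$ is a uniform random permutation, for any fixed $S, S'$ with $d := \dist(S, S')$ the pair $(\bsigma(S), \bsigma(S'))$ is distributed uniformly over all ordered pairs of $m$-subsets of $X$ with intersection of size exactly $m - d$; call this distribution $\mathcal{U}_d$. Define $f(d) := \Ex_{(\bU, \bV) \sim \mathcal{U}_d}[\dtv(\mcM(\bU), \mcM(\bV))]$. Then $\Ex_{\bsigma}[\dtv(\mcM \circ \bsigma(S), \mcM \circ \bsigma(S'))] = f(d)$ and $\rho = f(m)$, so the claim reduces to proving $f(d) \geq \rho \cdot d / (2m)$.

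The key technical tool is a subadditivity statement: $f(a+b) \leq f(a) + f(b)$ whenever $a + b \leq m$. To prove it, I would take $(\bU, \bV) \sim \mathcal{U}_{a+b}$ and construct an intermediate $m$-set $\bW$ consisting of all of $\bU \cap \bV$ together with a uniformly random $b$-subset of $\bU \setminus \bV$ and a uniformly random $a$-subset of $\bV \setminus \bU$. A short marginalization argument (using that $\bU \cap \bV$ has size $m - (a+b)$ and $\bU \setminus \bV$, $\bV \setminus \bU$ each have size $a+b$) shows that $(\bU, \bW) \sim \mathcal{U}_a$ and $(\bW, \bV) \sim \mathcal{U}_b$. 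Applying the triangle inequality for $\dtv$ to the triple $\mcM(\bU), \mcM(\bW), \mcM(\bV)$ and taking expectations yields the subadditivity.

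The proof would then be completed by a chain argument. Exploiting $|X| \geq 2m$, I would construct a sequence $S_0, S_1, \ldots, S_k \in \binom{X}{m}$ with $\dist(S_0, S_k) = m$ (disjoint endpoints), $\dist(S_i, S_{i+1}) = d$ for each $i$, and chain length $k \leq 2m/d$. Applying the triangle inequality for $\dtv$ along the chain, then composing with $\bsigma$ and taking expectations, gives $\rho = f(m) \leq k \cdot f(d)$, so $f(d) \geq \rho / k \geq \rho \cdot d/(2m)$. The main obstacle I anticipate is the construction of this chain when $d \nmid m$ and $|X| = 2m$ exactly: the naive chain of $\lceil m/d \rceil$ hops can fail to exist, since an interior set $S_i$ would generally need elements of $X \setminus (S_0 \cup S_k)$, which is empty when $|X| = 2m$. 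The resolution is to use a slightly longer chain --- with up to $k \leq 2m/d$ hops, in which a few elements are swapped out and later swapped back in --- and the factor of $2$ in the statement precisely absorbs the overhead of this doubling of chain length relative to the naive $m/d$.
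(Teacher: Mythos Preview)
Your approach is essentially the paper's: both prove $\rho=f(m)\le k\cdot f(d)$ by building a length-$k$ chain with hop size $d$ and applying the triangle inequality, then using $k\le 2m/d$. The paper packages the chain as a coupled random walk $\bT^0,\dots,\bT^k$ with $k=\lceil m/d\rceil$ (\Cref{claim:random-walk-overview}) whose consecutive-pair marginals are $\mcJ_d$ and whose endpoint marginal is $\mcJ_m$; composing your deterministic chain with the uniform $\bsigma$ produces exactly these marginals, and your subadditivity construction is precisely the two-step instance of that walk (so it is subsumed by, rather than separate from, the chain argument).

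One caveat on your boundary discussion: your proposed fix of allowing up to $2m/d$ hops does not actually suffice at $|X|=2m$. For $2m/3<d<m$ one checks that any chain between complementary $m$-sets with hop size $d$ has length at least $3>2m/d$ (since $|S_1\cap S_0|=m-d$ and $|S_{k-1}\cap S_0|=d$ are both forced, and these differ when $m\neq 2d$). The paper's construction has the identical defect here (with $|X|=2m$ and $d\nmid m$ there are fewer than $d$ fresh elements from which to draw $\bA^k$), but this is immaterial since the lemma is only ever invoked with $|X|=\Theta(m^2)\gg 2m$.
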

Since we start with a $\mcM$ that is not TV-stable, the quantity $\rho$ in \Cref{eq:TV-disjoint-overview} is promised to be somewhat large. \Cref{lem:key-lemma-overview} says that, if we draw just one $\bsigma$, the algorithm $\mcM \circ \bsigma$ provides roughly ``$\Omega(1)$ bit" of useful information in distinguishing any $S$ and $S'$ that are somewhat far, satisfying $\dist(S, S') \geq 0.01n$. Since the number of possible datasets $S$ is $\binom{|X|}{m}$, it is possible to determine a dataset close to $S$ by observing $\mcM \circ \sigma_1, \ldots, \mcM \circ \sigma_{\ell}$ for $\ell \coloneqq O(\log \binom{|X|}{m}) = O(m \log |X|)$. We furthermore show in the body of the paper how to reduce to the case where $|X| = O(m^2)$, in which case $\ell = O(m \log m)$ suffices.

The key step in proving \Cref{lem:key-lemma-overview} is constructing the following random walk.
\begin{restatable}[Random walk to disjoint samples]{lemma}{randomwalk}
    \label{claim:random-walk-overview}
    For any $S, S' \in \binom{X}{m}$, setting $d \coloneqq \dist(S,S')$ and $k \coloneqq \ceil{m/d}$, there exists random variables $\bT^0, \ldots, \bT^k$ with the following properties:
    \begin{enumerate}
        \item For any $i \in [k]$ the marginal distribution of $(\bT^{i-1}, \bT^i)$ is equal to the distribution of $(\bsigma(S), \bsigma(S'))$ when $\bsigma:X \to X$ is a uniform permutation.
        \item The marginal distribution of $(\bT^0, \bT^k)$ is equal to the distribution of $\bU, \bU' \sim \Unif\paren[\big]{\binom{X}{m}}$ conditioned on $|\bU \cap \bU'| = 0$.
    \end{enumerate}
   
\end{restatable}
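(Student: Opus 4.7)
The plan is to realize the walk by applying a single uniformly random permutation of $X$ to a cleverly chosen deterministic sequence of $m$-subsets. Concretely, I will first exhibit a deterministic sequence $V_0, V_1, \ldots, V_k \in \binom{X}{m}$ satisfying (a) $|V_{i-1} \cap V_i| = m - d$ for every $i \in [k]$, and (b) $V_0 \cap V_k = \emptyset$; then, letting $\bm{\pi} : X \to X$ be uniformly random in $\mathrm{Sym}(X)$, I set $\bT^i \coloneqq \bm{\pi}(V_i)$ for $i = 0, \ldots, k$.

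Both required marginal distributions then fall out of a standard symmetry computation. For any fixed $m$-subsets $A, B \subseteq X$ with $|A \cap B| = t$ and any target pair $A', B'$ of $m$-subsets with $|A' \cap B'| = t$, the number of $\pi \in \mathrm{Sym}(X)$ with $\pi(A) = A'$ and $\pi(B) = B'$ equals $t! \, (m - t)!^2 \, (|X| - 2m + t)!$, which depends only on $t$. Hence $(\bm{\pi}(A), \bm{\pi}(B))$ is uniform over all ordered pairs of $m$-subsets of $X$ with intersection $t$. Applying this with $(A, B) = (V_{i-1}, V_i)$ and $t = m - d = |S \cap S'|$ shows that $(\bT^{i-1}, \bT^i)$ is distributed exactly as $(\bsigma(S), \bsigma(S'))$, giving property (1). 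Applying it with $(A, B) = (V_0, V_k)$ and $t = 0$ gives property (2), since the distribution of $(\bU, \bU')$ conditional on $|\bU \cap \bU'| = 0$ is exactly uniform over ordered disjoint pairs.

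All of the real work is thus in constructing the sequence $V_0, \ldots, V_k$, which I do greedily. Pick any $m$-subset $B \subseteq X \setminus S$ (possible since $|X| \geq 2m$) and enumerate $S = \{s_1, \ldots, s_m\}$ and $B = \{b_1, \ldots, b_m\}$. Set $V_0 \coloneqq S$, and for each $i = 1, \ldots, k - 1$ form $V_i$ by removing from $V_{i-1}$ the next $d$ yet-unused elements of $S$ and inserting the next $d$ yet-unused elements of $B$; this guarantees $|V_{i-1} \cap V_i| = m - d$ at each such step. After $k - 1$ such steps, the overlap of $V_{k-1}$ with $S$ is $m - (k-1)d = d - c$, where $c \coloneqq kd - m \in \{0, \ldots, d - 1\}$. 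In the final step I remove the remaining $d - c$ elements of $S$ from $V_{k-1}$ and insert the final $d - c$ elements of $B$; when $c > 0$, I additionally perform $c$ ``dummy'' swaps that exchange $c$ already-inserted elements of $B$ for $c$ fresh elements of $X \setminus (S \cup B)$. This ensures simultaneously that $V_k \subseteq X \setminus S$ and that $|V_{k-1} \cap V_k| = m - d$ exactly.

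The main obstacle I anticipate is precisely the handling of $c > 0$: the dummy swaps must not reinsert any $s_j$, they must enlarge $|V_{k-1} \triangle V_k|/2$ from $d - c$ up to exactly $d$, and they must use only elements outside $V_{k-1}$. Meeting all three constraints requires a small number of ``spare'' elements outside $S \cup B$, so the construction as written needs slightly more than $|X| \geq 2m$ whenever $d \nmid m$ (one can verify that $|X| = 2m$ is genuinely insufficient in that case, since there is only one $m$-subset of $X \setminus S$). This is not restrictive for the applications: the body reduces the key lemma to $|X| = O(m^2)$, which leaves ample room, and in the boundary case $c = 0$ the construction lives entirely inside $S \cup B$ and handles $|X| = 2m$ exactly.
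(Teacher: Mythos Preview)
Your argument is correct and takes a genuinely different route from the paper's. The paper constructs the walk by drawing $\bT^0$ uniformly and then, at each step, removing $d$ carefully chosen elements (prioritizing those still in $\bT^0$) and inserting $d$ fresh elements never used before; it then argues that the entire process is invariant under permutations of $X$, so each pairwise marginal is uniform over pairs at the right distance. You instead put all the randomness in a single place: build a deterministic chain $V_0,\ldots,V_k$ with the prescribed intersection sizes, and then hit it with one uniform permutation. The counting argument $t!\,(m-t)!^2\,(|X|-2m+t)!$ makes the pairwise marginals transparent, and the whole symmetry step becomes a one-line computation rather than an invariance claim about a multi-step random process. Conceptually the two are close cousins (your deterministic chain is exactly what one obtains by fixing a realization of the paper's process and then ``un-permuting''), but your packaging is cleaner and shorter.

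Your caveat about the domain size is on point and, in fact, applies equally to the paper's proof: at the last step the paper needs $d$ fresh elements from $X\setminus(\bT^0\cup\cdots\cup\bT^{k-1})$, which has size $|X|-m-(k-1)d$, so it too requires $|X|\ge m+kd=2m+c$ whenever $c>0$. Your concrete observation that for $|X|=2m$ the only disjoint pair is $(S,X\setminus S)$, which forces $|V_1\cap S|$ and $|V_1\cap(X\setminus S)|$ to be simultaneously $m-d$ and hence $2(m-d)=m$, makes the obstruction explicit. As you note, every place the lemma is invoked has $|X|=\Theta(m^2)$, so this is a harmless edge case.
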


The intuition behind \Cref{claim:random-walk-overview} is that $\bT^i$ can be formed by ``rerandomizing" exactly $d$ many of the elements in $\bT^{i-1}$. As long as we have at least $m/d$ steps, we can ensure all elements get rerandomized. The actual proof of \Cref{claim:random-walk-overview} is a bit precise. In particular we need to use a non-Markovian walk (in that the distribution of $\bT^i$ is not independent of $\bT^1,\ldots, \bT^{i-2}$ conditioned on $\bT^{i-1}$) for the following reasons:
\begin{enumerate}
    \item In order to ensure all elements get rerandomized, the steps of the random walk cannot be independent. Instead, we enforce that the elements rerandomized in each step are different, while still ensuring that all the pairwise marginal $(\bT^{i-1}, \bT^i)$ have the right distribution.
    \item When $m/d$ is not exactly an integer some elements will be rerandomized twice. In this case, we need to ensure that no element accidentally gets rerandomized back into an element appearing in $\bT^0$ as that would cause $\dist(\bT^0, \bT^k) < m$.
\end{enumerate}
Nonetheless, we show with a careful construction that \Cref{claim:random-walk-overview} holds.

\subsection{Overview of \Cref{thm:DP-fits-axioms-intro}: DP satisfies the axioms}
\label{subsec:overview-DP-fits-axioms}

Since \Cref{def:approx-DP} has two parameters, $\eps$ and $\delta$, we must collapse them to one parameter for our framework. We do this by defining,
\begin{equation}
    \label{eq:DP-one-param}
    \mcP_{\mathrm{DP}}(\mcM) \coloneqq \argmin_v \set*{\mcM:X^n \to Y \text{ is }(\eps =\Theta(v^{4/5}), \delta=\Theta(v^{8/5}/n^3)\text{-DP}}.
\end{equation}
There is just one of many ways to collapse $\eps$ and $\delta$ into a single parameter in a way that respects our axioms. The exponents $4/5$ and $8/5$ could be replaced with $\alpha$ and $2\alpha$ for any $\alpha \in (0.5, 1)$. Furthermore, the $n^{3}$ factor could be replaced with any $n^{\beta}$ for $\beta > 3$. With this privacy measure, we can state the formal version of \Cref{thm:DP-fits-axioms-intro}.
We first state the formal version of \Cref{thm:DP-fits-axioms-intro}.

\setcounter{theorem}{1}
\begin{restatable}[DP implies our axioms, formal version]{theorem}{DPImplyAxioms}
    \label{thm:dp-fits-axioms-formal}

The privacy measure
 \begin{equation*}
      \mcP_{\mathrm{DP}}(\mcM) \coloneqq \argmin_v \set*{\mcM:X^n \to Y \text{ is }(\eps =\Theta(v^{4/5}), \delta=\Theta(v^{8/5}/n^3)\text{-DP}}
    \end{equation*}
     satisfies \Cref{axiom:preprocessing-intro,axiom:blalant-intro,axiom:strong-composition-intro,axiom:amplification-intro}.
\end{restatable}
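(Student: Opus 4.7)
The plan is to verify each of the four axioms in turn, exploiting standard properties of $(\eps,\delta)$-DP and the specific exponents $4/5, 8/5$ chosen in \eqref{eq:DP-one-param}.

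For \Cref{axiom:preprocessing-intro}, both sub-conditions fall out of \Cref{def:approx-DP}: the DP condition quantifies over all pairs $S,S'$ that differ in a single coordinate, and this relation is preserved both by applying any permutation $\pi:[n]\to[n]$ to the inputs and by applying any domain map $\sigma:X\to X$ coordinate-wise. So any $(\eps,\delta)$-DP witness for $\mcM$ is automatically one for $\mcM\circ\pi$ and $\mcM\circ\sigma$, and the $\argmin$ defining $\mcP_{\mathrm{DP}}$ is monotone in this witness set, yielding $\mcP_{\mathrm{DP}}(\mcM\circ\pi)\leq \mcP_{\mathrm{DP}}(\mcM)$ and $\mcP_{\mathrm{DP}}(\mcM\circ\sigma)\leq \mcP_{\mathrm{DP}}(\mcM)$.

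For \Cref{axiom:blalant-intro}, suppose $\mcP_{\mathrm{DP}}(\mcM)\leq 1$, so $\mcM$ is $(\eps,\delta)$-DP with $\eps=\Theta(1)$ and $\delta=\Theta(1/n^3)$. Fix any high-entropy $\mcD$ with $\mcD(x)\leq 1/(100n^2)$ and any adversary $A:Y\to X^n$. For each coordinate $i$, let $\bS^{-i}$ denote $\bS$ with the $i$-th entry resampled from $\mcD$. Applying DP to the pair $(\bS,\bS^{-i})$ and conditioning on the value of $\bS_i$ gives
\[
\Prx[\bS_i\in A(\mcM(\bS))] \;\leq\; e^{\eps}\cdot \Prx[\bS_i\in A(\mcM(\bS^{-i}))]+\delta.
\]
In the right-hand probability $\bS_i$ is independent of $A(\mcM(\bS^{-i}))$, whose value is a size-$n$ set, so by high entropy $\Prx[\bS_i\in A(\mcM(\bS^{-i}))]\leq n\cdot\max_x\mcD(x)\leq 1/(100n)$. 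Summing over $i$ bounds the expected overlap by $e^{\eps}/100+n\delta=O(1)\ll 0.9n$, contradicting blatant non-privacy (the constants hidden in the two $\Theta$'s of \eqref{eq:DP-one-param} are chosen small enough to make this bound sharp for all $n\geq 1$).

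For \Cref{axiom:strong-composition-intro}, each $\mcM^i$ with $\mcP_{\mathrm{DP}}(\mcM^i)\leq\eps$ is $(\Theta(\eps^{4/5}),\Theta(\eps^{8/5}/n^3))$-DP. Applying the Dwork--Rothblum--Vadhan advanced composition theorem with slack parameter $\delta^{\star}\coloneqq \Theta(\eps^{8/5}\ell/n^3)$, the $\ell$-fold composition is $(\eps_c,\delta_c)$-DP with
\[
\eps_c=O\paren*{\eps^{4/5}\sqrt{\ell\log(1/\delta^{\star})}+\ell\eps^{8/5}},\qquad \delta_c=O\paren*{\eps^{8/5}\ell/n^3}.
\]
Setting $v\coloneqq C\cdot\eps\cdot\ell^{5/8}\polylog(n)$ for a sufficient constant $C$ and working under the hypothesis $v\leq 1$ (which forces $\eps^{4/5}\ell^{1/2}\leq 1$, taming the $\ell\eps^{8/5}$ term), routine algebra verifies $\eps_c\leq \Theta(v^{4/5})$ and $\delta_c\leq \Theta(v^{8/5}/n^3)$, so the composition is $\mcP_{\mathrm{DP}}$-bounded by $v=\tilde{O}(\eps\cdot\ell^{5/8})$. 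Hence the axiom holds with $c=5/8<1$.

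\Cref{axiom:amplification-intro} is the main obstacle, because naive subsampling from $m=kn$ down to $n$ points amplifies both $\eps$ and $\delta$ by only a factor $k$, whereas the target $\mcP_{\mathrm{DP}}(\mcM')\leq O(1/k)$ at sample size $m$ requires $\delta$ to drop to $\Theta((1/k)^{8/5}/m^3)=\Theta(1/(k^{23/5}n^3))$, i.e.\ by roughly $k^{23/5}$, while $\eps$ only needs to drop by $k^{4/5}$. My plan is therefore to establish (and invoke) a dedicated amplification lemma \Cref{lem:DP-to-DP}, which replaces naive subsampling by a more elaborate primitive that delivers this asymmetric reduction -- for example, sub-sampling at a tuned rate combined with a secrecy-of-the-sample argument to gain an additional multiplicative factor in $\delta$, or routing through TV-stability and back via the \cite{BGHILPSS23}-style TV-to-DP conversion targeted at the desired $(\eps,\delta)$. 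The output is an equivalent $\mcM':X^m\to Y$ that is $(\Theta(k^{-4/5}),\Theta(k^{-8/5}/m^3))$-DP and therefore $\mcP_{\mathrm{DP}}$-bounded by $O(1/k)$; the polynomial lower bound $k\geq p(n,1/\beta)$ permitted by the axiom absorbs polylogarithmic overheads and handles the small-$k$ boundary where the asymmetric scaling would otherwise be infeasible.
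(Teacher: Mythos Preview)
Your proposal is correct and follows essentially the same route as the paper: the preprocessing axiom via invariance of the neighbor relation, blatant non-privacy via the per-coordinate resampling argument bounding the overlap by $e^{\eps}/100+n\delta$, strong composition with exponent $c=5/8$ via DP advanced composition, and linear scalability via the dedicated amplification \Cref{lem:DP-to-DP} (whose proof the paper, like you, routes through the \cite{BGHILPSS23} machinery). The paper's treatment of \Cref{axiom:amplification-intro} is slightly more concrete---it fixes $p(n,1/\beta)=\Theta(n^{10}/\beta)$ and plugs $\eps=n^2\log^3(m)/m$, $\delta=\eps^2/m^3$ into \Cref{lem:DP-to-DP} to verify the sample budget and that $\eps\leq k^{-4/5}$---but your sketch identifies exactly the same obstacle (the asymmetric decay required of $\delta$) and the same fix.
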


The reason we want $\delta$ to be much smaller than $\eps$ is because that's the regime in which differential privacy satisfies strong composition.
The following well-known theorem shows that DP has strong composition.

\medskip
\noindent{\bf Theorem }(DP-Strong-Composition, Theorem 3.20 in \cite{DR14book}).

    {\it For all} $\epsilon, \delta \geq 0$, {\it if }$\mcM^1, \ldots, \mcM^\ell$ {\it are} $(\epsilon, \delta)$-{\it differentially private, then the composed algorithm }$(\mcM^1, \ldots, \mcM^\ell)$ {\it is} $(\epsilon', \delta')$-{\it differentially private where} $\delta' \coloneqq 2\ell\delta$ {\it and } $$\epsilon' \coloneqq \epsilon\sqrt{\ell \ln(1/(\ell\delta))}+ \ell\epsilon(e^{\epsilon}-1).$$

\bigskip

Given that we have forced $\delta$ to be small, we cannot simply use subsampling \cite{BBG18} to ensure that $\mcP_{\mathrm{DP}}$ satisfies linear scalability, as subsampling's effect on $\delta$ is too mild. Instead, we use (a small modification of) a recent result of \cite{BGHILPSS23} to prove $\mcP_{\mathrm{DP}}$ satisfies linear scalability. See \Cref{lem:DP-to-DP} for this amplification procedure and the surrounding discussion for comparison to \cite{BGHILPSS23}'s result. Given the well-known strong-composition theorem for DP and this amplification procedure, showing that $\mcP_{\mathrm{DP}}$ satisfies all our axioms is straightforward.

\subsection{Overview of \Cref{thm:minimal-intro}: Necessity of our axioms}
\label{subsec:overview-axioms-necessary}

Here, we explain why all four of our axioms are necessary. For each axiom, we exhibit ill-behaved notions of privacy that would be allowed if we removed the axiom. In the case of \Cref{axiom:preprocessing}, we even show this is true if only one of the two parts of it are removed, and in the case of \Cref{axiom:strong-composition}, we will show it is true even if we replace strong composition with linear composition (i.e. setting $c = 1$). The proof of \Cref{thm:minimal-intro} will build on these ill-behaved privacy measures by showing that they allow algorithms solving statistical tasks that no differentially private algorithm can solve. (see \Cref{sec:axioms-min} for details).

If we remove just the first part of \Cref{axiom:preprocessing}, that reordering the input maintains privacy, then there is a privacy measure satisfying the remaining axioms ($\mcP_{\mathrm{half}}$ from \Cref{def:first-half-privacy}) which deems the algorithm $\mcM:X^n \to X^{\floor{n/2}}$ that outputs the first half of its dataset perfectly private, satisfying $\mcP_{\mathrm{half}}(\mcM) = 0$. 

If we remove the second half of \Cref{axiom:preprocessing}, that remapping the domain maintains privacy, then there is a privacy measure satisfying the remaining axioms ($\mcP_{\text{heavy}}$ in \Cref{def:heavy-privacy}) that deems the following algorithm perfectly private: Let $\mcM:X^n \to X^n \cup \set{\varnothing}$ be the algorithm with the following behavior:
\begin{equation*}
    \mcM(S) = \begin{cases}
        S&\text{if there is some $x$ appearing at least $0.6n$ times in $S$.}\\
        \varnothing&\text{otherwise.}
    \end{cases}
\end{equation*}
Essentially, $\mcM$ is allowed to leak the entire dataset if there is any element appearing frequently enough. Despite this leakage, $\mcP_{\text{heavy}}(\mcM) = 0$, indicating that $\mcM$ should have ``perfect" privacy. We further observe (see \Cref{remark:permute-vs-remap}) that $\mcP_{\text{heavy}}$ still satisfies that \emph{permuting} the domain maintains privacy. This shows that we could not have replaced the arbitrary \emph{mappings} $\sigma:X \to X$ in \Cref{axiom:preprocessing} with arbitrary \emph{permutations} without allowing this ill-behaved notion of privacy.

If we remove \Cref{axiom:blalant} (prohibition of blatant non-privacy), then a privacy measure $\mcP_{\mathrm{all}}$ that deems \emph{all} algorithms perfectly private, i.e. $\mcP_{\mathrm{all}}(\mcM) = 0$ for all $\mcM$, satisfies the remaining axioms.

If we relax strong composition to linear composition, i.e. allow $c = 1$ in \Cref{axiom:strong-composition}, then there is a privacy measure,  $\mcP_{\mathrm{junta}}$ (see \Cref{def:privacy-junta}) with the following behavior: The algorithm $\mcM:X^n \to X^k$ which outputs the first $k$ points in its dataset satisfies $\mcP_{\mathrm{junta}}(\mcM) = \frac{k}{2n}$. For example, an algorithm which outputs the first half of its dataset is still $\mcP_{\mathrm{junta}}$-private. 

If we remove \Cref{axiom:amplification} (linear scaling), then there is a rescaling, $\mcP_{\sqrt{\mathrm{junta}}}$, of the above privacy measure that satisfies the remaining axioms. The algorithm $\mcM:X^n \to X^k$ which outputs the first $k$ points in its dataset is satisfies $\mcP_{\sqrt{\mathrm{junta}}}(\mcM) = \sqrt{\frac{k}{2n}}$. This still has essentially the same consequences as if we weakened \Cref{axiom:strong-composition}. For example, we still have that the algorithm which outputs the first half of its dataset is $\mcP_{\sqrt{\mathrm{junta}}}$-private.

\section{Discussion and Related Work}
\label{sec:discussion-and-related-work}
 
\textbf{Computational efficiency.} In \Cref{thm:axioms-imply-dp-intro}, we guarantee that any sample efficient $\mcP$-private algorithm $\mcM$ can be transformed into an equivalent DP algorithm $\mcM'$ with approximately the same sample complexity. {While our transformation is constructive, it does not necessarily preserve computational efficiency}. Part of the reason is that \Cref{axiom:amplification-intro} does not require the {scaling} to preserve computational efficiency, and we utilize a {scaled} version of $\mcM$ to construct $\mcM'$. This choice to allow for non-computationally efficient amplification is crucial to \Cref{thm:DP-fits-axioms-intro} as we utilize the following (computationally inefficient) procedure to prove that DP fits our axioms:

\bigskip

\noindent{\bf Theorem }(DP-Amplification, \cite{BGHILPSS23}.) \label{fact:DP-amp-intro}{\it For any} $(\eps = O(1), \delta = O(1/n^3))$-DP {\it algorithm} $\mcM:X^n \to Y$, {\it there exists an equivalent} $(\eps', \delta')$-DP {\it algorithm} $\mcM':X^m \to Y$ {\it using} $m \coloneqq 1/\eps' \cdot \poly(n, \log 1/\delta')$ {\it samples}.

\bigskip 

We remark that there is a computationally efficient way to amplify an $(\eps, \delta)$-DP algorithm to $(\eps/k, \delta/k)$-DP at the cost of a $O(k)$ increase in the sample size, via subsampling \cite{BBG18}. While subsampling's linear amplification of $\eps$ is as good as DP-Amplification, the linear amplification of $\delta$ is not sufficient for our purposes, and so we need to utilize the computationally inefficient amplification of
DP-Amplification.

As far as we are aware, despite it being of independent interest, it is unknown whether a computationally efficient analogue of DP-Amplification  
exists. More broadly, we leave open the possibility that it is possible to obtain a computationally efficient analogue of our results, possibly by adjusting the axioms appropriately.

\pparagraph{Other formalizations of differentially privacy.} We focused on the well-studied $(\eps,\delta)$-DP formulation of \Cref{def:approx-DP} (often called \emph{approximate DP}). One popular alternative, \emph{pure} DP, is equivalent to \Cref{def:approx-DP} where $\delta$ is fixed to be $0$. We did not focus on pure-DP because it does not satisfy strong composition, which makes it more difficult to utilize in practice and also that it does not fit our axioms. That said, it would be interesting to come up with an alternative set of axioms that characterize pure DP in the same sense as our axioms characterize approximate DP. One tempting solution is to simply remove our strong composition axiom (\Cref{axiom:strong-composition-intro}). However, as we show in \Cref{sec:axioms-min}, removing \Cref{axiom:strong-composition-intro} allows for a degenerate privacy measure which is much weaker than pure DP, so a different approach is needed.

A second popular generalization of approximate DP follows from the simple observation that algorithms are not $(\eps, \delta)$-DP for a single fixed choice of $\eps$ and $\delta$. Rather, for any algorithm $\mcM$, there is an entire ``curve" $\eps:\R_{\geq 0} \to \R_{\geq 0}$ for which $\mcM$ is $(\eps(\delta), \delta)$-DP for all choices of $\delta > 0$. There are a variety of formulations of DP that bound the behavior of this curve (e.g. through bounding appropriately defined ``moments") such as R\'eyni DP and concentrated DP \cite{M17,DR16,BS16}. These variations are popular precisely because they allow for easy (and often strong) composition, and in appropriate parameter regimes, also are amplified by subsampling. We refer the reader to \cite{steinke22} for an excellent overview.

Given this, it's natural to expect these variants would play nicely with our axioms. Indeed, we show in \Cref{sec:Reyni} that the privacy measure that assigns to $\mcM$ the smallest privacy value $v$ s.t. $\mcM$ is $(2,\sqrt{v})$-Réyni DP respects all our axioms with an even more straightforward analysis than the proof of \Cref{thm:DP-fits-axioms-intro} (see \Cref{subsec:Reyni-fits-axioms}). In the other direction, we show a variant of \Cref{thm:axioms-imply-dp-intro}, that our axioms imply R\'eyni DP. One distinction between that statement (\Cref{thm:axioms-to-RDP}) and \Cref{thm:axioms-imply-dp-intro} is that the R\'eyni DP algorithm has a sample size that depends on $\log \log |Y|$, which we show is necessary in \Cref{lem:sep-DP-RDP}. 

\bigskip

\noindent{\bf Related Work.} Perhaps most in the spirit of our results is recent work on reproducibility \cite{ILPS22}, and in particular the followup paper of Bun et al. \cite{BGHILPSS23} (see also \cite{KKMV23}). 
That work examines the broader context of \emph{algorithmic stability}, which are various ways of formalizing that an algorithms output does not depend too much on its input. They show that some of these measures of stability, replicability, max-information, and perfect generalization, are equivalent to differential privacy using the same formalization of equivalence as us. Measures of algorithmic stability and privacy share many of the same basic properties. In some sense, the only distinction between algorithmic stability and privacy is simply that measures of algorithmic stability were designed for applications other than privacy. Indeed, one could just as easily view our axioms as desirable properties for any measure of algorithmic stability. From this perspective, our work is a natural evolution of \cite{BGHILPSS23} as we show all measures of stability satisfying our axioms are equivalent to privacy.  We also utilize some of their techniques to prove our results.

More broadly, there have been several works formalizing axioms that any ``reasonable" definition of privacy should satisfy. Often this includes an axiom or assumption that privacy should be some measure of distance between the distributions $\mcM(S)$ and $\mcM(S')$ for worst-case $S$ and $S'$ (as in \Cref{def:approx-DP}). This includes \cite{KL10,S24}, which both investigate what measures of distance satisfy other reasonable axioms. Also in this spirit is the central limit theorem of \cite{DRS22}. Roughly speaking, it says that if we consider only privacy definitions based on some distance between $\mcM(S)$ and $\mcM(S')$, in the limit of many compositions, we may as well define ``Gaussian differential privacy." The key distinction between {all of these works} and ours is that we aim to justify why the most successful privacy definitions are measures of distance between $\mcM(S)$ and $\mcM(S')$ for worst-case $S$ and $S'$, {whereas previous works take that as an assumption or axiom.}

\grayblock{ 
}
 







\section{Preliminaries}
\label{sec:prelims}
Throughout this work, we will assume all domains are finite. For a domain $X$, we denote,
\begin{enumerate}
    \item All ordered tuples of $n$ elements as $X^n$.
    \item All ordered tuples of $n$ distinct elements as $\Xn$.
    \item All unordered sets of $n$ distinct elements as $\binom{X}{n}$.
    \item All permutations $\sigma:X \to X$ as $\mathfrak{S}(X)$.
\end{enumerate}
For a natural number $n$, we use $[n]$ as shorthand for $\set{1, \ldots, n}$. We use \textbf{boldface} letters, e.g.~$\bx,\bS$. For brevity, we will sometimes use $\bx \sim S$ as shorthand for $\bx \sim \Unif(S)$. For example, $\bsigma \sim \mathfrak{S}(X)$ indicates that $\bsigma$ is a uniform permutation mapping $X$ to itself.

It will sometimes be useful to convert any algorithm into its symmetric counterpart.
\begin{definition}[Symmetrization of an algorithm]
    \label{def:symmetrization}
    For a randomized algorithm $\mcM:X^n \to Y$, its \emph{symmetrization}, which we denote $\widetilde{\mcM}:X^n \to Y$, is defined as follows. On input $S \in X^n$, $\widetilde{\mcM}(S)$ first draws a uniform permutation $\bpi \sim \mathfrak{S}([n])$ and then outputs $\mcM(S_{\bpi(1)}, \ldots, S_{\bpi(n)})$.
\end{definition}

In particular, it's easy to see that the symmetrized algorithm $\widetilde{\mcM}$ doesn't depend on the order of its input. As such we will often abuse notation and view $\widetilde{\mcM}$ as both taking ordered sets and unordered sets as input. We also observe that for statistical tasks, permuting the input uniformly at random doesn't impact the correctness of the algorithm.

\begin{fact}\label{fact:symmetric}
    Let $\mcM:X^n \to Y$ be an algorithm with symmetrized version $\widetilde{\mcM}$. For any $1 \geq \beta \geq 0$, we have $\widetilde{\mcM}$ is $(\beta,\beta)$-equivalent to $\mcM$. 
\end{fact}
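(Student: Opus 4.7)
The plan is to exploit the fact that permuting an i.i.d.\ sample leaves its distribution unchanged, so symmetrizing $\mcM$ cannot degrade performance on any statistical task. Concretely, I would fix an arbitrary statistical task (with distribution family $\mrD$ and valid-response map $\mcT$) that $\mcM$ solves with failure probability $\beta$, and then fix an arbitrary $\mcD \in \mrD$. The goal reduces to showing
\[
    \Prx_{\bS \sim \mcD^n}[\widetilde{\mcM}(\bS) \in \mcT(\mcD)] \geq 1-\beta.
\]

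Next I would unfold \Cref{def:symmetrization}, rewriting the left-hand side as the probability over a joint draw of $\bS \sim \mcD^n$, a uniform permutation $\bpi \sim \mathfrak{S}([n])$, and $\mcM$'s internal coins, that $\mcM(\bS_{\bpi(1)}, \ldots, \bS_{\bpi(n)}) \in \mcT(\mcD)$. For each fixed permutation $\pi$, the tuple $(\bS_{\pi(1)}, \ldots, \bS_{\pi(n)})$ is again distributed as $\mcD^n$, because the coordinates of an i.i.d.\ sample are exchangeable. Hence, conditioning on $\bpi = \pi$ and using the assumption that $\mcM$ solves the task with failure probability $\beta$, the inner probability is at least $1-\beta$. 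Averaging over $\bpi$ preserves this lower bound, yielding the desired inequality, and therefore $(\beta,\beta)$-equivalence.

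There is essentially no technical obstacle: the argument is entirely structural, hinging on exchangeability of i.i.d.\ coordinates together with Fubini/Tonelli to interchange the average over $\bpi$ with the averages over $\bS$ and $\mcM$'s randomness. The only point requiring a bit of care is to keep the three sources of randomness (the sample $\bS$, the permutation $\bpi$ drawn inside $\widetilde{\mcM}$, and $\mcM$'s own coins) explicitly independent so that the reordering of expectations is unambiguously justified; once that is set up, the proof is a single line.
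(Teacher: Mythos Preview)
Your proposal is correct and essentially identical to the paper's proof: both use that permuting an i.i.d.\ sample from $\mcD^n$ leaves its law unchanged, unfold the definition of $\widetilde{\mcM}$, and average over the uniform permutation to recover the original success probability $1-\beta$.
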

\begin{proof}
    Note that for any set $S \in X^n$ and permutation $\pi : [n] \to [n]$ and distribution $\mcD$ over $X$ we have:
    $\Prx_{\bS \sim \mcD^n}[\bS = S]=\Prx_{\bS \sim \mcD^n}[\bS = \pi(S)]$. So, let $\mcT$ be a statistical task that $\mcM$ solves with failure probability at most $\beta$, recalling that $\mcT(\mcD)$ is the set of correct answers to the task under distribution $\mcD$, we have:
\begin{align*}
    \Prx_{\bS \sim \mcD^n}[\widetilde{\mcM}(\bS) \in \mcT(\mcD)]&=\Prx_{\substack{\bS \sim \mcD^n\\ \bpi \sim \mathfrak{S}(n)}}[\mcM(\bpi(\bS)) \in \mcT(\mcD)] \\
    &=\Prx_{\substack{\bS \sim \mcD^n}}[\mcM(\bS) \in \mcT(\mcD)] \\
    &\geq 1-\beta.
\end{align*}
So $\widetilde{\mcM}$ also solves task $\mcT$ with failure probability at most $\beta$.
\end{proof}

We also recall the definition of TV-distance.
\begin{definition}
    Let $\mcD, \mcD'$ be two distributions over $Y$. The TV-distance between $\mcD$ and $\mcD'$ is defined as:
    $$\dtv(\mcD, \mcD'):=\max_{Y' \subseteq Y}\big| \Prx_{\by \sim \mcD}[\by \in Y] - \Prx_{\by \sim \mcD'}[\by \in Y] \big|.$$

    It will sometimes be convenient to work with the following equivalent definition:
     $$\dtv(\mcD, \mcD'):= \frac{1}{2}\lone{\mcD-\mcD'}=\frac{1}{2}\sum_{y \in Y}\big| \mcD(y)-\mcD'(y) \big|.$$
\end{definition}

Finally, we recall the postprocessing property of differential privacy. 
\begin{proposition}[Postprocessing, see Proposition 2.1 in \cite{DR14book}]
    Let $\mcM:X^n \to Y$ and $\mcA : Y \to Z$ be algorithms. If $\mcM$ is $(\epsilon, \delta)$-differentially private then the composed algorithm $\mcA \circ \mcM : X^n \to Z$ is $(\epsilon,\delta)$-differentially private. 
\end{proposition}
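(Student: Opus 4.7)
The plan is to reduce the claim to the definition of differential privacy applied to $\mcM$ alone, by pulling back an output set $Z' \subseteq Z$ through $\mcA$ to a set $Y' \subseteq Y$. The main mild obstacle is that $\mcA$ is allowed to be randomized, so we cannot simply take $Y' = \mcA^{-1}(Z')$ as a preimage under a function; we first handle the deterministic case cleanly and then lift to the randomized case by conditioning on $\mcA$'s internal randomness.

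\medskip

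\noindent\textbf{Step 1 (deterministic postprocessing).} Fix any $S, S' \in X^n$ differing in a single coordinate and any $Z' \subseteq Z$. Suppose first that $\mcA$ is deterministic. Let $Y' \coloneqq \{y \in Y : \mcA(y) \in Z'\}$. Then, since $\mcA \circ \mcM (S) \in Z'$ if and only if $\mcM(S) \in Y'$,
\begin{align*}
    \Pr[\mcA(\mcM(S)) \in Z'] \;=\; \Pr[\mcM(S) \in Y'] \;\leq\; e^{\epsilon} \Pr[\mcM(S') \in Y'] + \delta \;=\; e^{\epsilon} \Pr[\mcA(\mcM(S')) \in Z'] + \delta,
\end{align*}
where the inequality is the DP guarantee of $\mcM$ applied to the (fixed, hence measurable) set $Y'$.

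\medskip

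\noindent\textbf{Step 2 (randomized postprocessing).} Now let $\mcA$ be randomized. The plan is to view $\mcA$ as a distribution over deterministic algorithms indexed by its internal randomness $\br$: for each fixed $r$, let $\mcA_r : Y \to Z$ denote the deterministic algorithm that uses randomness $r$, and set $Y_r' \coloneqq \{y \in Y : \mcA_r(y) \in Z'\}$. Applying Step~1 to each $\mcA_r$ gives, for every $r$,
\begin{align*}
    \Pr[\mcA_r(\mcM(S)) \in Z'] \;=\; \Pr[\mcM(S) \in Y_r'] \;\leq\; e^{\epsilon} \Pr[\mcM(S') \in Y_r'] + \delta.
\end{align*}
Taking expectation over $\br$ (which is drawn independently of $\mcM$'s randomness), linearity of expectation yields
\begin{align*}
    \Pr[\mcA(\mcM(S)) \in Z'] \;=\; \Ex_{\br}\bracket*{\Pr[\mcA_{\br}(\mcM(S)) \in Z']} \;\leq\; e^{\epsilon} \Ex_{\br}\bracket*{\Pr[\mcA_{\br}(\mcM(S')) \in Z']} + \delta \;=\; e^{\epsilon} \Pr[\mcA(\mcM(S')) \in Z'] + \delta.
\end{align*}
Since $S, S', Z'$ were arbitrary, this establishes that $\mcA \circ \mcM$ is $(\epsilon, \delta)$-differentially private. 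The hard part (insofar as there is one) is just making sure the exchange of the DP inequality with the expectation over $\br$ is justified, which follows from independence of $\mcA$'s and $\mcM$'s randomness and from the fact that the DP inequality holds uniformly in $r$.
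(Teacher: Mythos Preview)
Your proof is correct and follows the standard textbook argument (deterministic preimage, then average over $\mcA$'s randomness). The paper itself does not prove this proposition; it simply states it as a known fact with a reference to \cite{DR14book}, so there is nothing to compare against beyond noting that your argument is the usual one.
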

\begin{definition}[Distance between two sets]
     Let $S,S' \in X^n$ (or alternatively $S,S' \in \binom{X}{n}$), we define $\dist(S,S') := \sum_{i \in [n]}\mathbbm{1}[S_i \not \in S']$. 
\end{definition}

\section{Proof of \Cref{thm:axioms-imply-dp-intro}: Our axioms imply DP}
\label{sec:axioms-to-DP}

In this section, we prove that for any privacy measure $\mcP$ satisfying our axioms, any $\mcP$-private algorithm can be converted to a differential private one with only a polynomial increase in the sample size. We begin with a formal version of \Cref{thm:axioms-imply-dp-intro}.

\setcounter{theorem}{0}
\begin{restatable}[Our axioms imply DP, formal version]{theorem}{AxiomsImplyDP}
    \label{thm:axioms-to-dp-formal}
    Let $\mcP$ be any privacy measure satisfying \Cref{axiom:amplification-formal,axiom:blalant-formal,axiom:preprocessing-formal,axiom:strong-composition-formal} and $\mcM : X^n \to Y$ be any $\mcP$-private algorithm. For any $\eps, \delta, \beta > 0$ and, $c$ the constant in \Cref{axiom:strong-composition-formal} and $p$ the polynomial in \Cref{axiom:amplification-formal}, define
    \begin{equation*}
        m' \coloneqq \tilde{O}\left(\frac{r^2\cdot n^2 \cdot \log(1/\delta)}{\beta^2 \cdot \epsilon}\right) \text{ where } r= \max \left(n \cdot p(n,1/\beta), n^{\frac{1}{1-c}} \right).
    \end{equation*}
    Then there is an $(\eps,\delta)$-DP algorithm $\mcM'$ using $m'$ samples that is $(\beta, \beta' \coloneqq O(\beta))$-equivalent to $\mcM$.
\end{restatable}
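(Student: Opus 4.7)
The plan is a two-step reduction through the intermediate notion of TV-stability. First, I would invoke Theorem \ref{thm:axioms-to-tv-overview} on the input $\mcM : X^n \to Y$ to produce an equivalent $\rho$-TV-stable algorithm on some $M = \poly_\rho(n)$ samples, with $\rho$ chosen small enough (roughly $\rho = \Theta(\beta)$ for the correctness accounting) so that the next step yields the target DP parameters. Then I would apply Lemma \ref{lem:TV-to-DP} to turn this TV-stable algorithm into an $(\epsilon, \delta)$-DP algorithm $\mcM' : X^{m'} \to Y$ at a further $\tilde O(\log(1/\delta) / (\rho \epsilon))$ multiplicative cost in sample size. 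Tracking parameters through both steps yields the claimed $m'$.

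The heart of the argument is the first step. Starting from the $\mcP$-private $\mcM$, I apply Axiom \ref{axiom:amplification} (linear scalability) to obtain an equivalent algorithm $\widehat{\mcM}$ on $kn$ samples with $\mcP(\widehat{\mcM}) \leq O(1/k)$, for a $k$ to be chosen. The claim is that $\widehat{\mcM}$ must then be $\rho$-TV-stable. Suppose not; then by the key Lemma \ref{lem:key-lemma-overview}, composing $\ell = \tilde O(kn)$ independent uniformly-permuted copies $\widehat{\mcM} \circ \bsigma_1, \dots, \widehat{\mcM} \circ \bsigma_\ell$ of $\widehat{\mcM}$ yields an algorithm whose outputs determine the input dataset (up to symmetrization) and is therefore blatantly non-private. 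By Axiom \ref{axiom:preprocessing}, each copy has privacy no larger than $\widehat{\mcM}$, so Axiom \ref{axiom:strong-composition} implies the composition is $\mcP$-private whenever $(1/k)\cdot \ell^c \cdot \polylog \leq 1$; solving this against $\ell = \tilde O(kn)$ produces a polynomial lower bound on $k$ of the form $k \gtrsim n^{\Theta(1/(1-c))}$, contradicting Axiom \ref{axiom:blalant}. A further reduction, using Axiom \ref{axiom:preprocessing}'s domain-remapping (e.g.\ via a universal hash family), restricts to $|X| = O(M^2)$ so that the $\log |X|$ factor in $\ell$ is polylogarithmic in $M$.

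Setting $k = r$ with $r = \max(n \cdot p(n, 1/\beta),\, n^{1/(1-c)})$ simultaneously meets Axiom \ref{axiom:amplification}'s precondition and the strong-composition requirement, giving $M = \tilde O(r n)$. The subsequent TV-to-DP conversion then produces
\[
m' \;=\; \tilde O\!\left(\frac{M \log(1/\delta)}{\rho \epsilon}\right) \;=\; \tilde O\!\left(\frac{r^2 n^2 \log(1/\delta)}{\beta^2 \epsilon}\right),
\]
matching the statement. Correctness degrades only by a constant factor at each step, so the final algorithm is $(\beta, O(\beta))$-equivalent to $\mcM$, and Fact \ref{fact:symmetric} allows free symmetrization of intermediate algorithms.

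The main obstacle is the key Lemma \ref{lem:key-lemma-overview}, which reduces to the non-Markovian random walk of Claim \ref{claim:random-walk-overview}: given $S, S' \in \binom{X}{m}$ at distance $d$, construct $\bT^0, \dots, \bT^k$ for $k = \lceil m/d \rceil$ such that each consecutive pair $(\bT^{i-1}, \bT^i)$ has the same distribution as $(\bsigma(S), \bsigma(S'))$ for a uniform $\bsigma$, while the endpoint pair $(\bT^0, \bT^k)$ is uniform over disjoint $m$-element subsets. Independent per-step rerandomizations would fail to cover all $m$ positions, forcing the steps to be coupled; and when $m/d$ is non-integral, some positions must be rerandomized twice, threatening to collide with elements of $\bT^0$ and destroy disjointness. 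Designing a coupling that simultaneously preserves the pairwise marginals and yields the correct global endpoint distribution is where the real technical work lies.
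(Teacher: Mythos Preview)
Your two-step skeleton (axioms $\to$ TV-stability $\to$ DP) matches the paper exactly, but the parameter accounting has real gaps. First, Lemma~\ref{lem:TV-to-DP} requires $\rho$ to be a fixed universal constant $\rho^\star$, not $\Theta(\beta)$; its multiplicative cost is $\tilde O(\log(1/\delta)/\eps)$ with no $1/\rho$ in the denominator. Consequently your final display does not type-check: with $M=\tilde O(rn)$ and $\rho=\Theta(\beta)$ the left side $\tilde O(M\log(1/\delta)/(\rho\eps))$ equals $\tilde O(rn\log(1/\delta)/(\beta\eps))$, which is off from the claimed $\tilde O(r^2n^2\log(1/\delta)/(\beta^2\eps))$ by a factor of $rn/\beta$.

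More substantively, the $r^2n^2/\beta^2$ in the statement comes from the \emph{TV-stability} step, not the DP conversion, and your argument does not produce it. The contradiction via blatant non-privacy only yields TV-stability of $\widehat{\mcM}$ under the \emph{uniform} distribution on $X$, and only when $|X|\gtrsim (kn)^2$ so that a random sample has no collisions. To get $\rho$-TV-stability under \emph{all} distributions $\mcD$, the paper case-splits: for large domains it transfers stability from an arbitrary $\mcD$ to $\Unif(X')$ on a size-$\Theta(m^2)$ subset via a random domain remapping $\bsigma(x)\iid\mcD$ (Claim~\ref{claim:transfer-stability}); for small domains $|X|\le O(m^2)$ it abandons the contradiction argument entirely and instead learns $\mcD$ to TV-accuracy $O(\beta/n)$ and simulates fresh samples (Lemma~\ref{lem:learn_when_X_small}), costing $O(|X|\cdot n^2/\beta^2)=\tilde O(m^2 n^2/\beta^2)$ samples. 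This small-domain branch is the sole source of the $r^2n^2/\beta^2$ factor, and your proposal omits both the transfer step and this branch. Your ``universal hash family'' remark goes the wrong direction: the issue is not just shrinking $|X|$ to control $\log|X|$, but also handling genuinely small $|X|$ where the blatant-non-privacy adversary has no high-entropy distribution to work with.
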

\setcounter{theorem}{3}

This conversion relies on the two following lemmas. The first allows us to go from our axioms to TV-Stability. While the later one allows us to go from TV-Stability to differential privacy. 
\begin{restatable}[Our privacy axioms imply TV-Stability, formal version]{theorem}{AxiomsToTV}
    \label{lem:axioms-to-tv-formal}
    Let $\rho>0$ be a constant, $\mcP$ be any privacy measure satisfying \Cref{axiom:amplification-formal,axiom:blalant-formal,axiom:preprocessing-formal,axiom:strong-composition-formal} and $\mcM : X^n \to Y$ be any $\mcP$-private algorithm. Let be $c$ the constant in \Cref{axiom:strong-composition-formal} and $p$ the polynomial in \Cref{axiom:amplification-formal}, define
    \begin{equation*}
        m' \coloneqq \tilde{O}\left(\frac{r^2 \cdot n^2}{\beta^2}\right) \text{ where } r= \max \left(n \cdot p (n,1/\beta), n^{\frac{1}{1-c}} \right).
    \end{equation*}
    Then there is a $\rho$-TV stable $\mcM'$ using $m'$ samples that is $(\beta, \beta' \coloneqq O(\beta))$-equivalent to $\mcM$.
\end{restatable}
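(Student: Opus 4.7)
The plan is to establish the theorem by contrapositive: if an algorithm $\mcM'$ using enough samples has sufficiently small $\mcP$-value, then it must already be $\rho$-TV-stable, since otherwise the axioms would let us construct a blatantly non-private algorithm as a composition of preprocessed copies of $\mcM'$, contradicting \Cref{axiom:blalant-formal}. Combined with \Cref{axiom:amplification-formal}, which trades sample size for $\mcP$-value, this produces the required $\mcM'$ from $\mcM$.

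First I would choose a blow-up factor $k \coloneqq \tilde{\Theta}\bigl(\max(p(n,1/\beta),\, n^{c/(1-c)})\bigr) = \tilde{\Theta}(r/n)$ and apply \Cref{axiom:amplification-formal}, producing $\mcM_1 : X^{m_1} \to Y$ with $m_1 = kn = \tilde{\Theta}(r)$, $\mcP(\mcM_1) \leq O(1/k)$, that is $(\beta,O(\beta))$-equivalent to $\mcM$. Passing to the symmetrized algorithm $\widetilde{\mcM_1}$ from \Cref{def:symmetrization}, equivalence is preserved by \Cref{fact:symmetric} and its $\mcP$-value is controlled using the first part of \Cref{axiom:preprocessing-formal}. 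The calibration of $k$ is chosen so that after composing $\ell \coloneqq \tilde{\Theta}(m_1)$ copies, the quantity $\eps' = \tilde{O}((1/k)\ell^c)$ from \Cref{axiom:strong-composition-formal} is at most $1$: solving $k \geq \tilde{\Omega}((kn)^c)$ yields $k \geq \tilde{\Omega}(n^{c/(1-c)})$, explaining the $n^{1/(1-c)}$ term in $r$.

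The core argument runs by contradiction. Suppose $\widetilde{\mcM_1}$ is not $\rho$-TV-stable, witnessed by a distribution $\mcD$. I would first use the second part of \Cref{axiom:preprocessing-formal} to collapse the domain to a set of size $O(m_1^2)$ (preserving $\Omega(\rho)$ TV-instability on the uniform distribution over the reduced domain), both to satisfy the high-entropy requirement $\mcD(x) \leq 1/(100 m_1^2)$ of \Cref{def:blatant} and to keep $\log|X| = O(\log m_1)$. Next I would sample $\ell$ i.i.d.\ uniform permutations $\bsigma_1, \ldots, \bsigma_\ell$ of the reduced domain. \Cref{lem:key-lemma-overview} guarantees each $\widetilde{\mcM_1} \circ \bsigma_i$ contributes $\Omega(\rho\,d/m_1)$ total variation between any pair $S, S'$ at distance $d$. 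Tensorizing across the $\ell$ independent outputs (e.g.\ via the Bhattacharyya coefficient) gives joint TV at least $1 - \exp(-\Omega(\ell \rho^2 d^2 / m_1^2))$, which by our choice of $\ell$ exceeds $1 - \binom{|X|}{m_1}^{-2}$ for every pair at distance $d \geq 0.1 m_1$. A union bound then shows the maximum-likelihood adversary applied to the $\ell$ outputs reconstructs a dataset agreeing with the input on at least $0.9 m_1$ points---blatant non-privacy with respect to the uniform distribution on the reduced domain. By averaging, fix a deterministic tuple $(\sigma_1, \ldots, \sigma_\ell)$ realizing this event; each $\widetilde{\mcM_1} \circ \sigma_i$ is then a preprocessing of $\widetilde{\mcM_1}$, and \Cref{axiom:preprocessing-formal} gives $\mcP(\widetilde{\mcM_1} \circ \sigma_i) \leq O(1/k)$. \Cref{axiom:strong-composition-formal} then makes the $\ell$-fold composition $\mcP$-private (since $\eps' \leq 1$ by our choice of $k$), contradicting \Cref{axiom:blalant-formal}.

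Two technicalities remain: the domain reduction (one random remapping via \Cref{axiom:preprocessing-formal} with constant loss in $\rho$), and the upgrade from TV-stability against the specific witness distribution to TV-stability against \emph{all} distributions, which I expect is the source of the extra $n^2/\beta^2$ factor in $m'$ via a further subsample/union-bound step wrapped around the core reduction. The main obstacle I anticipate is the careful accounting of randomness across the symmetrization, random domain preprocessing, and the averaging that fixes the $\sigma_i$---so that the final deterministic-$\sigma_i$ composition genuinely satisfies $\mcP(\widetilde{\mcM_1} \circ \sigma_i) \leq O(1/k)$ on every component, without invoking any convexity property of $\mcP$ that is not among the axioms.
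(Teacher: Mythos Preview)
Your overall strategy matches the paper's: amplify via \Cref{axiom:amplification-formal} to drive $\mcP(\mcM_1)\le O(1/k)$ with $k=\tilde\Theta(r/n)$, symmetrize, and then argue by contradiction that any failure of $\rho$-TV-stability lets one compose $\ell=\tilde\Theta(m_1)$ preprocessed copies into a blatantly non-private algorithm, contradicting \Cref{axiom:blalant-formal}. Your adversary via Bhattacharyya tensorization plus maximum likelihood is a fine alternative to the paper's black-box use of hypothesis selection (\Cref{fact:hypothesis-selection}); both give $\ell=O(m_1\log m_1)$ once the domain has size $\Theta(m_1^2)$.

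The substantive gap is in your two ``technicalities.'' The domain-collapse remapping only works when $|X|\ge\Omega(m_1^2)$: a map $\sigma:X\to X$ cannot enlarge the domain, so if $|X|$ is already small the uniform distribution on $X$ violates the entropy requirement of \Cref{def:blatant} and the blatant-non-privacy contradiction is simply unavailable. The paper handles this small-domain case by an entirely different, axiom-free route (\Cref{lem:learn_when_X_small}): learn $\mcD$ to TV-accuracy $O(\beta/n)$ using $O\bigl((|X|+\log(1/\beta))(n/\beta)^2\bigr)$ samples and run $\mcM$ on fresh draws from the learned distribution. \emph{This} branch is the true source of the $r^2 n^2/\beta^2$ term in $m'$, not the ``further subsample/union-bound'' you conjecture.

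Relatedly, the ``upgrade from the specific witness $\mcD$ to all distributions'' is not a separate step in the large-domain regime. The very same random remapping $\sigma$ with $\sigma(x)\sim\mcD$ i.i.d.\ simultaneously collapses the domain \emph{and} transfers $\stabtv(\widetilde{\mcM_1},\mcD)$ to $\stabtv(\widetilde{\mcM_1}\circ\sigma,\Unif(X'))$ with only a factor-$2$ loss (\Cref{claim:transfer-stability}). So once you prove TV-stability on $\Unif(X')$ for every $X'\subseteq X$ of the right size, you already have it everywhere, with no additional sample blowup; the $n^2/\beta^2$ overhead comes entirely from the small-domain case you have not addressed.
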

\setcounter{theorem}{4}

\begin{restatable}{lemma}{tvToDP}\label{lem:TV-to-DP}
    There is a universal constant $1>\rho^\star>0$ such that if $\mcM:X^n \to Y$ is an $\rho^\star$-TV-Stable algorithm, then there exists a $(\beta,5\beta)$-equivalent algorithm $\mcM':X^m \to Y$ which is $(\epsilon,\delta)$-differentially private using
    $$m=n \cdot O\Bigg(\log(1/\beta) \cdot \frac{\log(1/\beta)+\log(1/\delta)}{\epsilon}\Bigg) \text{ samples.}$$
\end{restatable}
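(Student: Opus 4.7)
The plan is a two-stage reduction paralleling the TV-indistinguishability $\Rightarrow$ replicability $\Rightarrow$ DP chain of \cite{BGHILPSS23}: first convert TV-stability into replicability via an oblivious correlated-sampling (maximal-coupling) wrapper, then aggregate many replicable runs using a differentially private heavy-hitter mechanism.

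For the first stage I would construct $\tilde{\mcM}$ by the standard rejection-sampling coupling: fix a reference measure $\mu$ on $Y$ (e.g.\ uniform), let the shared randomness $\mathbf{R}$ consist of an iid sequence $(\by_t, \mathbf{u}_t) \sim \mu \times \mathrm{Unif}([0,1])$, and on input $S$ output the first $\by_t$ with $\mathbf{u}_t \le \mcM(\cdot \mid S)(\by_t)/\mu(\by_t)$. For each fixed $S$ the marginal output matches $\mcM(S)$, and for any pair $(S,S')$ the collision probability (over $\mathbf{R}$) equals $1 - \dtv(\mcM(S),\mcM(S'))$. Averaging over $\bS, \bS' \iid \mcD^n$ and invoking $\rho^\star$-TV-stability gives $\Pr[\tilde{\mcM}(\bS;\mathbf{R}) = \tilde{\mcM}(\bS';\mathbf{R})] \ge 1 - \rho^\star$. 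Writing the conditional collision probability given $\mathbf{R}$ as $\sum_y p_{\mathbf{R}}(y)^2$, where $p_{\mathbf{R}}$ is the distribution of $\tilde{\mcM}(\bS;\mathbf{R})$ over $\bS \sim \mcD^n$, and combining $\sum_y p^2 \le \max_y p$ with Markov's inequality yields a $(1-O(\sqrt{\rho^\star}))$-fraction of tapes $\mathbf{R}$ on which $p_{\mathbf{R}}$ has a ``canonical atom'' $y^\star(\mathbf{R})$ of mass at least $1-O(\sqrt{\rho^\star})$. Choosing $\rho^\star$ a sufficiently small absolute constant forces both fractions above $0.95$.

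For the second stage, partition the $m = kn$ samples into $k$ disjoint size-$n$ batches and run $\tilde{\mcM}$ on each batch using the \emph{same} tape $\mathbf{R}$. Conditioned on a good $\mathbf{R}$ the outputs $\by_1,\dots,\by_k$ are iid draws from $p_{\mathbf{R}}$, so a Chernoff bound places $y^\star(\mathbf{R})$ in at least $0.9k$ of them except with probability $\beta$. Feeding the multiset $\{\by_1,\dots,\by_k\}$ into an $(\epsilon,\delta)$-DP stability-based heavy-hitter mechanism (e.g.\ the stable histogram in \cite{DR14book}) returns any $0.9k$-frequent element with failure probability $\beta$ provided $k = O\!\left(\log(1/\beta)\cdot(\log(1/\beta) + \log(1/\delta))/\epsilon\right)$, matching the stated sample bound. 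Post-processing preserves DP, and correctness on statistical tasks transfers because $y^\star(\mathbf{R})$ is marginally distributed as a valid output of $\mcM$ on an iid sample from $\mcD^n$. Summing the $O(\beta)$ failures (bad tape, Chernoff concentration, histogram failure, and the slack in the underlying $\mcM$-correctness guarantee) yields the $5\beta$ total.

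The main obstacle is the first stage: the maximal coupling must be realized \emph{obliviously} to the unknown data distribution, and the \emph{averaged} TV bound must be parlayed into the existence of a \emph{single} dominant atom of each per-tape output distribution. The rejection-sampling scheme above handles obliviousness, and the atom existence follows from the sum-of-squares / Markov calculation sketched. The remaining subtlety is that the $k$ inputs to the histogram are not unconditionally independent (they share $\mathbf{R}$), but this is harmless because the stability-based histogram is DP as a function of its input multiset irrespective of how that multiset was generated, and $\mathbf{R}$ is part of the algorithm's internal randomness rather than the data.
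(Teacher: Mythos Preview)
Your first stage (correlated sampling to obtain replicability) is correct and is essentially what the paper does as well, though the paper outsources it by citing \cite{BGHILPSS23}'s sample-perfect-generalization $\Rightarrow$ replicability chain rather than spelling out the rejection-sampling coupling.

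The second stage, however, has a real gap: you commit to a \emph{single} shared tape $\mathbf{R}$ across all $k$ batches. The event that $\mathbf{R}$ fails to have a dominant canonical atom has probability $\Theta(\sqrt{\rho^\star})$, a fixed constant, \emph{not} $O(\beta)$. So the ``bad tape'' term in your failure accounting is wrong, and your algorithm cannot achieve $(\beta,5\beta)$-equivalence for small $\beta$. (Relatedly, the claim that $y^\star(\mathbf{R})$ is ``marginally distributed as a valid output of $\mcM$'' is off: $y^\star(\mathbf{R})$ is the mode of $p_{\mathbf{R}}$, not a sample from it; the two distributions are only $\rho^\star$-close in TV, which again contributes a constant rather than $O(\beta)$.)

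The paper's fix is to draw $k = O(\log(1/\beta))$ \emph{independent} tapes $r_1,\dots,r_k$, run each one on $\ell = O((\log(1/\beta)+\log(1/\delta))/\eps)$ fresh batches, and feed each batch of $\ell$ outputs into a DP ``Pick-Heavy'' selector that deliberately outputs $\bot$ when there is no element of frequency $\gtrsim 0.7\ell$. This $\bot$-mechanism is the key missing idea: it lets you discard the constant fraction of non-replicable tapes at no correctness cost, while a Markov argument shows that the tapes whose canonical answer is actually \emph{wrong} have probability only $O(\beta)$. Returning a uniformly random non-$\bot$ answer then gives error $O(\beta)$, and the product $k\cdot\ell$ recovers the stated sample bound.
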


With the above lemmas in mind, the proof of \Cref{thm:axioms-to-dp-formal} follows easily.
\begin{proof}[Proof of \Cref{thm:axioms-to-dp-formal}]
     We first apply \Cref{lem:axioms-to-tv-formal} to $\mcM$ to get a $(\beta,\beta')$-equivalent, $\beta'=O(\beta)$, and $\rho^\star$-TV-Stable algorithm $\mcM': X ^{m'} \to Y$ algorithm and using $$m'=\tilde{O}(m^2 \cdot n^2/\beta^2) \text{ where }m=n \cdot \max\left( n \cdot p(n, 1/\beta), n^{\frac{1}{1-c}} \right)\text{ samples. }$$

    By applying \Cref{lem:TV-to-DP} to $\mcM'$, we have that there exists a $(\beta',5\beta')$-equivalent algorithm $\mcM^*:X^{m^*} \to Y$ which is $(\epsilon,\delta)$-differentially private and using 
    $$m^*=m'\cdot O\left(\log(1/\beta) \frac{\log(1/\beta)+\log(1/\delta)}{\epsilon}\right)=\tilde{O}\left(\frac{m^2 \cdot n^2 \cdot \log(1/\delta)}{\beta^2 \cdot \epsilon}\right) \text{ samples.}$$

    In particular, $\mcM$ and $\mcM'$ are $(\beta,O(\beta))$-equivalent.  
\end{proof}

We prove \Cref{lem:TV-to-DP} in \Cref{sec:TV-to-DP}. In particular \Cref{lem:TV-to-DP} follows almost identically from the work of \cite{BGHILPSS23} who gave a transformation from perfect generalization (which is related to our notion of TV-Stability) to differential privacy. The main difference is that the  transformation of \cite{BGHILPSS23} makes the error probability of the algorithm go from $\beta$ to $O(\log(1/\beta) \beta)$. While our error only grows to $O(\beta)$, this also comes at the cost of a worse dependency on $\log(1/\beta)$ in sample complexity compared to the result of \cite{BGHILPSS23}. 

The rest of this section is dedicated to the proof of \Cref{lem:axioms-to-tv-formal}.

\pparagraph{Useful notation:} Note that throughout this section, we often use $\gamma$ to denote a pair of $(\sigma,\pi)$ where $\sigma \in \mathfrak{S}(X)$ and $\pi \in \mathfrak{S}([n])$. For ease of notation, we will use $\bgamma \sim \mathfrak{S}(X,n)$ to denote drawing such a pair uniformly at random. Finally, given $\mcM: X^n \to Y$, we denote by $\mcM \circ \gamma$, the algorithm such that $\mcM \circ \gamma(S)= \mcM \circ \sigma \circ \pi(S)$ (i.e. the algorithm reorders the elements according to $\pi$ and then remaps them according to $\sigma$ and then runs $\mcM$).

\subsection{A first step toward \Cref{lem:axioms-to-tv-formal}}

We begin with a simplified version of \Cref{lem:axioms-to-tv-formal}. While \Cref{lem:axioms-to-tv-formal} guarantees TV-Stability on \emph{all distributions}, this simplified setting will only guarantee we obtain an algorithm that is TV-stable on uniform distributions with small support. It will also assume $\mcM$ satisfies a stronger privacy guarantee than in \Cref{lem:axioms-to-tv-formal}. We will do away with these assumptions in \Cref{sec:remove-assumption}.  
\begin{restatable}[TV-Stability in a simplified setting]{lemma}{AxiomsToTVSimplerFormal}\label{lem:axioms-to-tv-simpler-formal}
    Fix any constant $\rho > 0$. For any domain $X^\star$ of size at least $\frac{100}{\rho} \cdot n^2$, privacy measure $\mcP$ satisfying \Cref{axiom:blalant-intro,axiom:preprocessing-intro,axiom:strong-composition-intro}, and algorithm $\mcM:(X^\star)^n \to Y$. There exists constant $\alpha>0$ and $t>0$ such that if $$\mcP(\mcM) \leq \frac{1}{\alpha \cdot n^c \log^{t}(n)},$$ where $c$ is the constant in \Cref{axiom:strong-composition-formal}, and $t$ depends on the $\polylog$ factors in \Cref{axiom:strong-composition-formal}, then the ``symmetrized" version of $\mcM$ (See \Cref{def:symmetrization}) denoted $\widetilde{\mcM}$

    satisfies $\stabtv(\widetilde{\mcM}, \mcD \coloneqq \Unif(X))) \leq \rho$ for every $X \subseteq X^\star$ of size $\ceil{\frac{100}{\rho}} \cdot n^2$. 
\end{restatable}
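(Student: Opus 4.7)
The plan is to argue by contrapositive. Suppose there is some $X \subseteq X^\star$ with $|X| = N \coloneqq \lceil 100/\rho \rceil \cdot n^2$ on which $\stabtv(\widetilde{\mcM}, \Unif(X)) > \rho$. The goal is to build a composition of preprocessed copies of $\mcM$ whose privacy is simultaneously forced to be high (by Axioms~\ref{axiom:preprocessing} and~\ref{axiom:strong-composition}) and low (via an explicit MLE attack demonstrating blatant non-privacy), contradicting Axiom~\ref{axiom:blalant}.

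\emph{Step 1 (lift into the key lemma).} Because $|X| = \Theta(n^2/\rho)$, an i.i.d.\ draw from $\Unif(X)^n$ has no internal collisions with probability $1-O(\rho)$, and two independent such draws are disjoint with probability $1-O(\rho)$. Since $\widetilde{\mcM}$ is symmetric, these observations lift the hypothesis to
\begin{equation*}
  \rho^\star \coloneqq \Ex_{\bS,\bS' \sim \Unif(\binom{X}{n})}\!\bigl[\dtv(\widetilde{\mcM}(\bS),\widetilde{\mcM}(\bS')) \,\big|\, |\bS\cap\bS'|=0\bigr] = \Omega(\rho).
\end{equation*}
Applying the key lemma (\Cref{lem:key-lemma-overview}) to $\widetilde{\mcM}|_X \colon \binom{X}{n} \to Y$ (valid since $|X| \geq 2n$) yields, for every $S,S' \in \binom{X}{n}$ and $\bsigma \sim \mathfrak{S}(X)$,
\begin{equation*}
  \Ex_{\bsigma}\!\bigl[\dtv(\widetilde{\mcM}\circ\bsigma(S),\widetilde{\mcM}\circ\bsigma(S'))\bigr] \geq \tfrac{\rho^\star}{2}\cdot\tfrac{\dist(S,S')}{n}.
\end{equation*}
In particular, any pair at distance $\geq 0.05\,n$ is separated by expected TV gap $\Omega(\rho)$ under a single random domain permutation.

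\emph{Step 2 (force $\mcP$-privacy on every fixed composition).} Set $\ell = \Theta(n\log n/\rho^2)$. For each $i\in[\ell]$, take independent $\bsigma_i \in \mathfrak{S}(X^\star)$ (extending each $X$-permutation by the identity on $X^\star\setminus X$) and $\bpi_i \in \mathfrak{S}([n])$, and collapse them into a single deterministic compound preprocessing $\btau_i \coloneqq \bsigma_i \circ \bpi_i$ of $\mcM$; since $\sigma_i$ remaps values while $\pi_i$ permutes positions, these commute, so for any fixing $\tau_1,\dots,\tau_\ell$ the composed algorithm
\begin{equation*}
  \mcM^{\mathrm{comp}}(S) \coloneqq \bigl(\mcM\circ\tau_1(S),\dots,\mcM\circ\tau_\ell(S)\bigr)
\end{equation*}
equals $(\widetilde{\mcM}\circ\sigma_1,\dots,\widetilde{\mcM}\circ\sigma_\ell)$ after absorbing each $\pi_i$ into the internal randomness of its copy of $\widetilde{\mcM}$. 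Each piece $\mcM\circ\tau_i$ has $\mcP$-privacy at most $\mcP(\mcM)\leq 1/(\alpha n^c\log^t n)$ by Axiom~\ref{axiom:preprocessing}, so Axiom~\ref{axiom:strong-composition} gives $\mcP(\mcM^{\mathrm{comp}}) \leq \tilde{O}(\mcP(\mcM)\cdot\ell^c) \leq 1$, provided $\alpha$ is a large enough constant and $t$ absorbs both the $\polylog(n)$ from Axiom~\ref{axiom:strong-composition} and the $\log^c n$ arising from $\ell^c = O(n^c\log^c n)$. Hence every fixed $\mcM^{\mathrm{comp}}$ is $\mcP$-private and, by Axiom~\ref{axiom:blalant}, not blatantly non-private.

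\emph{Step 3 (MLE attack produces blatant non-privacy, a contradiction).} Consider the adversary that on output $y$ returns $\hat S \coloneqq \arg\max_{S'\in\binom{X}{n}}\Pr[\mcM^{\mathrm{comp}}(S')=y]$, with the probability taken over all internal randomness (the $\btau_i$ and $\mcM$'s coins). Pinsker combined with Jensen turns Step~1's per-round expected TV bound into expected per-round KL of $\Omega(\rho^2)$ for any pair at distance $\geq 0.05n$; KL tensorizes across the $\ell$ independent rounds, so a standard log-likelihood concentration argument shows the MLE errs on such a pair with probability $\exp(-\Omega(\ell\rho^2))$. Since $\log\binom{|X|}{n} = O(n\log n)$ for $|X| = O(n^2/\rho)$, the choice $\ell = \Theta(n\log n/\rho^2)$ with a union bound forces $\hat S$ to lie within distance $0.05n$ of $\bS$ with probability $1-o(1)$ when $\bS \sim \Unif(X)^n$. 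As $|X| \geq 100n^2$, $\Unif(X)$ meets the high-entropy requirement of \Cref{def:blatant}, so the \emph{randomized} $\mcM^{\mathrm{comp}}$ (with the $\btau_i$ as internal coins) is blatantly non-private with expected overlap at least $0.95n - o(n) \geq 0.9n$. Averaging over $\btau_i$ then exhibits a fixing $\tau_i^\star$ for which $\mcM^{\mathrm{comp}}$ is blatantly non-private, contradicting Step~2 and completing the contrapositive.

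The main hurdle lies in Step~3: the key lemma only guarantees the \emph{expected} per-round TV gap is $\Omega(\rho)$, not a pointwise lower bound, so one must pass through KL (which both tensorizes across rounds and satisfies $\Ex_\sigma[\mathrm{KL}_\sigma] \geq 2\Ex_\sigma[\dtv_\sigma]^2$ by Jensen) and carefully handle log-likelihood concentration uniformly over all pairs at distance $\geq 0.05n$. Once this concentration is in hand, tuning $\alpha$ and $t$ so that Axiom~\ref{axiom:strong-composition}'s $\tilde O(\ell^c)$ loss is absorbed into the threshold $1/(\alpha n^c\log^t n)$ is routine constant-chasing.
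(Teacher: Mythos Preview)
Your overall skeleton matches the paper's, but Step~3 has a fatal gap. You define the adversary to return $\arg\max_{S'}\Pr[\mcM^{\mathrm{comp}}(S')=y]$ with the probability \emph{marginalized} over the random preprocessings $\btau_i$. But when $\bsigma_i$ is a uniformly random permutation of $X$, the law of $\bsigma_i(S)$ is the same for every $S\in\binom{X}{n}$ (uniform over $\binom{X}{n}$), so the marginal law of each coordinate $\mcM\circ\btau_i(S)$, and hence of the full output $\mcM^{\mathrm{comp}}(S)$, is \emph{identical} for every $S$. The likelihood you maximize is therefore constant in $S'$, the MLE carries no information, and the randomized mechanism is not blatantly non-private; there is nothing to average to obtain your fixing $\tau^\star$. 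Your Pinsker/tensorization line silently switches to the \emph{expected conditional} KL, $\Ex_\sigma[\mathrm{KL}(\mcM\circ\sigma(S)\Vert\mcM\circ\sigma(S'))]$, but that quantity does not control the KL of the mixture your adversary actually faces (by convexity it only upper-bounds it).

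The paper avoids this by making the adversary a function of the preprocessing tuple: $\mcA_{\bar\gamma}$ pairs each observed $y_i$ with the known $\gamma_i$ to form i.i.d.\ samples from the \emph{joint} distribution $\mcD_S$ of $(\bgamma,\mcM\circ\bgamma(S))$, and then runs Scheff\'e-style hypothesis selection over the finite family $\{\mcD_S:S\in X^{(n)}\}$. The identity $\dtv(\mcD_S,\mcD_{S'})=\Ex_\gamma[\dtv(\mcM\circ\gamma(S),\mcM\circ\gamma(S'))]$ means the key lemma directly lower-bounds the relevant TV gap, so hypothesis selection succeeds with $\ell=O(n\log|X|/\rho^2)$ samples; derandomizing over $\bar\gamma$ yields a fixed $\bar\gamma^\star$ together with an adversary $\mcA_{\bar\gamma^\star}$ that knows it. Even if you repair your adversary to see $\tau$, a second gap remains: your ``standard log-likelihood concentration argument'' is not standard here, because nothing bounds the per-round log-likelihood ratios (the conditional KL can be infinite when the supports of $\mcM\circ\tau_i(S)$ and $\mcM\circ\tau_i(S')$ differ). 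Hypothesis selection sidesteps this entirely by working purely at the TV level.
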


The domain size of $|X| \geq \Omega(n^2)$ ensures that $\bS \sim \Unif(X)^n$ is very likely to take on $n$ unique values. Furthermore, we work with the symmetrized version of $\mcM$ because the output behavior of $\widetilde{\mcM}(S)$ does not depend on the order of elements in $S$. 
The proof of \Cref{lem:axioms-to-tv-simpler-formal} proceeds by contradiction: We show that if $\widetilde{\mcM}$ is \emph{not} TV-stable under $\Unif(X)$, we can compose many ``preprocessed'' copies of $\widetilde{\mcM}$ to create an algorithm $\mcM'$ that should be $\mcP$-private according to \Cref{axiom:preprocessing-formal,axiom:strong-composition-formal}. However, using \Cref{lem:key-lemma-overview} we will show, in the following subsections, the existence of an adversary which given the output of $\mcM'(\bS)$ for $\bS \iid \Unif(X)^n$ can guess more than $90\%$ of the dataset. In particular, this implies $\mcM'$ is blatantly non-private, violating \Cref{axiom:blalant-formal}.

\subsubsection{Exploiting \Cref{lem:key-lemma-overview} and hypothesis selection}

Throughout this section we fix an algorithm $\mcM:X^n \to Y$, where the reader should think that $|X|=\Theta(n^2)$. We first show how to reduce the adversary's task to classical \emph{hypothesis selection}.
\begin{fact}[Hypothesis selection, Ch.~6 of \cite{DLbook01} or \cite{ABS24}]
    \label{fact:hypothesis-selection}
    For any distributions $\mcD_1, \ldots, \mcD_M$ there is an algorithm $\mcH$ with the following guarantee: Given
    \begin{equation*}
        \ell \coloneqq O\paren*{\frac{\log (M/\delta)}{\eps^2}}
    \end{equation*}
    i.i.d. samples from some unknown $\mcD_i$, $\mcH$ outputs a $j$ satisfying $\dtv(\mcD_i, \mcD_j) \leq \eps$ with probability at least $1-\delta$.
\end{fact}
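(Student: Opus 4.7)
The plan is to invoke the classical \emph{Scheffé tournament} (also known as the minimum-distance estimator). For every ordered pair $(i,j)$ with $i \neq j$, I would define the Scheffé set $A_{ij} \coloneqq \set*{y : \mcD_i(y) > \mcD_j(y)}$, which satisfies the standard identity $\dtv(\mcD_i,\mcD_j) = \mcD_i(A_{ij}) - \mcD_j(A_{ij})$. The algorithm $\mcH$ uses its i.i.d.\ samples $\by_1,\ldots,\by_\ell$ from the unknown $\mcD_{i^\star}$ to compute, for each pair, the empirical mass $\hat{p}_{ij} \coloneqq \frac{1}{\ell}\sum_{k=1}^\ell \mathbbm{1}[\by_k \in A_{ij}]$; it declares $\mcD_i$ to \emph{beat} $\mcD_j$ whenever $\hat{p}_{ij}$ is closer to $\mcD_i(A_{ij})$ than to $\mcD_j(A_{ij})$, and returns a winner $j^\star$ via the minimum-distance rule: $j^\star$ minimizes $\max_{j} |\hat{p}_{j^\star j} - \tfrac{1}{2}(\mcD_{j^\star}(A_{j^\star j}) + \mcD_j(A_{j^\star j}))|$.

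The analytic engine is a single Hoeffding bound on each $\hat{p}_{ij}$. Fixing a pair $(i,j)$ with $\dtv(\mcD_i,\mcD_j) \geq 3\eps$ and assuming the truth is $\mcD_i$, choosing $\ell = \Theta(\log(M/\delta)/\eps^2)$ samples ensures $|\hat{p}_{ij} - \mcD_i(A_{ij})| \leq \eps/2$ except with probability $\delta/M^2$; a union bound over the $O(M^2)$ pairs then guarantees, with probability $\geq 1-\delta$, that every such ``far'' comparison is decided correctly simultaneously. Conditioning on this good event, the true index $i^\star$ is itself a feasible candidate whose worst empirical deviation in the minimum-distance rule is at most $\eps/2$, so the output $j^\star$ inherits the same guarantee; combining this with the concentration estimate for the pair $(i^\star, j^\star)$ via a short triangle-inequality then forces $\dtv(\mcD_{i^\star}, \mcD_{j^\star}) = O(\eps)$.

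The main obstacle is precisely this last step: distributions $\mcD_j$ that are close to $\mcD_{i^\star}$ cannot be reliably distinguished from it by Hoeffding alone, so a naive ``$\mcD_{i^\star}$ beats everyone'' argument fails. The minimum-distance formulation bridges this gap by requiring \emph{uniform} concentration of $\mcD_{i^\star}$'s empirical Scheffé masses against all opponents simultaneously, which forces any surviving pretender $j^\star$ to itself be close to $\mcD_{i^\star}$ on the set $A_{i^\star j^\star}$. Rescaling the constant hidden in $3\eps$ then yields the stated conclusion with $\ell = O(\log(M/\delta)/\eps^2)$.
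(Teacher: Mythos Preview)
The paper does not supply its own proof of this statement; it is quoted as a known fact with references to Devroye--Lugosi and \cite{ABS24}. Your sketch is exactly the classical Scheff\'e/minimum-distance argument from those references, so there is nothing to compare against in the paper itself.

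Your outline is essentially correct. One small remark: the minimum-distance rule you wrote (distance to the midpoint $\tfrac{1}{2}(\mcD_{j^\star}(A_{j^\star j}) + \mcD_j(A_{j^\star j}))$) is a nonstandard formulation; the cleaner version is to output $j^\star$ minimizing $\max_j |\hat{p}_{j^\star j} - \mcD_{j^\star}(A_{j^\star j})|$. With that rule, the triangle-inequality step you allude to becomes completely transparent: the true index $i^\star$ has score at most $\eps/2$ by Hoeffding, so $j^\star$ does too, and then on the single pair $(i^\star, j^\star)$ one combines $|\hat{p}_{i^\star j^\star} - \mcD_{i^\star}(A_{i^\star j^\star})| \leq \eps/2$ with $|\hat{p}_{i^\star j^\star} - \mcD_{j^\star}(A_{i^\star j^\star})| \leq \eps/2$ to get $\dtv(\mcD_{i^\star},\mcD_{j^\star}) \leq \eps$ directly, with no factor of $3$ and no rescaling needed. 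Otherwise the plan is sound.
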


Having fixed the algorithm $\mcM$, we will be interested in the following set of distributions.
\begin{definition}\label{def:distribution} 
For $S \in \Xn$, we define the distribution $\mcD_S$ has  the distribution $(\bgamma, {\mcM} \circ \bgamma(S))$ for $\bgamma \sim \mathfrak{S}(X,n)$. We denote by $\mcF$ the set of all such distribution $\mcD_S$. 
\end{definition}

As sketched in the introduction, we will use \Cref{lem:key-lemma-overview} on $\widetilde{\mcM}$ to argue that for any $S,S' \in \Xn$, if $S$ and $S'$ differ on $\Omega(n)$ coordinates, the distributions $\mcD_S$ and $\mcD_{S'}$ will be far. We present this idea formally in the next lemma. Since the number of possible datasets $S$ is at most $|X|^n$, the adversary, using hypothesis selection, can find a dataset close to an unknown set $S \in \Xn$ by observing $O(n \log |X|)$ many samples from $\mcD_S$.

\begin{restatable}{lemma}{corDistr}\label{cor:key_lemma_to_distribution}
Fix $\mcM:X^n \to Y$ with $|X| \geq 2n$, if for its ``symmetrized'' version $\tilde{\mcM}$ we have: 
\begin{equation*}
        \rho \coloneqq \Ex_{\bS, \bS' \sim {\binom{X}{n}}}\bracket*{\dtv(\widetilde{\mcM}(\bS),\widetilde{\mcM}(\bS')) \,\Big|\, \abs*{\bS \cap \bS'} =0}.
    \end{equation*}
Then, for any $S,S' \in \Xn$ we have
\begin{equation*}
    \dtv(\mcD_S, \mcD_{S'}) \geq \frac{1}{2}\cdot \dist(S, S')/n \cdot\rho.
\end{equation*}
\end{restatable}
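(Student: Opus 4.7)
The plan is to reduce the statement to a direct application of the key lemma (\Cref{lem:key-lemma-overview}) applied to the symmetrized algorithm $\widetilde{\mcM}$. This is the natural target because the quantity $\rho$ in the hypothesis is exactly the $\rho$ appearing in the key lemma when $\widetilde{\mcM}$ is viewed as an algorithm on $\binom{X}{n}$ (which is valid since $\widetilde{\mcM}$ is order-invariant, and for $S,S' \in \Xn$ the ordered and unordered notions of $\dist$ agree). The work then comes down to two short reductions: unpacking $\dtv(\mcD_S,\mcD_{S'})$ into a conditional TV distance, and absorbing the random permutation $\bpi$ inside $\bgamma$ into the symmetrization via convexity of TV.

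First I would use the standard identity that, when two joint distributions share a common marginal (here $\bgamma$), the total variation distance equals the expected conditional TV distance. Since $\bgamma$ is marginally uniform on $\mathfrak{S}(X,n)$ under both $\mcD_S$ and $\mcD_{S'}$,
\[
\dtv(\mcD_S,\mcD_{S'}) \;=\; \Ex_{\bgamma \sim \mathfrak{S}(X,n)}\bracket*{\dtv(\mcM \circ \bgamma(S),\,\mcM \circ \bgamma(S'))}.
\]

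Next I would split $\bgamma = (\bsigma,\bpi)$ with $\bsigma \sim \mathfrak{S}(X)$ and $\bpi \sim \mathfrak{S}([n])$ uniform and independent. By definition of $\widetilde{\mcM}$ (\Cref{def:symmetrization}), for each fixed $\bsigma$ the distribution $\widetilde{\mcM}\circ\bsigma(S)$ is the mixture $\Ex_{\bpi'}[\mcM(\bsigma(S)_{\bpi'})]$ over an independent uniform $\bpi'$. By convexity of TV (coupling the two internal permutations of $\widetilde{\mcM}$ to the shared $\bpi$), for each fixed $\bsigma$
\[
\dtv(\widetilde{\mcM}\circ\bsigma(S),\,\widetilde{\mcM}\circ\bsigma(S')) \;\leq\; \Ex_{\bpi}\bracket*{\dtv(\mcM(\bsigma(S)_{\bpi}),\,\mcM(\bsigma(S')_{\bpi}))}.
\]
Taking expectation over $\bsigma$ and combining with the previous display gives
\[
\dtv(\mcD_S,\mcD_{S'}) \;\geq\; \Ex_{\bsigma}\bracket*{\dtv(\widetilde{\mcM}\circ\bsigma(S),\,\widetilde{\mcM}\circ\bsigma(S'))}.
\]

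Finally, I would apply \Cref{lem:key-lemma-overview} to $\widetilde{\mcM}$ viewed as an algorithm on $\binom{X}{n}$ (using the hypothesis $|X| \geq 2n$). Since the $\rho$ defined in the corollary matches exactly the $\rho$ of the key lemma for $\widetilde{\mcM}$, this yields
\[
\Ex_{\bsigma}\bracket*{\dtv(\widetilde{\mcM}\circ\bsigma(S),\,\widetilde{\mcM}\circ\bsigma(S'))} \;\geq\; \frac{\rho}{2}\cdot \frac{\dist(S,S')}{n},
\]
which is the desired bound. The only subtlety worth flagging is that convexity must be used in the right direction: because $\mcD_S$ retains $\bpi$ inside the observation $\bgamma$ while $\widetilde{\mcM}$ marginalizes $\bpi$ away, the symmetrized quantity is automatically a \emph{lower} bound on $\dtv(\mcD_S,\mcD_{S'})$, which is exactly the direction we need.
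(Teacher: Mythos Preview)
Your proposal is correct and follows essentially the same approach as the paper. The paper also applies \Cref{lem:key-lemma-overview} to $\widetilde{\mcM}$ and then relates $\Ex_{\bsigma}[\dtv(\widetilde{\mcM}\circ\bsigma(S),\widetilde{\mcM}\circ\bsigma(S'))]$ to $\dtv(\mcD_S,\mcD_{S'})$; the only cosmetic difference is that the paper carries out your two reductions via explicit $\ell_1$ computations (expanding $\widetilde{\mcM}$ as a mixture and summing $|\mcD_S(\gamma,y)-\mcD_{S'}(\gamma,y)|$ directly) rather than invoking the common-marginal TV identity and convexity of TV by name.
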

\begin{proof}
    Let $S,S' \in \Xn$ and let $d=\dist(S,S')$. We let $\widetilde{S}, \widetilde{S'}$ denote the unordered version of $S,S'$. Note that $d=\dist(\widetilde{S},\widetilde{S'})$. By viewing $\widetilde{\mcM}$ as taking unordered sets as input and using \Cref{lem:key-lemma-overview}, we have:
    \begin{align*}
     \frac{\rho \cdot d}{2n}
     &\leq \Ex_{\bsigma}\bracket*{ \dtv( \widetilde{M} \circ \bsigma(\widetilde{S}),  \widetilde{M} \circ \bsigma{(\widetilde{S'}))}} \\
     &=\frac{1}{2}\Ex_{\bsigma}\bracket*{\lone{\widetilde{M} \circ \bsigma(\widetilde{S})-  \widetilde{M} \circ \bsigma{(\widetilde{S'})}}}.
\end{align*}

By the definition we have $\widetilde{\mcM}(S)=\sum_{\pi}\Pr[\pi]\mcM(S)$, hence:
\begin{align*}
 \frac{\rho \cdot d}{n}&=\Ex_{\bsigma}\bracket*{\lone{\sum_{\pi} \Pr[\pi] \mcM \circ \pi \circ \bsigma(S) -  \sum_{\pi} \Pr[\pi] \mcM \circ \pi \circ \bsigma(S')} }
        \\&=\Ex_{\bsigma}\bracket*{\lone{\sum_{\pi}\Pr[\pi]\left(\mcM \circ \pi \circ \bsigma({S}) -  \mcM \circ \pi \circ \bsigma({S}')\right)}} \\
        &\leq \Ex_{\bsigma}\bracket*{ \sum_{\pi}\Pr[\pi]\lone{\left(\mcM \circ \pi \circ \bsigma({S}) -  \mcM \circ \pi \circ \bsigma({S}')\right)}}\\ 
        &= \Ex_{\bpi, \bsigma}\bracket*{\lone{\left(\mcM \circ \bpi \circ \bsigma({S}) -  \mcM \circ \bpi \circ \bsigma({S}')\right)}}\\
        &= \Ex_{\bgamma=(\bsigma, \bpi)}\bracket*{\lone{\left(\mcM \circ \bgamma({S}) -  \mcM \circ \bgamma({S}')\right)}}.
    \end{align*}

On the other hand, we also have:
    \begin{align*}
     \lone{\mcD_S-\mcD_{S'}}&=\sum_{\gamma, y} \lone{D_{S}(\gamma,y)-\mcD_{S'}(\gamma,y)}  \\
     &=\sum_{\gamma} \Pr[\gamma]\cdot \sum_{y}\Big|\left(\mcM \circ \gamma(S)\right)(y)-\left(\mcM \circ \gamma(S)\right)(y)\Big|\\
     &=\Ex_{\bgamma} \bracket*{\sum_{y}\Big|\left(\mcM \circ \gamma(S)\right)(y)-\left(\mcM \circ \gamma(S)\right)(y)\Big|}\\
     &=\Ex_{\bgamma}\bracket*{\lone{\left(\mcM \circ \bgamma({S}) -  \mcM \circ \bgamma({S}')\right)}}
    \end{align*}

Combining the above yields, $\dtv(\mcD_S,\mcD_{S'})=\frac{1}{2}\lone{D_S-\mcD_{S'}} \geq \frac{\rho \cdot d}{2n}$ as claimed.  
\end{proof}

\subsubsection{The adversary and proof of \Cref{lem:axioms-to-tv-simpler-formal}}

We consider $\mcM:X^n \to Y$ where $|X|=\Theta(n^2)$. Fix some unknown set $S \in \Xn$. In the above subsection, we explained how in this setting the adversary, upon seeing $\ell=O(n\log(|X|)$ samples from $\mcD_S$ could use hypothesis selection to output a set $S'$ that has many elements in common with $S$. In particular, consider an adversary that first samples $\overline{\bgamma}=(\bgamma_1, \ldots, \bgamma_\ell) \sim \mathfrak{S}(X,n)^\ell$, and then runs 
\begin{equation}
    \label{eq:def-M-gamma}
    \mcM_{\overline{\bgamma}}:=(\mcM\circ\bgamma_1(S), \ldots, \mcM\circ\bgamma_\ell(S)).
\end{equation} 
Using the output of the algorithm, they build samples $(\bgamma_i, \mcM\circ\bgamma_i(S))_{i \in [\ell]}$ (which are distributed iid. from $\mcD_S$) and then run hypothesis selection. It's easy to see that such an adversary would succeed with high probability in finding $S'$ which overlaps a lot with $S$. So why are we not done already ?
\Cref{axiom:preprocessing-formal} only gives us guarantees on the privacy of  $\mcM \circ \gamma$ for a fixed $\gamma$; we have no guarantee on an algorithm that samples first $\bgamma$ and then runs $\mcM \circ \bgamma(S)$. This is why we need to ``derandomize'' the above strategy to argue there is a fixed choice of $\overline{\gamma}^\star$ such that the adversary succeeds with high probability when seeing output of $\mcM_{\overline{\gamma}^\star}$.

\begin{lemma}\label{lem:adversary_power}
    Let $\rho>0$ be a constant. Assume that for every $S,S' \in X^{(n)}$ we have $$\dtv(\mcD_S, \mcD_{S'}) \geq {\frac{\rho}{2}}\cdot \dist(S, S')/n.$$

    Then, there exists $\overline{\gamma} \in \mathfrak{S}(X,n)^\ell$ where $\ell=O(n\log(|X|))$ and adversary $\mcA: Y^{\ell} \to \Xn$ such that:

    $$\Ex_{\bS \sim {\Xn}}\bracket*{\Ex_{\bS' \leftarrow \mcA(\mcM_{\overline{\gamma}}(S))}\bracket*{\dist(S,\bS')}} \leq 0.02n$$
\end{lemma}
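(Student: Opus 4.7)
The plan is a standard reduction to hypothesis selection followed by derandomization via averaging. First I would design a \emph{randomized} adversary which itself samples $\overline{\bgamma} = (\bgamma_1, \ldots, \bgamma_\ell) \sim \mathfrak{S}(X,n)^\ell$, receives the tuple $(\mcM \circ \bgamma_1(S), \ldots, \mcM \circ \bgamma_\ell(S))$, constructs the labeled samples $(\bgamma_i, \mcM \circ \bgamma_i(S))_{i \in [\ell]}$, and feeds them to the hypothesis selection algorithm of \Cref{fact:hypothesis-selection} applied to the family $\mcF = \{\mcD_{S^{\star}} : S^{\star} \in \Xn\}$. Because each $\bgamma_i$ is uniform and independent, for any fixed $S$ these $\ell$ labeled samples are exactly $\ell$ i.i.d.\ draws from $\mcD_S$, so hypothesis selection is applicable.

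Since $|\mcF| \leq |\Xn| \leq |X|^n$, invoking \Cref{fact:hypothesis-selection} with accuracy parameter $\eps_{\mathrm{hs}} := \rho/200$ and failure probability $\delta_{\mathrm{hs}} := 0.005$ requires
\[
\ell \;=\; O\!\left(\frac{\log(|X|^n/\delta_{\mathrm{hs}})}{\eps_{\mathrm{hs}}^2}\right) \;=\; O(n \log |X|)
\]
samples (treating $\rho$ as a constant) and returns some $\hat{S} \in \Xn$ with $\dtv(\mcD_S, \mcD_{\hat{S}}) \leq \rho/200$ except with probability at most $0.005$. The hypothesis of the lemma then converts this TV-closeness into closeness in the underlying dataset: whenever hypothesis selection succeeds,
\[
    \frac{\rho}{2} \cdot \frac{\dist(S,\hat{S})}{n} \;\leq\; \dtv(\mcD_S,\mcD_{\hat S}) \;\leq\; \frac{\rho}{200},
\]
so $\dist(S, \hat{S}) \leq n/100 = 0.01n$. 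Using the trivial bound $\dist(\cdot,\cdot) \leq n$ on the failure event, the overall expected distance (over $\bS \sim \Xn$, $\overline{\bgamma}$, and the internal randomness of hypothesis selection) is at most $0.01n + 0.005 \cdot n \;<\; 0.02n$.

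To finish I would derandomize the choice of $\overline{\bgamma}$. Writing $f(\overline{\gamma})$ for the expected distance (over $\bS \sim \Xn$ and the internal randomness of hypothesis selection) obtained by the adversary when the preprocessing sequence is fixed to $\overline{\gamma}$, the previous paragraph gives $\Ex_{\overline{\bgamma}}[f(\overline{\bgamma})] \leq 0.02n$, so some particular $\overline{\gamma}^\star$ satisfies $f(\overline{\gamma}^\star) \leq 0.02n$. Hardcoding $\overline{\gamma}^\star$ into the adversary (it still has the preprocessing sequence available since it is a fixed function) yields the required $\mcA$ and $\overline{\gamma}^\star$. I do not anticipate a serious obstacle; the only subtle point is verifying that the labeled samples $(\bgamma_i, \mcM \circ \bgamma_i(S))$ constructed inside the randomized adversary are genuine i.i.d.\ draws from $\mcD_S$, which holds because the $\bgamma_i$ are independent and uniform, and everything else is routine constant-chasing.
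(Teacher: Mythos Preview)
Your proposal is correct and follows essentially the same approach as the paper: invoke hypothesis selection on $\mcF=\{\mcD_{S^\star}:S^\star\in\Xn\}$ with accuracy $\Theta(\rho)$ and constant failure probability, observe that for random $\overline{\bgamma}$ the labeled pairs $(\bgamma_i,\mcM\circ\bgamma_i(S))$ are i.i.d.\ from $\mcD_S$, convert the TV guarantee into a bound $\dist(S,\hat S)\le 0.01n$ via the lemma's hypothesis, and then fix a good $\overline{\gamma}^\star$ by averaging. The only differences from the paper are cosmetic (you take failure probability $0.005$ where the paper uses $0.01$), and you correctly flag the one nontrivial verification.
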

\begin{proof}
    We denote by $\mcH$ the hypothesis selection algorithm of \Cref{fact:hypothesis-selection} which we will aim to run on the set of candidates distribution $\mcF$. In particular, we let $\ell=O(\log(|\mcF|))=O(n\log(|X|))$ be large enough so that for every $\mcD_S \in \mcF$, with probability at least $0.99$ we have that $\mcH$ returns a distribution $\mcD_{S'}$ with $$\dtv(\mcD_S, \mcD_{S'}) < \frac{1}{100} \cdot \frac{\rho}{2}.$$ 
    
    For every ${\overline{\gamma}} \in \mathfrak{S}(X,n)^\ell$, we define an adversary $\mcA_{\overline{\gamma}}$ described in \Cref{fig:adversary} who runs the algorithm $\mcM_{\overline{\gamma}}(S)$. We will prove the following:
\begin{equation}\label{eq:adversary_advantage}
\Ex_{\substack{{\overline{\bgamma}} \sim \mathfrak{S}(X,n)^\ell}}\bracket*{\Ex_{ \substack{\bS \sim {\Xn}, \\ \bS' \leftarrow \mcA_{\overline{\bgamma}}(\mcM_{\overline{\bgamma}}(\bS))}} \bracket*{\dist(\bS, \bS')}} \leq 0.02n.
\end{equation}

    In particular, this implies there exists a fixed choice of  ${\overline{\gamma}}^\star$, such that that $${\Ex_{ \substack{\bS \sim \Xn, \\ \bS' \leftarrow \mcA_{{\overline{\gamma}}^\star}(\mcM_{{\overline{\gamma}}^\star}(\bS))}} \bracket*{{\dist(\bS, \bS')}}} \leq 0.02n,$$
    proving the lemma. We now turn to the prove of \Cref{eq:adversary_advantage}, we will simply show that for every $S \in \Xn$ (unknown to the adversary), we have $$\Ex_{\substack{{\overline{\bgamma}} \sim \mathfrak{S}(X,n)^\ell}}\bracket*{\Ex_{ \bS' \leftarrow \mcA_{\overline{\bgamma}}(\mcM_{\overline{\bgamma}}(S))} \bracket*{\dist(S, \bS')}} \leq 0.02n.$$

  Fix $S \in \Xn$, by our choice of $\ell$ and the properties of $\mcH$ we have:
  $$\Prx_{\substack{\bz_1, \ldots, \bz_\ell \iid\mcD_S\\ \bS' \leftarrow \mcH(\bz_1, \ldots, \bz_\ell)}}[\dtv(\mcD_S, \mcD_{\bS'}) \geq 0.01\rho/2] \leq 0.01$$ 

By assumption, we have that for any $S'$ with $\dist(S,S') \geq 0.01n$ we have $\dtv(\mcD_S,\mcD_{S'}) \geq \frac{1}{100}\frac{\rho}{2}$, so we can conclude that:

$$\Ex_{\substack{\bz_1, \ldots, \bz_\ell \iid\mcD_S\\ {\bS'} \leftarrow \mcH(\bz_1, \ldots, \bz_\ell)}}\bracket*{\dist(S, \bS')} \leq n \cdot 0.01+0.01n\cdot (1-0.01) \leq 0.02n$$

Finally note that the output of $\bS' \leftarrow \mcM_{\overline{\gamma}}$ is distributed as follows: First run $\mcM_{\overline{\gamma}}(S)$ to get $(\by_1, \ldots, \by_\ell)$ and then run $\mcH((\bgamma_1,\by_1), \ldots, (\bgamma_\ell,\by_\ell))$ to get $\bS'$. With this in mind, we have:
\begin{align*}
    \Ex_{\substack{{\overline{\bgamma}} \sim \mathfrak{S}(X,n)^\ell}}\bracket*{\Ex_{ \bS' \leftarrow \mcA_{\overline{\bgamma}}(\mcM_{\overline{\bgamma}}(S))} \bracket*{{\dist(S, \bS')}}} &=  \Ex_{{\overline{\bgamma}} \sim \mathfrak{S}(X,n)^\ell} \left[
    \Ex_{\substack{(\by_1,\ldots, \by_\ell) \leftarrow \mcM_{\overline{\bgamma}}(S) \\
    \mcD_{\bS'} \leftarrow \mcH( (\bgamma_1,\by_1), \ldots, (\bgamma_\ell,\by_\ell) )}}
    \bracket*{{\dist(S, \bS')}}\right]\\
    &= \Ex_{\bgamma_1, \ldots, \bgamma_\ell \iid \mathfrak{S}(X,n)} \left[
    \Ex_{\substack{\by_1 \leftarrow \mcM \circ \bgamma_1(S), \ldots, \by_\ell \leftarrow \mcM \circ \bgamma_\ell(S)  \\
    \mcD_{\bS'} \leftarrow \mcH( (\bgamma_1,\by_1), \ldots, (\bgamma_\ell,\by_\ell) )}}
\bracket*{\dist(S, \bS')}\right] \vspace{5pt}\\
&=\Ex_{\substack{\bz_1, \ldots, \bz_\ell \iid\mcD_S  \\
    \mcD_{\bS'} \leftarrow \mcH( \bz_1, \ldots, \bz_\ell )}}
\bracket*{\dist(S, \bS')}\\
&\leq 0.02n. \qedhere
\end{align*}

\end{proof}

\begin{figure}[H]
\captionsetup{width=.9\linewidth}

    \begin{tcolorbox}[colback = white,arc=1mm, boxrule=0.25mm]
    \vspace{2pt}
 \begin{enumerate}[nolistsep,itemsep=2pt]
        \item Run the algorithm $\mcM_{{\overline{\gamma}}}=(\mcM \circ \gamma_1, \ldots, \mcM \circ {\gamma_\ell})$ on the unknown set $S$.
        \item Let $(\by_1, \ldots, \by_\ell)$ be the output of $\mcM_{\overline{\gamma}}$. 
        \item Build a set of samples $\bZ=(\bz_1, \ldots, \bz_\ell)$ where $\bz_i=(\gamma_i, \by_i)$. 
        \item Runs the hypothesis selection algorithm $\mcH$ on the sample set $\bZ$ with the set of candidate distributions $\mcF$ (See \Cref{def:distribution}). 
        \item If $\mcH$ returns $\bS'$, do the same. 
    \end{enumerate}
    \end{tcolorbox}
\caption{The adversary $\mcM_{\overline{\gamma}}$ where ${\overline{\gamma}}=(\gamma_1, \ldots, \gamma_\ell) \in \mathfrak{S}(X,n)^\ell$. }
\label{fig:adversary}
\end{figure}

 In the above we assumed that the hidden set $S$ was in $\Xn$. By our choice of $|X| = \Theta (n^2)$, we have with high probability that $\bS \iid \Unif(X)^n$ has no duplicates. As such, the above adversary {can just ignore the (small) probability that $\bS \iid \Unif(X)^n$ has duplicates, and} will still output a set $\bS'$ such that $\E_{\bS \iid \Unif(X)^n}[\dist(\bS,\bS')] < 0.1n$, guaranteeing that $\mcM_{\overline{\gamma}}$ is blatantly non-private. With this in mind, we now present the proof of \Cref{lem:axioms-to-tv-simpler-formal}:

\begin{proof}[Proof of \Cref{lem:axioms-to-tv-simpler-formal}]
   Let $X \subseteq X^\star$ of the size $\ceil{\frac{100}{\rho}} \cdot n^2$. Assume for contradiction that $\stabtv(\widetilde{\mcM}, \Unif(X)) > \rho$. Let $E$ be the event over the draw of $\bS, \bS' \sim X^n $ that $\bS,\bS'$ have no common elements nor any duplicates. We have:
$$\Prx_{\bS, \bS' \sim \Unif(X)^n}[E] \geq \prod_{i=1}^{2n}(1-\rho \frac{i-1}{100 \cdot n^2}) \geq (1-\rho \frac{2n}{100 \cdot n^2})^{2n} \geq 1- \frac{\rho}{25}.$$

In which case we have: 
\begin{align*}
     \Ex_{\bS, \bS' \sim {\binom{X}{n}}}\bracket*{\dtv(\widetilde{\mcM}(\bS), \widetilde{\mcM}(\bS')) \,\Big|\, \abs*{\bS \cap \bS'} =0} &= \Ex_{\bS, \bS' \iid \Unif(X)^n}\bracket*{\dtv(\widetilde{\mcM}(\bS), \widetilde{\mcM}(\bS')) \,\Big|\, E} \\
     &\geq \Ex_{\bS, \bS' \iid \Unif(X)^n}\bracket*{\dtv(\widetilde{\mcM}(\bS), \widetilde{\mcM}(\bS'))} - (1- \Prx [E]) \\
     &\geq {\rho}-\frac{\rho}{2}\\
     &\geq \rho/2.
\end{align*}

Now consider the algorithm $\mcM': X^n \to Y$ which is just $\mcM$ restricted to inputs in $X$, we denote its symmetrized version by $\widetilde{\mcM'}$. It's easy to see that we also have:

$$ \Ex_{\bS, \bS' \sim {\binom{X}{n}}}\bracket*{\dtv(\widetilde{\mcM'}(\bS), \widetilde{\mcM'}(\bS')) \,\Big|\, \abs*{\bS \cap \bS'} =0} \geq \rho/2. $$

Hence, by \Cref{cor:key_lemma_to_distribution} we have that for any $S,S' \in X^{(n)}$ with $\dist(S,S') \geq 0.01n$ we have \begin{equation}
    \dtv(\mcD_S,\mcD_{S'}) \geq 0.01 \rho/4.
\end{equation}

By \Cref{lem:adversary_power}, we have there exists ${\overline{\gamma}}=(\gamma_1, \ldots, \gamma_\ell) \in \mathfrak{S}(X,n)^\ell$ where $\ell=O(n\log(|X|))=O(n\log(n))$ and adversary $\mcA: Y^{\ell} \to \Xn$ such that:

    $$\Ex_{\bS \sim {\Xn}}\bracket*{\Ex_{\bS' \leftarrow \mcA(\mcM'_{\overline{\gamma}}(S))}\bracket*{\dist(\bS,\bS')}} \leq 0.02n,$$
where $\mcM'_{\overline{\gamma}}$ is as defined in \Cref{eq:def-M-gamma}. Finally, it follows by our bound on the probability of $E$ that whenever we draw $\bS \iid \Unif(X)^n$, we have that $\bS \in \Xn$ (i.e. $\bS$ has  no duplicates) with probability at least $1-1/25 \geq 0.96$. 

As such we have:
\begin{align*}
    \Ex_{\bS \iid \Unif(X)^n }\bracket*{\Ex_{\bS' \leftarrow \mcA(\mcM'_{\overline{\gamma}}(\bS))}\bracket*{\sum_{x \in \bS}^n \mathbbm{1}[x \in \bS']}} 
    &\geq 0.96 \Ex_{\bS \sim \Unif\paren*{\Xn}}\bracket*{\Ex_{\bS' \leftarrow \mcA(\mcM'_{\overline{\gamma}}(\bS))}\bracket*{ \sum_{x \in \bS}^n \mathbbm{1}[x \in \bS']  }} \\
    &= 0.96 \Ex_{\bS \sim \Unif\paren*{\Xn}}\bracket*{\Ex_{\bS' \leftarrow \mcA(\mcM'_{\overline{\gamma}}(\bS))}\bracket*{n-\dist(\bS,\bS')}} \\
    &\geq 0.94n.
\end{align*}

Right now, this means there is an adversary which can leak most of the dataset when seeing the output of $\mcM'_{\overline{\gamma}}$, but we have no guarantee for the privacy of $\mcM'$. Hence, we now want to translate this to an adversary working for $\mcM$. For each $\gamma_i=(\sigma_i, \pi_i) \in {\overline{\gamma}}$ we define $\sigma_i^\star : X^\star \to X^\star$ as:

$$\sigma^\star_i(x)= \begin{cases}
    & x \text{ if } x \in X^\star \setminus X\\
    & \sigma_i(x) \text{ otherwise.}
\end{cases}$$

We then let $\gamma_i^\star=(\sigma_i^\star, \pi_i)$ and define ${\overline{\gamma}}^\star=(\gamma_1^\star, \ldots, \gamma_\ell^\star)$. It's easy to see that that for any $S \in \Xn$ we have the following distributions are equivalent $$\bS' \leftarrow \mcA(\mcM'_{\overline{\gamma}}(S)) \text{ and }\bS' \leftarrow \mcA(\mcM_{{\overline{\gamma}}^\star}(S)).$$

Hence, $$\Ex_{\bS \iid X^n }\bracket*{\Ex_{\bS' \leftarrow \mcA(\mcM_{{\overline{\gamma}}^\star}(\bS))}\bracket*{\sum_{x \in \bS}^n \mathbbm{1}[x \in \bS']}} \geq 0.94n.$$

Recall that $\ell=O(n\log(n))$, hence by making $\alpha,t >0$ large enough we have that $\epsilon:=\mcP(\mcM) \leq \frac{1}{\alpha n \log^t(n)}$ is so that, in \Cref{axiom:strong-composition-formal} (strong composition), we have $\epsilon'=O(\epsilon \cdot  \ell^c \cdot \polylog(n) \leq 1$.

We have that $\mcM_{{\overline{\gamma}}^\star}=(\mcM \circ \gamma^\star_1, \ldots, \mcM \circ \gamma^\star_\ell)$. In particular, for each $1 \leq i \leq \ell$, by \Cref{axiom:preprocessing-formal} we have that $\mcP(\mcM \circ \gamma^\star_i) \leq \mcP(\mcM)$. \footnote{Recall that $\gamma^\star_i=(\sigma^\star_i, \pi_i)$ for some permutation $\pi:[n] \to [n]$ and $\sigma_i=X \to X$. So this follows by applying both points of the axioms to $\mcM\circ \gamma_i=\mcM \circ \sigma_i^\star \circ \pi_i$. } 

Hence by \Cref{axiom:strong-composition-formal} we must have that $\mcM_{{\overline{\gamma}}^\star}$ is $\mcP$-private. But combined with existence of the adversary $\mcA$ we have that \Cref{axiom:blalant-formal} can't hold, reaching a contradiction. Hence, it must be that  $\stabtv(\widetilde{\mcM}, \Unif(X)) \leq \rho$. 
\end{proof}

\subsection{Proof of \Cref{lem:axioms-to-tv-formal}: Getting TV-Stability on all distributions}\label{sec:remove-assumption}
\Cref{lem:axioms-to-tv-simpler-formal}  only guarantees that $\widetilde{\mcM}$ is $\rho$-TV-stable on distributions that are uniform over a small set elements. To achieve \Cref{lem:axioms-to-tv-formal} we actually need to guarantee $\rho$-TV-Stability on \emph{all} distributions. To get around this, we use two separate arguments depending on the size of the domain. In the first case, when $|X|$ is large we will show the following:

\begin{restatable}[TV-Stability on one distribution to TV-Stability on all distributions]{claim}{transferstab}
    \label{claim:transfer-stability}
   For any $\mcM:X^n \to Y$ where $X \geq 8 n^2$, let $X' \subseteq X$ be any subset of size at least $8n^2$. For any privacy measure $\mcP$ satisfying \Cref{axiom:preprocessing-intro} and distribution $\mcD$ on $X$, there exists an algorithm $\mcM':X^n \to Y$ for which $\mcP(\mcM') \leq \mcP(\mcM)$ satisfying,
    \begin{equation}
        \label{eq:stab-close}
        \frac{1}{2} \cdot \stabtv(\widetilde{\mcM}, \mcD) \leq \stabtv(\widetilde{\mcM'}, \Unif(X')).
    \end{equation}
    where $\widetilde{\mcM}$ is the symmetrized version of $\mcM$ defined in \Cref{def:symmetrization}.
\end{restatable}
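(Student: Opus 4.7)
My plan is to produce $\mcM'$ of the form $\mcM \circ \sigma^\star$ for a carefully chosen $\sigma^\star : X \to X$, so that the remap-the-domain half of \Cref{axiom:preprocessing-formal} immediately gives $\mcP(\mcM') \leq \mcP(\mcM)$. All of the remaining work then goes into choosing $\sigma^\star$ so that running $\mcM \circ \sigma^\star$ on i.i.d.\ $\Unif(X')$ samples faithfully emulates running $\mcM$ on i.i.d.\ $\mcD$ samples, thereby transferring the TV-instability.

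I would do this by the probabilistic method. Let $\bsigma:X \to X$ be the random map with $\bsigma(y) \sim \mcD$ drawn independently for each $y \in X'$, and with $\bsigma$ defined arbitrarily on $X \setminus X'$ (say as the identity). Unwinding the definition of symmetrization one checks that $\widetilde{\mcM \circ \sigma}(S)$ and $\widetilde{\mcM}(\sigma(S))$ have the same output distribution for any fixed $\sigma$ and $S$, because the coordinate-wise random permutation inside symmetrization commutes with the coordinate-wise application of $\sigma$. Next, let $E$ denote the event that the $2n$ coordinates of a pair $\bS, \bS' \iid \Unif(X')^n$ are all pairwise distinct. A birthday-style union bound gives $\Pr[E] \geq 1 - \binom{2n}{2}/|X'| \geq 3/4$ using $|X'| \geq 8n^2$. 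Conditional on $E$, the vector $(\bsigma(\bS), \bsigma(\bS'))$ is $\bsigma$ evaluated at $2n$ \emph{distinct} points of $X'$, so by the independence and $\mcD$-marginals in the construction of $\bsigma$, this vector is distributed exactly as two independent draws from $\mcD^n$. It follows that
\begin{equation*}
\Ex_{\bsigma}\bigl[\stabtv(\widetilde{\mcM \circ \bsigma}, \Unif(X'))\bigr] \;\geq\; \Pr[E] \cdot \stabtv(\widetilde{\mcM}, \mcD) \;\geq\; \tfrac{3}{4}\, \stabtv(\widetilde{\mcM}, \mcD).
\end{equation*}

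An averaging argument then produces a deterministic realization $\sigma^\star$ of $\bsigma$ achieving $\stabtv(\widetilde{\mcM \circ \sigma^\star}, \Unif(X')) \geq \tfrac{1}{2}\,\stabtv(\widetilde{\mcM},\mcD)$, and we set $\mcM' := \mcM \circ \sigma^\star$. The only conceptually delicate step is the matching one: we must ``simulate'' a potentially arbitrary distribution $\mcD$ on $X$ by a deterministic preprocessing of $\Unif(X')$. The trick is that we do not need to do so exactly---a random $\bsigma$ produces samples distributed like $\mcD^n$ only when its $n$ inputs are distinct, but the birthday bound guaranteed by $|X'| \geq 8n^2$ makes this happen with probability large enough to lose only a constant factor, which is exactly what yields the factor of $\tfrac{1}{2}$ in the statement.
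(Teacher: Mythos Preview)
Your proposal is correct and follows essentially the same argument as the paper: both define a random map $\bsigma$ with $\bsigma(x)\iid\mcD$, use the birthday bound on $|X'|\geq 8n^2$ to ensure the $2n$ uniform samples are distinct with constant probability, observe that on this event $(\bsigma(\bS),\bsigma(\bS'))$ is distributed as two fresh $\mcD^n$ draws, and then derandomize by averaging to extract a fixed $\sigma^\star$ with $\mcP(\mcM\circ\sigma^\star)\leq\mcP(\mcM)$ via \Cref{axiom:preprocessing-formal}. Your union-bound birthday estimate even gives a slightly better constant ($3/4$ versus the paper's $1/2$), but both suffice for the stated factor.
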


In the case where the domain is small, we take an entirely different approach. Instead, we show that any algorithm, without any assumption that it is $\mcP$-private, can be converted into a TV-stable algorithm.
\begin{lemma}[Any algorithm can be made TV-stable over small domains]\label{lem:learn_when_X_small}
    Let $\mcM:X^n \to Y$ be an algorithm and $\rho>0$ be a constant. There exists a $(\beta,2\beta)$ equivalent algorithm $\mcM':X^m \to Y$ such that $\mcM'$ is $\rho$-TV-stable and 
    $$m=O\left(\frac{|X|+\log(1/\beta)}{(\beta/n)^2}\right).$$
\end{lemma}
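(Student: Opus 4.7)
\medskip
\noindent\textbf{Proof plan.} I would have $\mcM'$ learn the input distribution and then simulate $\mcM$ on a fresh synthetic sample drawn from the learned distribution. Concretely, on input $\bS \in X^m$, the algorithm $\mcM'$ (i) computes the empirical distribution $\hat{\mcD}$ of $\bS$ over $X$, (ii) draws a fresh $\bS' \sim \hat{\mcD}^n$ independently of $\bS$, and (iii) outputs $\mcM(\bS')$.

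The key ingredient is the standard fact that for any distribution $\mcD$ over a finite domain $X$, the empirical distribution $\hat{\mcD}$ from $m$ i.i.d.\ samples satisfies $\Ex[\dtv(\hat{\mcD},\mcD)] = O(\sqrt{|X|/m})$, and moreover $\dtv(\hat{\mcD},\mcD) \leq \eps$ with probability $\geq 1-\beta'$ whenever $m = \Omega((|X| + \log(1/\beta'))/\eps^2)$. I will take the estimation tolerance to be $\Theta(\beta/n)$ at confidence $1 - \Theta(\beta)$; since $\rho$ is a constant, this is fully absorbed into the stated bound $m = O((|X|+\log(1/\beta))/(\beta/n)^2)$.

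\emph{Correctness.} By subadditivity, $\dtv(\hat{\mcD}^n, \mcD^n) \leq n \cdot \dtv(\hat{\mcD}, \mcD)$. On the event $E$ that $\dtv(\hat{\mcD}, \mcD) \leq \Theta(\beta/n)$, which has probability $\geq 1 - \Theta(\beta)$, the distribution of $\bS'$ is within TV $\Theta(\beta)$ of $\mcD^n$, so for any statistical task $\mcT$ that $\mcM$ solves with failure probability $\beta$ we get $\Prx[\mcM(\bS') \notin \mcT(\mcD) \mid E] \leq \beta + \Theta(\beta)$. Combining with the failure probability of the estimation step gives a total failure probability of at most $2\beta$ after tuning the hidden constants, establishing $(\beta,2\beta)$-equivalence.

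\emph{TV-stability.} For independent samples $\bS_1, \bS_2 \iid \mcD^m$ with empirical distributions $\hat{\mcD}_1, \hat{\mcD}_2$, the data-processing inequality applied to $\mcM$, together with the triangle inequality and subadditivity, yields
\begin{equation*}
    \dtv(\mcM'(\bS_1), \mcM'(\bS_2)) \leq \dtv(\hat{\mcD}_1^n, \hat{\mcD}_2^n) \leq n \cdot \dtv(\hat{\mcD}_1, \mcD) + n \cdot \dtv(\hat{\mcD}_2, \mcD).
\end{equation*}
Taking expectations and plugging in $\Ex[\dtv(\hat{\mcD}, \mcD)] = O(\sqrt{|X|/m})$ gives $\stabtv(\mcM', \mcD) = O(n\sqrt{|X|/m})$, which is at most $\rho$ since the stated $m$ is at least $\Omega(n^2 |X| / \rho^2)$ whenever $\rho$ is a constant and $\beta \leq 1$.

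\emph{Main obstacle.} There is no serious conceptual obstacle; the proof is a simulation argument combined with a standard empirical-distribution learning bound. The only care needed is balancing the two uses of the sample — a high-probability accuracy bound at tolerance $\beta/n$ for correctness and an in-expectation bound at tolerance $\rho/n$ for TV-stability — both of which are comfortably handled by the claimed $m$ when $\rho$ is a constant.
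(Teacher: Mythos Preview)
Your proposal is correct and follows essentially the same approach as the paper: learn the unknown distribution to TV accuracy $\Theta(\beta/n)$ and simulate $\mcM$ on a fresh synthetic sample. The only minor structural difference is that the paper proves a single bound $\Ex_{\bS}[\dtv(\mcM'(\bS),\mcO)] \leq \beta\rho/2$ (where $\mcO$ is the distribution of $\mcM$ on a true $\mcD^n$-sample) and then derives both correctness and TV-stability from it via the triangle inequality, whereas you handle the two properties with separate high-probability and in-expectation estimates; both routes are equally valid.
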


We first prove \Cref{claim:transfer-stability}. 
\begin{proof}[Proof of \Cref{claim:transfer-stability}]
    Let $\mcD$ be an arbitrary distribution over $X$ and $X' \subseteq X$ with $|X'| \geq 8n^2$. Consider the random map $\bsigma:X \to X$ where $\bsigma(x) \iid \mcD$. Consider the two following distributions: 
    \begin{enumerate}
    \item Draw $\bT, \bT' \sim \Unif(X')^n$ and $\bsigma$ as above and output $(\bsigma(\bT),\bsigma(\bT'))$. 
    \item Draw $\bS,\bS' \iid \mcD^n$ and output $(\bS,\bS')$. 
\end{enumerate}

Let $E$ denote the event where $\bT,\bT'$ contain no duplicates and no element in common. Observe that conditioned on $E$, both $\bsigma(\bT),\bsigma(\bT')$  and $(\bS,\bS')$ follow the same distribution.

We now have:

$$\Prx_{\bT, \bT'\sim \Unif(X')^n}[E]\geq \prod_{i=1}^{2n}\left(1-\frac{i-1}{|X'|}\right)\geq \left(1-\frac{2n}{|X'|}\right)^{2n}\geq 1/2.$$
Where the last inequality follows by having $|X'| \geq 8n^2$. Hence, we have:

\begin{align*}
     \stabtv(\widetilde{M}, \mcD)&=\Ex_{\bS,\bS' \sim \mcD^n}\bracket*{\dtv(\widetilde{\mcM}(\bS), \tilde{\mcM}(\bS')} \\
     &= \Ex_{\substack{\bT,\bT' \sim \Unif(X')^n\\\bsigma \sim \mathfrak{S}(X)}}\bracket*{\dtv(\widetilde{\mcM}\circ \bsigma(\bT), \widetilde{\mcM}\circ(\bT') \,\big|\, E }\\
     &\leq \frac{1}{\Prx_{\bT, \bT' \sim \Unif(X')^n}[E]} \cdot \Ex_{\substack{\bT,\bT' \sim \Unif(X')^n\\\bsigma}}\bracket*{\dtv(\widetilde{\mcM}\circ \bsigma(\bT), \widetilde{\mcM}\circ(\bT')}  \\
     &\leq 2 \Ex_{\substack{\bT,\bT' \sim \Unif(X')^n\\\bsigma}}\bracket*{\dtv(\widetilde{\mcM}\circ \bsigma(\bT), \widetilde{\mcM}\circ(\bT')}.\\
     &\leq 2 \Ex_{\bsigma} \bracket*{\Ex_{\substack{\bT,\bT' \sim \Unif(X')^n}}\bracket*{\dtv(\widetilde{\mcM}\circ \bsigma(\bT), \widetilde{\mcM}\circ(\bT')}}. \\
     &\leq 2\Ex_{\bsigma}\bracket*{\stabtv(\widetilde{\mcM} \circ \bsigma, \Unif(X'))}
\end{align*}

Hence, there exists a choice of $\sigma^\star:X \to X$, such that $\stabtv(\widetilde{\mcM} \circ \sigma^\star, \Unif(X')) \geq \frac{1}{2}\stabtv(\widetilde{M}, \mcD)$. Now, let $\mcM'= \mcM \circ \sigma^\star$. It's easy to see that $\widetilde{\mcM'}=\widetilde{\mcM} \circ \sigma^\star$, meaning $\mcM'$ respects \Cref{eq:stab-close}. Furthermore, by \Cref{axiom:preprocessing-formal} (preprocessing), we must have $\mcP(\mcM') \leq \mcP(\mcM)$. 
\end{proof}

We now turn to the proof of \Cref{lem:learn_when_X_small}. First, we will need the following well known fact, see for instance Theorem 1 of  \cite{canonne2020shortnotelearningdiscrete}: 
\begin{lemma}\label{lem:learn_distribution}
    Let $X$ be a finite set. There exists an algorithm taking $O\left(\frac{|X|+\log(1/\delta)}{\epsilon^2}\right)$ samples from an unknown distribution $\mcD$ over $X$ and outputs a distribution $\Sim_D$ such that $\dtv(\mcD,\bSim_{\mcD}) \leq \epsilon$ with probability at least $1-\delta$. 
\end{lemma}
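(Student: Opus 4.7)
The plan is to take $\bSim_{\mcD}$ to be simply the empirical distribution of $m$ i.i.d.\ samples, and then bound $\dtv(\mcD,\bSim_{\mcD})$ via a two-step ``expectation plus concentration'' argument. Specifically, I would draw $\bx_1,\ldots,\bx_m \iid \mcD$ and define
\[
    \bSim_{\mcD}(x) \;\coloneqq\; \frac{1}{m}\sum_{i=1}^m \mathbbm{1}[\bx_i = x] \qquad \text{for each } x \in X,
\]
so that $\dtv(\mcD,\bSim_{\mcD}) = \tfrac{1}{2}\sum_{x \in X}|\mcD(x) - \bSim_{\mcD}(x)|$. The target bound $m = O((|X| + \log(1/\delta))/\epsilon^2)$ will fall out by controlling the expectation with a linear-in-$|X|$ dependence and tightening the tail via a bounded-differences argument.

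First, I would bound $\E[\dtv(\mcD,\bSim_{\mcD})]$. For each $x$, the quantity $m\cdot\bSim_{\mcD}(x)$ is $\mathrm{Binomial}(m,\mcD(x))$, so Jensen's inequality gives $\E[|\mcD(x)-\bSim_{\mcD}(x)|] \le \sqrt{\mcD(x)(1-\mcD(x))/m} \le \sqrt{\mcD(x)/m}$. Summing over $x$ and applying Cauchy--Schwarz,
\[
    \sum_{x \in X}\sqrt{\mcD(x)/m} \;\le\; \sqrt{\tfrac{1}{m}\cdot |X| \cdot \textstyle\sum_{x}\mcD(x)} \;=\; \sqrt{|X|/m},
\]
so $\E[\dtv(\mcD,\bSim_{\mcD})] \le \tfrac{1}{2}\sqrt{|X|/m}$. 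The crucial point here is that using Cauchy--Schwarz (rather than a crude per-coordinate union bound) yields the desired linear dependence on $|X|$ rather than $|X|\log|X|$.

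Second, I would obtain concentration via McDiarmid's inequality. Viewing $f(\bx_1,\ldots,\bx_m) \coloneqq \dtv(\mcD,\bSim_{\mcD})$, swapping any single $\bx_i$ changes at most two entries of $\bSim_{\mcD}$ by $1/m$ each, so $f$ has bounded differences of $1/m$. McDiarmid then yields $\Pr[f \ge \E[f] + t] \le \exp(-2mt^2)$, and choosing $t = \sqrt{\log(1/\delta)/(2m)}$ ensures failure probability at most $\delta$. Combining the two bounds, with probability $\ge 1-\delta$,
\[
    \dtv(\mcD,\bSim_{\mcD}) \;\le\; \tfrac{1}{2}\sqrt{|X|/m} + \sqrt{\log(1/\delta)/(2m)},
\]
which is at most $\epsilon$ as soon as $m = O\!\left((|X| + \log(1/\delta))/\epsilon^2\right)$, finishing the proof.

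Since this is a classical distribution-learning bound, I do not anticipate any substantive obstacle. The only delicate point is the linear-in-$|X|$ expectation bound: a naive approach applying a Chernoff bound per coordinate and union-bounding would cost an extra $\log|X|$ factor, so it is important to combine a variance/Jensen bound with Cauchy--Schwarz on the expectation side and defer tail control to McDiarmid on the concentration side.
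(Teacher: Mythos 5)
Your proof is correct, and it is essentially the standard argument for this classical bound: the paper itself gives no proof, citing it as a well-known fact (Theorem 1 of the Canonne note), whose proof follows exactly your outline of a Jensen/Cauchy--Schwarz expectation bound of $\tfrac{1}{2}\sqrt{|X|/m}$ followed by McDiarmid for the tail. All steps check out, including the bounded-difference constant of $1/m$ and the final choice of $m$.
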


\begin{proof}[Proof of \Cref{lem:learn_when_X_small}]
     Let $\mcD$ be an arbitrary distribution over $X$ We consider the algorithm $\mcM'$ from \Cref{fig:brute_force}, which uses the claimed bound on the number of samples from the distribution $\mcD$. It remains to show that $\mcM'$ outputs a correct answer with probability at least $1-2\beta$ and that it is TV-stable, meaning:

     \begin{equation*}
         \Ex_{\bS^1, \bS^2 \iid \mcD^m}\left[\dtv\left(\mcM'(\bS^1),\mcM'(\bS^2)\right)\right] \leq \rho.
     \end{equation*}

    To do this, let $\mcO$ be the distribution over $Y$ induced by sampling $\bS \sim \mcD^n$ and then running $\mcM(\bS)$. We will show that the following equation holds:

     \begin{equation}\label{eq:to_show}
         \Ex_{\bS \iid \mcD^m}[\dtv(\mcM'(\bS),\mcO)] \leq \frac{1}{2} \cdot \beta \cdot \rho
     \end{equation}

    TV-Stability then follows from the above by the triangle inequality. For the correctness of $\mcM'$, let $\mcT(\mcD)$ be the set of good answer to the statistical task $\mcT$ under distribution $\mcD$.  By the correctness of $\mcM$ we have that $\Prx_{\bA \sim \mcO}[\bA \in \mcT(\mcD)]=\Prx_{\bS \iid \mcD^n}[\mcM(\bS) \in \mcT(\mcD)] \geq 1-\beta$. Hence, assuming \Cref{eq:to_show} holds, we have: 
    
    \begin{align*}
        \Prx_{\bS \iid \mcD^m }[\mcM'(\bS) \in \mcT(\mcD)]&=\Ex_{\bS \iid D^m }\bracket*{\Pr[\mcM'(\bS) \in \mcT(\mcD)]} \\
        &\geq \Ex_{\bS \iid \mcD^m}\bracket*{\Prx_{\bA \sim \mcO}[\bA \in \mcT(\mcD)]-\dtv(\mcM'(\bS), \mcO)} \\
        &=\Prx_{\bA \sim \mcO}[\bA \in \mcT(\mcD)] - \Ex_{\bS \iid \mcD^n}[\dtv(\mcM'(\bS), \mcO)]\\
        &\geq 1-\beta-\beta.
    \end{align*}
    So $\mcM'$ has failure probability at most $2\beta$. We now prove \Cref{eq:to_show}. Let $\tau=\frac{\beta \cdot \rho}{4}$. By setting the hidden constant in $m$ to be large enough, we have by \Cref{lem:learn_distribution} that for any distribution $\mcD$, with failure probability $1-\tau$ the distribution $\bSim_{\mcD}$ obtained by the algorithm is such that $\dtv(\mcD,\bSim_{\mcD}) \leq \tau/n$. Conditioned on $\dtv(\mcD,\bSim_{\mcD}) \leq \tau$ we have:
    \begin{align*}
        \tau \geq \dtv\left(\bSim_\mcD^n, \mcD^n\right) \geq \dtv\left(\mcM(\bSim_{\mcD}^n\right), \mcM(\mcD^n))=\dtv\left(\mcM(\bSim_{\mcD}^n), \mcO\right)
    \end{align*}

    Hence, we have that:
    \begin{align*}
        \Ex_{\bS \iid \mcD^n}\left[\dtv\left(\mcM'(\bS), \mcO\right) \right] &\leq \tau +  \Ex_{\bS \iid \mcD^n}\left[\dtv\left(\mcM'(\bS), \mcO\right) \mid \dtv\left(\bSim_{\mcD}, D \right) \leq \tau/n \right] \\
        &= \tau + \Ex_{\bS \iid \mcD^n}\left[\dtv\left(\mcM(\bSim_\mcD^n), \mcO\right) \mid \dtv(\bSim_{\mcD}, D) \leq \tau/n \right] \\
        &\leq 2\tau.
    \end{align*}
By our choice of $\tau$, this proves \Cref{eq:to_show}    
\end{proof}

\begin{figure}[H]

  \captionsetup{width=.9\linewidth}

    \begin{tcolorbox}[colback = white,arc=1mm, boxrule=0.25mm]
    \vspace{2pt}
   
    \textbf{Input:} An algorithm $\mcM:X^n \to Y$, parameter $\beta>0$. Sample access to an unknown distribution $\mcD$ over $X$. 
 \begin{enumerate}[nolistsep,itemsep=2pt]
        \item Draw a set $\bS$ of $m$ samples $\mcD$ where $$m=O\Bigg(\frac{|X|+\log(1/\beta)}{\left({\beta}/{n}\right)^2}\Bigg).$$ 

        \item Run the algorithm of \Cref{lem:learn_distribution} to get a distribution $\bSim_{\mcD}$ over $X$. 
        \item Sample $\bS^\star \sim \bSim_\mcD^n$. 
        \item Output $\mcM(\bS^\star)$.
    \end{enumerate}
    \end{tcolorbox}
\caption{A procedure to turn $\mcM:X^n \to Y$ into a TV-Stable algorithm $\mcM'$ by learning the distribution $\mcD$. }
\label{fig:brute_force}
\end{figure} 

Finally, we turn to the proof of \Cref{lem:axioms-to-tv-formal}, which we recall bellow for convenience. 

\AxiomsToTV*

\begin{proof}
As mentioned in the overview, we will want to use \Cref{axiom:amplification-formal} (linear scalability). Let $t$ be the constant in \Cref{lem:axioms-to-tv-simpler-formal}. We will choose the smallest $k \geq p(n,1/\beta)$ satisfying, for $m = kn$, 
\begin{equation*}
   1/k \leq O\paren*{\frac{1}{ m^c \log^t(m)}}.
\end{equation*}
Ultimately, this will allow us to apply \Cref{lem:axioms-to-tv-simpler-formal}. Choosing any $k$ at least $\tilde{O}(n^{c/(1-c)})$ suffices for the above expression to hold. In which case $m=\tilde{O}(n^{1/(1-c)}$. 

We then set $m \coloneqq n\cdot k$ and case split on the size of the domain $X$.

\pparagraph{Small domains:} If $|X| \leq 100/\rho \cdot m^2$, we use \Cref{lem:learn_when_X_small} on $\mcM$ to get a $\rho$-TV-stable algorithm $\mcM':X^{m'} \to Y$, which is $(\beta, 2\beta)$-equivalent to $\mcM$ and where $$m'=O\left(\frac{m^2+\log(1/\beta)}{\beta^2} \cdot n^2\right).$$
Note that above fits the claimed bound on the sample complexity of the lemma. 

\pparagraph{Large domains:} If $|X| \geq 100/\rho \cdot m^2$. Since $k \geq p(n, 1/\beta)$, we can use \Cref{axiom:scaling} to obtain an algorithm to get an algorithm $\mcM^{\textsf{A}}: X^m \to Y$ which is $(\beta,\beta'=O(\beta))$-equivalent to $\mcM$ using $m$ samples satisfying
\begin{equation*}
    \mcP(\mcM^{\textsf{A}}) \leq O(1/k) \leq O\paren*{\frac{1}{m^{c} \cdot \log^t(m)}}.
\end{equation*}

We claim that the symmetrized version of $\mcM^{\textsf{A}}$, $\widetilde{\mcM^{\textsf{A}}}$ must be $\rho$-TV-Stable. Assume for contradiction it's not. Then, there exists a distribution $\mcD$ such that $\stabtv(\widetilde{\mcM^{\textsf{A}}}, D) > \rho$. We fix an arbitrary subset $X'$ of size $100/\rho \cdot m^2$ of $X$. By \Cref{claim:transfer-stability} we have that there exists an algorithm $\mcM':X^m \to Y$ with $\rho/2 < \stabtv(\widetilde{\mcM'}, \Unif(X'))$ and $\mcP(\mcM') \leq \mcP(\mcM)=O(1/m^{c}\log^t(m))$. 

By making the hidden constant in the $O(\cdot)$ small enough and using  $|X'|=\frac{100}{\rho} m^2$, we have by \Cref{lem:axioms-to-tv-simpler-formal}, that $\stabtv(\widetilde{\mcM'}, \Unif(X')) \leq \rho/2$. Hence, using \Cref{claim:transfer-stability}, it must have been the case that $\widetilde{\mcM^{\textsf{A}}}$ was $\rho$-TV-Stable. Furthermore, $\widetilde{\mcM^{\textsf{A}}}$ uses $m$ samples, which clearly matches the claim bounds on sample complexity of the lemma. Finally, by \Cref{fact:symmetric} we have that $\widetilde{\mcM^\textsf{A}}$ is $(\beta', \beta')$ equivalent to $\mcM^\textsf{A}$, and thus is $(\beta, O(\beta))$ equivalent to $\mcM$. 
\end{proof}

\subsection{The proof of \Cref{lem:key-lemma-overview}}\label{sec:key-lemma}

In this subsection, we prove \Cref{lem:key-lemma-overview}. Which we recall for convenience. 
\keylemma*

As mentioned in our overview, the first step toward the proof of \Cref{lem:key-lemma-overview} is \Cref{claim:random-walk-overview} which we recall bellow:

\randomwalk*

We first prove \Cref{lem:key-lemma-overview}, and delay the proof of \Cref{claim:random-walk-overview} to the end of this subsection.

    \begin{proof}
Let $S,S' \in \binom{X}{m}$ and let $d=\dist(S,S')$. By \Cref{claim:random-walk-overview} we have that there exists random variables $\bT^0, \ldots, \bT^k$ with $k=\ceil{m/d}$. Such that:
\begin{enumerate}
        \item For any $i \in [k]$ the marginal distribution of $(\bT^{i-1}, \bT^i)$ is the same as $(\bsigma(S),\bsigma(S'))$ where $\bsigma \sim \mathfrak{S}(X)$.
        \item The marginal distribution of $(\bT^0, \bT^k)$ is the same as drawing $\bT,\bT' \sim \Unif(\binom{X}{m})$ conditioned on $|\bT \cap \bT'|=0.$
    \end{enumerate}
We thus have:
    \begin{align*}
        \rho &=  \Ex_{\bT, \bT' \sim \Unif\paren*{\binom{X}{m}}}\bracket*{\dtv({\mcM}(\bT),{\mcM} (\bT')) \,\Big|\, \abs*{\bT \cap \bT'} =0 }\\
        &= \Ex_{\bT^0,\ldots, \bT^k}\bracket*{\dtv({\mcM}(\bT^0), {\mcM}(\bT^k))} \\
        &\leq \Ex_{\bT^0,\ldots, \bT^k}\bracket*{ \sum_{i=1}^{k}\dtv({\mcM}(\bT^{i-1}), {\mcM}(\bT^{i}))} \tag{Triangle inequality}\\
        &= \sum_{i=1}^{k} \Ex_{\bT^{i-1}, \bT^{i}}\bracket*{\dtv({\mcM}(\bT^{i-1}), {\mcM}(\bT^{i}))} \\
        &= \sum_{i=1}^{k} \Ex_{\bsigma \sim \mathfrak{S}(X) }\bracket*{\dtv({\mcM}( \bsigma(\bS)), {\mcM}(\bsigma(\bS'))} \\
         &= k \Ex_{\bsigma \sim \mathfrak{S}(X) }\bracket*{\dtv({\mcM} \circ  \bsigma(\bS), {\mcM}\circ \bsigma(\bS')}.
    \end{align*}

    Finally, since $k \leq 2m/d$ we have, 
    \begin{align*}
        \Ex_{\bsigma \sim \mathfrak{S}(X) }\bracket*{\dtv({\mcM}\circ \bsigma(\bS), {\mcM}\circ \bsigma(\bS'))} \geq \frac{\rho}{2}\cdot\frac{\dist(S,S')}{m}. &\qedhere
    \end{align*}
\end{proof}

\bigskip

\noindent{\bf Proof of \Cref{claim:random-walk-overview}.}  
To better understand the distributions of $(\bsigma(S), \bsigma(S'))$ when $\bsigma \sim \mathfrak{S}(X)$ we will work with the following distributions: 
\begin{definition}[Joint distribution at distance $d$]
    \label{def:joint-dists}
    For any distance $d \in [0,m]$ define $\mcJ_d$ to be the uniform distribution over all $T, T' \in \binom{X}{m}$ satisfying $\dist(T, T') = d$.
\end{definition}
In particular, we have the following:
\begin{restatable}[The joint distribution of uniform permutations]{proposition}{jointDistr}
    \label{prop:joint-perm}
    For any $S, S' \in \binom{X}{m}$ satisfying $\dist(S, S') = d$, if we draw $\bsigma \sim \mathfrak{S}(X)$, the joint distribution of $(\bsigma(S), \bsigma(S'))$ is exactly $\mcJ_{d}$.
\end{restatable}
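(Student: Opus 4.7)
The plan is to prove the proposition by a standard symmetry argument: show that the support of $(\bsigma(S), \bsigma(S'))$ is precisely the support of $\mcJ_d$, and then use transitivity of the $\mathfrak{S}(X)$-action on pairs at fixed distance to conclude uniformity.

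First, I would observe that for any permutation $\sigma \in \mathfrak{S}(X)$ and any $S, S' \in \binom{X}{m}$, the image $\sigma(S)$ is again an $m$-element subset of $X$, and $|\sigma(S) \cap \sigma(S')| = |S \cap S'|$ since $\sigma$ is injective. Therefore $\dist(\sigma(S), \sigma(S')) = \dist(S, S') = d$, so the joint distribution of $(\bsigma(S), \bsigma(S'))$ is supported on pairs of $m$-subsets at distance exactly $d$, which is precisely the support of $\mcJ_d$.

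Next, I would establish transitivity. Given any target pair $(T, T') \in \binom{X}{m}^2$ with $\dist(T,T') = d$, I want to exhibit a permutation $\tau$ with $\tau(S) = T$ and $\tau(S') = T'$. Writing $A = S \setminus S'$, $B = S \cap S'$, $C = S' \setminus S$, and analogously $A', B', C'$ for $(T, T')$, the assumption $\dist(S,S') = \dist(T,T') = d$ gives $|A| = |A'| = |C| = |C'| = d$ and $|B| = |B'| = m-d$. Choose any bijections $A \to A'$, $B \to B'$, $C \to C'$, and any bijection between the remaining $|X| - (m+d)$ elements on each side (these sets have the same size since $X$ is the same on both sides); gluing these together yields the required $\tau$.

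Finally, I would invoke the orbit–stabilizer theorem. The map $\sigma \mapsto (\sigma(S), \sigma(S'))$ sends $\mathfrak{S}(X)$ onto the orbit of $(S, S')$, which by transitivity equals the support of $\mcJ_d$. The number of $\sigma$ mapping $(S, S')$ to a fixed $(T, T')$ in the orbit equals $|\mathrm{Stab}(S, S')|$, which depends only on the orbit and not on the target (stabilizers of points in the same orbit are conjugate and in particular equinumerous). Hence, if $\bsigma$ is uniform on $\mathfrak{S}(X)$, the pushforward $(\bsigma(S), \bsigma(S'))$ is uniform on the orbit, i.e., exactly $\mcJ_d$.

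I do not anticipate a serious obstacle; the only mild subtlety is confirming that the stabilizer count is constant across the orbit, which is immediate from orbit–stabilizer, and checking the counting $|X| - (m+d) \geq 0$ needed to extend the bijection to all of $X$ — this follows from $|X| \geq 2m \geq m + d$ under the standing assumption of the surrounding lemma, though the statement of the proposition itself holds whenever $|X| \geq m + d$ (otherwise the support of $\mcJ_d$ is empty and the statement is vacuous).
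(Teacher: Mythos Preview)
Your proof is correct and follows essentially the same approach as the paper's: both first observe that permutations preserve $\dist(S,S')$, so the support lies in $\mcJ_d$, and then argue uniformity by symmetry. The paper compresses the symmetry step to a single sentence (``by symmetry, every choice of $\bT$ and $\bT'$ with distance $d$ are equally likely''), whereas you spell it out via an explicit transitivity construction and orbit--stabilizer; this is the natural elaboration of the same idea rather than a genuinely different route.
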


\begin{proof}
    We observe that, for any permutation $\sigma:X \to X$ and set $T = \sigma(S)$, $T' = \sigma(S')$ then,
    \begin{equation*}
        \dist(T, T') = m - |\sigma(S) \cap \sigma(S')| = m - |S \cap S'| = \dist(S, S').
    \end{equation*}
    Furthermore, if we choose $\bsigma$ uniformly then, by symmetry, every choice of $\bT$ and $\bT'$ with distance $d$ are equally likely, giving the desired distribution.
\end{proof}

We now turn to the proof of \Cref{claim:random-walk-overview}.
\begin{proof}[Proof of \Cref{claim:random-walk-overview}]
We start by constructing $\bT^0, \ldots, \bT^k$. To begin, we pick $\bT^0 \sim \Unif(\binom{X}{m})$ uniformly among all size-$n$ samples. Then, for each $i \in [k]$ we will use the following procedure to form $\bT^i$:
\begin{enumerate}
    \item We choose $d$ many elements to remove from $\bT^{i-1}$. Here, there are two cases:
    \begin{enumerate}
        \item For the first $k-1$ steps (i.e. when $i < k$), there will be at least $d$ many elements remaining from $\bT^0$, and we choose the elements to remove uniformly from them:
        \begin{equation*}
            \bR^i \sim \binom{\bT^0 \cap \bT^{i-1}}{d}.
        \end{equation*}
        \item  In the last step, when $i = k$, the number of remaining elements in $\bT^0 \cap \bT^{i-1}$ will be $m - (k-1)d$. In particular, if $kd \neq m$ (which happens whenever $n/d$ is not exactly an integer), the number of remaining elements will be strictly less than $d$. In this case, we will remove all $m - (k-1)d$ elements plus $kd - m$ other uniform elements from $\bT^{i-1} \setminus \bT^0$,
        \begin{equation*}
            \bR^k = (\bT^0 \cap \bT^{i-1}) \cup \bE \quad\quad\text{where } \bE \sim \binom{\bT^{i-1} \setminus \bT^{0}}{kd - n}
        \end{equation*}
    \end{enumerate}
     \item We choose $d$ many elements to add to $\bT^{i-1}$. Here, we simply choose uniformly among all elements that have not appeared in the process yet,
    \begin{equation*}
        \bA^i \sim \binom{X \setminus(\bT^0 \cup \cdots \bT^{i-1})}{d}.
    \end{equation*}
    \item We then construct $\bT^i$ as:
    \begin{equation*}
        \bT^i = (\bT^{i-1} \setminus \bE^i) \cup \bA^i.
    \end{equation*}
\end{enumerate}
At each step, we construct $\bT^i$ by swapping exactly $d$ elements from $\bT^{i-1}$, so $\dist(\bT^{i-1}, \bT^i) = d$ with probability $1$. Furthermore, we eventually swap every element that started in $\bT^{0}$, meaning $\bT^k \subseteq \bA^1 \cup \cdots \cup \bA^k$. By construction, there is no overlap between $\bT^0$ and $\bT^k$, so $\dist(\bT^{0}, \bT^n) = m$ with probability $1$.

Finally, we observe this process is fully symmetric in the following sense: If we remapped the entire domain $X$ according to any permutation $\sigma:X \to X$, then all the probabilities remain the same:
\begin{equation*}
    \Pr[\bT^{0}, \ldots, \bT^{k} = T^{0},\ldots, T^k] =   \Pr[\bT^{0}, \ldots, \bT^{k} = \sigma(T^{0}),\ldots, \sigma(T^k)],
\end{equation*}
where $\sigma(T) = \set{\sigma(x) \mid x \in T}$. Due to this symmetry, $\bT^{i-1}$ and $\bT^i$ are equally likely to be any two sets with distance $d$, and so are distributed according to $\mcJ_d$. Similarly, $\bT^0$ and $\bT^k$ are equally likely to be any two sets with distance $m$, and as $\bT^0, \bT^k \sim \Unif\paren*{\binom{X}{m}}$ conditioned on $\bT^0 \cap \bT^k =\emptyset$. The proof then follows by \Cref{prop:joint-perm}. 
\end{proof}

\section{Proof of \Cref{thm:DP-fits-axioms-intro}: DP satisfies the axioms}
\label{sec:DP-fits-axioms}
First, we will need the following theorem to amplify $(\epsilon,\delta)$-differential privacy. This theorem is a small modification of Theorem 6.2 of \cite{BGHILPSS23} (mentioned in \Cref{fact:DP-amp-intro}) and is proved in \Cref{sec:DP-to-DP}.
The result of \cite{BGHILPSS23} makes the error probability of the algorithm go from $\beta$ to $O(\beta\log(\beta))$, whereas our modified Theorem  achieves error probability  $O(\beta)$, albeit with a slightly worse dependency on $\log(1/\beta)$ {in the sample complexity}. 

\begin{restatable}{lemma}{DPtoDP}\label{lem:DP-to-DP}
There is a universal constant $0.1 \geq \alpha > 0$ such that the following holds. Let $\mcM:X^n \to Y$ be $(0.1, \alpha^2/n^3)$-differentially private. Then for every $\epsilon, \delta> 0$; there exists an $(\epsilon,\delta)$-differentially private algorithm $\mcM':X^m \to Y$ solving $T$ using

$$m=O\Big(\frac{\log(1/\beta)^2+\log(1/\beta)\log(1/\delta)}{\epsilon}\Big) \cdot n^2 $$
samples and which is $(\beta, 5\beta)$-equivalent to $\mcM$. 
\end{restatable}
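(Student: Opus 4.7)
The plan is to adapt the proof of Theorem 6.2 of \cite{BGHILPSS23} (recalled as the DP-Amplification theorem in \Cref{sec:discussion-and-related-work}) with a single parameter change designed to replace the $O(\beta \log(1/\beta))$ failure probability blowup of their construction by $O(\beta)$, at the cost of an extra $\log(1/\beta)$ factor in the sample complexity. The overall construction follows a standard subsample-and-aggregate template, so the work is mostly bookkeeping once the parameter change is identified.

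First, partition the $m$-sample input into $T = m/n$ disjoint blocks of size $n$ and run $\mcM$ independently on each block to produce candidate outputs $\by_1, \ldots, \by_T$. Because the blocks are disjoint, changing one user in the input affects only one block, so the tuple $(\by_1, \ldots, \by_T)$ inherits the $(0.1, \alpha^2/n^3)$-DP guarantee of $\mcM$ as a function of the $m$-sample input. Second, I feed these candidates into the internal DP selection subroutine of \cite{BGHILPSS23}, which outputs a single representative $\by^\star \in Y$ while consuming its own $(\eps, \delta)$-DP budget with respect to the $\by_i$'s as input. Composition of the two steps yields the final $(\eps, \delta)$-DP guarantee on the $m$-sample input.

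For correctness, each $\by_i$ is a fresh sample from $\mcM(\mcD^n)$ and hence lies in $\mcT(\mcD)$ with probability at least $1-\beta$. Here the modification kicks in: instead of running the aggregation subroutine with a constant internal failure probability as in \cite{BGHILPSS23}, I set the internal failure probability to $O(\beta)$. A Chernoff bound shows that inflating $T$ by a $\Theta(\log(1/\beta))$ factor relative to \cite{BGHILPSS23} ensures at least a $(1-2\beta)$-fraction of the $\by_i$'s are valid, except with probability $O(\beta)$. With that many valid candidates available, the aggregation subroutine returns a $\by^\star \in \mcT(\mcD)$ except with probability $O(\beta)$, and a union bound over the two failure events gives the $(\beta, 5\beta)$-equivalence.

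The main obstacle is verifying the privacy accounting under the new choice of internal failure probability: the \cite{BGHILPSS23} aggregation uses a sequence of private mechanisms whose privacy/sample tradeoff depends on the failure parameter, so I must recompute the resulting sample complexity and confirm that the target $(\eps, \delta)$-DP is reached with $m = O\bigl(\tfrac{\log^2(1/\beta)+\log(1/\beta)\log(1/\delta)}{\eps}\bigr)\cdot n^2$. Because their construction is already modular in the internal failure probability, this is a direct bookkeeping step: the $\log^2(1/\beta)$ term arises from the enlargement of $T$ feeding into the $\log(1/\beta)$-dependent part of their bound, while the $\log(1/\beta)\log(1/\delta)/\eps$ term comes from propagating the enlargement through the strong-composition accounting inside the aggregation.
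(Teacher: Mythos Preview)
Your plan has a genuine gap: running $\mcM$ on disjoint blocks and then aggregating via DP selection cannot work without an intermediate step you have omitted. DP selection (and any ``pick a frequent element'' primitive) requires the candidate outputs $\by_1,\ldots,\by_T$ to concentrate on a single value. But $\mcM$ is only promised to be $(0.1,\alpha^2/n^3)$-DP and to output some $y\in\mcT(\mcD)$ with probability $1-\beta$; nothing prevents the $\by_i$'s from all being \emph{distinct} valid answers (think of PAC learning, where many hypotheses are acceptable). Your sentence ``with that many valid candidates available, the aggregation subroutine returns a $\by^\star\in\mcT(\mcD)$'' is exactly where the argument fails: the aggregation subroutine does not know which elements of $Y$ are valid, it only sees frequencies, and frequent need not mean valid. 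Relatedly, there is nothing in your sketch that explains where the $n^2$ factor in $m$ comes from --- in the paper's route it arises precisely from the missing step.

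The paper's proof is structured differently. It first invokes the first two steps of Theorem~6.2 of \cite{BGHILPSS23} to convert the $(0.1,\alpha^2/n^3)$-DP algorithm $\mcM$ into a $(0.1,0.1)$-\emph{replicable} algorithm on $O(n^2)$ samples (this DP$\to$replicability conversion is what produces the $n^2$). Replicability is exactly the guarantee that, for most random seeds, the algorithm has a canonical output that it returns on most samples --- this is what makes the subsequent aggregation sound. Only then does the paper apply its own replicability-to-DP reduction (\Cref{lem:rep-to-DP}), and \emph{that} is where the modification lives: rather than running one DP selection on all candidates as in \cite{BGHILPSS23}, it draws $k=O(\log(1/\beta))$ random seeds, for each seed runs the replicable algorithm on $\ell=O((\log(1/\beta)+\log(1/\delta))/\eps)$ fresh blocks, applies a ``\textsf{Pick-Heavy}'' routine (DP selection plus a Laplace-noised frequency threshold that returns $\bot$ when no answer dominates), and finally outputs a uniformly random non-$\bot$ result. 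The threshold is what buys the improvement to $(\beta,5\beta)$-equivalence: seeds whose canonical answer is wrong are rare (probability $\le 2\beta$), and non-replicable seeds produce $\bot$ rather than a wrong answer. The extra $\log(1/\beta)$ in the sample complexity comes from the product $k\cdot\ell$, not from inflating a single block count as you suggest.
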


We now define our stability measure $\mcP_{\mathrm{DP}}$ as follows:
\begin{definition}
    For an algorithm $\mcM$ taking $n$ samples let $\epsilon$ be the smallest value such that $\calM$ is $(\epsilon,\epsilon^2/n^{3})$-DP. Recalling that $\alpha$ is the constant of \Cref{lem:DP-to-DP} we set $$\mcP_{\mathrm{DP}}(\mcM)= \left(\frac{ \epsilon}{\alpha}\right)^{5/4}.$$ 
\end{definition}   

We now prove that $\mcP_{\mathrm{DP}}$ fits all four of axioms. Before doing so we will need the following fact.
\begin{fact}
    Let $\mcM$ be an algorithm taking $n$ samples. If $\mcP_{\mathrm{DP}}(\mcM) \leq 1$, then $\mcM$ is $(0.1, \frac{\alpha^2}{n^{3}})$-DP.
\end{fact}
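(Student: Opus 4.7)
The plan is to simply unwind the definition of $\mcP_{\mathrm{DP}}$ and then invoke the standard monotonicity of $(\epsilon,\delta)$-DP with respect to its parameters. Let $\epsilon_0$ denote the smallest value such that $\mcM$ is $(\epsilon_0, \epsilon_0^2/n^3)$-DP, so by definition $\mcP_{\mathrm{DP}}(\mcM) = (\epsilon_0/\alpha)^{5/4}$. The hypothesis $\mcP_{\mathrm{DP}}(\mcM) \leq 1$ immediately gives $(\epsilon_0/\alpha)^{5/4} \leq 1$, hence $\epsilon_0 \leq \alpha$.

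Since $\alpha \leq 0.1$ by the statement of \Cref{lem:DP-to-DP}, we have $\epsilon_0 \leq 0.1$, and squaring gives $\epsilon_0^2/n^3 \leq \alpha^2/n^3$. Because $\mcM$ is $(\epsilon_0, \epsilon_0^2/n^3)$-DP and both parameters are weakened by the inequalities above, the standard monotonicity property of differential privacy (if $\mcM$ is $(\epsilon,\delta)$-DP and $\epsilon' \geq \epsilon$, $\delta' \geq \delta$, then $\mcM$ is $(\epsilon',\delta')$-DP, which follows directly from \Cref{def:approx-DP}) implies that $\mcM$ is $(0.1, \alpha^2/n^3)$-DP, as desired.

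There is no real obstacle here; the only subtlety is confirming that the ``smallest $\epsilon_0$'' is well-defined (one can take an infimum and use closedness of the DP condition under limits, or just replace $\epsilon_0$ with any $\epsilon \leq \alpha$ for which $\mcM$ is $(\epsilon, \epsilon^2/n^3)$-DP, which exists by the hypothesis). The whole argument is essentially a two-line unpacking and should be written out in just a few lines.
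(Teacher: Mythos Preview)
Your proposal is correct and is exactly the natural unpacking the paper has in mind; the paper actually states this as a fact without proof, so there is no alternative approach to compare against. Your two-line argument (from $(\epsilon_0/\alpha)^{5/4}\le 1$ deduce $\epsilon_0\le\alpha\le 0.1$, then use monotonicity of $(\epsilon,\delta)$-DP) is precisely what is intended.
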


\begin{lemma}
    $\mcP_{\mathrm{DP}}$ respects \Cref{axiom:preprocessing-formal}.

\end{lemma}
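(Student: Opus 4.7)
The plan is to show that the underlying $(\epsilon, \delta)$-DP property is preserved under both reordering of the input and remapping of the domain, and then deduce that $\mcP_{\mathrm{DP}}$ (defined in terms of the smallest feasible $\epsilon$ under the constraint $\delta = \epsilon^2/n^3$) can only decrease under preprocessing.

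First, I would handle the reordering case. Suppose $\mcM$ is $(\epsilon, \delta)$-DP and let $\pi:[n]\to[n]$ be any permutation. For any $S, S' \in X^n$ differing in exactly one coordinate $i$, observe that $(S_{\pi(1)}, \ldots, S_{\pi(n)})$ and $(S'_{\pi(1)}, \ldots, S'_{\pi(n)})$ differ in exactly one coordinate, namely $\pi^{-1}(i)$. Thus, for every $Y' \subseteq Y$,
\[
\Pr[\mcM\circ\pi(S) \in Y'] = \Pr[\mcM(S_{\pi(1)}, \ldots, S_{\pi(n)}) \in Y'] \leq e^{\epsilon}\Pr[\mcM(S'_{\pi(1)},\ldots,S'_{\pi(n)}) \in Y'] + \delta,
\]
so $\mcM\circ\pi$ is $(\epsilon,\delta)$-DP.

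Second, I would handle the remapping case. Let $\sigma:X\to X$ be arbitrary (not necessarily a bijection) and let $S,S' \in X^n$ differ in exactly one coordinate $i$. Then $\sigma(S)$ and $\sigma(S')$ agree on all coordinates $j\neq i$, and either $\sigma(S_i) = \sigma(S_i')$, in which case the two datasets are identical and the DP inequality holds trivially, or $\sigma(S_i)\neq\sigma(S_i')$, in which case they are neighbors and the DP guarantee of $\mcM$ applies directly. Either way $\mcM\circ\sigma$ is $(\epsilon,\delta)$-DP.

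Finally, I would conclude. By definition $\mcP_{\mathrm{DP}}(\mcM) = (\epsilon^\star/\alpha)^{5/4}$, where $\epsilon^\star$ is the smallest $\epsilon$ such that $\mcM$ is $(\epsilon, \epsilon^2/n^3)$-DP (note both $\mcM$ and its preprocessed versions take $n$ samples, so the $n^3$ denominator is the same). By the two claims above, whenever $\mcM$ is $(\epsilon^\star, (\epsilon^\star)^2/n^3)$-DP, so are $\mcM\circ\pi$ and $\mcM\circ\sigma$. Hence the smallest feasible $\epsilon$ for the preprocessed algorithms is at most $\epsilon^\star$, giving $\mcP_{\mathrm{DP}}(\mcM\circ\pi) \leq \mcP_{\mathrm{DP}}(\mcM)$ and $\mcP_{\mathrm{DP}}(\mcM\circ\sigma) \leq \mcP_{\mathrm{DP}}(\mcM)$.

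The proof is essentially bookkeeping; there is no real obstacle. The only point that might initially look subtle is the remapping case when $\sigma$ is not injective, but this is resolved by noting that identical datasets automatically satisfy any DP inequality.
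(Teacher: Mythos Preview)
Your proposal is correct and follows essentially the same approach as the paper: both argue that neighboring datasets remain neighbors (or become identical) under reordering and domain remapping, so $(\epsilon,\delta)$-DP is preserved, and hence the minimal $\epsilon$ defining $\mcP_{\mathrm{DP}}$ cannot increase. Your write-up is simply more detailed than the paper's one-line version.
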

\begin{proof}
We observe for any $S, S'$ that differ in one coordinate and permutation $\pi:[n] \to [n]$, that $\pi(S)$ and $\pi(S')$ still differ in one coordinate. Similarly, for any $\sigma:X \to X$, $\sigma(S)$ and $\sigma(S')$ differ in (at most) one coordinate. Therefore, if $\mcM$ is $(\eps,\delta)$-DP, it is still $(\eps,\delta)$-DP after preprocessing.
\end{proof}

To prove that DP satisfies \Cref{axiom:blalant-formal}, we will need the following fact:
\begin{fact}\label{fact:DP-adversary}
    Let $\mcM:X^n \to Y$, $\mcA:Y \to X^n$ be algorithms. Fix $S \in X^n, i \in [n]$ and $x \in X$. If $\mcM$ is $(\epsilon, \delta)$-differentially private then we have:
    $$\Pr\bracket*{S_i \in \mcA(\mcM(S)) } \leq e^{\epsilon} \Pr\bracket*{S_i \in \mcA(\mcM(S_{x \to i})) }+\delta,$$
    where $S_{x \to i}$ denotes $S$ with $i$-th element set to $x$. 
\end{fact}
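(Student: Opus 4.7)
The plan is to derive the inequality by directly composing the postprocessing property (stated just above as Proposition from \cite{DR14book}) with the defining $(\epsilon,\delta)$-DP guarantee applied to the neighboring pair $S$ and $S_{x \to i}$. No new machinery is needed; the statement is essentially a repackaging of approximate DP with a particular choice of output set.

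First, I would observe that $S$ and $S_{x \to i}$ differ in at most one coordinate (namely, the $i$-th), so they constitute a valid neighboring pair in the sense of \Cref{def:approx-DP}. Consequently, by the $(\epsilon,\delta)$-DP guarantee of $\mcM$, for every $Y' \subseteq Y$,
\[
\Pr[\mcM(S) \in Y'] \;\leq\; e^{\epsilon}\, \Pr[\mcM(S_{x \to i}) \in Y'] + \delta.
\]
Next, I would invoke postprocessing: the composed (possibly randomized) algorithm $\mcA \circ \mcM : X^n \to X^n$ is itself $(\epsilon,\delta)$-DP, so for every $Z' \subseteq X^n$,
\[
\Pr[\mcA(\mcM(S)) \in Z'] \;\leq\; e^{\epsilon}\, \Pr[\mcA(\mcM(S_{x \to i})) \in Z'] + \delta.
\]

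Finally, I would instantiate $Z' := \{T \in X^n : S_i \in T\}$, where $S_i$ denotes the (fixed) $i$-th coordinate of the \emph{original} dataset $S$. With this choice, the event $\{\mcA(\mcM(\cdot)) \in Z'\}$ is exactly the event $\{S_i \in \mcA(\mcM(\cdot))\}$, yielding the claimed inequality. The only conceptual point worth highlighting — and the one mild source of potential confusion rather than an actual obstacle — is that $S_i$ on the right-hand side still refers to the original $i$-th element of $S$ and \emph{not} to the replaced element $x$; this is handled correctly because $S_i$ is a fixed element of $X$ that specifies the target set $Z'$ once and for all, independent of which dataset the mechanism is run on. Beyond this, the argument is a one-line application of postprocessing and the DP definition.
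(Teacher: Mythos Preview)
Your proposal is correct and follows essentially the same approach as the paper: both observe that $S$ and $S_{x\to i}$ are neighbors and then apply the $(\epsilon,\delta)$-DP guarantee to the event $\{S_i \in \mcA(\mcM(\cdot))\}$. The paper's proof is a single line that leaves the postprocessing step implicit, whereas you spell it out, but the content is the same.
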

\begin{proof}
    $S$ and $S_{x \to i}$ differ on at most $1$ coordinate. The result thus follows from $\mcM$ being $(\epsilon,\delta)$-differentially private.
\end{proof}

\begin{lemma}
    $\mcP_{\mathrm{DP}}$ respects \Cref{axiom:blalant-formal}. 
\end{lemma}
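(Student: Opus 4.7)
\medskip

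\noindent\textbf{Proof plan.} The plan is to prove the contrapositive: assume $\mcM \colon X^n \to Y$ is $\mcP_{\mathrm{DP}}$-private (so $\mcM$ is $(0.1, \alpha^2/n^3)$-DP by the fact preceding the lemma) and that $\mcM$ is blatantly non-private, and derive a contradiction. Let $\mcD$ be the high-entropy distribution and $A$ the adversary from \Cref{def:blatant}. By linearity of expectation and averaging, the blatant non-privacy assumption gives a coordinate $i \in [n]$ with
\begin{equation*}
    \Prx_{\substack{\bS \iid \mcD^n \\ \bS' \leftarrow A(\mcM(\bS))}}[\bS_i \in \bS'] \geq 0.9.
\end{equation*}

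The core step is to use DP to swap $\bS_i$ for a freshly drawn independent copy $\bx \sim \mcD$, and then exploit the high-entropy assumption. Fix any $S \in X^n$ and any $x \in X$; by \Cref{fact:DP-adversary} applied with the composed adversary $A$,
\begin{equation*}
    \Pr[S_i \in A(\mcM(S))] \leq e^{0.1} \cdot \Pr[S_i \in A(\mcM(S_{x \to i}))] + \alpha^2/n^3.
\end{equation*}
Averaging this inequality over $\bS \sim \mcD^n$ and independently $\bx \sim \mcD$ (playing the role of $x$) yields
\begin{equation*}
    \Prx_{\bS, A, \mcM}[\bS_i \in A(\mcM(\bS))] \leq e^{0.1} \cdot \Prx_{\bS, \bx, A, \mcM}[\bS_i \in A(\mcM(\bS_{-i}, \bx))] + \alpha^2/n^3.
\end{equation*}

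To bound the right-hand side, condition on everything except $\bS_i$: the output $\bS' := A(\mcM(\bS_{-i}, \bx))$ is a size-$n$ tuple in $X^n$ that contains at most $n$ distinct elements and is independent of $\bS_i$. Since each element of $X$ has $\mcD$-probability at most $1/(100 n^2)$ by the high-entropy requirement of \Cref{def:blatant}, a union bound gives
\begin{equation*}
    \Prx_{\bS_i \sim \mcD}[\bS_i \in \bS' \mid \bS_{-i}, \bx, A, \mcM] \leq \sum_{y \in \bS'} \mcD(y) \leq \frac{n}{100 n^2} = \frac{1}{100 n}.
\end{equation*}

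Combining the two displays yields
\begin{equation*}
    0.9 \leq \frac{e^{0.1}}{100 n} + \frac{\alpha^2}{n^3} \leq \frac{e^{0.1}}{100} + \alpha^2,
\end{equation*}
which is a contradiction as long as $\alpha$ is a small enough constant (recalling $\alpha \leq 0.1$, so $\alpha^2 \leq 0.01$ and the right-hand side is at most $\approx 0.022$). This contradiction establishes that $\mcM$ cannot be blatantly non-private, proving the lemma. The main (minor) obstacle is simply making sure the averaging step applying \Cref{fact:DP-adversary} is sound---i.e., that the fact, stated with $S$ and $x$ fixed, can be integrated over both the dataset and the fresh replacement---but this is just linearity of expectation applied to both sides of the pointwise DP inequality.
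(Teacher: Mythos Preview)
The proposal is correct and follows essentially the same approach as the paper: apply \Cref{fact:DP-adversary} to replace the $i$th coordinate by a fresh independent draw, then use the high-entropy bound $\linf{\mcD}\leq 1/(100n^2)$ to cap the probability that the (now independent) coordinate lands in the adversary's size-$n$ guess. The only cosmetic differences are that you pick a single coordinate $i$ by averaging while the paper sums over all $i$, and you argue independence by conditioning on $(\bS_{-i},\bx)$ whereas the paper uses the exchangeability of $\bS_i$ and $\bx$ to rewrite $\Pr[\bS_i\in A(\mcM(\bS_{\bx\to i}))]=\Pr[\bx\in A(\mcM(\bS))]$; both routes yield the same $1/(100n)$ bound and the same contradiction.
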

\begin{proof}
Let $\mcD$ an arbitrary distribution over $X$ with $||\mcD||_\infty \leq 1/100n^2$. Let $\mcM : X^n \to Y$ be an $(\epsilon,\delta)$-differentially private algorithm and let $\mcA:Y \to X^n$.
    We will show that:
    $$\Ex_{\substack{\bS \iid \mcD^n \\ \bS' \leftarrow \mcA(\mcM(\bS))}}\bracket*{\sum_{x \in \bS}\mathbbm{1}[x \in \bS']} \leq \frac{e^{\epsilon}}{100}+\delta n.$$

    In particular, if $\mcP_{\mathrm{DP}}(\mcM) \leq 1$, which implies $\mcM$ is $(\epsilon=0.1, \delta= \alpha^2/n^3)$-DP, we have $\frac{e^{\epsilon}}{100}+n\delta \leq 0.1$ proving the lemma. We want to bound: 
    $$ \Ex_{\substack{\bS \iid \mcD^n \\ \bS' \leftarrow \mcA(\mcM(\bS))}}\bracket*{\sum_{x \in \bS}\mathbbm{1}[x \in \bS']} = \sum_{i=1}^n \Prx_{\substack{\bS \iid \mcD^n \\ \bS' \leftarrow \mcA(\mcM(\bS))}}[\bS_i \in \bS'] $$

    Since $\mcM$ is $(\eps,\delta)$-DP, for any $i \in [n]$ and $x \in X$ we have, by \Cref{fact:DP-adversary}, that:

$$\Prx_{\substack{\bS \iid \mcD^n }}\bracket*{\bS_i \in \mcA\left(\mcM(\bS)\right)} \leq e^{\epsilon} \Prx_{\substack{\bS \iid \mcD^n }}\bracket*{\bS_i \in \mcA\left(\mcM(\bS_{x \to i})\right)}+\delta.$$
    
     Where $S_{x \to i}$ is the set obtained by setting the $i$-th element of $S$ to $x$. This implies that: 
     \begin{align*}
         \Prx_{\substack{\bS \iid \mcD^n \\ \bS' \leftarrow \mcA(\mcM(\bS))}}[\bS_i \in \bS'] &= \Prx_{\substack{\bS \iid \mcD^n }}[\bS_i \in \mcA(\mcM(\bS))] \\
         &\leq e^{\epsilon} \Prx_{\substack{\bS \iid \mcD^n\\ \bx \sim \mcD }}[\bS_i \in \mcA(\mcM(\bS_{\bx \to i}))]+\delta. \\
         &= e^{\epsilon} \Prx_{\substack{\bS \iid \mcD^n\\ \bx \sim \mcD }}[\bx \in \mcA(\mcM(\bS))]+\delta.\\
     \end{align*}
    Where the last line follows from the symmetry of $\bS_i$ and $\bx$. Hence, we have that: 

     \begin{align*}
         \Ex_{\substack{\bS \iid \mcD^n \\ \bS' \leftarrow \mcA(\mcM(\bS))}}\bracket*{\sum_{x \in \bS}\mathbbm{1}[x \in \bS']} &= \sum_{i=1}^n \left(e^{\epsilon} \Prx_{\substack{\bS \iid \mcD^n\\ \bx \sim \mcD }}[\bx \in \mcA(\mcM(\bS))]+\delta \right)  \\
         &\leq n \cdot e^{\epsilon} \sup_{S \in X^n}\bigg(\Pr_{\bx \sim \mcD}[\bx \in S]\bigg)+\delta n. \\
     \end{align*} 
    
Since $\linf{D} \leq \frac{1}{100n^2}$, we have that for any $S \in X^n$, $\Pr_{\bx \sim \mcD}[\bx \in S] \leq \frac{n}{100n^2}$. From which we can conclude that \Cref{axiom:blalant-formal} holds since:   

\begin{align*}
    \Ex_{\substack{\bS \iid \mcD^n \\ \bS' \leftarrow \mcA(\mcM(\bS))}}\bracket*{\sum_{x \in \bS}\mathbbm{1}[x \in \bS']} \leq \frac{e^{\epsilon}}{100}+\delta n. &\qedhere
\end{align*}
\end{proof}

Before proving \Cref{axiom:strong-composition-formal} holds, we recall advanced composition for $(\epsilon,\delta)$-differential privacy.\medskip

\noindent{\bf Theorem }(DP-Strong-Composition, Theorem 3.20 in \cite{DR14book}).
     
    {\it For all} $\epsilon, \delta \geq 0$, {\it if }$\mcM^1, \ldots, \mcM^\ell$ {\it are} $(\epsilon, \delta)$-{\it differentially private, then the composed algorithm }$(\mcM^1, \ldots, \mcM^\ell)$ {\it is} $(\epsilon', \delta')$-{\it differentially private where} $\delta' \coloneqq 2\ell\delta$ {\it and } $$\epsilon' \coloneqq \epsilon\sqrt{\ell \ln(1/(\ell\delta))}+ \ell\epsilon(e^{\epsilon}-1).$$
\medskip

\begin{lemma}
     $\mcP_{\mathrm{DP}}$ respects \Cref{axiom:strong-composition-formal} with composition constant $c=5/8$.
\end{lemma}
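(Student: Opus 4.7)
The plan is a direct computation: translate $\mcP_{\mathrm{DP}}(\mcM^i) \leq \eps$ into an $(\eps_0, \delta_0)$-DP guarantee for each $\mcM^i$, invoke the DP-Strong-Composition theorem for the $\ell$-fold composition, and then translate the composed DP guarantee back into a bound on $\mcP_{\mathrm{DP}}$ of the composed algorithm. The exponents $4/5$ and $8/5$ in the definition of $\mcP_{\mathrm{DP}}$ are precisely tuned so that this calculation yields composition exponent $c = 5/8$.

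Concretely, the first step is to note that $\mcP_{\mathrm{DP}}(\mcM^i) \leq \eps$ means each $\mcM^i$ is $(\eps_0, \delta_0)$-DP with $\eps_0 \coloneqq \alpha \eps^{4/5}$ and $\delta_0 \coloneqq \alpha^2 \eps^{8/5}/n^3$. Since the axiom only applies when $\eps' = \tilde{O}(\eps \cdot \ell^{5/8}) \leq 1$, we get $\eps \leq 1/\polylog(n)$ and hence $\eps_0 \leq \alpha \leq 0.1$; in particular $e^{\eps_0} - 1 \leq 2\eps_0$. Plugging $(\eps_0, \delta_0)$ into DP-Strong-Composition yields that $(\mcM^1, \ldots, \mcM^\ell)$ is $(\eps'_0, \delta'_0)$-DP with
\begin{equation*}
    \delta'_0 = 2\ell \delta_0 = 2\alpha^2 \ell \eps^{8/5}/n^3, \qquad \eps'_0 \leq \alpha \eps^{4/5}\sqrt{\ell \ln(1/(\ell \delta_0))} + 2\alpha^2 \ell \eps^{8/5}.
\end{equation*}

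The next step is to bound the logarithm: $\ln(1/(\ell \delta_0)) = \ln(n^3/(\ell \alpha^2 \eps^{8/5})) \leq O(\log n + \log(1/\eps))$, which is $\polylog(n)$ in the regimes the axiom is invoked. Defining $\eps' \coloneqq C \cdot \eps \cdot \ell^{5/8}$ for a $\polylog(n, 1/\eps)$ factor $C$, I will verify that $(\mcM^1, \ldots, \mcM^\ell)$ is $(\alpha (\eps')^{4/5}, \alpha^2 (\eps')^{8/5}/n^3)$-DP, which by the definition of $\mcP_{\mathrm{DP}}$ gives $\mcP_{\mathrm{DP}}(\text{composed}) \leq \eps' \leq 1$, yielding the axiom. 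The $\delta$-check reduces to $C \geq 2^{5/8}$, since $\alpha^2 (\eps')^{8/5}/n^3 = \alpha^2 C^{8/5} \ell \eps^{8/5}/n^3 \geq \delta'_0$. The $\eps$-check reduces, after dividing through by $\alpha \eps^{4/5} \ell^{1/2}$, to $O(\sqrt{\log n + \log(1/\eps)}) + 2\alpha \eps^{4/5}\ell^{1/2} \leq C^{4/5}$; the second summand is at most $2\alpha \leq 1/5$ because the precondition $\eps \ell^{5/8} \leq 1$ gives $\eps^{4/5} \ell^{1/2} \leq 1$, while the first summand is absorbed by taking $C = \Theta((\log n + \log(1/\eps))^{5/8})$.

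The only ``obstacle'' is verifying that the arithmetic of exponents works out, namely that $(\eps')^{4/5} \asymp \eps^{4/5}\ell^{1/2}$ correctly matches the $\sqrt{\ell}$ factor from advanced composition and that $(\eps')^{8/5} \asymp \eps^{8/5}\ell$ correctly matches the linear growth of $\delta$. Both hold by design of the exponents $4/5, 8/5$. Since $C$ is polylogarithmic, $\eps' = \tilde{O}(\eps \ell^{5/8})$ as required by \Cref{axiom:strong-composition-formal} with $c = 5/8$.
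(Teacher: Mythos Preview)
Your overall plan---translate $\mcP_{\mathrm{DP}}(\mcM^i)\le\eps$ into $(\eps_0,\delta_0)$-DP, apply DP strong composition, and translate back---is exactly the paper's approach, and the arithmetic of the exponents ($4/5$ and $8/5$ yielding $c=5/8$) is correct. However, there is a genuine gap.

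The axiom explicitly requires $\eps' = O(\eps\cdot\ell^c\cdot\polylog(n))$, i.e., the prefactor may depend only on $n$. Your $C = \Theta((\log n + \log(1/\eps))^{5/8})$ depends on $\eps$, and the axiom must hold for \emph{every} $\eps>0$; taking $\eps\to 0$ with $\ell$ fixed makes $\sqrt{\ln(1/(\ell\delta_0))}\to\infty$ while your target $C^{4/5}$ would have to remain $\polylog(n)$. In fact the stronger conclusion you aim for, $\mcP_{\mathrm{DP}}(\text{composed})\le \eps'$ with $\eps'=\polylog(n)\cdot\eps\ell^{5/8}$, is simply false (already for $\ell=2$: advanced composition picks up a factor $\sqrt{\ln(1/\delta_0)}\asymp\sqrt{\log(1/\eps)}$ in $\eps_0'$). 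Your sentence ``which is $\polylog(n)$ in the regimes the axiom is invoked'' is the unjustified step.

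The paper avoids this by aiming only for what the axiom asks---$\mcP_{\mathrm{DP}}(\text{composed})\le 1$, i.e., the composed algorithm is $(\alpha,\alpha^2/n^3)$-DP---rather than $\le\eps'$. Setting $x:=\eps_0\sqrt{\ell}=\alpha\eps^{4/5}\ell^{1/2}$, the constraint $\eps'\le 1$ gives only an \emph{upper} bound $x\le \alpha/C^{4/5}$, while the composition bound reads $\eps_0'\lesssim x\sqrt{3\ln n+2\ln(1/x)}$ (since $\ell\delta_0=x^2/n^3$). The key point is that $x\mapsto x\sqrt{3\ln n+2\ln(1/x)}$ is increasing on $(0,1)$, so the worst case is at $x=\alpha/C^{4/5}$, where $\ln(1/x)=O(\log\log n)$ and the whole expression is $O(\sqrt{\ln n}/C^{4/5})\le\alpha$ once $C$ is a large enough power of $\log n$. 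The $\delta$-check $2\ell\delta_0=2x^2/n^3\le\alpha^2/n^3$ follows from the same bound on $x$. Equivalently, you can first relax $\delta_0$ upward to $\alpha^2/(4\ell n^3)$ (still $\ge\delta_0$) before invoking composition, which makes the logarithm depend only on $n$.
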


\begin{proof} 
Let $\beta$ be a suitably large constant, we define $\epsilon':=\ell^c \cdot \epsilon \cdot \left({\beta \log^2(n)}\right)^{2c}$. Say we have algorithms ${\calM}^1, \ldots, {\calM}^\ell$ each taking $n$ samples with $\epsilon \geq \mcP_{\mathrm{DP}}(\mcM ^i)$. For $\epsilon'$ to be less than $1$ we need $$\epsilon \leq \ell^{-c \cdot }\left(\beta\log^2(n)\right)^{-2c}.$$

This means that each ${\calM}^i$ is $(\mu, \mu^2/n^{3})$-DP, where $\mu=\alpha \epsilon^{4/5}$. By our choice of $c$ we have $$\mu = \frac{\alpha}{\beta  \log^2(n)} \cdot \ell^{-1/2}.$$
 
 We denote by $\mcC$ the composed algorithm $\left({\calM}^1, \ldots, {\calM}^\ell\right)$, to prove the Axiom holds, it remains to show ${\mcP_{\mathrm{DP}}}(\mcC) \leq 1$. This is equivalent to showing $(0.1, \alpha^2/n^{3})$-differentially private. First note that $\mu \leq 1$ and $\mu^2\ell\leq \alpha <1$, which implies  $$\mu(e^{\mu}-1) \leq 2\mu^2 \text{ and }\mu^2\ell \leq \mu \sqrt{\ell}.$$

Using Strong-Composition, we can conclude $\mcC$ is $(\epsilon^\star,\delta^\star)$-DP where $\delta^\star=2\ell\delta$. First note that \begin{equation}\label{eq:kdelta}
\ell \delta= \ell \mu^2/n^3 = \frac{1}{n^3}\left(\frac{\alpha}{\beta \log^2(n)}\right)^2.
\end{equation}

From the above, we have $\delta^\star \leq \alpha^2/n^3$ by picking $\beta$ to be suitably large.

\begin{align*}
    \epsilon^\star &= \mu\sqrt{2\ell\ln(1/\ell\delta)} +\ell\mu(e^{\mu}-1) \\
    &\leq \mu\sqrt{2\ell\ln(1/\ell\delta)} +2\ell\mu^2 \\
    &\leq \mu\sqrt{2\ell\ln(1/\ell\delta)}+2\mu\sqrt{\ell}\\
    &\leq 3\mu\sqrt{2\ell\ln(1/\ell\delta)} \\
    &\leq {3\sqrt{2} \alpha} \cdot   \frac{\ln(1/\ell\delta)} {\beta \log^2(n)} \\
    &\leq \frac{\ln(1/\ell\delta)} {\beta \log^2(n)}
\end{align*}
Where the last line follow by $\alpha \leq 0.1$. 
By our bound on $\ell\delta$ in \Cref{eq:kdelta}, we can choose $\beta$ to be a suitably large constant to have $\epsilon^\star  \leq 0.1$. Hence, we have that $\mcC$ is $(0.1, \alpha^2/n^3)$-differentially private as needed. 
\end{proof}

\begin{lemma}
    $\mcP_{\mathrm{DP}}$ respects \Cref{axiom:amplification-formal} with the following parameters: $p(n,1/\beta)= \Delta \cdot \frac{n^{10}}{\beta}$ for some large enough constant $\Delta \geq 1$. The resulting algorithm $\mcM'$ is $(\beta, 5\beta)$-equivalent to $\mcM$.
\end{lemma}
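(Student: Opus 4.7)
The plan is to directly invoke Lemma~\ref{lem:DP-to-DP} (DP amplification) on the given $\mcP_{\mathrm{DP}}$-private algorithm $\mcM$, choosing the target DP parameters so that the output algorithm has $\mcP_{\mathrm{DP}}$-value $O(1/k)$ at its new sample size $kn$. Since $\mcP_{\mathrm{DP}}(\mcM) \leq 1$, the algorithm $\mcM$ is automatically $(0.1, \alpha^2/n^3)$-DP, matching the hypothesis of Lemma~\ref{lem:DP-to-DP}. I would set the target parameters to $\epsilon^\star := c \alpha \cdot k^{-4/5}$ for a small enough constant $c > 0$, together with $\delta^\star := (\epsilon^\star)^2/(kn)^3$. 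The latter choice is tailored so that $\mcP_{\mathrm{DP}}$ evaluated on a $(\epsilon^\star, \delta^\star)$-DP algorithm with $kn$ samples equals exactly $(\epsilon^\star/\alpha)^{5/4} = c^{5/4}/k = O(1/k)$, as the axiom demands. Applying Lemma~\ref{lem:DP-to-DP} then yields an $(\epsilon^\star, \delta^\star)$-DP algorithm $\mcM^\circ$ that is $(\beta, 5\beta)$-equivalent to $\mcM$.

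The one technical step is to verify that the sample count $m^\star = O\bigl((\log^2(1/\beta) + \log(1/\beta)\log(1/\delta^\star))/\epsilon^\star\bigr) \cdot n^2$ guaranteed by Lemma~\ref{lem:DP-to-DP} is at most $kn$, since the axiom requires an algorithm with exactly $kn$ samples. Substituting our choices of $\epsilon^\star$ and $\delta^\star$ and noting $\log(1/\delta^\star) = O(\log k + \log n)$, this reduces, after rearranging, to an inequality of the form
\[
k^{1/5} \geq C \cdot n \cdot \log(1/\beta) \cdot \bigl(\log(1/\beta) + \log n + \log k\bigr)
\]
for some absolute constant $C$. With $k \geq p(n, 1/\beta) = \Delta n^{10}/\beta$, the left side is at least $\Delta^{1/5} n^2/\beta^{1/5}$. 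Since $\beta^{1/5}(\log(1/\beta))^2$ is bounded by a universal constant for $\beta \in (0, 1]$, and $\log k / k^{1/10}$ is bounded for $k \geq 1$, picking $\Delta$ to be a sufficiently large absolute constant makes the inequality hold.

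If $m^\star$ is strictly less than $kn$, I extend $\mcM^\circ$ to an algorithm $\mcM':X^{kn} \to Y$ that ignores all but the first $m^\star$ input coordinates. Any two neighboring $kn$-sized datasets differ in at most one coordinate, so their $m^\star$-prefixes are either identical or neighboring; hence $\mcM'$ inherits the $(\epsilon^\star, \delta^\star)$-DP guarantee. The $(\beta, 5\beta)$-equivalence also transfers, because the ignored coordinates are i.i.d.\ draws from the same distribution and play no role in the computation. The main obstacle is purely arithmetic bookkeeping to confirm the sample-complexity inequality above; the generous exponent of $10$ in $p(n, 1/\beta)$ leaves comfortable slack, so no delicate tuning of parameters is needed.
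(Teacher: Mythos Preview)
Your proposal is correct and follows essentially the same approach as the paper: both invoke Lemma~\ref{lem:DP-to-DP} with target parameters $(\epsilon,\delta)$ satisfying $\delta=\epsilon^2/(kn)^3$, then verify the sample count is at most $kn$ and that $\mcP_{\mathrm{DP}}(\mcM')=O(1/k)$. The only cosmetic difference is the order of the two checks: the paper sets $\epsilon=n^2\log^3(kn)/(kn)$ so the sample-count bound is a one-line computation and afterwards shows $\epsilon\le k^{-4/5}$, whereas you set $\epsilon^\star=\Theta(k^{-4/5})$ directly so the privacy bound is immediate and then do the sample-count arithmetic; you are also more explicit than the paper about the padding step needed to reach exactly $kn$ samples.
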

\begin{proof}
    let $\calM:X^n \to Y$. Assume $\mcP_{\mathrm{DP}}(\calM) \leq 1$ which implies $\calM$ is $(0.1, {\alpha^2}/{n^{3}})$-DP. 
    Let $k \geq p(n, \frac{1}{\beta})$ and $m:=kn$, we first use \Cref{lem:DP-to-DP} with the following parameters:

$$\epsilon=  n^2\log^3(m)/ m \text{ and }\delta=\epsilon^2/m^{3}$$

The resulting algorithm $\calM'$ is $(\beta, 5\beta)$-equivalent to $\mcM$. Recall that $m \geq k \geq \Delta n^{10}/\beta$ and $\Delta$ is large enough. So $\mcM'$ uses:

\begin{align*}
    O\Big(\frac{\log(1/\beta)^2+\log(1/\beta)\log(1/\delta)}{\epsilon}\Big) \cdot n^2 = O\left(\frac{\log^2(m)}{n^2 \log^3(m)/m} \right)\cdot n^2  \leq m \text{ samples}.
\end{align*}

We also have that $\calM'$ is $(\epsilon, \epsilon^2/m^{3})$-DP. Using $m/k =n$ and $k \geq n^{10}$ we have $$\epsilon \leq n^{2}\log^3(m)/m=  \frac{m\log^3(m)}{k^2}=\frac{n \log^3(kn)}{k} \leq \frac{2\log^3(k)}{k^{9/10}}.$$

Since $k \geq \Delta$ for some large enough constant $\Delta$, we can conclude $\epsilon \leq \frac{1}{k^{4/5}}$. And thus $\mcP_{\mathrm{DP}}(\mcM')=\left(\epsilon/\alpha\right)^{5/4} \leq \alpha^{-5/4}/ k = O(1/k)$. 
\end{proof}

\section{Minimality of our axioms and proof of \Cref{thm:minimal-intro}}
\label{sec:axioms-min}

As discussed in \Cref{subsec:overview-axioms-necessary}, removing any of our axioms would lead to ill-behaved notions of privacy. In this section, we formalize the claims made there. We will furthermore show that these ill-behaved notions of privacy all allow algorithms that solve one of the following two tasks:

\begin{definition}[The task $\FindElement$]
    \label{def:find-element}
    For any domain $X$, the task $\FindElement$ is defined as follows: An algorithm given i.i.d. samples $\bS \sim \mcD^n$ from an unknown $\mcD$ should output some $x$ for which $\mcD(x) > 0$.
\end{definition}
The next task is similar, but only defined on distributions with one heavy element, and the algorithm's task is to output a different element in the support.
\begin{definition}[The task $\FindLightElement$]
    \label{def:find-light-element}
    For any domain $X$, the task $\FindLightElement$ is defined only on distributions $\mcD$ where there is some $x$ satisfying $0.7 \leq \mcD(x) \leq 0.9$. To solve it, an algorithm given i.i.d. samples $\bS \sim \mcD^n$ should output some $y \neq x$ satisfying $\mcD(y) > 0$.
\end{definition}
We will show in \Cref{subsec:DP-hardness-of-finding-elements} that neither $\FindElement$ nor $\FindLightElement$ can be solved by a DP algorithm using less than $O(\sqrt{|X|})$ many samples. Hence, by showing that the removal of any one axiom allows for solving either $\FindElement$ or $\FindLightElement$ with $O(1)$ samples, we complete the proof of \Cref{thm:minimal-intro}.

\subsection{Removing each requirement of \Cref{axiom:preprocessing-formal}}
\label{subsec:remove-axiom-1}

Since \Cref{axiom:preprocessing-formal} has two requirements, we will show that removing each requirement on its own is enough to allow ill-behaved privacy measures. In both cases, the privacy measure we define will have the following special structure.

\begin{definition}[Binary privacy measure]
    \label{def:binary}
    A privacy measure $\mcP$ is \emph{binary} if there some set of ``good" algorithms $\mcG$ and
    \begin{equation*}
        \mcP(\mcM) = \begin{cases}
            0&\text{if }\mcM \in \mcG \\
            2&\text{otherwise.}
        \end{cases}
    \end{equation*}
\end{definition}
The reason binary privacy measures are easy to analyze is because our axioms say very little about how a privacy measure should behave at privacy levels about $1$ (i.e. none of \Cref{axiom:blalant-formal,axiom:strong-composition-formal,axiom:amplification-formal} apply in that regime. Note the choice to not require our axioms to enforce much when the privacy level is high was not artificial: Even DP starts to behave different when $\eps > 1$ (for example, DP only satisfies strong composition when $\eps \leq 1$ and for larger $\eps$, satisfies linear composition).

We now state how binary privacy measures interact with our axioms:
\begin{claim}[\Cref{axiom:preprocessing-formal} for binary privacy measures]
    \label{claim:binary-pre}
    A binary privacy measure with set $\mcG$ satisfies \Cref{axiom:preprocessing-formal} if and only if:
    \begin{enumerate}
        \item \textbf{$\mcG$ is closed under reordering inputs} For any algorithm $\mcM:X^n \to Y$ and permutation $\pi:[n] \to [n]$ $\mcM \in \mcG \iff \mcM \circ \pi \in \mcG$.
        \item \textbf{Remapping the domain maintains $\mcG$:} For any mapping $\sigma:X \to X$ and algorithm $\mcM:X^n \to  Y$, $\mcM \in \mcG \implies \mcM \circ \sigma \in \mcG$.
    \end{enumerate}
\end{claim}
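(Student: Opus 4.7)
The plan is to directly unpack \Cref{axiom:preprocessing-formal} under the restriction that $\mcP$ takes only the values $0$ and $2$. Since $\mcP(\mcM') \leq \mcP(\mcM)$ for a binary measure is equivalent to the implication ``$\mcM \in \mcG \Rightarrow \mcM' \in \mcG$,'' both parts of the axiom translate immediately into implications about membership in $\mcG$. The only subtlety is that permutations are invertible while arbitrary maps $\sigma:X \to X$ are not, which is precisely what produces the $\iff$ in condition (1) and the one-sided $\Rightarrow$ in condition (2).

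For the forward direction (axiom implies the two conditions), I would first observe that for any binary $\mcP$ and any two algorithms $\mcA, \mcB$, the inequality $\mcP(\mcA) \leq \mcP(\mcB)$ is equivalent to $\mcB \in \mcG \Rightarrow \mcA \in \mcG$. Applying this to the reordering part of the axiom yields $\mcM \in \mcG \Rightarrow \mcM \circ \pi \in \mcG$ for every permutation $\pi$. To get the reverse implication, I apply the axiom to the algorithm $\mcM \circ \pi$ with permutation $\pi^{-1}$, noting that $(\mcM \circ \pi) \circ \pi^{-1} = \mcM$; thus $\mcM \circ \pi \in \mcG \Rightarrow \mcM \in \mcG$, giving the $\iff$. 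For the remapping part the same translation yields exactly $\mcM \in \mcG \Rightarrow \mcM \circ \sigma \in \mcG$; here we do not get an $\iff$ because an arbitrary function $\sigma:X \to X$ need not have an inverse to plug back in.

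For the converse direction (the two conditions imply the axiom), I simply run the same dictionary in reverse: if $\mcM \notin \mcG$, then $\mcP(\mcM) = 2$ and the inequalities $\mcP(\mcM \circ \pi) \leq \mcP(\mcM)$ and $\mcP(\mcM \circ \sigma) \leq \mcP(\mcM)$ are trivial; while if $\mcM \in \mcG$, then condition (1) (resp.\ (2)) supplies $\mcM \circ \pi \in \mcG$ (resp.\ $\mcM \circ \sigma \in \mcG$), so both preprocessed algorithms have $\mcP$-value $0$, and the inequalities again hold.

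There is no substantive obstacle here, just careful bookkeeping; the one place to be a bit careful is explaining why the asymmetry between permutations and arbitrary mappings in \Cref{axiom:preprocessing-formal} is faithfully reflected in the $\iff$ vs.\ $\Rightarrow$ asymmetry in \Cref{claim:binary-pre}, which I would highlight explicitly since this asymmetry is exactly what is exploited in \Cref{subsec:overview-axioms-necessary} (and formalized later) to separate the two subparts of the preprocessing axiom via the measures $\mcP_{\text{half}}$ and $\mcP_{\text{heavy}}$.
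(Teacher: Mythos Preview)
Your proposal is correct and takes essentially the same approach as the paper, which simply states ``This is immediate from the definition of \Cref{axiom:preprocessing-formal} and \Cref{def:binary}.'' You have spelled out exactly the bookkeeping the paper leaves implicit, including the use of $\pi^{-1}$ to upgrade the one-sided implication to an $\iff$ for permutations.
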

\begin{proof}
    This is immediate from the definition of \Cref{axiom:preprocessing-formal} and \Cref{def:binary}.
\end{proof}

\begin{claim}[\Cref{axiom:blalant-formal} for binary privacy measures]
\label{claim:binary-blatant}
    A binary privacy measure with set $\mcG$ satisfies \Cref{axiom:blalant-formal} if and only if every $\mcM \in \mcG$ is not blatantly non-private.
\end{claim}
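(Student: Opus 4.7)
The plan is to note that this claim follows directly by unwinding the relevant definitions. For a binary privacy measure with good set $\mcG$, the convention from \Cref{def:privacy-measure} that $\mcM$ is $\mcP$-private iff $\mcP(\mcM) \leq 1$ specializes, by \Cref{def:binary}, to the equivalence that $\mcM$ is $\mcP$-private if and only if $\mcM \in \mcG$ (since $\mcP(\mcM) \in \{0, 2\}$, the condition $\mcP(\mcM) \leq 1$ is equivalent to $\mcP(\mcM) = 0$).

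Next I would apply \Cref{axiom:blalant-formal}, which states that $\mcP$ prohibits blatant non-privacy exactly when every $\mcP$-private algorithm fails to be blatantly non-private (in the sense of \Cref{def:blatant}). Combining this with the equivalence above yields that $\mcP$ satisfies \Cref{axiom:blalant-formal} if and only if every $\mcM \in \mcG$ is not blatantly non-private, which is the statement of the claim.

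Since both implications are just rewrites of the definition, there is no real obstacle and no inequality to verify. The only thing to be careful about is the direction of the ``if and only if'': in one direction, if every $\mcM \in \mcG$ is not blatantly non-private, then since the set of $\mcP$-private algorithms is exactly $\mcG$, no $\mcP$-private algorithm is blatantly non-private, so the axiom holds; in the other direction, if the axiom holds, then since every $\mcM \in \mcG$ is $\mcP$-private, the axiom forces each such $\mcM$ to not be blatantly non-private. The proof should therefore be a one-paragraph remark paralleling the proof of \Cref{claim:binary-pre}.
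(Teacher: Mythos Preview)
Your proposal is correct and takes essentially the same approach as the paper, which simply states that the claim is immediate from the definition of \Cref{axiom:blalant-formal} and \Cref{def:binary}. Your version is more explicit about why the set of $\mcP$-private algorithms coincides with $\mcG$, but the underlying argument is identical.
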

\begin{proof}
    This is immediate from the definition of \Cref{axiom:blalant-formal} and \Cref{def:binary}.
\end{proof}

\begin{claim}[\Cref{axiom:strong-composition-formal} for binary privacy measures]
\label{claim:binary-strong-composition}
    A binary privacy measure with set $\mcG$ satisfies \Cref{axiom:strong-composition-formal} iff for any $\mcM_1, \ldots, \mcM_k:X^n \to Y$ in $\mcG$, the algorithm that takes as input $S \in X^n$ and outputs $(\mcM_1(S), \ldots, \mcM_k(S))$ is also in $\mcG$
\end{claim}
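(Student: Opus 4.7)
The plan is to exploit the fact that for a binary privacy measure $\mcP$ with good set $\mcG$, the value $\mcP(\mcM)$ only takes values in $\{0,2\}$, so being $\mcP$-private (i.e., $\mcP(\mcM) \leq 1$) is equivalent to $\mcP(\mcM) = 0$, which is equivalent to $\mcM \in \mcG$. This collapses the quantitative content of the strong composition axiom to a purely qualitative statement about $\mcG$.

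For the forward direction (strong composition $\Rightarrow$ closure), given any $\mcM_1, \ldots, \mcM_k \in \mcG$, set $\eps = 0$ in the strong composition axiom. Then $\mcP(\mcM_i) = 0 \leq \eps$ for every $i$, and the quantity $\eps' = \tilde{O}(\eps \cdot k^c) = 0$ satisfies $\eps' \leq 1$. By the axiom, the composed algorithm $\mcC$ is $\mcP$-private, which by binarity forces $\mcP(\mcC) = 0$, i.e., $\mcC \in \mcG$.

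For the backward direction (closure $\Rightarrow$ strong composition), I need to exhibit a concrete instantiation of the $\tilde{O}(\cdot)$ in the definition of $\eps'$ for which the axiom holds. The natural choice is to fix both the constant and the $\polylog$ to be $1$, i.e.\ $\eps' \coloneqq \eps \cdot \ell^c$, and fix any $c \in (0,1)$. Suppose $\mcM^1, \ldots, \mcM^\ell$ satisfy $\mcP(\mcM^i) \leq \eps$ with $\eps' \leq 1$. If $\eps \geq 2$, then $\eps' \geq 2 \cdot \ell^c \geq 2 > 1$, contradicting $\eps' \leq 1$; so $\eps < 2$, and since $\mcP$ only takes values in $\{0,2\}$, this forces $\mcP(\mcM^i) = 0$, i.e., every $\mcM^i \in \mcG$. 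Closure then yields that the composition lies in $\mcG$ and is thus $\mcP$-private.

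The only real subtlety is reading the strong composition axiom correctly: when we are proving that a specific measure satisfies the axiom, we are free to pick the hidden constants in the $\tilde{O}(\cdot)$ expression for $\eps'$, and picking them large enough renders the $\eps \geq 2$ regime vacuous. Once that point is noted, both directions of the equivalence are essentially immediate from the $\{0,2\}$-valued nature of $\mcP$.
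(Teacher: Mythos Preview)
Your proposal is correct and takes essentially the same approach as the paper, which simply states that the claim is ``immediate from the definition of \Cref{axiom:strong-composition-formal} and \Cref{def:binary}.'' You have just carefully unpacked the two directions and correctly observed the one mild subtlety (that the hidden constants in $\tilde{O}$ can be chosen to render the $\eps\ge 2$ case vacuous), which the paper leaves implicit.
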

\begin{proof}
   This is immediate from the definition of \Cref{axiom:strong-composition-formal} and \Cref{def:binary}.
\end{proof}

\begin{claim}[\Cref{axiom:amplification-formal} for binary privacy measures]
\label{claim:binary-amplification}
    A binary privacy measure with set $\mcG$ satisfies \Cref{axiom:amplification-formal} iff there exists some polynomial $p:\R^2 \to \R$, so that, for any $\mcM:X^n \to Y$ in $\mcG$, failure probability $\beta > 0$, and $k \geq p(n,1/\beta)$,  there exists some $(\beta, \beta' = O(\beta))$-equivalent algorithm $\mcM'$ taking in $m \coloneqq kn$ that is also in $\mcG$.
\end{claim}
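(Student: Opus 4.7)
The plan is to mirror the proofs of \Cref{claim:binary-pre,claim:binary-blatant,claim:binary-strong-composition}: unpack both sides of the "iff" using only \Cref{def:binary} and \Cref{axiom:amplification-formal}, and observe that the binary value set $\{0,2\}$ makes the quantitative privacy bound of the axiom collapse to a membership statement about $\mcG$.

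For the forward direction, suppose $\mcP$ is binary with good set $\mcG$ and satisfies \Cref{axiom:amplification-formal}, and let $p$ be the polynomial it supplies. Pick any $\mcM \in \mcG$ and any $\beta > 0$. Since $\mcP(\mcM) = 0 \leq 1$, $\mcM$ is $\mcP$-private, so the axiom produces a constant $C$ and, for every $k \geq p(n,1/\beta)$, a $(\beta, O(\beta))$-equivalent algorithm $\mcM'$ on $m = kn$ samples with $\mcP(\mcM') \leq C/k$. For $k$ large enough that $C/k < 2$, the only value in $\{0,2\}$ consistent with this bound is $0$, so $\mcM' \in \mcG$. To package this as a single polynomial witnessing the RHS, define $p'(n,1/\beta) \coloneqq \max(p(n,1/\beta),\, \lceil C \rceil + 1)$, which still lies in the polynomial class required by \Cref{axiom:amplification-formal} and certifies the RHS.

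For the backward direction, suppose the RHS holds with polynomial $p$, and let $\mcM$ be any $\mcP$-private algorithm, i.e.\ $\mcP(\mcM) \leq 1$. Since the only values $\mcP$ attains are $0$ and $2$, we must have $\mcP(\mcM) = 0$, i.e.\ $\mcM \in \mcG$. For each $\beta > 0$ and $k \geq p(n,1/\beta)$, the RHS provides a $(\beta, O(\beta))$-equivalent $\mcM'$ on $kn$ samples with $\mcM' \in \mcG$, hence $\mcP(\mcM') = 0 \leq C/k$ for any constant $C \geq 0$. This is exactly the conclusion of \Cref{axiom:amplification-formal}.

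The only subtlety, and it is purely bookkeeping rather than a real obstacle, is to verify that the $O(1/k)$ bound in \Cref{axiom:amplification-formal} really does force $\mcP(\mcM') = 0$ for sufficiently large $k$; this is why we may have to enlarge $p$ by an additive constant in the forward direction so that $k \geq p'(n,1/\beta)$ automatically makes $C/k < 2$. No other step requires any additional content beyond \Cref{def:binary} and the statement of \Cref{axiom:amplification-formal}.
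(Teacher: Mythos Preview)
Your proof is correct and takes the same approach as the paper, which simply declares the claim ``immediate from the definition of \Cref{axiom:amplification-formal} and \Cref{def:binary}.'' You have spelled out the details the paper omits, including the one genuine bookkeeping point: that the $O(1/k)$ bound only forces $\mcP(\mcM')=0$ once $k$ exceeds the hidden constant, so the witnessing polynomial may need to be enlarged. One cosmetic nit: $\max(p(n,1/\beta),\lceil C\rceil+1)$ is not literally a polynomial in $(n,1/\beta)$; replacing it by $p(n,1/\beta)+\lceil C\rceil+1$ fixes this without changing the argument.
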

\begin{proof}
   This is immediate from the definition of \Cref{axiom:amplification-formal} and \Cref{def:binary}.
\end{proof}

\subsubsection{Removing the requirement that reordering the input maintains privacy}

In this case, we show that a binary measure privacy where the good algorithms are those that only depend on the first half of their dataset satisfies the remaining axioms.
\begin{definition}[First-half-only-privacy measure]
    \label{def:first-half-privacy}
    The \emph{first-half-only-privacy measure} $\mcP_{\text{half}}$ is the binary privacy measure for set $\mcG$ defined as follows: An $\mcM:X^n \to Y$ is in $\mcG$ iff $n \geq 2$ and there exists some (possibly randomized) algorithm $f:X^{\floor{n/2}} \to Y$ for which $\mcM(S)$ and $f(S_{\leq \floor{n/2}})$ are equal in distribution for all $S$.
\end{definition}
\begin{claim}
    \label{claim:remove-preprocessing-1}
    $\mcP_{\text{half}}$ satisfies \Cref{axiom:blalant-formal,axiom:strong-composition-formal,axiom:amplification-formal} and also the ``remapping the domain maintains privacy" part of \Cref{axiom:preprocessing-formal}.
\end{claim}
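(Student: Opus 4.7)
The plan is to exploit the fact that $\mcP_{\text{half}}$ is binary, so we may invoke \Cref{claim:binary-pre,claim:binary-blatant,claim:binary-strong-composition,claim:binary-amplification} and simply verify four closure/non-blatancy properties of the set $\mcG$ of ``first-half-only'' algorithms. Throughout, I will write any $\mcM \in \mcG$ as $\mcM(S) \equiv f(S_{\leq \lfloor n/2 \rfloor})$ for some (possibly randomized) $f$, and use that $\mcM$'s output is independent of $S_{>\lfloor n/2 \rfloor}$.

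\textbf{Remapping the domain (second part of \Cref{axiom:preprocessing-formal}).} For $\sigma : X \to X$ and $\mcM \in \mcG$,
$$\mcM\circ\sigma(S) \;\equiv\; f(\sigma(S_1),\ldots,\sigma(S_{\lfloor n/2\rfloor})) \;=\; f'(S_{\leq \lfloor n/2\rfloor}),$$
where $f'(T) \coloneqq f(\sigma(T_1),\ldots,\sigma(T_{\lfloor n/2\rfloor}))$. Hence $\mcM \circ \sigma \in \mcG$.

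\textbf{Blatant non-privacy (\Cref{axiom:blalant-formal}).} By \Cref{claim:binary-blatant} it suffices to show no $\mcM \in \mcG$ is blatantly non-private. Fix any high-entropy $\mcD$ (with $\mcD(x) \leq 1/(100n^2)$) and any adversary $A$. Because $A(\mcM(\bS))$ depends only on $\bS_{\leq \lfloor n/2 \rfloor}$, for each $i > \lfloor n/2 \rfloor$ the indicator $\Ind[\bS_i \in \bS']$ has expectation at most $\E_{\bS'}[\sum_{x \in \bS'}\mcD(x)] \leq n/(100n^2) = 1/(100n)$. Splitting the sum in \Cref{def:blatant} at $\lfloor n/2 \rfloor$ gives the bound $\lfloor n/2 \rfloor + \lceil n/2 \rceil \cdot (1/(100n)) \leq n/2 + 1/100 < 0.9n$ for all $n \geq 2$.

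\textbf{Strong composition (\Cref{axiom:strong-composition-formal}).} By \Cref{claim:binary-strong-composition} we need closure of $\mcG$ under (non-adaptive) composition. If each $\mcM_i(S) \equiv f_i(S_{\leq \lfloor n/2 \rfloor})$, then the composed algorithm is $S \mapsto (f_1,\ldots,f_\ell)(S_{\leq \lfloor n/2 \rfloor})$, which is in $\mcG$.

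\textbf{Linear scalability (\Cref{axiom:amplification-formal}).} By \Cref{claim:binary-amplification} we need, for any $\mcM \in \mcG$ and $k$ above some polynomial threshold, a $(\beta,O(\beta))$-equivalent $\mcM' : X^{kn} \to Y$ in $\mcG$. Take $p(n,1/\beta) \equiv 2$ and define $\mcM'(S) \coloneqq \mcM(S_{\leq n})$. For $k \geq 2$ and $n \geq 2$ we have $n \leq \lfloor kn/2 \rfloor$, so $\mcM'$ depends only on the first $\lfloor kn/2 \rfloor$ coordinates and lies in $\mcG$. On any i.i.d.\ input $\bS \sim \mcD^{kn}$, $\mcM'(\bS)$ is distributed identically to $\mcM(\bS')$ for $\bS' \sim \mcD^n$, so the failure probability on any statistical task is exactly preserved, giving $(\beta,\beta)$-equivalence. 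None of the steps presents a genuine obstacle; the only mild care is handling the $n=2$ boundary in the blatant-non-privacy calculation and requiring $k \geq 2$ so that $\lfloor kn/2 \rfloor \geq n$.
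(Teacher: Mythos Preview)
Your proof is correct and follows essentially the same approach as the paper: invoke the binary-privacy characterizations (\Cref{claim:binary-pre,claim:binary-blatant,claim:binary-strong-composition,claim:binary-amplification}) and use the same construction for amplification (run $\mcM$ on the first $n$ coordinates of the larger sample). The paper's own proof is terse, simply calling everything ``immediate,'' whereas you spell out the blatant-non-privacy calculation and the $k \geq 2$ check explicitly; this extra detail is correct and useful but not a different argument.
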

\begin{proof}
    This is immediate from \Cref{claim:binary-pre,claim:binary-blatant,claim:binary-strong-composition,claim:binary-amplification}. For \Cref{claim:binary-amplification}, given any $\mcM:X^n \to Y$, we take $\mcM':X^m \to Y$ to be the algorithm that runs $\mcM$ on the first $n$ points of its dataset. Since $\mcM$ depends on only the first half of its dataset, the same will be true of $\mcM'$ (indeed $\mcM'$ will depend on \emph{less} than the first half of its dataset).
\end{proof}
We also observe that this privacy measure allows an algorithm which solves $\FindElement$.
\begin{fact}
    \label{fact:remove-pre-1-find-element}
    The algorithm $\mcM:X^n \to Y$ which, on input $S$, outputs $S_1$ both is $\mcP_{\text{half}}$-private and solves $\FindElement$ with failure probability $0$.
\end{fact}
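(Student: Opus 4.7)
The plan is to verify both claims directly from the definitions, as each is essentially immediate.

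For the privacy claim, I would invoke the characterization in \Cref{claim:binary-pre} (more precisely, I just need to check $\mcM \in \mcG$ for the set $\mcG$ used in \Cref{def:first-half-privacy}, since then $\mcP_{\text{half}}(\mcM) = 0 \leq 1$). To show $\mcM \in \mcG$, I need to exhibit a randomized procedure $f:X^{\lfloor n/2 \rfloor} \to Y$ whose output distribution on $S_{\leq \lfloor n/2 \rfloor}$ matches $\mcM(S)$ for every $S \in X^n$. The natural choice is $f(T) := T_1$. Since $\lfloor n/2 \rfloor \geq 1$ whenever $n \geq 2$ (which is the only regime in which $\mcG$ places any element), the coordinate $T_1$ is well-defined, and $f(S_{\leq \lfloor n/2 \rfloor}) = S_1 = \mcM(S)$ as required. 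Thus $\mcM \in \mcG$, so $\mcP_{\text{half}}(\mcM) = 0$, and in particular $\mcM$ is $\mcP_{\text{half}}$-private.

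For the correctness claim, I would unpack \Cref{def:find-element}: the set of valid responses on distribution $\mcD$ is $\mcT(\mcD) = \{x \in X : \mcD(x) > 0\}$, i.e.~the support of $\mcD$. For any $\mcD$ over $X$ and $\bS \sim \mcD^n$, the coordinate $\bS_1$ is distributed as $\mcD$, so $\Pr[\bS_1 \in \supp(\mcD)] = 1$. Therefore
\begin{equation*}
\Pr_{\bS \sim \mcD^n}[\mcM(\bS) \in \mcT(\mcD)] = \Pr_{\bS \sim \mcD^n}[\mcD(\bS_1) > 0] = 1,
\end{equation*}
so $\mcM$ solves $\FindElement$ with failure probability $0$. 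There is no real obstacle here; the fact is essentially a sanity check that $\mcP_{\text{half}}$ permits the blatantly non-private ``leak the first sample'' algorithm, which is what is subsequently used (in combination with the DP lower bound in \Cref{subsec:DP-hardness-of-finding-elements}) to complete the proof of \Cref{thm:minimal-intro} for this axiom.
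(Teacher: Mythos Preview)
Your proposal is correct and matches the paper's treatment: the paper states this as a bare fact with no proof, and your argument spells out exactly the direct verification from \Cref{def:binary}, \Cref{def:first-half-privacy}, and \Cref{def:find-element} that the reader is meant to perform. Your remark that the $n \geq 2$ constraint is needed for $\mcM \in \mcG$ is a good observation (the paper's statement of the fact is silent on this, though the analogous facts for $\mcP_{\mathrm{junta}}$ and $\mcP_{\sqrt{\mathrm{junta}}}$ do include it).
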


\subsubsection{Removing the requirement that remapping the domain maintains privacy}

We now will remove the second part of \Cref{axiom:preprocessing-formal} and give a different ill-behaved definition of privacy. This definition will allow algorithms to behave arbitrarily when there is one very heavy element in their sample; however, if there is no such heavy element, the algorithm must output a single uninformative symbol $y^{\star}$.

\begin{definition}[Heavy-elements-only-privacy]
    \label{def:heavy-privacy}
    We say a dataset $S \in X^n$ is ``heavy" if there is a single $x \in X$ appearing at least $0.6n$ times in $S$. The \emph{heavy-elements-only-privacy measure} $\mcP_{\text{heavy}}$ is the binary privacy measure with $\mcM:X^n \to Y \in \mcG$ iff\,\footnote{This requirement that $n \geq 40$ is only to make the proof of \Cref{claim:heavy-comparison} easier.} $n\geq 40$ and the following holds: There is some special output $y^{\star} \in Y$ s.t. for any $S \in X^n$ that is not heavy, $\mcM(S)$ always outputs $y^{\star}$.
\end{definition}
\begin{claim}
    \label{claim:remove-preprocessing-2}
    $\mcP_{\text{heavy}}$ satisfies \Cref{axiom:blalant-formal,axiom:strong-composition-formal,axiom:amplification-formal} and also the ``reordering the input maintains privacy" part of \Cref{axiom:preprocessing-formal}.
\end{claim}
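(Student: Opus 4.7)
The plan. By \Cref{claim:binary-pre,claim:binary-blatant,claim:binary-strong-composition,claim:binary-amplification}, it suffices to verify four purely combinatorial conditions on the good-set $\mcG$ of \Cref{def:heavy-privacy}: closure under input-reordering, that no $\mcM\in\mcG$ is blatantly non-private, closure of $\mcG$ under $k$-fold composition, and existence of an equivalent $\mcM'\in\mcG$ on larger sample sizes. Reordering and composition are immediate: heaviness of $S$ depends only on the multiset of its entries, hence is permutation-invariant, so $\mcM\in\mcG$ with designated output $y^\star$ iff $\mcM\circ\pi\in\mcG$ with the same $y^\star$; and given $\mcM_1,\ldots,\mcM_k\in\mcG$ with designated outputs $y^\star_1,\ldots,y^\star_k$, the composed algorithm outputs the tuple $(y^\star_1,\ldots,y^\star_k)$ on every non-heavy input, hence sits in $\mcG$.

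For prohibition of blatant non-privacy, I fix any $\mcM\in\mcG$ with designated output $y^\star$ and any distribution $\mcD$ satisfying $\|\mcD\|_\infty\leq1/(100n^2)$, and argue that $\bS\sim\mcD^n$ is overwhelmingly unlikely to be heavy. Using the elementary bound $\Pr[\mathrm{count}(x)\geq 0.6n]\leq\binom{n}{\lceil 0.6n\rceil}\mcD(x)^{0.6n}\leq 2^n\cdot\mcD(x)\cdot(100n^2)^{-(0.6n-1)}$, a union bound together with $\sum_x\mcD(x)=1$ gives $\Pr[\bS\text{ heavy}]\leq 2^n(100n^2)^{-(0.6n-1)}=o(1)$ for $n\geq 40$. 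Hence $\mcM(\bS)=y^\star$ with overwhelming probability, so for any adversary $A$ the random set $\bS'=A(y^\star)$ is (up to a negligible event) independent of $\bS$, yielding $\E[\sum_{i\leq n}\Ind[\bS_i\in\bS']]\leq n\cdot\E|\bS'|\cdot\|\mcD\|_\infty\leq 1/100\ll 0.9n$.

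For linear scalability, I define $\mcM':X^{kn}\to Y$ to output $y^\star$ on every non-heavy input and to output $\mcM(\bT)$ on every heavy input, where $\bT$ is drawn as a uniform $n$-subsample of the input set; then $\mcM'\in\mcG$ by construction. Because a uniform $n$-subsample of $m$ i.i.d.\ draws from $\mcD$ is itself i.i.d.\ $\mcD^n$, the marginal distribution of $\bT$ is $\mcD^n$, so for any task $\mcT$ that $\mcM$ solves with failure $\beta$ we have $\Pr[\mcM(\bT)\in\mcT(\mcD)]\geq 1-\beta$ unconditionally. Combining this with the override on non-heavy inputs gives
\[
    \Pr[\mcM'(\bS^m)\in\mcT(\mcD)]\ \geq\ 1-\beta-(1-q_m)\bigl(1-\Ind[y^\star\in\mcT(\mcD)]\bigr),
\]
where $q_m=\Pr_{\bS\sim\mcD^m}[\bS\text{ heavy}]$. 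If $y^\star\in\mcT(\mcD)$ the bound is already $\geq 1-\beta$; otherwise, since $\mcM(\bS^n)=y^\star\notin\mcT(\mcD)$ on every non-heavy $\bS^n$, correctness of $\mcM$ forces $q_n\geq 1-\beta$, and the task reduces to showing $q_m\geq 1-O(\beta)$.

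The main obstacle is exactly this final implication. My plan is to quantitatively analyze the disjoint decomposition $q_n=\sum_x\Pr[\mathrm{count}(x)\geq 0.6n]$ using a Chernoff bound, showing that $q_n\geq 1-\beta$ forces the heaviest element $x^*$ of $\mcD$ to satisfy $\mcD(x^*)\geq 0.6+c_\beta/\sqrt n$ for some $c_\beta=\Omega(\sqrt{\log(1/\beta)})$ (the key point being that if $\max_x\mcD(x)\leq 0.6$ then $\Pr[\mathrm{Bin}(n,0.6)\geq 0.6n]\approx 1/2$ dominates and $q_n$ cannot exceed $\tfrac12+o(1)$). A second Chernoff bound applied to $\mathrm{Bin}(m,\mcD(x^*))$ then yields $q_m\geq 1-\exp(-\Omega(k\log(1/\beta)))\geq 1-\beta$ once $k$ exceeds a fixed polynomial in $n,1/\beta$, producing the polynomial $p(n,1/\beta)$ required by \Cref{axiom:amplification-formal} and completing the proof.
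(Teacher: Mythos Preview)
Your handling of the reordering part of \Cref{axiom:preprocessing-formal}, of \Cref{axiom:blalant-formal}, and of \Cref{axiom:strong-composition-formal} is correct and matches the paper: all follow immediately from \Cref{claim:binary-pre,claim:binary-blatant,claim:binary-strong-composition} once one observes that heaviness depends only on the multiset of entries and that the composed algorithm outputs the fixed tuple $(y_1^\star,\ldots,y_k^\star)$ on non-heavy inputs.

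For \Cref{axiom:amplification-formal} you take a genuinely different route. The paper constructs the same $\mcM'$ (subsample, then override to $y^\star$ on non-heavy $S'$) but, for the key implication $1-q_n\leq\beta\Rightarrow 1-q_m\leq O(\beta)$, proves and invokes a distribution-free comparison lemma (\Cref{claim:heavy-comparison}) relating $q_n$ and $q_m$ via a hypergeometric subsampling coupling together with median--mean and log-concavity facts for the hypergeometric. You instead analyze $\mcD$ directly: the hypothesis $q_n\geq 1-\beta$ forces the heaviest atom $p^\star:=\mcD(x^\star)$ to exceed $0.6$ by a quantitative margin, after which a Chernoff bound at sample size $m$ gives $1-q_m\leq\beta^{\Omega(k)}$. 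Your route is more elementary (no hypergeometric facts needed) and pins down an explicit polynomial $p(n,1/\beta)$; the paper's lemma is uniform in $\mcD$ but heavier machinery.

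There is one real gap in the sketch. When you assert that $p^\star\leq 0.6$ forces $q_n\leq\tfrac12+o(1)$, you have only bounded the single term $f_n(p^\star)\leq f_n(0.6)\approx\tfrac12$; the remaining sum $\sum_{x\neq x^\star}\Pr[\mathrm{Bin}(n,p_x)\geq 0.6n]$ has unboundedly many terms and is not obviously $o(1)$. A clean fix: heaviness via some $x\neq x^\star$ forces $\mathrm{count}(x^\star)\leq 0.4n$, so $q_n\leq f_n(p^\star)+\Pr[\mathrm{Bin}(n,p^\star)\leq 0.4n]$, which is bounded away from $1$ for $p^\star\in[0.5,0.6]$; for $p^\star\leq 0.5$ use that $p\mapsto f_n(p)$ is convex on $[0,0.6)$ to get $q_n\leq 2f_n(0.5)$. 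Either way $q_n$ is bounded below $1$ by an absolute constant whenever $p^\star\leq 0.6$. Second, extracting the quantitative margin $p^\star-0.6\geq\Omega(\sqrt{\log(1/\beta)/n})$ from $f_n(p^\star)\geq 1-\beta-o(1)$ requires a binomial \emph{lower-tail} anti-concentration estimate (Slud's inequality or Berry--Esseen), not a Chernoff upper-tail bound. With these two patches your argument goes through.
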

\begin{remark}[Permuting vs remapping the domain]
    \label{remark:permute-vs-remap}
    \Cref{claim:remove-preprocessing-2} shows that if we fully remove the part of \Cref{axiom:preprocessing-formal} that requires privacy is maintained whenever the domain is remapped according to all $\sigma:X \to X$, then $\mcP_{\text{heavy}}$ is a valid (and ill-behaved) measure of privacy. We actually observe something stronger: An alternative definition of \Cref{axiom:preprocessing-formal} is to only require privacy is maintained when the domain is \emph{permuted} according to some bijective function $\sigma:X \to X$. In this case, $\mcP_{\text{heavy}}$  would satisfy all the axioms. Therefore, $\mcP_{\text{heavy}}$ justifies why \Cref{axiom:preprocessing-formal} requiring privacy is preserved for all remappings rather than permutations of the domain is essential.
\end{remark}

Our proof of \Cref{claim:remove-preprocessing-2} will use the following result.
\begin{claim}
    \label{claim:heavy-comparison}
    There exists an absolute constant $c \leq 20$ such that for, $n \geq 40$, $m \geq 2n+1$, and distribution $\mcD$,
    \begin{equation*}
        \Prx_{\bS \sim \mcD^n}[\bS\text{ is heavy}] \geq \frac{1}{c} \cdot \Prx_{\bS' \sim \mcD^m}[\bS'\text{ is heavy}].
    \end{equation*}
\end{claim}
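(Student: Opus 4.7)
The plan is to couple the two samples so that $\bS$ becomes a subsample of $\bS'$. Specifically, I would draw $\bS' \sim \mcD^m$ and let $\bS$ be a uniformly random size-$n$ subsample of $\bS'$; since the coordinates of $\bS'$ are i.i.d.\ from $\mcD$, the marginal distribution of $\bS$ is exactly $\mcD^n$. Under this coupling, it suffices to show $\Pr[\bS\text{ is heavy} \mid \bS'\text{ is heavy}] \geq 1/c$, because then
\begin{equation*}
    \Prx_{\bS \sim \mcD^n}[\bS\text{ heavy}] \;\geq\; \Pr[\bS\text{ heavy and }\bS'\text{ heavy}] \;=\; \Pr[\bS'\text{ heavy}] \cdot \Pr[\bS\text{ heavy} \mid \bS'\text{ heavy}] \;\geq\; \tfrac{1}{c}\Prx_{\bS'\sim\mcD^m}[\bS'\text{ heavy}].
\end{equation*}

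The crucial observation is that, since $0.6+0.6 > 1$, at most one element of $X$ can appear in more than a $0.6$ fraction of a fixed sample. Hence, conditioned on $\bS'$ being heavy, there is a unique (random) $\bx^\star \in X$ whose count $\bK$ in $\bS'$ satisfies $\bK \geq 0.6m$. Writing $\bZ$ for the count of $\bx^\star$ in $\bS$, the event $\{\bZ \geq 0.6n\}$ certifies that $\bS$ is heavy, so the task reduces to lower bounding $\Pr[\bZ \geq 0.6n \mid \bS'\text{ heavy}]$ by an absolute constant.

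Conditioned on $\bS'$ (and thus on $\bx^\star$ and $\bK=k$), the variable $\bZ$ is exactly hypergeometric with parameters $(m,k,n)$: we are counting how many of the $k$ occurrences of $\bx^\star$ in $\bS'$ land in the chosen size-$n$ subsample. Its mean is $\mu = nk/m \geq 0.6n$. I would invoke the classical fact that the median of a hypergeometric distribution differs from its mean by strictly less than one, which yields $\Pr[\bZ \geq \lfloor \mu \rfloor] \geq \tfrac{1}{2}$. For $n \geq 40$, the integer-rounding slack between $\lfloor \mu \rfloor$ and $\lceil 0.6n \rceil$ can only decrease this probability by at most $\max_k \Pr[\bZ = k] = O(1/\sqrt{n})$ via a local CLT bound on the modal mass; this loss is far below $\tfrac14$, so $\Pr[\bZ \geq 0.6n \mid \bS'\text{ heavy}]$ is bounded below by an absolute constant and $c \leq 20$ suffices with considerable slack.

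The main technical obstacle is the corner case where $k/m$ is only barely above $0.6$, so that $\mu$ is essentially equal to $0.6n$; in that regime the deviation $\mu - 0.6n$ is negligible, rendering Hoeffding-type exponential tail bounds useless. This is precisely why I route the argument through the median-mean closeness of the hypergeometric rather than an exponential concentration inequality: median-mean closeness guarantees a constant fraction of the hypergeometric mass sits at or above the mean no matter how small $\mu - 0.6n$ becomes, while the mild lower bound $n \geq 40$ absorbs the minor discreteness gap between $\lfloor \mu \rfloor$ and $0.6n$.
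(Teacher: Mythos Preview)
Your proposal is correct and follows essentially the same route as the paper: couple $\bS$ as a size-$n$ subsample of $\bS'$, reduce to a hypergeometric count of the unique heavy element, use median--mean closeness to get $\Pr[\bZ \geq \lfloor\mu\rfloor] \geq \tfrac12$, and absorb the single integer of rounding slack via a bound on the modal mass. The only cosmetic difference is that the paper makes the modal-mass step rigorous by invoking a log-concavity bound $\max_k \Pr[\bZ=k] \leq 1/\sqrt{1+\sigma^2}$ (and explicitly uses $m \geq 2n+1$ to get $\sigma^2 \geq np(1-p)/2 \geq 4$ in the corner case $p \leq 0.6+1/n$), whereas you appeal informally to a local CLT.
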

Our proof of \Cref{claim:heavy-comparison} will use two (nontrivial) bounds on the behavior of hypergeometric random variables.
\begin{fact}[Median of hypergeometric is close to its mean, \cite{Sie01,CE09}]
    \label{fact:hyper-median}
    Let $\bx$ be any hypergeometric random variable with mean $\mu$. Then, the median of $\bx$ is either $\floor{\mu}$ or $\ceil{\mu}$.
\end{fact}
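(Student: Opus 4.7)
The plan is to establish both $\Pr[\bx \le \lfloor \mu \rfloor] \ge 1/2$ and $\Pr[\bx \ge \lceil \mu \rceil] \ge 1/2$, which together force any median to lie in $\{\lfloor \mu \rfloor, \lceil \mu \rceil\}$. I will use the standard parametrization: $\bx$ is the number of successes in $n$ draws without replacement from a population of size $N$ containing $K$ successes, so $\mu = nK/N$, and I write $p_k := \Pr[\bx = k]$.

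The first step is to establish log-concavity of $\{p_k\}$ and locate the mode. A direct manipulation of the PMF yields the ratio
\[ \frac{p_{k+1}}{p_k} = \frac{(K-k)(n-k)}{(k+1)(N-K-n+k+1)}, \]
which is strictly decreasing in $k$ on the support of $\bx$. Hence $\{p_k\}$ is log-concave and unimodal, and the mode $k^\star$ is the largest integer $k$ with $p_{k+1}/p_k \ge 1$. Solving the defining inequality shows $|k^\star - \mu| < 1$, so the mode is already within one of $\mu$.

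The main step is to upgrade this unimodality-plus-close-mode statement to the claimed median bound. The plan is to follow the Jogdeo--Samuels coupling strategy as adapted to hypergeometrics by Siegel: for each index $k$ strictly above $\mu$, pair it with a mirror index $k' < \mu$ chosen so that $k + k' \approx 2\mu$, and use log-concavity together with the explicit ratio above to show $p_k \le p_{k'}$. Summing the paired contributions then yields $\Pr[\bx > \mu] \le \Pr[\bx < \mu]$, and hence $\Pr[\bx \ge \lceil \mu\rceil] \le 1/2$. The symmetric inequality $\Pr[\bx \le \lfloor \mu \rfloor] \ge 1/2$ can be obtained identically, or by applying the same argument to the failure count $n - \bx$, which is hypergeometric with mean $n - \mu$.

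The main obstacle will be executing the pairing cleanly when $\mu$ is not a half-integer, since then the natural involution $k \mapsto k'$ may leave an unmatched index near $\mu$, and the support of $\bx$ is generally asymmetric about $\mu$ (it lies in $[\max(0, n-(N-K)), \min(n, K)]$). I would handle this by induction on $n$: condition on the outcome of the first draw to reduce an $n$-draw hypergeometric to two $(n-1)$-draw hypergeometrics whose parameters shift by at most one, and check that the induced means differ from the conditional means by a controlled amount so that the mean-to-median gap of one propagates through the recursion. The base case $n = 1$ is trivial since $\bx \in \{0,1\}$ and $\mu = K/N \in [0,1]$.
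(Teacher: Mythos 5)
The paper does not actually prove this fact; it is imported as a known result with citations to \cite{Sie01,CE09}, so there is no internal proof to compare against. Judged on its own terms, your proposal has a genuine gap at its main step. The first part (the ratio $p_{k+1}/p_k$ is decreasing in $k$, hence the PMF is log-concave and unimodal with mode within $1$ of $\mu$) is correct and standard. But the step that actually yields the median bound is only described, not carried out, and neither of the two strategies you offer for it goes through as stated. The mirror-pairing inequality $p_k \le p_{k'}$ for $k+k'\approx 2\mu$ does \emph{not} follow from log-concavity plus the location of the mode: a log-concave PMF with mode at $\lceil\mu\rceil$ can still place more mass at $\mu+t$ than at $\mu-t$, and controlling this asymmetry is exactly the technical content of the Jogdeo--Samuels/Siegel arguments. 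It requires a quantitative comparison of the products of the ratios $p_{j+1}/p_j$ on the two sides of $\mu$, which you never set up.

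The fallback induction on $n$ also does not close as described. Conditioning on the first draw writes $\bx$ as a mixture of $1+\bx_1$ and $\bx_0$ with $\bx_1\sim\mathrm{HG}(N-1,K-1,n-1)$ and $\bx_0\sim\mathrm{HG}(N-1,K,n-1)$; the two conditional means straddle $\mu$ and differ by $1-(n-1)/(N-1)$, so $\lfloor\mu\rfloor$ can be strictly smaller than the floor of the mean of $1+\bx_1$. In that case the inductive hypothesis gives no lower bound on $\Pr[\,1+\bx_1\le\lfloor\mu\rfloor\,]$, and the median of a mixture is not controlled by the medians of its components, so $\Pr[\bx\le\lfloor\mu\rfloor]\ge 1/2$ does not follow. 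Either the pairing must be executed explicitly, or the induction needs a substantially stronger hypothesis (e.g., tail bounds at both $\lfloor\mu\rfloor$ and $\lceil\mu\rceil$ for each component combined with an exchange argument). As written, the proposal is a plan whose decisive step is missing; the clean route is simply to invoke the cited references, as the paper does.
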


\begin{fact}[Log-concave distributions are spread out, \cite{MP25,Ara24}]
    \label{fact:log-concave-spread}
    Let $\mcD$ be a distribution supported on $\N$ that is log-concave (which includes all hypergeometric distributions) and has variance $\sigma^2$. Then, for all $x \in \N$,
    \begin{equation*}
        \mcD(x) \leq \frac{1}{\sqrt{1 + \sigma^2}}.
    \end{equation*}
\end{fact}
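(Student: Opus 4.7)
\vspace{10pt}
\noindent\textbf{Proof plan.}
The plan is to prove the equivalent inequality $M^2(1+\sigma^2)\le 1$, where $M := \max_k \mcD(k)$. My approach is to exploit log-concavity to reduce to a small-dimensional extremal family and then verify the bound by direct calculation.

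First I would translate $\mcD$ so that its mode lies at $0$; both $M$ and $\sigma^2$ are invariant under integer translations, so this is WLOG. Define the normalized ratios $a_k := \mcD(k)/M$, so that $a_0 = 1$, every $a_k \in [0,1]$, the bi-infinite sequence $(a_k)_{k\in\mathbb{Z}}$ is log-concave, and $\sum_k a_k = 1/M$. Log-concavity forces the ratios $a_{k+1}/a_k$ to be non-increasing in $k$; writing $r := a_1$ and $s := a_{-1}$, iteration yields the geometric envelopes $a_k \le r^k$ for $k\ge 0$ and $a_{-k}\le s^k$ for $k\ge 0$. These envelopes are the key structural consequence of log-concavity that will drive the rest of the argument.

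The heart of the argument is to show that among log-concave distributions with a given mode mass $M$, the variance is maximized by a two-sided geometric distribution saturating the envelopes, $\mcD^\star(k) \propto r^{k}$ for $k\ge 0$ and $\mcD^\star(k)\propto s^{|k|}$ for $k<0$. I would establish this via a mass-shifting argument: moving a small amount of probability from an interior point of a tail out to a point farther from the mode, in a way that preserves log-concavity and the mode mass, cannot decrease $\sigma^2$ (since it pushes mass outward while, as one checks, allowing the re-centering correction to only increase variance). Iterating this operation produces a limit in which the ratios are constant on each side of the mode, i.e., a two-sided geometric.

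For the extremal family, the inequality becomes a finite-dimensional calculation in $(r,s)\in[0,1)^2$: substituting $M = (1-r)(1-s)/(1-rs)$ and the explicit expression for $\sigma^2(r,s)$ (which is a rational function of $r,s$), the quantity $M^2(1+\sigma^2)$ expands to a rational function that can be shown to be $\le 1$ by algebra, with equality only at degenerate limits. As a sanity check, in the one-sided special case $s=0$ (so $r = 1-M$), a direct computation gives $\sigma^2 = (1-M)/M^2$ and hence
\begin{equation*}
    M^2(1+\sigma^2) \;=\; M^2 + (1-M) \;=\; 1 - M(1-M) \;\le\; 1,
\end{equation*}
with equality only when $M\in\{0,1\}$. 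This already confirms the bound in the regime where it is asymptotically tight.

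The main obstacle is the extremal reduction: verifying that the mass-shifting operation indeed (i) preserves log-concavity at every step and (ii) is variance-nondecreasing after re-centering. If this step proves delicate, a backup route that avoids extremality entirely is to bound $\sigma^2$ directly by splitting the second moment into the two tails and applying the envelopes $a_k\le r^k$ and $a_{-k}\le s^k$ together with the normalization $\sum_k a_k = 1/M$; this reduces the problem to an explicit constrained optimization in $(r,s,M)$ that can be checked by calculus, at the cost of somewhat more bookkeeping.
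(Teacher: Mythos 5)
The paper does not prove this statement --- it is imported as a known result with citations to \cite{MP25,Ara24} and used as a black box in the proof of \Cref{claim:heavy-comparison} --- so there is no in-paper argument to compare yours against; your proposal has to stand on its own, and as written it has a genuine gap at its central step. The extremal reduction (``for fixed mode mass $M$, the variance is maximized by a two-sided geometric saturating the envelopes'') is exactly the hard part of the theorem, and the mass-shifting heuristic you offer for it does not obviously go through for two reasons you partially flag but do not resolve: (i) log-concavity is a constraint on every consecutive triple, so moving mass between two isolated points of a tail generically destroys it --- there is no clean elementary perturbation that both preserves log-concavity and the mode mass and monotonically pushes toward the geometric; and (ii) ``farther from the mode'' is not ``farther from the mean,'' so the first-order change in variance under such a shift, $\eps(b-a)(a+b-2\mu)$, can be negative when the mean sits beyond the points being shifted. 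Until this reduction is actually carried out, the proof is incomplete.

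More seriously, the backup route you propose is not merely more bookkeeping --- it fails. The geometric envelopes $a_k\le r^k$, $a_{-k}\le s^k$ together with the normalization $\sum_k a_k = 1/M$ do not capture enough of log-concavity to bound the second moment. Concretely, take $M=1/2$ and $r=0.99$: the sequence $p_0=0.5$, $p_1=0.495$, $p_k=0$ for $2\le k<k_0$, and $p_k=Mr^k$ for $k\ge k_0\approx 916$ (so that the total mass is $1$) satisfies every envelope constraint and the normalization, yet its second moment about the mode is on the order of $k_0^2\cdot 0.005\approx 4000$, vastly exceeding $1/M^2-1=3$. So the ``explicit constrained optimization in $(r,s,M)$'' you describe has optimal value far above the claimed bound; any correct proof must use log-concavity globally (the monotonicity of \emph{all} consecutive ratios, not just the bound each ratio inherits from the first one). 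Your sanity checks (the value of $M$ for the two-sided geometric, and the computation $M^2(1+\sigma^2)=1-M(1-M)$ in the one-sided case) are correct and confirm the inequality on the conjectured extremizers, but they do not substitute for the missing reduction. Given that the statement is a nontrivial result from the recent literature, citing it (as the paper does) or reproducing one of the published proofs is the right move here.
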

\begin{proof}[Proof of \Cref{claim:heavy-comparison}]
    One way to draw $\bS \sim \mcD^n$ is to first draw $\bS' \sim \mcD^m$ and then draw $\bS$ uniformly without replacement from $\bS'$. We will show that for any fixed choice of heavy $S'$, if we draw $\bS$ uniformly without replacement from $S'$, then $\bS$ is heavy with probability at least $1/c$. Then, the desired result easily follows from the following series of inequalities:
    \begin{equation*}
        \Pr[\bS\text{ is heavy}] \geq \Pr[\bS\text{ is heavy}\mid \bS'\text{ is heavy}] \geq \frac{1}{c} \cdot \Pr[\bS'\text{ is heavy}].
    \end{equation*}
    
  Since $S'$ is heavy, there exists a single element appearing $k$ times where $k \coloneqq pm$ and $p \geq 0.6$. Let $\bx$ be the random variable indicating the number of times this element appears in $S$. Then, $\bx$ is drawn from a hypergeometric distribution with mean $\mu \coloneqq pn$ and variance
    \begin{equation*}
        \sigma^2 \coloneqq np(1-p)\frac{m-n}{m-1}.
    \end{equation*}
    Since $m \geq 2n+1$, we will have $\sigma^2 \geq np(1-p)/2$.  Then, by \Cref{fact:hyper-median},
    \begin{equation*}
        \Pr[\bx \geq \floor{pn}] \geq 1/2.
    \end{equation*}
    In particular, if $\floor{pn} \geq 0.6n$ we are done. Otherwise, we will have $pn-1 \leq 0.6n$ in which case $p$ must be between $0.6$ and $0.6 + 1/n$. For $n \geq 40$, this, means $0.6 \leq p \leq 0.625$, in which case the variance is at least $\sigma^2 \geq 40 \cdot 0.625 \cdot 0.375 /2 \geq 4$. Then, we can bound,
    \begin{equation*}
        \Pr[\bx \geq 0.6n] \geq \Pr[\bx \geq \floor{pn}] - \Pr[\bx \geq \floor{pn} \text{ and }\bx < 0.6n].
    \end{equation*}
    Since $pn \geq 0.6$, there is at most one value for $x$ satisfying $x \geq \floor{pn}$ and $x < 0.6n$. Using \Cref{fact:log-concave-spread}, the probability $\bx$ takes on this one value is at most $\frac{1}{\sqrt{1+\sigma^2}} \leq \frac{1}{\sqrt{5}}$. Therefore,
    \begin{equation*}
        \Pr[\bx \geq 0.6n] \geq \frac{1}{2} - \frac{1}{\sqrt{5}} \geq \frac{1}{20}. \qedhere
    \end{equation*}

\end{proof}

We are now ready to prove that $\mcP_{\mathrm{heavy}}$ satisfies all of our axioms except for the ``reordering the domain maintains privacy" part of \Cref{axiom:preprocessing-formal}.
\begin{proof}[Proof of \Cref{claim:remove-preprocessing-2}]
    The ``reordering the input maintains privacy" part of \Cref{axiom:preprocessing-formal} is immediate from \Cref{claim:binary-pre}. Similarly, \Cref{axiom:strong-composition-formal} holds immediately from \Cref{claim:binary-strong-composition}. \Cref{axiom:blalant-formal} holds by \Cref{claim:binary-blatant} and the observation that for any $\mcD$ satisfying $\linf{\mcD} \leq 1/(100n^2)$ it is unlikely for $n \geq 40$ that $\bS \sim \mcD^n$ is heavy.

    The proof that \Cref{axiom:amplification-formal} holds requires a bit more care: Let $\mcM:X^n \to Y$ be any algorithm that is $\mcP_{\mathrm{heavy}}$-private. Then, it has an input $y^{\star}$ it outputs whenever it does not have a heavy input. For any $m \geq 2n+1$, let $\mcM':X^m \to Y$ be the algorithm that on input $S' \in X^m$ does the following:
    \begin{enumerate}
        \item It draws a uniform size-$n$ subsample $\bS$ without replacement from $S'$ and defines $
        \by = \mcM(S)$.
        \item If $S'$ is heavy, it outputs $\by$. Otherwise, it outputs $y^{\star}$.
    \end{enumerate}
    Its clear that $\mcP_{\mathrm{heavy}}(\mcM') = 0$. Therefore all that remains is to show that $\mcM'$ is $(\beta, \beta' = O(\beta))$-equivalent to $\mcM$ for all $\beta$. We will prove this with $\beta' = (c+1)\cdot \beta$ where $c$ is the constant in \Cref{claim:heavy-comparison}. For this, we first observe that the definition of $\mcM'$ immediately implies that for any distribution $\mcD$ there exists a coupling of $\by' \coloneqq \mcM'(\bS')$ and $\by \coloneqq \mcM(\bS)$ where $\bS \sim \mcD^n$ and $\bS' \sim \mcD^m$ satisfying that either $\by' = \by$ or $\by' = y^{\star}$. Furthermore, the latter case occurs with probability at most $\Pr[\bS'\text{ is heavy}]$.
    
    Now, let $\mcT$ be any statistical task that $\mcM$ solves with failure probability $\beta$. On any input distribution $\mcD$, if $y^{\star}$ is a valid output for the task $\mcT$ on input distribution $\mcD$, then the failure probability of $\mcM'$ is strictly less than the failure probability of $\mcD$. In this case, we are done.
    
    Therefore, it remains to handle the case where $y^{\star}$ is not a valid output of $\mcT$ on the distribution $\mcD$. Then it must be the case that $\Pr_{\bS \sim \mcD^n}[\bS \text{ is heavy}] \leq \beta$, as otherwise, $\mcM$ would have too high of a failure probability. In this case, the failure probability of $\mcM'$ is at most the failure probability of $\mcM$ plus the probability that $\bS' \sim \mcD^m$ is heavy. The first quantity is at most $\beta$, and the second at most $c \cdot \beta$ by \Cref{claim:heavy-comparison}, giving the desired bound.
\end{proof}

Next, we observe that this notion of privacy allows for solving the $\FindLightElement$ task:
\begin{fact}
    For any $y^{\star} \in Y$ and $n\geq 40$, let $\mcM:X^n \to Y$ be the algorithm that does the following on input $S$.
    \begin{enumerate}
        \item If $S$ is not heavy, it outputs $y^{\star}$.
        \item If $S$ is heavy, an arbitrary element that appears $\leq n/2$ times in $S$ if one exists. Otherwise, also output $y^{\star}$.
    \end{enumerate}
    Then, $\mcM$ is $\mcP_{\text{heavy}}$-private and solves $\FindLightElement$ with failure probability $\exp(-\Omega(n))$.
\end{fact}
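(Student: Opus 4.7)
The plan is to handle the two claims (privacy and correctness) separately. Privacy will be essentially by inspection, while correctness reduces to two Chernoff bounds.

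For privacy, I would simply check the definition of $\mcP_{\text{heavy}}$ from \Cref{def:heavy-privacy}. Step 1 of $\mcM$ outputs $y^{\star}$ whenever $S$ is not heavy, and step 2 only runs when $S$ is heavy. So $\mcM$'s output is always $y^{\star}$ on non-heavy datasets, which by \Cref{def:heavy-privacy} means $\mcP_{\text{heavy}}(\mcM) = 0 \leq 1$, so $\mcM$ is $\mcP_{\text{heavy}}$-private.

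For correctness, fix any distribution $\mcD$ with an element $x$ satisfying $0.7 \leq \mcD(x) \leq 0.9$. I will identify two ``good'' events whose intersection guarantees the algorithm succeeds, and bound each failure probability by $\exp(-\Omega(n))$. Let $A$ be the event that $x$ appears at least $0.6n$ times in $\bS$, and let $B$ be the event that $\bS$ contains at least one sample that is not $x$. Since $\E[\text{count of }x] \geq 0.7n$, a standard Chernoff bound gives $\Pr[A^c] \leq \exp(-\Omega(n))$. Since $\Pr[B^c] = \mcD(x)^n \leq 0.9^n = \exp(-\Omega(n))$, union bounding gives $\Pr[(A \cap B)^c] \leq \exp(-\Omega(n))$.

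It remains to check that on $A \cap B$ the algorithm outputs a valid answer. On event $A$ the dataset $\bS$ is heavy, so $\mcM$ proceeds to step 2; moreover, since $0.6n + 0.6n > n$, at most one element can appear $\geq 0.6n$ times, so $x$ is the unique such element. On event $B$ there is some $y \neq x$ in $\bS$, and its count is at most $n - 0.6n = 0.4n \leq n/2$, so step 2 finds some element (not necessarily this $y$, but \emph{some} element with count $\leq n/2$) and outputs it. The output $\by$ has count $\leq n/2 < 0.6n$, so $\by \neq x$, and since $\by \in \bS$ we must have $\mcD(\by) > 0$. This is exactly the success criterion of $\FindLightElement$, completing the proof.

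No single step looks hard; the only mild subtlety is noting that the heavy element of $\bS$ (when it exists) is unique, which ensures any element with count $\leq n/2$ is automatically different from $x$ on event $A$.
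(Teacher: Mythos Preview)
Your proof is correct. The paper states this result as a Fact without proof, and your argument---verifying $\mcP_{\text{heavy}}$-privacy directly from \Cref{def:heavy-privacy} and establishing correctness via the two Chernoff-type bounds on events $A$ and $B$---is exactly the natural and complete argument.
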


\subsection{Removing \Cref{axiom:blalant-formal}}
\label{subsec:remove-axiom-2}
Removing this axiom leads to the easiest analysis, because we can just make everything private.
\begin{definition}[The privacy measure that allows all algorithms]
    We define $\mcP_{\mathrm{all}}$ to be the privacy measure for which $\mcP_{\mathrm{all}}(\mcM) = 0$ for all algorithms $\mcM$.
\end{definition}
\begin{claim}
    $\mcP_{\mathrm{all}}$ satisfies \Cref{axiom:preprocessing-formal,axiom:strong-composition-formal,axiom:amplification-formal}
\end{claim}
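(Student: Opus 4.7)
The plan is simply to observe that each of the three remaining axioms is trivially satisfied because $\mcP_{\mathrm{all}}$ assigns value $0$ to every algorithm. I will verify each axiom in turn, checking that the defining inequalities either reduce to $0 \leq 0$ or to $0 \leq 1$.

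For \Cref{axiom:preprocessing-formal}, both requirements demand $\mcP(\mcM \circ \pi) \leq \mcP(\mcM)$ and $\mcP(\mcM \circ \sigma) \leq \mcP(\mcM)$ for any permutation $\pi$ of input coordinates and any remap $\sigma:X\to X$. Since both sides equal $0$ under $\mcP_{\mathrm{all}}$, this is immediate. For \Cref{axiom:strong-composition-formal}, given any collection $\mcM^1, \ldots, \mcM^{\ell}:X^n \to Y$, each has $\mcP_{\mathrm{all}}(\mcM^i) = 0$, so we may take $\eps = 0$ and $\eps' = 0 \leq 1$; the composed algorithm $\mcM'$ also has $\mcP_{\mathrm{all}}(\mcM') = 0 \leq 1$, so $\mcM'$ is $\mcP_{\mathrm{all}}$-private.

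For \Cref{axiom:amplification-formal}, I need to produce, for any $\mcP_{\mathrm{all}}$-private algorithm $\mcM:X^n \to Y$, failure probability $\beta > 0$, and sufficiently large $k$, an algorithm $\mcM':X^{kn} \to Y$ that is $(\beta, O(\beta))$-equivalent to $\mcM$ and satisfies $\mcP_{\mathrm{all}}(\mcM') \leq O(1/k)$. Take $\mcM'$ to be the algorithm that runs $\mcM$ on the first $n$ coordinates of its input and discards the remainder. Any statistical task $\mcT$ that $\mcM$ solves with failure probability $\beta$ on samples from $\mcD^n$ is solved by $\mcM'$ with the identical failure probability $\beta$ on samples from $\mcD^{kn}$, because the first $n$ coordinates of a sample from $\mcD^{kn}$ are distributed as $\mcD^n$. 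The polynomial $p(n, 1/\beta)$ in the axiom may be taken to be any constant (e.g.\ $1$). Since $\mcP_{\mathrm{all}}(\mcM') = 0 \leq O(1/k)$, the axiom holds.

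There is no real obstacle here — the entire point of $\mcP_{\mathrm{all}}$ is that it trivially validates every axiom that does not explicitly rule out ``everything is private,'' and the three axioms in question are precisely such axioms (they only impose constraints on what must be private, never requiring anything to be non-private). The only axiom that would fail is \Cref{axiom:blalant-formal}, which is exactly the axiom whose removal is being analyzed here.
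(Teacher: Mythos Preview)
Your proof is correct and takes essentially the same approach as the paper, which simply states ``This is immediate.'' You have merely unpacked the triviality axiom by axiom, which is fine.
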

\begin{proof}
    This is immediate.
\end{proof}
Furthermore, since $\mcP_{\mathrm{all}}$ allows all algorithms, it trivially allows algorithms solving $\FindElement$ (such as the algorithm from \Cref{fact:remove-pre-1-find-element}).

\subsection{Replacing \Cref{axiom:strong-composition-formal} with linear composition}
\label{subsec:remove-axiom-3}
Here, we will show that even if we don't fully remove \Cref{axiom:strong-composition-formal} but weaken it to \emph{linear composition} (i.e. $c=1$ in \Cref{axiom:strong-composition-formal}), it still allows for an ill-behaved notion of privacy. This measure of privacy will use is a scaling of \emph{junta}-size. 
\begin{definition}[Juntas]
    \label{def:junta}
    For any $k \in [n]$, an algorithm $\mcM:X^n \to Y$ is a $k$-junta if 
    there exists a size-$k$ $I \subseteq [n]$ and (possibly randomized) function $f:X^k \to Y$ for which, on any input $S \in X^n$, the output distribution $\mcM(S)$ is equal to that of $f(S_{I})$ where
    \begin{equation*}
        S_I \coloneqq \paren*{S_{I_1}, \ldots, S_{I_k}}.
    \end{equation*}
\end{definition}
For example, the algorithm $\mcM:X^n \to X^2$ which outputs the first and last element of its dataset is a $2$-junta.
\begin{definition}[Junta privacy]
    \label{def:privacy-junta} We define the \emph{junta privacy measure}, $\mcP_{\mathrm{junta}}$ as follows: For any $\mcM:X^n \to Y$, we set $\mcP_{\mathrm{junta}}(\mcM) = 2k/n$ where $k$ is the minimum value for which $\mcM$ is a $k$-junta.
\end{definition}
\begin{claim}
    \label{claim:junta-satisfies-axioms}
    $\mcP_{\mathrm{junta}}$ satisfies \Cref{axiom:preprocessing-formal,axiom:blalant-formal,axiom:amplification-formal} and also \Cref{axiom:strong-composition-formal} with $c=1$.
\end{claim}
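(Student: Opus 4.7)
The plan is to verify the four axioms directly, relying on two structural facts about juntas: (i) the $k$-junta property is preserved (with the same or smaller $k$) under input reordering, domain remapping, and restriction to a subset of coordinates, and (ii) if $\mcM$ is a $k$-junta on index set $I$, then when $\bS$ has i.i.d.\ coordinates, $\bS_{[n] \setminus I}$ is independent of $\mcM(\bS)$. Preprocessing (\Cref{axiom:preprocessing-formal}) is immediate from (i): if $\mcM$ is a $k$-junta via $I$ and $f:X^k \to Y$, then $\mcM \circ \pi$ is a $k$-junta via $\pi^{-1}(I)$, and $\mcM \circ \sigma$ is a $k$-junta via $I$ with function $f \circ \sigma^{\otimes k}$, so the minimum junta size can only decrease. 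Strong composition with $c = 1$ is likewise immediate: if each $\mcM_i$ is a $k_i$-junta on $I_i$ and $\mcP_{\mathrm{junta}}(\mcM_i) \leq \eps$ (so $k_i \leq \eps n/2$), then the composed algorithm depends only on the coordinates in $I_1 \cup \cdots \cup I_\ell$, giving $\mcP_{\mathrm{junta}} \leq 2\sum_i k_i / n \leq \ell \eps$. Taking $\eps' \coloneqq \ell \eps$ then fits \Cref{axiom:strong-composition-formal} with $c=1$.

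For blatant non-privacy (\Cref{axiom:blalant-formal}), the argument uses (ii). Fix a $\mcP_{\mathrm{junta}}$-private $\mcM$; then $\mcM$ is a $k$-junta with $k \leq n/2$ on some index set $I$. Fix any adversary $A$ and any distribution $\mcD$ with $\linf{\mcD} \leq 1/(100n^2)$, and set $\bS' \coloneqq A(\mcM(\bS))$. For each $i \notin I$, $\bS_i$ is independent of $\bS'$, so
\begin{equation*}
    \Pr[\bS_i \in \bS'] = \Ex_{\bS'}\bracket*{\sum_{x \in \bS'}\mcD(x)} \leq n \cdot \linf{\mcD} \leq \tfrac{1}{100n}.
\end{equation*}
Combined with the trivial bound $\Pr[\bS_i \in \bS'] \leq 1$ for $i \in I$, this gives
\begin{equation*}
    \Ex\bracket*{\sum_{i=1}^n \Ind[\bS_i \in \bS']} \leq k + \frac{n-k}{100n} \leq \frac{n}{2} + \frac{1}{100} < 0.9n,
\end{equation*}
so $\mcM$ is not blatantly non-private for any such $\mcD$ and $A$.

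For linear scalability (\Cref{axiom:amplification-formal}), the natural construction is to define $\mcM':X^m \to Y$ (with $m = \kappa n$) by $\mcM'(S) \coloneqq \mcM(S_1, \ldots, S_n)$, discarding all but the first $n$ samples. Then $\mcM'(\bS)$ and $\mcM(\bS_{\leq n})$ are identically distributed when $\bS \sim \mcD^m$, so $\mcM'$ is $(\beta,\beta)$-equivalent to $\mcM$ for every $\beta$. If $\mcM$ is a $j$-junta then so is $\mcM'$ (on the same $j \leq n$ indices, now viewed as a subset of $[m]$), so $\mcP_{\mathrm{junta}}(\mcM') \leq 2j/m \leq 2n/m = 2/\kappa$, verifying the axiom with polynomial $p(n, 1/\beta) \equiv 1$. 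None of the four verifications is truly an obstacle; each reduces to tracking junta sizes under a simple structural operation. The only calculation requiring slight care is the blatant non-privacy bound, but the gap between $n/2$ and $0.9n$ leaves generous slack because $\mcP_{\mathrm{junta}}$-privacy forces $k \leq n/2$.
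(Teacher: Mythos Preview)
Your proof is correct and follows essentially the same approach as the paper's own proof: both verify preprocessing by noting that junta size is preserved under reordering and domain remapping, handle linear composition by bounding the junta size of the composition by the sum of the individual junta sizes, prove linear scalability by running $\mcM$ on the first $n$ of $m$ samples, and argue blatant non-privacy by exploiting that a $\mcP_{\mathrm{junta}}$-private algorithm depends on at most half its input. Your treatment of \Cref{axiom:blalant-formal} is in fact more explicit than the paper's---you spell out the independence of $\bS_i$ from $\bS'$ for $i \notin I$ and carry through the resulting numerical bound, whereas the paper only says in one sentence that ``the adversary cannot guess $0.9$-fraction of the points.''
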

\begin{proof}
    \Cref{axiom:preprocessing-formal} is immediate since the definition of junta size is maintained under reorderings of the sample and remappings of the domain. \Cref{axiom:blalant-formal} holds because $\mcP_{\mathrm{junta}}$-private algorithm can only depend on half of their dataset, so the adversary cannot guess $0.9$-fraction of the points. For \Cref{axiom:amplification-formal} given any $\mcM:X^n \to Y$ and any $m \geq n$, we construct $\mcM':X^m \to Y$ to just run $\mcM$ on the first $n$ points of its size-$m$ sample. If $\mcM$ is a $k$-junta, then $\mcM'$ will still be a $k$-junta, so $\mcP_{\mathrm{junta}}(\mcM') \leq \mcP_{\mathrm{junta}}(\mcM) \cdot \frac{n}{m}$. Furthermore, the output distribution of $\mcM'$ given a draw from $\mcD^m$ is identical to that of $\mcM$ given a draw of $\mcD^n$, so $\mcM'$ is $(\beta, \beta'=\beta)$-equivalent to $\mcM$ for any choice of $\beta$.

    Finally, \Cref{axiom:strong-composition-formal} with $c=1$ holds because the composition of $\ell$ many $k$-juntas is always an $\ell k$ junta.
\end{proof}
Next, we observe this definition of privacy allows for solving $\FindElement$.
\begin{fact}
    The algorithm $\mcM:X^n \to Y$ which, on input $S$, outputs $S_1$ both is $\mcP_{\text{junta}}$-private for any $n \geq 2$ and solves $\FindElement$ with failure probability $0$.
\end{fact}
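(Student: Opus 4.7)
The plan is to verify both claims directly from the relevant definitions, since the algorithm $\mcM(S) = S_1$ is essentially the simplest nontrivial example.

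First, I would verify $\mcP_{\mathrm{junta}}$-privacy. By inspection, $\mcM$ depends only on the first coordinate of its input: taking $I = \{1\}$ and $f:X \to Y$ the identity function, we have $\mcM(S) = f(S_I)$ for every $S \in X^n$. Hence $\mcM$ is a $1$-junta in the sense of \Cref{def:junta}, so by \Cref{def:privacy-junta},
\begin{equation*}
    \mcP_{\mathrm{junta}}(\mcM) = \frac{2 \cdot 1}{n} = \frac{2}{n} \leq 1
\end{equation*}
whenever $n \geq 2$. Thus $\mcM$ is $\mcP_{\mathrm{junta}}$-private for all such $n$.

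Second, I would verify that $\mcM$ solves $\FindElement$ with failure probability $0$. The task (\Cref{def:find-element}) asks the algorithm to output some $x$ with $\mcD(x) > 0$ when given i.i.d. samples $\bS \sim \mcD^n$. On input $\bS$, the algorithm outputs $\bS_1$, and $\bS_1$ is a single draw from $\mcD$; therefore $\mcD(\bS_1) > 0$ with probability $1$ over the draw of $\bS$ (any element in the support of $\mcD^n$'s first marginal is in the support of $\mcD$). Hence the failure probability is $0$ on every distribution $\mcD$, completing both parts. There is no real obstacle here, as both claims reduce to unpacking the definitions.
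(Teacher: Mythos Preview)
Your proposal is correct; the paper states this as a \texttt{Fact} without proof, and your argument is exactly the intended unpacking of the definitions. One cosmetic point: you write $\mcP_{\mathrm{junta}}(\mcM) = 2/n$, but strictly speaking \Cref{def:privacy-junta} uses the \emph{minimum} $k$, so from ``$\mcM$ is a $1$-junta'' you only get $\mcP_{\mathrm{junta}}(\mcM) \leq 2/n$; this is all you need, and equality is irrelevant to the claim.
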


\subsection{Removing \Cref{axiom:amplification-formal}}
\label{subsec:remove-axiom-4}
If we remove \Cref{axiom:amplification-formal}, we can just use a rescaled version of \Cref{def:privacy-junta}. This rescaling converts linear composition to strong composition, in exchange for losing linear scalability.
\begin{definition}[Square Root Junta privacy]
    \label{def:privacy-junta-square-root} We define the \emph{square root junta privacy measure}, $\mcP_{\sqrt{\mathrm{junta}}}$ as follows: For any $\mcM:X^n \to Y$, we set $\mcP_{\mathrm{junta}}(\mcM) = \sqrt{2k/n}$ where $k$ is the minimum value for which $\mcM$ is a $k$-junta.
\end{definition}
\begin{claim}
    $\mcP_{\sqrt{\mathrm{junta}}}$ satisfies \Cref{axiom:preprocessing-formal,axiom:blalant-formal} and also \Cref{axiom:strong-composition-formal} with $c=1/2$.
\end{claim}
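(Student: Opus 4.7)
The plan is to piggyback on the proof of \Cref{claim:junta-satisfies-axioms}, since $\mcP_{\sqrt{\mathrm{junta}}}$ is just a monotone rescaling of $\mcP_{\mathrm{junta}}$. In particular, the set of algorithms with $\mcP_{\sqrt{\mathrm{junta}}}(\mcM) \leq \eps$ is precisely the set of $k$-juntas with $k \leq \eps^2 n/2$, so all structural properties of juntas carry over directly; the rescaling only affects the quantitative composition bound.

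First I would dispatch \Cref{axiom:preprocessing-formal} and \Cref{axiom:blalant-formal} by the same arguments used for $\mcP_{\mathrm{junta}}$. The minimum junta size of $\mcM$ is invariant (or can only decrease) under reordering the input and under remapping the domain via any $\sigma:X\to X$, and $v \mapsto \sqrt{2v/n}$ is monotone, so preprocessing holds. For blatant non-privacy, if $\mcP_{\sqrt{\mathrm{junta}}}(\mcM) \leq 1$ then $\mcM$ is a $k$-junta with $k \leq n/2$ on some coordinate set $I$. For any adversary $A$ and any $\mcD$ with $\linf{\mcD} \leq 1/(100n^2)$, I would split the expected overlap $\sum_{i=1}^n \Pr[\bS_i \in \bS']$ into the $i \in I$ contribution (trivially at most $k$) and the $i \notin I$ contribution. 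The latter exploits that for $i \notin I$, $\bS_i$ is independent of $\bS_I$, hence independent of $\mcM(\bS)$, and hence of $\bS' = A(\mcM(\bS))$; combined with $|\bS'| \leq n$ and $\linf{\mcD} \leq 1/(100n^2)$ this gives $\Pr[\bS_i \in \bS'] \leq 1/(100n)$, contributing at most $1/100$ in total. The total expected overlap is therefore at most $n/2 + 1/100 < 0.9n$.

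The main case is \Cref{axiom:strong-composition-formal} with $c = 1/2$, which is exactly where the square-root rescaling pays off. If each $\mcM^i$ has $\mcP_{\sqrt{\mathrm{junta}}}(\mcM^i) \leq \eps$, then each is a $k_i$-junta with $k_i \leq \eps^2 n/2$. The composed algorithm $\mcM'$ depends only on the union of the junta coordinate sets, hence is a $k$-junta for $k \leq \sum_i k_i \leq \ell \eps^2 n/2$, giving
\begin{equation*}
\mcP_{\sqrt{\mathrm{junta}}}(\mcM') \;\leq\; \sqrt{2k/n} \;\leq\; \eps\sqrt{\ell}.
\end{equation*}
Choosing the hidden constant in $\eps' = \tilde{O}(\eps \cdot \ell^{1/2})$ to be at least $1$ makes the implication $\eps' \leq 1 \Rightarrow \mcP_{\sqrt{\mathrm{junta}}}(\mcM') \leq 1$ immediate. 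There is no real obstacle; the only conceptual point worth flagging is that this rescaling is a generic recipe — one can always upgrade linear composition to $\sqrt{\ell}$-composition by taking a square root — which is precisely why \Cref{axiom:amplification-formal} is needed to pin down the meaning of the privacy scale and rule out exactly this kind of trick.
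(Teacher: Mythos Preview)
Your proposal is correct and follows essentially the same approach as the paper: defer \Cref{axiom:preprocessing-formal} and \Cref{axiom:blalant-formal} to the $\mcP_{\mathrm{junta}}$ argument, and for \Cref{axiom:strong-composition-formal} use that the composition of $\ell$ many $k$-juntas is an $\ell k$-junta to get $\mcP_{\sqrt{\mathrm{junta}}}(\mcM') \leq \sqrt{2\ell k/n} = \eps\sqrt{\ell}$. Your treatment of \Cref{axiom:blalant-formal} is more explicit than the paper's one-line justification (splitting the overlap into the junta coordinates and the independent coordinates), but it is the same underlying argument.
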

\begin{proof}
    \Cref{axiom:preprocessing-formal,axiom:blalant-formal} hold exactly as they did in \Cref{claim:junta-satisfies-axioms}. For \Cref{axiom:strong-composition-formal}, we once again use that the composition of any $\ell$ many $k$-juntas is a $\ell k$ junta. In this case, if $\mcM'$ is the composed algorithm and $p = \sqrt{2k/n}$ is the original privacy level, then,
    \begin{equation*}
        \mcP_{\sqrt{\mathrm{junta}}}(\mcM') \leq \sqrt{2k\ell/n} = \sqrt{\ell} \cdot p.\qedhere
    \end{equation*}
\end{proof}
Once again, we have that this notion of privacy allows for solving the $\FindElement$ task.
\begin{fact}
    The algorithm $\mcM:X^n \to Y$ which, on input $S$, outputs $S_1$ both is $\mcP_{\sqrt{\text{junta}}}$-private for any $n \geq 2$ and solves $\FindElement$ with failure probability $0$.
\end{fact}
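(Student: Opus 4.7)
The plan is to verify the two claims separately, both via completely direct calculation from the definitions.

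First, for $\mcP_{\sqrt{\mathrm{junta}}}$-privacy, I would observe that the algorithm $\mcM:X^n \to Y$ which outputs $S_1$ is a $1$-junta in the sense of \Cref{def:junta}: take $I = \{1\}$ and $f:X \to Y$ the identity, so that $\mcM(S)$ is distributed as $f(S_I)$. Therefore the minimum $k$ for which $\mcM$ is a $k$-junta is $k = 1$, and by \Cref{def:privacy-junta-square-root},
\begin{equation*}
    \mcP_{\sqrt{\mathrm{junta}}}(\mcM) = \sqrt{2k/n} = \sqrt{2/n}.
\end{equation*}
For $n \geq 2$ this is at most $1$, so $\mcM$ is $\mcP_{\sqrt{\mathrm{junta}}}$-private as required.

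Second, for correctness on $\FindElement$, I would recall from \Cref{def:find-element} that the task requires, given $\bS \sim \mcD^n$, outputting some $x$ with $\mcD(x) > 0$. The output of $\mcM$ on input $\bS$ is $\bS_1$, which is distributed as $\mcD$. Hence $\Pr[\mcD(\bS_1) > 0] = 1$, i.e. $\mcM$ solves $\FindElement$ with failure probability $0$.

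Neither step presents any real obstacle: the only thing to double-check is that the $n \geq 2$ hypothesis is exactly what is needed to make $\sqrt{2/n} \leq 1$, so the statement is tight in that parameter. The fact is essentially identical in structure to the analogous facts stated earlier for $\mcP_{\mathrm{half}}$ and $\mcP_{\mathrm{junta}}$, differing only in the square-root rescaling of the privacy value.
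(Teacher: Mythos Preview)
Your proposal is correct and is exactly the intended argument; the paper states this as a bare fact without proof, since it follows immediately from the definitions just as you wrote. One tiny nitpick: you assert that the \emph{minimum} $k$ is $1$, which would require noting that $\mcM$ is not a $0$-junta (it genuinely depends on $S_1$), but since you only need $\mcP_{\sqrt{\mathrm{junta}}}(\mcM) \leq 1$ the upper bound $k \leq 1$ already suffices.
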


\subsection{DP-hardness of $\FindElement$ and $\FindLightElement$}
\label{subsec:DP-hardness-of-finding-elements}

Lastly, we show that neither $\FindElement$ nor $\FindLightElement$ have DP algorithms using less than $\approx \sqrt{|X|}$ samples. This completes the proof of \Cref{thm:minimal-intro}.

\begin{claim}[DP-hardness of $\FindElement$]
    \label{claim:find-element-hard}
    For any domain $X$ and $n \leq \sqrt{|X|}/100$, there is no $(\eps = 1, \delta = 1/(10n))$-algorithm $\mcM:X^n \to X$ which solves $\FindElement$ with failure probability at most $1/2$.
\end{claim}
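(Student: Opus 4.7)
The plan is to derive a contradiction by exhibiting a small family of hard distributions. Fix $k \coloneqq 100n$ and note that since $n \leq \sqrt{|X|}/100$, we have $|X| \geq 10^4 n^2$, so $k \leq |X|/4$. Consider the family of uniform distributions $\{\Unif(T) : T \subseteq X, |T|=k\}$; a valid output for $\FindElement$ under $\Unif(T)$ must lie in $T$. If $\mcM$ has failure probability $\leq 1/2$ on every such distribution, then by averaging over a uniformly random $\bT \sim \Unif\paren[\big]{\binom{X}{k}}$ with $\bS \sim \Unif(\bT)^n$ and $\by \coloneqq \mcM(\bS)$, we would have $\Pr[\by \in \bT] \geq 1/2$. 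My goal is to upper bound $\Pr[\by \in \bT]$ below $1/2$ using the DP guarantee.

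\textbf{Two-term decomposition.} Since $\bS \subseteq \bT$ always, split
\begin{equation*}
    \Pr[\by \in \bT] \;\leq\; \Pr[\by \in \bS] \;+\; \Pr[\by \in \bT \setminus \bS].
\end{equation*}
The first term measures how often the algorithm ``memorizes'' an input point; the second is purely combinatorial because $\by$ depends only on $\bS$, not on how $\bT$ extends $\bS$.

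\textbf{Bounding the first term via DP (resampling).} I will show $\Pr[\by \in \bS] \leq n(e^\eps/k + \delta)$. For each $i \in [n]$, fix $\bS_{-i}$; for any $a, b \in X$, the datasets $(\bS_{-i}, a)$ and $(\bS_{-i}, b)$ (with position $i$ filled in) are neighbors, so $(\eps,\delta)$-DP yields $\Pr[\mcM(\bS_{-i},a) = a] \leq e^\eps \Pr[\mcM(\bS_{-i},b) = a] + \delta$. Integrating $b$ against $\mcD \coloneqq \Unif(T)$ replaces coordinate $i$ with a fresh independent draw $\bx \sim \mcD$; then integrating $a \sim \mcD$ (independent of $\bx$ and $\bS_{-i}$) decouples the ``target'' from the dataset, so the RHS becomes $e^\eps \Pr_{\ba, \bx, \bS_{-i}}[\mcM(\bS_{-i},\bx) = \ba] + \delta \leq e^\eps \|\mcD\|_\infty + \delta = e^\eps/k + \delta$. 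A union bound over $i$ gives $\Pr[\by \in \bS] \leq n(e^\eps/k + \delta)$.

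\textbf{Bounding the second term combinatorially.} Since $\mcM$ only sees $\bS$, the output $\by$ is independent of $\bT$ given $\bS$. Conditional on $\bS$, the random set $\bT \setminus \bS$ is uniform over size-$(k-d)$ subsets of $X \setminus \bS$, where $d = d(\bS)$ is the number of distinct elements of $\bS$. Hence for any fixed $\by$, $\Pr[\by \in \bT \setminus \bS \mid \bS, \by] \leq (k-d)/(|X|-d) \leq k/|X|$; the last inequality holds because $k \leq |X|$ makes this ratio decreasing in $d$.

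\textbf{Combining.} With $\eps = 1$, $\delta = 1/(10n)$, $k = 100n$, and $|X| \geq 10^4 n^2$,
\begin{equation*}
    \Pr[\by \in \bT] \;\leq\; n\paren*{\tfrac{e}{100n} + \tfrac{1}{10n}} + \tfrac{100n}{|X|} \;\leq\; \tfrac{e}{100} + \tfrac{1}{10} + \tfrac{1}{100} \;<\; \tfrac{1}{2},
\end{equation*}
contradicting the averaged lower bound of $1/2$; hence there exists some $T$ on which $\mcM$ fails with probability $> 1/2$. The only nontrivial step is the DP resampling bound: one must be careful to integrate $b \sim \mcD$ \emph{before} integrating $a \sim \mcD$, so that the ``target'' symbol becomes independent of the dataset on the right-hand side and can be charged against $\|\mcD\|_\infty$; the rest is arithmetic.
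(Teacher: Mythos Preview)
Your proof is correct. Both your argument and the paper's rest on the same core ideas---averaging over uniform distributions on random subsets of $X$, and a DP ``resampling'' step (replace one coordinate by a fresh independent draw so the target becomes independent of the dataset and can be charged against $\|\mcD\|_\infty$)---but the organization differs.

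The paper proceeds in two stages: it first shows that any algorithm solving $\FindElement$ with failure probability at most $1/2$ must have memorization rate $p(\mcM) \coloneqq \Pr_{\bS \sim \Unif(X)^n}[\mcM(\bS)\in\bS] \geq 1/3$ (via a random $|X|/10$-subset argument), and then uses a probabilistic-method step with Markov/reverse-Markov to extract an explicit pair of neighboring datasets $S,S'$ and an output $x$ that witness a violation of $(\eps,\delta)$-DP. You instead fix subsets of size $k=100n$, stay at the averaged level throughout, and split $\Pr[\by\in\bT]$ into $\Pr[\by\in\bS]$ (bounded by $n(e^\eps/k+\delta)$ via the resampling argument, applied conditionally on each $T$) and the purely combinatorial term $\Pr[\by\in\bT\setminus\bS]\leq k/|X|$. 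Your route avoids the explicit witness extraction and is a bit more streamlined; the paper's route has the minor conceptual advantage of locating the DP violation at a concrete pair of datasets.
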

\begin{proof}
    For any algorithm $\mcM:X^n \to X$, let us define
    \begin{equation*}
        p(\mcM) \coloneqq \Prx_{\bS \sim  X^n}[\mcM(\bS) \in S].
    \end{equation*}
    We first argue that any $\mcM$ that solves $\FindElement$ with failure probability at most $1/2$ satisfies $p(\mcM) \geq 1/3$.

    For this, draw $\bX'$ to a uniform size $|X|/10$ subset of $|X|$. Then, let us consider the average failure probability of $\mcM$ over a $\bmcD \coloneqq \Unif(\bX')$,
    \begin{equation*}
        \Ex_{\bX' \sim \binom{X}{|X|/10}}\bracket*{\Ex_{\bS \sim \Unif(\bX')^n}[\Pr[\mcM(\bS) \notin \bX']]} \leq 1/2,
    \end{equation*}
    where the $1/2$ upper bound is because, if $\mcM$ has a failure probability of at most $1/2$ on any single distribution, it also has a failure probability of at most $1/2$ on average over any set of distributions. Next, will switch the order of the above expectation:
    \begin{align*}
        \Ex_{\bX' \sim \binom{X}{|X|/10}}\bracket*{\Ex_{\bS \sim \Unif(\bX')^n}[\Pr[\mcM(\bS) \notin \bX']]} = \Ex_{\bS \sim \Unif(X)^n}\bracket*{\Ex_{\bX' \mid \bS}\bracket*{\Pr[\mcM(\bS) \notin \bX']}}.
    \end{align*}
    Observe that conditioning on $\bS =S$ conditions on all the values in $S$ being part of $\bX'$. In particular, for any $x \notin S$, if we draw $\bX' \mid \bS = S$ the probability that $x \in \bX'$ is at most $1/10$. Therefore,
    \begin{equation*}
         \Ex_{\bS \sim \Unif(X)^n}\bracket*{\Ex_{\bX' \mid \bS}\bracket*{\Pr[\mcM(\bS) \notin \bX']}} \geq (1-p(\mcM))\cdot 0.9.
    \end{equation*}
    Hence, we have the desired condition that $p(\mcM) \geq 1/3$.

    Finally, we'll show that no algorithm with $p(\mcM) \geq 1/3$ can be $(\eps,\delta)$-DP. For this, we will argue that for any such $\mcM$, there exists some neighboring datasets $S, S'$ and element $x$ s.t.
    \begin{equation}
        \label{eq:neighbors-far}
        \Pr[\mcM(S) = x] \leq \frac{1}{100n}\quad\quad\text{and}\quad\quad \Pr[\mcM(S') = x] \geq \frac{1}{5n}.
    \end{equation}
    We show the existence of such an $S, S', x$ via the probabilistic method: Let us draw $\bS \sim \Unif(X)^n$, $\bx \sim \Unif(X)$, and $\bi \sim \Unif([n])$ independently. Then, we set $\bS'$ to be identical to $\bS$ except with the $\bi^{\text{th}}$ element replaced with $\bx$. We will show that,
    \begin{equation}
        \label{eq:unlikely-x}
        \Prx_{\bS, \bx}\bracket*{\Prx_{\text{randomness of }\mcM}[\mcM(\bS) = \bx] \geq 1/100n} \leq 1/100n
    \end{equation}
    For this, we use that $\bS$ and $\bx$ are independent, meaning
    \begin{equation*}
         \Ex_{\bS, \bx}\bracket*{\Prx_{\text{randomness of }\mcM}[\mcM(\bS) = \bx]} = \Pr[\mcM(\bS) = \bx] \leq \frac{1}{|X|}.
    \end{equation*}
    As long as $|X| \geq (100n)^2$, we obtain \Cref{eq:unlikely-x} by Markov's inequality.

    Next, we will show that,
    \begin{equation}
        \label{eq:likely-x}
        \Prx_{\bS', \bx}\bracket*{\Prx_{\text{randomness of }\mcM}[\mcM(\bS') = \bx] \geq 1/5n} \geq 1/10n.
    \end{equation}
    For this, observe the distribution of $\bS'$ is simply $\Unif(X)^n$ and that $\bi$ is independent of $\bS'$. Then, by the definition of $p(\mcM)$,
    \begin{equation*}
        \Prx_{\bS' \sim \Unif(X)^n, \bi \sim \Unif([n])}\bracket*{\mcM(\bS') = \bS'_{\bi}} \geq \frac{p(\mcM)}{n} \geq \frac{1}{3n}.
    \end{equation*}
    Using the fact that $\bS'_{\bi} = \bx$, we can rewrite this as
    \begin{equation*}
        \Ex_{\bS', \bx}\bracket*{\Prx_{\text{randomness of }\mcM}[\mcM(\bS') = \bx]} = \Pr[\mcM(\bS') = \bx] \geq \frac{1}{3n}.
    \end{equation*}
    The above inequality implies \Cref{eq:likely-x} by reverse Markov. Combining \Cref{eq:likely-x} and \Cref{eq:unlikely-x}, we have that with nonzero probability over $\bS, \bS', \bx$, that \Cref{eq:neighbors-far} holds. Since $\bS$ and $\bS'$ are guaranteed to be neighbors, this implies $\mcM$ is not $(\eps, \delta)$-DP.
\end{proof}

We will prove a similar bound for $\FindLightElement$. This proof will use a reduction to \Cref{claim:find-element-hard}.
\begin{claim}[DP-hardness of $\FindLightElement$]
    \label{claim:find-light-element-hard}
    For any domain $X$ and $n \leq \frac{\sqrt{|X|-1}}{100}$, there is no $(\eps = 1, \delta = 1/(10n))$-algorithm $\mcM:X^n \to X$ which solves $\FindLightElement$ with failure probability at most $1/2$.
\end{claim}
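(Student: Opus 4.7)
The plan is to reduce $\FindElement$ on a domain of size $|X|-1$ to $\FindLightElement$ on $X$, preserving both the sample complexity and the DP parameters, and then to invoke \Cref{claim:find-element-hard}. Fix an arbitrary distinguished element $x^\star \in X$, and set $X' = X \setminus \set{x^\star}$. For any distribution $\mcD'$ on $X'$, I will construct a $\FindLightElement$ instance $\mcD$ on $X$ by setting $\mcD(x^\star) = 0.8$ and $\mcD(y) = 0.2 \cdot \mcD'(y)$ for $y \in X'$. Then $\mcD(x^\star) = 0.8 \in [0.7, 0.9]$, so $\mcD$ is a legal input for $\FindLightElement$, and the set of valid $\FindLightElement$ answers for $\mcD$ (any $y \neq x^\star$ with $\mcD(y) > 0$) coincides exactly with the set of valid $\FindElement$ answers for $\mcD'$.

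Now suppose toward contradiction that $\mcM : X^n \to X$ is $(1, 1/(10n))$-DP and solves $\FindLightElement$ with failure probability at most $1/2$. I will build $\mcM' : (X')^n \to X$ as follows. On input $S = (S_1, \ldots, S_n)$, draw independent coins $\bb_1, \ldots, \bb_n \iid \Ber(0.8)$ and form $\bT \in X^n$ by setting $\bT_i = x^\star$ if $\bb_i = 1$ and $\bT_i = S_{c(i)}$ if $\bb_i = 0$, where $c(i) := |\set{j \leq i : \bb_j = 0}|$. Output $\mcM(\bT)$. The first thing I would check is that if $\bS \iid \mcD'^n$, then the constructed $\bT$ is distributed exactly as $\mcD^n$: each coordinate is independently $x^\star$ with probability $0.8$ and drawn from $\mcD'$ with probability $0.2$, matching $\mcD$, and the coordinates are mutually independent because the $\bb_i$ are independent and the $\bS_j$ are iid. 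Consequently the failure probability of $\mcM'$ for $\FindElement$ on $\mcD'$ is exactly the failure probability of $\mcM$ for $\FindLightElement$ on $\mcD$, hence at most $1/2$.

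Next I would verify that $\mcM'$ is $(1, 1/(10n))$-DP. Fix any realization of $\bb$. If $S, S'' \in (X')^n$ are neighbors differing only at position $j$, then with this same $\bb$ the constructed datasets $T$ and $T''$ are either identical (when $\bb_j = 1$, in which case $S_j$ is never consulted) or differ only at the single position $i$ with $c(i) = j$. Either way $\mcM(T)$ and $\mcM(T'')$ satisfy the $(\eps, \delta)$-DP inequality by the DP of $\mcM$ (or trivially). Averaging this pointwise inequality over the coins $\bb$, which are independent of $S$, preserves it, so $\mcM'$ is $(1,1/(10n))$-DP. Now $\mcM'$ is an $(1, 1/(10n))$-DP algorithm on $(X')^n$ with $|X'| = |X|-1$ that solves $\FindElement$ with failure at most $1/2$, so by \Cref{claim:find-element-hard} we must have $n > \sqrt{|X'|}/100 = \sqrt{|X|-1}/100$, contradicting the hypothesis of the claim.

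The main thing to get right is the distributional coupling and the neighbor-preservation property under the random reindexing of $S$ into $T$; both are routine once the construction is written down carefully. Because there is no loss at all in the failure probability and no loss in the DP parameters (we reuse $n$ samples and the coupling is exact), the parameter bounds in the conclusion follow immediately from the parameters in \Cref{claim:find-element-hard}, with $|X|-1$ in place of $|X|$.
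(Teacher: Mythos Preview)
Your approach is essentially the paper's: reduce from \Cref{claim:find-element-hard} on $X' = X \setminus \{x^\star\}$ by independently replacing each input coordinate with $x^\star$ via a Bernoulli coin, so that a $\FindLightElement$ solver on $X$ becomes a $\FindElement$ solver on $X'$ with the same sample size and DP parameters. The paper uses $\Ber(0.7)$ and the simpler coordinate-wise rule $T_i = S_i$ (rather than your reindexed $T_i = S_{c(i)}$) when $\bb_i = 0$, and it postprocesses any output equal to $x^\star$ to a uniform element of $X'$ so that the codomain is literally $X'$.

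One slip to correct in your DP step: with your reindexing, whether $S_j$ is consulted depends on whether $j \le K := |\{i : \bb_i = 0\}|$, not on $\bb_j$. For instance, if $\bb = (0,1,0)$ then $S_2$ is placed at position $3$ even though $\bb_2 = 1$. The conclusion is unaffected --- for any fixed $\bb$, each $S_j$ appears in at most one coordinate of $T$, so neighbors in $(X')^n$ still map to neighbors in $X^n$ --- but your stated reason ``when $\bb_j = 1$, $S_j$ is never consulted'' is wrong for your construction. Switching to the paper's simpler rule $T_i = S_i$ when $\bb_i = 0$ makes your parenthetical explanation correct verbatim and loses nothing.
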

\begin{figure}[ht]

  \captionsetup{width=.9\linewidth}

    \begin{tcolorbox}[colback = white,arc=1mm, boxrule=0.25mm]
    \vspace{2pt}
   
    \textbf{Input:} An $(\eps,\delta)$-DP algorithm $\mcM:(X)^n \to (X)$, solving $\FindLightElement$ with failure probability at most $\beta$, single element $x^{\star} \in X$, and sample $S' \in (X')^n$ where $X' \coloneqq X \setminus \set{x^{\star}}$.\\
    
    \textbf{Output: } The output of an algorithm $\mcM'(S')$ where $\mcM':(X')^n \to (X')$ is $(\eps,\delta)$-DP and solves $\FindElement$ with failure probability at most $\beta$.\\
    
 \begin{enumerate}[nolistsep,itemsep=2pt]
        \item  Draw $\bb_1, \ldots, \bb_n \iid \Ber(0.7)$.
        \item Construct the sample $\bS \in X^n$, by setting, for all $i \in [n]$,
        \begin{equation*}
            \bS_i \coloneqq \begin{cases}
                x^{\star}&\text{if }\bb_i=1\\
                S_i'&\text{otherwise.}
            \end{cases} 
        \end{equation*}
        \item Let $\bx = \mcM(\bS)$. If $\bx \neq x^{\star}$, output $\bx$. Otherwise, output a uniform element of $X'$.
    \end{enumerate}
    
    \end{tcolorbox}
\caption{Reduction from $\FindElement$ to $\FindLightElement$}
\label{fig:FindElementReduction}
\end{figure}

\begin{proof}

    Suppose we had an algorithm $\mcM:X^n \to X$ that was $(\eps, \delta)$-DP and solved $\FindLightElement$ with failure probability at most $1/2$ over domain $X$. Let $X'$ be created by removing any one element, $x^\star$ from $X$ (i.e. $X' \cup \set{x^\star}$). We will show that the algorithm $\mcM':(X')^n \to X'$ in \Cref{fig:FindElementReduction} is $(\eps,\delta)$-DP and solves $\FindElement$ with failure probability at most $1/2$. The desired result then follows from \Cref{claim:find-element-hard}.

    We first show that $\mcM'$ solves $\FindElement$ with failure probability at most $1/2$. Observe that for any distribution $\mcD'$ over $X'$, if we draw $\bS' \sim (\mcD')^n$ and then create the sample $\bS$ as in \Cref{fig:FindElementReduction}, then the distribution of $\bS$ is $\mcD^n$ where $\mcD$ is the distribution that puts $0.7$ mass on $x^{\star}$ and the remaining $0.3$ mass is distributed according to $\mcD'$. Hence, since $\mcM$ solves $\FindLightElement$ with failure probability at most $1/2$, the output $\bx = \mcM(\bS)$ must be some element satisfying $\mcD'(\bx) > 0$ with probability at least $1/2$. Therefore $\mcM'$ solves $\FindElement$ with failure probability at most $1/2$
    
    Lastly, we show that $\mcM'$ is $(\eps, \delta)$-DP whenever $\mcM$ is $(\eps, \delta)$-DP. For any $Y' \subseteq X'$ and $(S^{(1)})', (S^{(2)})' \in (X')^n$ differing in one coordinate, we wish to show that
    \begin{equation*}
        \Pr[\mcM'((S^{(1)})') \in Y'] \leq e^{\eps}\cdot  \Pr[\mcM'((S^{(2)})') \in Y']  + \delta.
    \end{equation*}
    Since the choices of ${\bb_1}, \ldots, {\bb_n}$ in \Cref{fig:FindElementReduction} are independent of the input sample, it suffices to show that for any fixed choice of $b\coloneqq b_1,\ldots, b_n$,
    \begin{equation}
        \label{eq:dp-fixed-b}
        \Pr[\mcM'((S^{(1)})') \in Y'\mid \bb=b] \leq e^{\eps}\cdot  \Pr[\mcM'((S^{(2)})') \in Y'\mid \bb=b]  + \delta.
    \end{equation}
    Fix any choice of $b$. Then, let $S^{(1)}$ and $S^{(2)}$ respectively be the samples $S$ constructed in \Cref{fig:FindElementReduction} when given inputs $(S^{(1)})'$ and $(S^{(2)})'$ respectively, and when $\bb=b$. Then, we observe that $S^{(1)}$ and $S^{(2)}$ also differ in at most one coordinate. Therefore, since $\mcM$ is $(\eps,\delta)$-DP, for any $Y \subseteq X$,
    \begin{equation*}
        \Pr[\mcM((S^{(1)})) \in Y] \leq e^{\eps}\cdot  \Pr[\mcM((S^{(2)})) \in Y]  + \delta.
    \end{equation*}
    Since the output of $\mcM'(S')$ is formed by postprocessing the output of $\mcM(S)$, the above implies \Cref{eq:dp-fixed-b} holds for all fixed values of $b$. This means that $\mcM'$ is indeed $(\eps,\delta)$-DP, and the desired result follows from the impossibility result of \Cref{claim:find-element-hard}.
\end{proof}

\section{Acknowledgments}
The authors thank the anonymous ITCS reviewers for their helpful feedback. Guy is supported by NSF awards 1942123, 2211237, and 2224246 and a Jane Street Graduate Research Fellowship. William is supported by NSF awards 2106429, 2107187. Toniann is supported by the Simons Foundation Collaboration on the Theory of Algorithmic Fairness. 






\bibliographystyle{alpha}
\bibliography{ref}

\appendix
\section{From TV-Stability to DP}\label{sec:TV-to-DP}
In this section, we give the proof of \Cref{lem:TV-to-DP} which we recall below.
\tvToDP*

We firs recall the following notions of sample-perfect generalization and replicability:

\begin{definition}[Sample perfectly generalization, \cite{pmlr-v49-cummings16,BGHILPSS23}]
    An algorithm $\mcM : X^m\to  Y$ is said to be $(\beta,\epsilon,\delta)$-sample perfectly generalizing, if for every
distribution $D$ over $X$ , with probability at least $1-\beta$ over the draw of $\bS, \bS' \iid D^n$ we have that for every $Y' \subseteq Y:$

$$e^{-\epsilon}\Pr[\mcM(\bS') \in Y'] +\delta \leq \Pr[\mcM(S) \in Y] \leq e^{\epsilon}\Pr[\mcM(\bS') \in Y']+\delta.$$
\end{definition}

\begin{definition}[Replicable algorithms \cite{ILPS22}]
    Let $\mcM : X^n \to Y$, be an algorithm taking a set of $S \in X^n$ and using internal randomness $r$. We say that coin tosses $r$ are $\tau$-good for A on distribution $\mcD$ if there exists a “canonical output” $s_r$
such that $$\Prx_{\bS \sim \mcD^n}[\mcM(\bS, r)=s_r] \geq 1-\tau$$

We say that $\mcM$ is $(\rho,\tau)$-replicable if, for every distribution $\mcD$,
with probability at least $1-\rho$, the coin toss $\br$ drawn by $\mcM$ is $\tau$-good on distribution $\mcD$.
\end{definition}

In \cite{BGHILPSS23}, the authors show how to transform a $(\beta, \epsilon, \delta)$-sample-perfectly generalizing  (for small enough $\beta, \epsilon, \delta)$ algorithm into a $(0.1,0.1)$-replicable one. Then then give a transformation from replicability to $(\epsilon,\delta)$-DP. We do the same: \Cref{lem:tv-to-rep} follows almost immediately from their result, though for completeness, we will spell out the steps. \Cref{lem:rep-to-DP} is follows a similar proof strategy as their result, but we obtain $(\beta, O(\beta))$-equivalence rather than $(\beta, O(\beta \log(1/\beta)))$-equivalence.

\begin{lemma}\label{lem:tv-to-rep}
    There exists a constant $1>\rho^*>0$ such that if $\mcM:X^n \to Y$ is $\rho^*$-TV stable then there exists an algorithm $\mcM':X^n \to Y$ which is $(0.1,0.1)$-replicable. {Furthermore, for any set $S$ the distribution $\mcM'(S)$ is the same as that $\mcM(S)$.}
\end{lemma}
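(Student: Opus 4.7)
The plan is to construct $\mcM'$ from $\mcM$ using \emph{correlated sampling} (the Holenstein / Kleinberg--Tardos coupling). The shared internal randomness $\br$ of $\mcM'$ will be an infinite i.i.d.\ sequence $((\by_i, \bu_i))_{i \geq 1}$, where $\by_i$ is drawn from a fixed reference distribution $\nu$ over $Y$ dominating every $\mcM(S)$ and $\bu_i \sim \Unif([0,1])$; on input $S$ the algorithm $\mcM'(S, \br)$ outputs $\by_{i^\star}$ for the smallest $i^\star$ with $\bu_{i^\star} \leq \frac{1}{M}\cdot \frac{d\mcM(S)}{d\nu}(\by_{i^\star})$, where $M$ is a uniform bound on the Radon--Nikodym derivatives. (When $Y$ is finite it suffices to take $\nu=\Unif(Y)$ and $M = |Y|$.) Two standard properties of this construction will do all the work: (i) for every fixed $S$ the marginal distribution of $\mcM'(S,\br)$ equals $\mcM(S)$, which immediately gives the ``furthermore'' clause; and (ii) for any two inputs $S,S'$,
$$\Prx_\br[\mcM'(S,\br) \neq \mcM'(S',\br)] \leq 2\cdot \dtv(\mcM(S), \mcM(S')).$$

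Given (i) and (ii), replicability follows from a short second-moment plus Markov argument. Fix an arbitrary distribution $\mcD$ over $X$, and for each fixed $r$ and $y \in Y$ let $p_y(r) \coloneqq \Prx_{\bS \sim \mcD^n}[\mcM'(\bS, r)=y]$ and $p(r) \coloneqq \max_y p_y(r)$. Using (ii) and $\rho^\star$-TV-stability,
$$\Ex_\br\!\bracket*{\sum_{y} p_y(\br)^2} = \Prx_{\bS,\bS' \iid \mcD^n,\,\br}[\mcM'(\bS,\br) = \mcM'(\bS',\br)] \geq 1 - 2\,\Ex_{\bS,\bS'}[\dtv(\mcM(\bS), \mcM(\bS'))] \geq 1 - 2\rho^\star.$$
Since $\sum_y p_y(r)^2 \leq p(r)\cdot\sum_y p_y(r) = p(r)$, this yields $\Ex_\br[p(\br)] \geq 1-2\rho^\star$, and Markov's inequality applied to $1-p(\br) \in [0,1]$ gives $\Prx_\br[p(\br) < 0.9] \leq 20\rho^\star$. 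Choosing $\rho^\star$ to be any constant at most $1/200$ makes this probability at most $0.1$; on every such ``good'' $r$ we take the canonical output $s_r$ to be the $y$ achieving $p(r)$, so that $\Prx_\bS[\mcM'(\bS, r) = s_r] \geq 0.9$. This is exactly $(0.1,0.1)$-replicability. Note that for fixed $r$ the output $\mcM'(\bS,r)$ is a function of the discrete random variable $\bS$, so the sum $\sum_y p_y(r)^2$ is supported on at most $|X|^n$ atoms and is therefore well defined.

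The main technical obstacle is just setting up the correlated-sampling coupling cleanly in the generality the paper needs: for finite $Y$ the rejection-sampling description above works verbatim with $\nu = \Unif(Y)$ and $M = |Y|$, but if $Y$ is infinite or the density ratios are unbounded one has to use the standard Holenstein construction on a common dominating measure (or restrict to the support of $\mcM(S)+\mcM(S')$) to guarantee both (i) and (ii) without assuming bounded Radon--Nikodym derivatives. Once this coupling is in hand, the rest of the argument is the one-line Markov computation above, so no deeper difficulty arises.
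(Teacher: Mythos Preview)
Your proof is correct and takes a genuinely more direct route than the paper. The paper's proof proceeds through a chain of reductions—TV-stability $\Rightarrow$ sample perfect generalization (via Markov) $\Rightarrow$ perfect generalization $\Rightarrow$ replicability—invoking Lemma~3.6, Theorem~3.17, and Claim~2.21 of \cite{BGHILPSS23} as black boxes. You instead apply correlated sampling directly to $\mcM$ and extract $(0.1,0.1)$-replicability via the clean second-moment identity $\Ex_\br[\sum_y p_y(\br)^2] = \Pr[\mcM'(\bS,\br)=\mcM'(\bS',\br)]$ together with $\sum_y p_y^2 \le \max_y p_y$ and Markov. This is essentially what lives inside the final \cite{BGHILPSS23} black box (their perfect-generalization-to-replicability step is also correlated sampling), so you have unpacked and short-circuited the chain. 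Your argument is more elementary and self-contained, yields an explicit constant ($\rho^\star = 1/200$), and makes the ``furthermore'' clause immediate from property~(i), whereas the paper's route accumulates losses through the intermediate notions and leaves the output-distribution preservation implicit in the cited reductions. One simplification: since the paper assumes all domains are finite (\Cref{sec:prelims}), the concern you flag about infinite $Y$ or unbounded Radon--Nikodym derivatives does not arise; taking $\nu = \Unif(Y)$ and $M = |Y|$ suffices throughout.
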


\begin{lemma}\label{lem:rep-to-DP}
    Let $1 \geq \beta, \epsilon, \delta \geq 0$ and $\mcM:X^n \to Y$ be $(0.1,0.1)$-replicable. There exists a $(\beta,5\beta)$-equivalent algorithm $\mcM':X^m \to Y$, which is $(\epsilon,\delta)$-differentially private using

    $$m=n \cdot O\Bigg(\log(1/\beta) \cdot \frac{\log(1/\beta)+\log(1/\delta)}{\epsilon}\Bigg) \text{ samples.}$$
\end{lemma}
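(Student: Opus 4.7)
The plan is to use a ``try $T$ independent seeds'' construction, aggregating via the first non-$\bot$ output of a differentially private heavy-hitter mechanism. A telescoping argument then yields $(\beta,5\beta)$-equivalence, avoiding the naive $O(\log(1/\beta))$ union-bound loss present in \cite{BGHILPSS23}. Set $T := \Theta(\log(1/\beta))$ and $K := \Theta((\log(1/\beta)+\log(1/\delta))/\epsilon)$, so that $m = nTK$ matches the claimed sample size. On input $S \in X^m$: partition $S$ into $TK$ disjoint inner blocks of size $n$ indexed by $(j,i) \in [T] \times [K]$; sample independent random seeds $\br_1,\ldots,\br_T$ for $\mcM$; compute $\by_{j,i} := \mcM(S^{(j,i)};\br_j)$; for each $j \in [T]$ apply an $(\epsilon,\delta)$-DP propose-test-release heavy-hitter mechanism to $(\by_{j,1},\ldots,\by_{j,K})$ with threshold roughly $K/2$, yielding $\hat\by_j \in Y \cup \{\bot\}$; and output the first $\hat\by_j \ne \bot$ (or any fixed default if all are $\bot$).

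\textbf{Privacy.} Each input point lies in exactly one inner block; by parallel composition the tuple $(\hat\by_1,\ldots,\hat\by_T)$ is $(\epsilon,\delta)$-DP, and post-processing preserves DP.

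\textbf{Correctness.} Fix any task $\mcT$ solved by $\mcM$ with failure probability $\beta$ and any distribution $\mcD$; let $\nu_r$ denote the output distribution of $\mcM(\bS;r)$ for $\bS \sim \mcD^n$, and $s_r$ the canonical output when $r$ is $0.1$-good. The $\hat\by_j$'s are mutually independent (disjoint data and independent seeds) with common marginal probabilities
\[
q := \Pr[\hat\by_j = \bot], \qquad w := \Pr[\hat\by_j \neq \bot \text{ and } \hat\by_j \notin \mcT(\mcD)].
\]
A short averaging argument combining $(0.1,0.1)$-replicability with $\E_{\br}[\nu_{\br}(\mcT(\mcD))] \geq 1-\beta$ gives $q \leq 0.1 + O(\beta)$ and $w \leq O(\beta)$: the latter because a wrong non-$\bot$ output requires either a good seed with $s_{\br_j} \notin \mcT(\mcD)$ (forcing $\nu_{\br_j}(\mcT(\mcD)) \leq 0.1$, an $O(\beta)$ event by Markov on the correctness hypothesis) or a bad seed together with some value of $\nu_{\br_j}$-mass $\geq 1/2$ lying outside $\mcT(\mcD)$ (forcing $\nu_{\br_j}(\mcT(\mcD)) < 1/2$, also $O(\beta)$ by Markov). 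Then
\[
\Pr[\text{output wrong non-}\bot] = \sum_{j=1}^T q^{j-1} w \leq \frac{w}{1-q} \leq O(\beta), \quad \Pr[\text{output } \bot] \leq q^T \leq \beta,
\]
and summing these failure modes yields total failure $\leq 5\beta$.

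\textbf{Main obstacle.} The main improvement over \cite{BGHILPSS23}, which obtains $(\beta, O(\beta\log(1/\beta)))$-equivalence, is replacing their union-bound factor $T \cdot w$ for the probability of a wrong output by the telescoping bound $w/(1-q)$, which sheds the $\log(1/\beta)$ factor. Executing this requires (i) setting the propose-test-release threshold high enough that bad seeds without a truly heavy mode reliably produce $\bot$, and (ii) bounding $w = O(\beta)$ via the averaging argument above; the latter is the main technical step, and goes through exactly because replicability forces the ``good-seed / wrong canonical output'' event to have probability $O(\beta)$ without incurring any extra $\log(1/\beta)$ loss.
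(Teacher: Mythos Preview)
Your proposal is correct and takes essentially the same approach as the paper: draw $\Theta(\log(1/\beta))$ independent seeds, run a DP heavy-hitter-with-$\bot$ mechanism (the paper's \textsf{Pick-Heavy}) on each seed's block of outputs, and return a non-$\bot$ answer; the paper returns a \emph{uniformly random} non-$\bot$ answer rather than the first, but for i.i.d.\ trials these two rules give identical output distributions, so your telescoping bound $w/(1-q)$ is exactly the paper's ratio bound $\Pr[\bh^i\text{ bad}]/\Pr[\bh^i\neq\bot]$. One small imprecision worth tightening: in your case (ii) for bounding $w$, a wrong non-$\bot$ output from a non-good seed need not force any \emph{single} wrong value to have $\nu$-mass $\ge 1/2$; the clean fix---exactly what the paper does---is to set the threshold strictly above $K/2$ (the paper uses $0.6K$) and apply Chernoff to the \emph{total} count of wrong answers, which has mean $\le K/2$ whenever $\nu_{\br_j}(\mcT(\mcD))\ge 1/2$.
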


Using the two above, the proof of \Cref{lem:TV-to-DP} is straightforward.
\begin{proof}[Proof of \Cref{lem:TV-to-DP}]
    Let $\mcM:X^n \to Y$ be $\rho$-TV stable, where $\rho^*$ is the constant of \Cref{lem:tv-to-rep}. Then there exists a $(\beta,\beta)$-equivalent algorithm $\mcM':X^n\to Y$ which is $(0.1, 0.1)$-replicable. The lemma then follows by applying \Cref{lem:rep-to-DP} to $\mcM'$.
\end{proof}

We now prove \Cref{lem:tv-to-rep}, while we prove \Cref{lem:rep-to-DP} in the next subsection.
\begin{proof}[Proof of \Cref{lem:tv-to-rep}]
Let $\mcD$ by an arbitrary distribution over $X$. Since $\mcM$ is $\rho$-TV stable we have by Markov's inequality that:
    $$\Prx_{\bS,\bS' \iid \mcD^n}\bracket*{\dtv(\mcM(\bS), \mcM(\bS')) \geq \sqrt{\rho}} \leq \sqrt{\rho}.$$

    However, if $\dtv(\mcM(\bS), \mcM(\bS')) \leq \sqrt{\rho}$, we clearly have that for every $Y' \subseteq Y$:

$$\Pr[\mcM(\bS') \in Y'] =  \Pr[\mcM(S) \in Y] \leq \Pr[\mcM(\bS') \in Y'] \pm \sqrt{\rho}.$$

As such, it's easy to see that $\mcM$ is $(\sqrt{\rho}, 0, \sqrt{\rho})$-sample perfectly generalizing. From this point, the proof follows from the work of \cite{BGHILPSS23}. We can first show $\mcM$ is $(\rho^{1/4},0,\sqrt{\rho} + \rho^{1/4})$-perfectly generalizing (See Lemma 3.6 of \cite{BGHILPSS23}). This in turns implies that there exists an algorithm $\mcM':X^n \to Y$ which is $(10 \cdot 20(2\rho^{1/4}+\rho^{1/2}), \frac{1}{10})$-replicable (See Theorem 3.17 and Claim 2.21 of   \cite{BGHILPSS23}). 
\end{proof}

\subsection{The proof of \Cref{lem:rep-to-DP}}

Our proof will follow the strategy of Theorem 3.1 of \cite{BGHILPSS23}: the key idea is that if $\mcM$ is replicable, one can draw many random strings $\br_1, \ldots, \br_k$. For each $\br_i$, we can draw multiple fresh sets $\bS_{i,1}, \ldots, \bS_{i,\ell} \iid D^n$ and run $\mcM(\cdot , \br_i)$ on each of these to get answers $\by_{i,j}$. In \cite{BGHILPSS23}, the authors then run DP-selection on $\mcY=\bracket*{\by_{1,1}, \ldots, \by_{1,\ell}, \ldots \by_{k,1}, \ldots, \by_{k,\ell}}$ to output a frequent element from that set. Note that if a $\br_i$ was $0.1$ good, then most of these runs of $\mcM(\circ , \br_i)$ will output its canonical answer $s_{\br_i}$, so there will be ``frequent answer'' in the dataset $\mcY$. Correctness then follows by bounding how many of the elements in $\mcY$ can ``bad'' answers to the statistical task we're trying to solve.  

We proceed slightly differently, for each sets $\bracket*{\by_{1,1}, \ldots, \by_{1,\ell}}, \ldots ,\bracket*{ \by_{k,1}, \ldots, \by_{k,\ell}}$, we run a slightly different version of DP-selection on each of these sets to get replies $\bh_1, \ldots, \bh_\ell$. In particular, our selection algorithm is such that it outputs $\bot$ with high probability when no element appeared many times in the dataset. Indeed, there should be few random strings $r$ which are $0.5$ replicable but with canonical answer $s_r$ being a ``bad'' answer. Indeed, we can show that if $\bh_i \neq \bot$, it's very likely that $\bh_i$ is a ``good'' answer to the statistical task. As such, it suffices to return an arbitrary $\bh_i$ which isn't $\bot$.

Before diving into the proof, we recall the following results:
\begin{lemma}[\cite{DRV10}]\label{lem:simple_composition}
   Let $\epsilon,\delta>0$ if $\mcM_1, \ldots, \mcM_k$ are $(\epsilon,\delta)$-approximate DP. Then their composition $(\mcM_1, \ldots, \mcM_k)$ is $(k\epsilon,k\delta)$-differentially private. 
\end{lemma}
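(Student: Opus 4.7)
The plan is to prove this by induction on $k$. The base case $k=1$ is just the assumption that $\mcM_1$ is $(\eps,\delta)$-DP. For the inductive step, I assume the composition of any $k-1$ mechanisms each satisfying $(\eps,\delta)$-DP is $((k-1)\eps, (k-1)\delta)$-DP, and I want to show the composition of $k$ such mechanisms is $(k\eps, k\delta)$-DP. Crucially, since the $\mcM_i$ all take the same input $S$ and each uses fresh independent randomness (this is non-adaptive composition), the joint distribution of $(\mcM_1(S), \ldots, \mcM_k(S))$ factors as the product of the marginals.

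Fix any pair of neighboring datasets $S, S'$ and any measurable $Y' \subseteq Y_1 \times \cdots \times Y_k$. The first step is to condition on the output of $\mcM_k$ and write
\begin{equation*}
\Pr[(\mcM_1(S), \ldots, \mcM_k(S)) \in Y'] = \Ex_{\by_k \sim \mcM_k(S)}\bracket*{A(\by_k)},
\end{equation*}
where $A(y_k) \coloneqq \Pr[(\mcM_1(S),\ldots,\mcM_{k-1}(S)) \in Y'_{y_k}]$ and $Y'_{y_k}$ is the slice of $Y'$ at $y_k$. Defining $B(y_k)$ analogously for $S'$, the inductive hypothesis gives $A(y_k) \leq e^{(k-1)\eps} B(y_k) + (k-1)\delta$ for every $y_k$. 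Setting $A'(y_k) \coloneqq \min(A(y_k), e^{(k-1)\eps} B(y_k)) \in [0,1]$, I get $A(y_k) \leq A'(y_k) + (k-1)\delta$.

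The second step applies the $(\eps,\delta)$-DP guarantee of $\mcM_k$ to the $[0,1]$-valued function $A'$. The standard fact that DP extends to bounded nonnegative functionals, namely $\Ex_{\by\sim \mcM(S)}[f(\by)] \leq e^\eps \Ex_{\by\sim \mcM(S')}[f(\by)] + \delta \|f\|_\infty$, follows by writing $f = \int_0^{\|f\|_\infty} \Ind[f \geq t]\, dt$ and applying $(\eps,\delta)$-DP to each superlevel set. Plugging in $A'$ yields
\begin{equation*}
\Ex_{\by_k \sim \mcM_k(S)}[A'(\by_k)] \leq e^{\eps}\Ex_{\by_k \sim \mcM_k(S')}[A'(\by_k)] + \delta \leq e^{k\eps}\Ex_{\by_k \sim \mcM_k(S')}[B(\by_k)] + \delta,
\end{equation*}
where the last expectation equals $\Pr[(\mcM_1(S'), \ldots, \mcM_k(S')) \in Y']$. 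Combining with the $(k-1)\delta$ slack from the first step gives the desired $(k\eps, k\delta)$-DP bound.

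The main obstacle is bookkeeping the $\delta$ additive slack so that it adds across mechanisms rather than multiplying. The trick above—passing through the clipped function $A'$ so that the IH's additive $(k-1)\delta$ is absorbed \emph{before} invoking $\mcM_k$'s DP guarantee—is what ensures the two $\delta$ contributions simply sum. Everything else is a routine expansion of probabilities into iterated expectations using independence of the mechanisms' random coins.
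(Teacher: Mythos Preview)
Your proof is correct. The paper does not give its own proof of this lemma; it simply cites \cite{DRV10} and uses the result as a black box. Your induction-on-$k$ argument with the clipping trick $A'(y_k)=\min(A(y_k),e^{(k-1)\eps}B(y_k))$ is a clean and standard way to handle the additive $\delta$ so that the slacks sum rather than compound, and the extension of the $(\eps,\delta)$-DP inequality from events to $[0,1]$-valued functions via the layer-cake representation is valid.
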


\begin{theorem}[Theorem 3.6 of \cite{DR14book}]\label{lem:L1-and-Lap}
    Let $f:X^n \to \mathbb{R}$ be a function with sensitivity $1$ \footnote{This means, that for any $S,S' \in X^n$ differ in a single entry $|f(S)-f(S')| \leq 1$}. Then for any $O \subseteq \mathbb{R}$ and $S,S'\in X^n $ differing on single entry:
    $$\Prx_{\boldeta \sim Lap(1/\epsilon)}[f(S)+\boldeta \in O] \leq e^{\epsilon} \Prx_{\boldeta \sim Lap(1/\epsilon)}[f(S')+\boldeta \in O].$$
\end{theorem}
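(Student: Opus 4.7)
The plan is to establish the inequality by a pointwise density-ratio bound followed by integration. Specifically, I would first observe that the $\mathrm{Lap}(1/\epsilon)$ distribution has density $p(x) = \tfrac{\epsilon}{2}\exp(-\epsilon|x|)$ on $\mathbb{R}$, so the random variable $f(S)+\boldeta$ has density $p_S(z) \coloneqq \tfrac{\epsilon}{2}\exp(-\epsilon|z-f(S)|)$ at any point $z \in \mathbb{R}$, and analogously $p_{S'}(z) \coloneqq \tfrac{\epsilon}{2}\exp(-\epsilon|z-f(S')|)$.

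\textbf{Pointwise ratio.} The key step is to bound the ratio of these two densities at every $z$. One computes directly
\[
\frac{p_S(z)}{p_{S'}(z)} \;=\; \exp\!\bigl(\epsilon\cdot(|z-f(S')| - |z-f(S)|)\bigr).
\]
The (reverse) triangle inequality gives $|z-f(S')| - |z-f(S)| \leq |f(S) - f(S')|$. Since $S$ and $S'$ differ in a single coordinate and $f$ has sensitivity $1$, we have $|f(S)-f(S')| \leq 1$, so the ratio is at most $e^{\epsilon}$ pointwise.

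\textbf{Integrating.} To finish, I would integrate this pointwise bound against the indicator of $O$:
\[
\Pr[f(S)+\boldeta \in O] \;=\; \int_O p_S(z)\,dz \;\leq\; e^{\epsilon}\!\int_O p_{S'}(z)\,dz \;=\; e^{\epsilon}\Pr[f(S')+\boldeta \in O].
\]
There is no real obstacle here: because the Laplace density has full support on $\mathbb{R}$ with a symmetric exponential tail, the density ratio is bounded everywhere and no additive $\delta$ slack is needed. The entire argument is a one-line ratio calculation together with the triangle inequality and the sensitivity hypothesis.
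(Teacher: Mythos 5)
Your proof is correct and is exactly the standard argument for the Laplace mechanism (the paper simply cites Theorem 3.6 of the Dwork--Roth book rather than reproving it): a pointwise density-ratio bound via the triangle inequality and the sensitivity hypothesis, followed by integration over $O$. No gaps.
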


\begin{theorem}[DP-Selection \cite{10.1145/1526709.1526733,10.1145/2840728.2840747,BDRS18}]
    \label{thm:select-DP}
    There exists an $(\epsilon,\delta)$-differentially private algorithm $\textsf{DP-Select}$ such that on input $S \in X^n$, the algorithm has the following property: If $n \geq O(\log(1/\delta)/\epsilon)$ and the most common of the input dataset appears at least $0.2n$ more times than any other element, then $\textsf{DP-Select}$ outputs this element with probability $1$. 
\end{theorem}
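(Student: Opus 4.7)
The plan is to construct $\textsf{DP-Select}$ via a Propose-Test-Release style algorithm that privately tests whether the margin between the most common element of $S$ and the runner-up exceeds a threshold, and when the test passes deterministically returns the top element. Concretely, given $S \in X^n$, let $c_x := |\{i : S_i = x\}|$ be the count of $x$ in $S$, let $x^\star$ denote the element with the largest count, and define the gap $g(S) := c_{x^\star} - \max_{x \neq x^\star} c_x$. Draw $\boldeta \sim \mathrm{Lap}(2/\epsilon)$, form $\tilde{g} := g(S) + \boldeta$, and set the threshold $T := (2/\epsilon)\log(1/\delta) + 3$. If $\tilde{g} > T$, output $x^\star$; otherwise output $\bot$ (which we may replace by an arbitrary default element if $\bot \notin X$).

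For privacy, the sensitivity of $g$ as a function of $S$ is at most $2$: altering a single coordinate of $S$ changes any individual count by at most $1$, so changes $g$ by at most $2$. Fix neighbors $S, S'$ and an output $y$. If $y = \bot$, the event $\{\mcM = \bot\}$ is determined by $\{\tilde{g} \leq T\}$, and \Cref{lem:L1-and-Lap} (applied with noise scale $2/\epsilon$ to the sensitivity-$2$ function $g$) yields the required multiplicative $e^\epsilon$ bound. If $y \in X$ and $y$ is the argmax on both $S$ and $S'$, the event $\{\mcM = y\}$ coincides with $\{\tilde{g} > T\}$ and the same Laplace argument applies. The nontrivial case is when the argmax differs across $S$ and $S'$: then on whichever of $S, S'$ has argmax $y$, the sensitivity bound forces $g \leq 2$, so the probability of $\tilde{g} > T$ is at most $\tfrac{1}{2}e^{-\epsilon(T-2)/2} \leq \delta$ by our choice of $T$. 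Combining the three cases gives $(\epsilon,\delta)$-DP.

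For utility, suppose $g(S) \geq 0.2n$. Then $\Pr[\mcM(S) = x^\star] = \Pr[\boldeta > T - g(S)]$, and once $n$ exceeds a sufficiently large constant multiple of $\log(1/\delta)/\epsilon$, we have $g(S) \geq 2T$, giving success probability at least $1 - \tfrac{1}{2}e^{-\epsilon T/2} \geq 1 - \delta$. The theorem's literal ``probability $1$'' can either be interpreted as this high-probability bound (with $\delta$ absorbed into the privacy parameter) or obtained exactly by replacing $\boldeta$ with a bounded truncated-Laplace variant, which produces $x^\star$ deterministically whenever the noise cannot flip the sign of $g(S) - T$. The main subtlety in the argument is the privacy analysis when the argmax differs across neighbors; using a single gap-based test (rather than per-element tests or report-noisy-max) avoids any double-counting of failure probabilities across the possible outputs $y$.
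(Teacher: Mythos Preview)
The paper does not prove this statement itself; it is quoted as a known result from the cited references. Your Propose--Test--Release construction is exactly the mechanism underlying those references, so the approach is the intended one.

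Two remarks on the execution. First, your privacy case-split is right, but you phrase it in terms of singleton outputs $y$; for $(\eps,\delta)$-DP you must control all events. This is harmless here because on any fixed input the mechanism's support is $\{x^\star(S),\bot\}$, so every event reduces to one of your cases, but say so explicitly. Second, the ``probability $1$'' in the statement is used literally downstream: in the proof of \Cref{lem:Pick-Heavy-full}, the conclusion ``the algorithm either returns $\bot$ or $h^\star$'' relies on $\textsf{DP-Select}$ being deterministic under the gap condition. Your first workaround, reinterpreting probability $1$ as $1-\delta$, therefore does not suffice. Your second workaround is the correct one: replace the Laplace noise by truncated Laplace supported on $[-B,B]$ with $B=\Theta(\log(1/\delta)/\eps)$ and take $T>B+2$. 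Then in the argmax-change case both neighbors have $g\le 2$, so $\tilde g\le 2+B<T$ and the output is $\bot$ deterministically (no $\delta$ is spent here); in the same-argmax case the output is a post-processing of the truncated-Laplace release of the sensitivity-$2$ statistic $g$, which is $(\eps,\delta)$-DP; and whenever $g(S)\ge 0.2n\ge T+B$ the test passes with probability exactly $1$. You should write this out rather than leave it as a parenthetical.
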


With the above results, in mind we present our slightly modified DP-selection algorithm. Which as eluded above returns $\bot$ with high probability when no element appears $60\%$ of the time.

\begin{figure}[ht]
  \captionsetup{width=.9\linewidth}

    \begin{tcolorbox}[colback = white,arc=1mm, boxrule=0.25mm]
    \vspace{2pt}
   
    \textbf{Input:} A set of $S$ of $n$ elements from $X$ and parameters $\epsilon,\delta>0$.  
 \begin{enumerate}[nolistsep,itemsep=2pt]
        \item Let $\textsf{max-freq}(S)$ be the maximum frequency of any element in $S$.
        \item Let $$\mathbf{m}=\textsf{max-freq}(S)+\boldeta \text{ and  }\bh=\textsf{DP-Select}(S,\epsilon,\delta)$$
        where $\boldeta \sim Lap(1/\epsilon)$.  
        
        \item Return $\bh$ if $\mathbf{m} > 0.7n$ and $\bot$ otherwise. 
    \end{enumerate}
    \end{tcolorbox}
\caption{\textsf{Pick-Heavy} : A differentially private algorithm to find an very frequent element of a dataset.  }
\label{fig:simple-block}
\end{figure}

\begin{lemma}\label{lem:Pick-Heavy-full}
   The algorithm \textsf{Pick-Heavy} (See \Cref{fig:simple-block}) is $(2\epsilon,2\delta)$-DP. Furthermore if $$n \geq  O\left( \frac{\log(1/\beta)+\log(1/\delta)}{\epsilon}\right)$$
   the we have the following:
   \begin{enumerate}
   \item If an element appears at least $0.6n$ times in $S$, the algorithm outputs $\bot$ or $h$ at least $1-\beta$. 
    \item If an element $h$ appears at least $0.8n$ times in $S$, the algorithm outputs $h$ with probability at least $1-\beta$. 
    \item  If no element appears more than $0.6n$ times in $h$, the algorithm outputs $\bot$ with probability at least $1-\beta$.
      \end{enumerate}
\end{lemma}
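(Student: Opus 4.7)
I would split the argument into a privacy piece and a three-case correctness analysis driven by the value of $\textsf{max-freq}(S)$.

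For privacy, the function $\textsf{max-freq}: X^n \to \mathbb{R}$ has global sensitivity $1$: changing a single entry of $S$ can alter the count of any individual element by at most one, hence shifts the maximum count by at most one. Consequently, by \Cref{lem:L1-and-Lap}, the release of $\mathbf{m} = \textsf{max-freq}(S) + \boldeta$ with $\boldeta \sim \mathrm{Lap}(1/\epsilon)$ is $(\epsilon, 0)$-DP. Combining this with the $(\epsilon, \delta)$-DP guarantee of $\bh = \textsf{DP-Select}(S, \epsilon, \delta)$ from \Cref{thm:select-DP} via simple composition (\Cref{lem:simple_composition}), the pair $(\mathbf{m}, \bh)$ is $(2\epsilon, 2\delta)$-DP. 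Since the final output is a deterministic post-processing of this pair, the algorithm is $(2\epsilon, 2\delta)$-DP as claimed.

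For correctness, I would choose the constant hidden in the sample size bound so that simultaneously (i) $n$ is large enough for \Cref{thm:select-DP} to apply and (ii) $\tfrac{1}{2}\exp(-0.1 \epsilon n) \leq \beta$, which controls the Laplace tail via $\Pr_{\boldeta \sim \mathrm{Lap}(1/\epsilon)}[\boldeta > t] = \tfrac{1}{2}\exp(-\epsilon t)$ for $t \geq 0$. The three correctness cases then follow quickly. For Part 1, if some element $h$ appears at least $0.6n$ times in $S$, then by pigeonhole every other element appears at most $0.4n$ times, so $h$ exceeds the second most common element by at least $0.2n$, and \Cref{thm:select-DP} guarantees $\bh = h$ with probability one; hence the output is $\bot$ when $\mathbf{m} \leq 0.7n$ and equals $h$ otherwise, so it always lies in $\{\bot, h\}$. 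For Part 2, if some element $h$ appears at least $0.8n$ times then the same DP-Select argument gives $\bh = h$ with probability one while $\Pr[\mathbf{m} \leq 0.7n] = \Pr[\boldeta \leq -0.1n] \leq \beta$, so the algorithm returns $h$ with probability at least $1 - \beta$. For Part 3, if no element appears more than $0.6n$ times then $\textsf{max-freq}(S) \leq 0.6n$ and $\Pr[\mathbf{m} > 0.7n] = \Pr[\boldeta > 0.1n] \leq \beta$, so the algorithm returns $\bot$ with probability at least $1 - \beta$.

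I do not anticipate a substantive obstacle here; the proof is essentially an orchestration of standard building blocks. The only point requiring slight care is to verify that the pigeonhole gap of exactly $0.2n$ in Part 1 matches the threshold required by $\textsf{DP-Select}$, so that even the worst-case configuration (an element at exactly $0.6n$ and a second element at exactly $0.4n$) lies on the correct side of the inequality and DP-Select returns $h$ deterministically.
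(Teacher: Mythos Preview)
Your proposal is correct and follows essentially the same route as the paper: privacy via sensitivity of $\textsf{max-freq}$, the Laplace mechanism, composition with \textsf{DP-Select}, and post-processing; correctness via the $0.2n$ gap forcing \textsf{DP-Select} to return the heavy element deterministically, together with a Laplace tail bound on $|\boldeta|$ at threshold $0.1n$. The only cosmetic difference is that the paper phrases the tail bound as $\Pr[|\boldeta|\ge t/\epsilon]=e^{-t}$ rather than your one-sided version, and your observation that Part~1 actually holds with probability $1$ (not just $1-\beta$) is sharper than the lemma's stated bound but matches what the paper's proof also implicitly shows.
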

\begin{proof}

    First, note that since, counting the maximal frequency of an element in $S$ is a sensitivity $1$ query, we have by \Cref{lem:L1-and-Lap} that line one line $2$ computing $\boldsymbol{m}$ is done in  $(\epsilon,0)$-differentially private way. 
    Furthermore, since $\textsf{DP-Select}$ is an $(\epsilon, \delta)$-differentially private algorithm, we have that line $2$, by \Cref{lem:simple_composition}, corresponds to running a $(2\epsilon, 2\delta)$-DP algorithm to get the pair $(\boldsymbol{m},\bh)$. Finally line $3$ is just post-processing this pair, so we can conclude that the algorithm is $(2\epsilon, 2\delta)$.

We now turn to the correctness of the algorithm. It is well known (See Fact 3.7 in \cite{DR14book}) that:
    $$\Prx_{\boldeta \sim Lap(1/\epsilon)}[|\boldeta| \geq t/\epsilon]=e^{-t}.$$   

Let $h^\star$ be the most frequent element in $S$. If $h^\star$ appears less than $0.6n$ times, the algorithm will return $\bot$ as long as $\boldeta <0.1n$. 
The proof of the Item (3) immediately follows from our choice of $n \geq O(\log(1/\beta)/\epsilon)$. 

Now assume $h^\star$ appears at least $0.6n$ times. Then, $h^\star$ appears at least $0.2n$ more times than every other element in $S$. Whenever this happens, the algorithm either returns $\bot$ or $h^\star$. From this Item (1) immediately follows.

Finally if $h^\star$ appears at least $0.8n$ times.  By our choice of $n \geq O(\log(1/\beta)/\epsilon)$, we have $\boldeta > -0.1n$, with probability $1-\beta$. As such the algorithm will return $h^\star$ with probability at least $1-\beta$, proving Item (2). 
\end{proof}

Using $\textsf{Pick-Heavy}$, we can now prove \Cref{lem:rep-to-DP}.

\begin{figure}[ht]

  \captionsetup{width=.9\linewidth}

    \begin{tcolorbox}[colback = white,arc=1mm, boxrule=0.25mm]
    \vspace{2pt}
   
    \textbf{Input:} A replicable algorithm $\mcM:X^n \to Y$. Sample to an unknown distribution $\mcD$ over $X$. And parameters $\epsilon, \delta, \beta >0$. \\
    \textbf{Initialization: } Let $k=O(\log(1/\beta))$ and  $\ell=O(\frac{\log( \beta^{-1} \cdot \delta^{-1} )}{\epsilon})$\\
 \begin{enumerate}[nolistsep,itemsep=2pt]
        \item  Draw $\br_{1}, \ldots, \br_{k}$ random strings for $\mcM$.
        \item For each $i \in [k] :$
        \begin{itemize}
            \item Draw sets $\bS_{i,1}, \ldots, \bS_{i,\ell}$ each made of $n$ from $\mcD$. 
            \item Let $\bT^i=[\by_{i,1}, \ldots, \by_{i,\ell}]$ where $\by_{i,j}=\mcM(\bS_{i,j}, \br_i)$.
            \item Run $\textsf{Pick-Heavy} (\bT^i, \epsilon/2, \delta/2)$ to get $\bh^i$.
        \end{itemize}
        \item Output $\bh^\star$ as a uniformly random $\bh^i$ which isn't $\bot$. If no such $\bh^i$ exists, output a uniformly random element of $Y$.
    \end{enumerate}
    
    \end{tcolorbox}
\caption{Replicability to DP reduction}
\label{fig:RepToDP}
\end{figure} 

\begin{proof}[Proof of \Cref{lem:rep-to-DP}]
Given $\mcM$, we consider the algorithm $\mcM'$ obtained by using the transformation of figure \Cref{fig:RepToDP}. We first prove $(\epsilon,\delta)$-differential privacy. For a given draw of $\br_1 \ldots, \br_k$, two neighboring input dataset $S,S'$ can differs in at most one of the $\by_{i,j}$. Which means there is a unique $i^\star$ where $S,S'$ can differ on $\bT^{i^\star}$. By \Cref{lem:Pick-Heavy-full}, we have that each call to \textsf{Pick-Heavy} is $(2\epsilon,2\delta)$-differentially private. So by our choice of parameters for any $O \subseteq X \cup \{\bot\}$ we have:

$$\Pr[\bh^{i^\star} \in O \text{ for } S] \leq e^{\epsilon}\Pr[\bh^{i^\star} \in O \text{ for } S']+\delta.$$

On the other hand, all other $\bh^j$, $j \neq i^\star$ are identically distributed. So clearly on line $3$ for any $O \subseteq X$ we have  
$$\Pr[\bh^\star \in O \text{ for }S ] \leq e^{\epsilon}\Pr[\bh^{*} \in O \text{ for }S']+\delta.$$

We now turn the correctness of the algorithm. We assume $\beta \leq 0.2$, as otherwise the bound of $5\beta$ on the correctness probability is trivial. Fix a distribution $\mcD$ over $X$ and let $Y^*$ be the set of ``good answers'' to the task $\mcT$ under the distribution $\mcD$. 

We let $C$ be the set of possible coin draws for $\mcM$ and $\mcC$ denote the distribution of the coin draws. Since $\mcM$ is replicable, we have if we draw a random string $\br \sim \mcC$, with probability at least $0.9$, it is $0.9$ good. Hence let $G$ be the set of $0.9$ good strings in $C$. We denote by $B$ the set of strings $r$ such that $ \Prx_{S \sim \mcD^n}[\mcM(\bS, \br) \not \in Y^*].$ Since $\mcM$'s output is correct with probability at least $1-\beta$, we have $$\Prx_{\br \sim \mcC} \Big[ \br \in B\Big] \leq 2\beta. $$
Finally, we let $P=C \setminus B$, we have $\Prx_{\br \sim \mcC}[\br \in P] \geq 0.9-2\beta \geq 0.5$.  We will consider the distribution of $\bh^i$ depending on wether $\br_i \in B, P$ or $R \setminus P,B$. 
\begin{itemize}
    \item First, assume we have $r \in P$. This implies that $r$ is $0.1$-replicable and the canonical answer $s_r$ is in $Y^*$. Given $\ell$ independently drawn $\bS_{1}, \ldots, \bS_{\ell} \sim \mcD^n$, by a standard Chernoff bound (and making the hidden constant in $\ell$ large enough), 
we have that with probability at least $1-\beta/40$ less than $0.8\ell$ of the elements in $\bT=[\mcM(\bS^1, r), \ldots, \mcM(\bS_\ell, r)]$ are $s_r$. Assuming this holds, by making the hidden constant in $\ell$ large enough, we can conclude using \Cref{lem:Pick-Heavy-full} (Item $(2)$) that $\textsf{Pick-Heavy}(\bT, \epsilon/2, \delta/2)$ returns $s_r$ with probability at least $1-e^{-0.1\ell\epsilon}\geq 1-\beta/40$. This implies that if $\br_i \in P$, line $2$ of \Cref{fig:RepToDP} returns $\bh^i \in Y^*$ with probability at least $1-\beta/20 \geq 0.99$.

\item If $r \in B$, given $\ell$ independently drawn $\bS_{1}, \ldots, \bS_{\ell} \sim \mcD^n$, in the worst case we have that $\textsf{Pick-Heavy}[\mcM(\bS^1, r) \ldots, \mcM(\bS_\ell, r)]$ always returns $\bh \in Y \setminus Y^*$. 
\item If $r \in R \setminus (B \cup G)$. Then, we must have $\Pr_{\bS \sim \mcD^n}[\mcM(\bS, r) \not \in Y^*] \leq 0.5$. Hence, given $\ell$ independently drawn $\bS_{1}, \ldots, \bS_{\ell} \sim \mcD^n$, it follows by a standard Chernoff bound, that the probability there exists an element $y \in Y \setminus Y^*$ which appears at least $0.6n$ times in $\bT=[\mcM(\bS^1, r), \ldots, \mcM(\bS_\ell, r)]$ is at most $\beta/10$. Assuming no bad answer appears more than $0.6n$ times, we consider two cases:
\begin{itemize}
    \item If some element $h$ appears in $\bT$ $0.6n$ times, this element must be in $Y^*$. In this case \textsf{Pick-Heavy} can only return $h$ or $\bot$. (See Item (1) of \Cref{lem:Pick-Heavy-full}). 
    \item If no element appears in $\bT$ more than $0.6n$ times, \textsf{Pick-Heavy}($\bT$) returns $\bot$ with failure probability $e^{-0.1\ell\epsilon} \leq \beta/10$.(See Item (3) of \Cref{lem:Pick-Heavy-full})  
\end{itemize}
This implies that if $\br_i \in R \setminus (B \cup P)$, line $2$ of \Cref{fig:RepToDP} returns $\bh^i \in Y \setminus Y^*$ with probability at most $\beta/5$.

\end{itemize}

From the above discussion, we have that for each $i \in [k]$ on line $3$ of \Cref{fig:RepToDP}: 
$$\Pr[\bh^i \in Y^*] \geq \Pr[\br_i \in P](1-\beta/20) \geq 0.49$$
\text{ and }
$$\Pr[\bh^i \in Y \setminus Y^*]  \leq \Pr[\br_i \in R \setminus (B \cup P) ]\cdot \beta/5+\Pr[\br_i \in B ] \leq \beta/5+2\beta \leq 2.2\beta.$$

As such the probability the algorithm outputs a random value in $Y$, meaning every $\bh^i=\bot$ is at most $(1-0.6)^{O(k)}\leq \beta/10$. Otherwise, the algorithm returns some $\bh^i \neq \bot$ in which case the probability we return a bad answer is at most $2.2\beta/0.49 \leq 4.9\beta$. So the algorithm of \Cref{fig:RepToDP} has a failure probability of $4.9\beta+\beta/10 \leq 5\beta$.
\end{proof}

\subsection{The proof of \Cref{lem:DP-to-DP}}\label{sec:DP-to-DP}
Finally, we sketch the proof of \Cref{lem:DP-to-DP} (recalled below for convenience). As we already we mentioned, the result is quite similar to that Theorem 6.2 of \cite{BGHILPSS23}. The proof of \Cref{lem:DP-to-DP} is identical, except for the  last step where instead of calling the ``Replicability to DP'' theorem used by \cite{BGHILPSS23} (Theorem 3.1 in \cite{BGHILPSS23}) we instead use \Cref{lem:rep-to-DP} presented earlier in this section. 
\DPtoDP*
\begin{proof}[Sketch of the proof of \Cref{lem:DP-to-DP}]
Let $\alpha$  be a enough small constant and $\mcM: X^n \to Y$ be a $(0.1, \alpha^2/n^3)$ differentially private algorithm. Following the first two steps of the proof of Theorem 6.2 of \cite{BGHILPSS23}, we can get an algorithm $\mcM':X^m \to Y$ which is $(\beta,\beta)$-equivalent to $\mcM$ and using  $m:=O(n^2)$ samples. Furthermore, $\mcM'$ is $(0.1, 0.1)$-replicable. We can now apply \Cref{lem:rep-to-DP} to get a $(\beta,5\beta)$-equivalent algorithm $\mcM^\star=X^{m'} \to Y$ which is $(\epsilon,\delta)$-differentially private using
    \begin{equation*}
        m'=m \cdot O\Bigg(\log(1/\beta) \cdot \frac{\log(1/\beta)+\log(1/\delta)}{\epsilon}\Bigg) \text{ samples.} \qedhere
    \end{equation*}
\end{proof}

\section{Rényi DP}
\label{sec:Reyni}

In this section, we show that Rényi DP fits our axioms, that our axioms imply Rényi DP (up to a $\log \log |Y|$ factor), and that this additional factor is necessary.
\begin{definition}[Rényi differential privacy, \cite{M17}]
    For any $\alpha > 1, \eps > 0$, an algorithm $\mcM:X^n \to Y$ is \emph{$(\alpha, \eps)$-RDP} if, for every $S, S' \in X^n$ differing in only one of the $n$ coordinates,
    \begin{equation*}
        \frac{1}{\alpha - 1} \cdot \ln\paren*{\Ex_{\by \sim \mcM(S')}\bracket*{\paren*{\frac{\Pr[\mcM(S)=\by]}{\Pr[\mcM(S')=\by]}}^\alpha}} \leq \eps.
    \end{equation*}
\end{definition}

We recall some facts about Rényi-DP:
\begin{fact}[\cite{M17}]\label{fact:RDP}
For any $\alpha>0$, Réyni-differential privacy has the following properties:

    \begin{itemize}
        \item Let $\alpha>1$. If $\mcM:X^n \to Y$ is $(\alpha,\epsilon)$-RDP, then for any $S,S' \in X^n$ differing in exactly one coordinate and $Y' \subseteq Y$ we have:

        $$\Pr[\mcM(S) \in Y] \leq \left(e^{\epsilon} \Pr[\mcM(S') \in Y'] \right)^{\frac{\alpha-1}{\alpha}} $$
        \item (Post-Processing) If $\mcM: X^n \to Y$ is $(\alpha,\epsilon)$-RDP, then for any $\mcM' : Y  \to Z$, the algorithm $\mcM' \circ \mcM$ is $(\alpha,\epsilon)$-RDP. 
        \item (Non-Adaptive Composition) If $\mcM, \mcM'$ are $(\alpha,\epsilon)$-RDP algorithm then their non-adaptive composition $(\mcM,\mcM')$ is $(\alpha,2\epsilon)$-RDP. 
        \item If $M$ is $(\epsilon,0)$-DP, then it is $(\alpha, \epsilon)$-RDP. 
    \end{itemize}
    
\end{fact}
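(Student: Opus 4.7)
The plan is to prove each of the four properties of Rényi DP by reducing to standard facts about the Rényi divergence $D_\alpha(P \| Q) \coloneqq \frac{1}{\alpha-1}\ln \Ex_{\by \sim Q}[(P(\by)/Q(\by))^\alpha]$. Note that by definition $\mcM$ is $(\alpha,\epsilon)$-RDP exactly when $D_\alpha(\mcM(S) \| \mcM(S')) \leq \epsilon$ for all neighboring $S, S'$. With this in hand, three of the four parts become invocations of standard properties of $D_\alpha$ and only the first requires a short calculation.

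For the first part (RDP implies a DP-like statement), fix neighboring $S, S'$ and $Y' \subseteq Y$, and write $r(y) \coloneqq \Pr[\mcM(S)=y]/\Pr[\mcM(S')=y]$. Then $\Pr[\mcM(S) \in Y'] = \sum_{y \in Y'} r(y)\Pr[\mcM(S')=y]$. Splitting $r(y)\Pr[\mcM(S')=y]$ as $\bigl(r(y)\Pr[\mcM(S')=y]^{1/\alpha}\bigr)\cdot \Pr[\mcM(S')=y]^{(\alpha-1)/\alpha}$ and applying Hölder's inequality with exponents $\alpha$ and $\alpha/(\alpha-1)$, the first factor becomes $\bigl(\sum_{y \in Y'} r(y)^\alpha \Pr[\mcM(S')=y]\bigr)^{1/\alpha}$, which is at most $e^{\epsilon(\alpha-1)/\alpha}$ by the RDP assumption (after enlarging the sum to all of $Y$), and the second factor is $\Pr[\mcM(S') \in Y']^{(\alpha-1)/\alpha}$. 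Multiplying gives the claimed bound. This Hölder step is the main (and only) substantive calculation.

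For post-processing, let $\mcM$ be $(\alpha,\epsilon)$-RDP and $\mcM':Y\to Z$ any (possibly randomized) map. The data-processing inequality for Rényi divergence states that $D_\alpha(\mcM'_\star P \| \mcM'_\star Q) \leq D_\alpha(P \| Q)$ for any Markov kernel $\mcM'_\star$; applying this to $P = \mcM(S)$ and $Q=\mcM(S')$ for neighboring $S, S'$ immediately gives $\mcM' \circ \mcM$ is $(\alpha, \epsilon)$-RDP. For non-adaptive composition, use the tensorization identity $D_\alpha(P_1\otimes P_2 \| Q_1 \otimes Q_2) = D_\alpha(P_1\| Q_1) + D_\alpha(P_2 \| Q_2)$, which follows directly from the definition and independence; applied to $P_i = \mcM^i(S)$ and $Q_i = \mcM^i(S')$ this yields the $2\epsilon$ bound.

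Finally, for the implication that $(\epsilon,0)$-DP entails $(\alpha,\epsilon)$-RDP, pure DP says $r(y) \leq e^\epsilon$ pointwise. Rewrite $\Ex_{\by \sim \mcM(S')}[r(\by)^\alpha] = \Ex_{\by \sim \mcM(S)}[r(\by)^{\alpha-1}] \leq e^{(\alpha-1)\epsilon}$, so taking $\tfrac{1}{\alpha-1}\ln$ yields $D_\alpha \leq \epsilon$. The main obstacle, such as it is, is getting the Hölder split right in part one; everything else is a one-line appeal to a standard property of $D_\alpha$.
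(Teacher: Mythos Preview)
Your proof is correct. The paper itself does not prove this statement: it is stated as a \emph{Fact} with a citation to \cite{M17} and no argument is given. Your write-up supplies the standard proofs (H\"older for the probability bound, data-processing and tensorization of $D_\alpha$ for post-processing and composition, and the pointwise likelihood-ratio bound for the pure-DP implication), which are exactly the arguments one finds in Mironov's original paper, so there is nothing to compare beyond noting that you filled in what the paper left as a black-box citation.
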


\subsection{Axioms imply Rényi DP}
We prove our axioms imply Rényi DP, albeit with now a (small) dependency on $|Y|$. In particular we prove the following:

\begin{theorem}
    \label{thm:axioms-to-RDP}
    Let $\mcP$ be any privacy measure satisfying \Cref{axiom:amplification-formal,axiom:blalant-formal,axiom:preprocessing-formal,axiom:strong-composition-formal} and $\mcM : X^n \to Y$ be any $\mcP$-private algorithm. For any $\eps, \delta, \beta > 0$ and, $c$ the constant in \Cref{axiom:strong-composition-formal} and $p$ the polynomial in \Cref{axiom:amplification-formal}, define
    \begin{equation*}
        m' \coloneqq \tilde{O}\left(\frac{r^2 \cdot n^2 \cdot \log(\log(|Y|))}{\beta^2 \cdot \epsilon}\right) \text{ where } r= \max \left(n \cdot p(n,1/\beta), n^{\frac{1}{1-c}} \right).
    \end{equation*}
    there is an $(2,\eps)$-Rényi differentially private algorithm $\mcM'$ using $m'$ samples that is $(\beta, \beta' \coloneqq O(\beta))$-equivalent to $\mcM$. 
\end{theorem}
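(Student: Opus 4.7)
The plan is to mirror the two-stage structure of the proof of Theorem \ref{thm:axioms-to-dp-formal}. The first stage, TV-stabilization, is completely unchanged: apply Lemma \ref{lem:axioms-to-tv-formal} to $\mcM$ to obtain a $\rho^\star$-TV-stable algorithm $\mcM''$ that is $(\beta,O(\beta))$-equivalent to $\mcM$ and uses $\tilde{O}(r^2 n^2/\beta^2)$ samples. This accounts for the entire $r^2 n^2/\beta^2$ factor in the final sample bound. All that remains is to prove a Rényi-DP analogue of Lemma \ref{lem:TV-to-DP}: any $\rho^\star$-TV-stable $\mcM''$ can be converted into a $(\beta,O(\beta))$-equivalent $(2,\epsilon)$-RDP algorithm at a $\tilde{O}(\log\log|Y|/\epsilon)$ multiplicative cost in samples.

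For this TV-to-RDP conversion, I would reuse the intermediate notion of replicability from Appendix \ref{sec:TV-to-DP}. Invoking Lemma \ref{lem:tv-to-rep} turns $\mcM''$ into a $(0.1,0.1)$-replicable algorithm $\mcM'''$ with the same sample complexity. The new ingredient is a Rényi-DP replacement for the replicability-to-DP amplifier of Lemma \ref{lem:rep-to-DP}. Following the template of Figure \ref{fig:RepToDP}, draw $k=O(\log(1/\beta))$ independent random strings $\br_1,\ldots,\br_k$ for $\mcM'''$ and, for each $i$, evaluate $\mcM'''(\cdot,\br_i)$ on $\ell$ fresh independent samples from $\mcD^n$ to obtain lists $\bT^i=(\by_{i,1},\ldots,\by_{i,\ell})$. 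The job of the amplifier is then to extract, from each $\bT^i$, either the canonical answer $s_{\br_i}$ of $\br_i$ (if it exists) or the symbol $\bot$, using a Rényi-DP subroutine in place of the Laplace-based $\textsf{Pick-Heavy}$.

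The main obstacle, and the only place where the result differs quantitatively from Theorem \ref{thm:axioms-to-dp-formal}, is the design of this RDP frequent-element selection subroutine with $\ell=\tilde{O}(\log\log|Y|/\epsilon)$ samples. The standard report-noisy-max or exponential-mechanism routes give $\log|Y|$ rather than $\log\log|Y|$, so I would instead use an iterative refinement strategy: repeatedly apply a constant-$\epsilon$ RDP selection on a coarse bucketing of the current candidate pool, narrowing to the winning bucket at each step, so that only $O(\log\log|Y|)$ rounds suffice to isolate a single candidate. Each round adds $O(1)$ to the Rényi-2 divergence by the non-adaptive composition property of Fact \ref{fact:RDP}, so rescaling the per-round privacy parameter by $1/\log\log|Y|$ yields overall $(2,\epsilon)$-RDP at the cost of a $\log\log|Y|/\epsilon$ factor in the per-round sample count. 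An auxiliary Laplace-style frequency threshold check (with noise scaled to preserve $(2,\epsilon)$-RDP) handles the $\bot$ output when no element is heavy. The $\log\log|Y|$ factor in the final sample bound is necessary, which matches the separation claimed in Lemma \ref{lem:sep-DP-RDP}.

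Given this subroutine, the correctness and privacy analysis follow the template of Lemma \ref{lem:rep-to-DP} essentially verbatim. Partition the random strings $\br_1,\ldots,\br_k$ into good (the set $P$ of $0.1$-replicable coins whose canonical answer is valid for the task), bad (set $B$), and in-between; a constant fraction of $\br_i$ lie in $P$, so outputting a uniformly random non-$\bot$ answer returns a valid answer with probability $1-O(\beta)$. For the privacy guarantee, the draws of $\br_1,\ldots,\br_k$ are independent of the input dataset, and for any two neighboring inputs the samples used in all but one of the $k$ selection calls are identically distributed; hence the entire composed algorithm inherits the $(2,\epsilon)$-RDP bound of the single differing call, completing the proof.
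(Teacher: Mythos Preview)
Your high-level structure is correct and matches the paper exactly: TV-stabilize via \Cref{lem:axioms-to-tv-formal}, pass to a $(0.1,0.1)$-replicable algorithm via \Cref{lem:tv-to-rep}, then run the amplifier of \Cref{fig:RepToDP} with an RDP selection subroutine replacing \textsf{DP-Select}. The correctness and privacy bookkeeping in your final paragraph is also right.

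The gap is in your proposed selection subroutine. The ``iterative refinement via bucketing'' sketch cannot achieve $\ell=\tilde{O}(\log\log|Y|/\epsilon)$ as described. In any such scheme, round $i$ does selection over $B_i$ buckets, and accuracy of that round (exponential mechanism, report-noisy-max, or sparse-vector) forces $\ell \geq \Omega(\log B_i/\epsilon_i)$. Since the buckets must eventually isolate a single element, $\sum_i \log B_i \geq \log|Y|$, and since the Rényi budgets add under composition, $\sum_i \epsilon_i \leq \epsilon$. Combining these gives $\ell \geq \Omega(\log|Y|/\epsilon)$ no matter how you choose the $B_i$ and $\epsilon_i$; taking $k=O(\log\log|Y|)$ rounds with square-root shrinkage does not escape this, because the first round already needs $\ell \geq \Omega(\log|Y|)$ for accuracy.

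The paper sidesteps this entirely by citing the stable selection algorithm of \cite{BDRS18} as a black box (\Cref{thm:stable_select}), which achieves $(2,\epsilon)$-RDP selection with $\ell=\tilde{O}(\log\log|Y|/\epsilon)$ via a stability-based (propose-test-release) argument rather than bucketed refinement. The paper then boosts its $2/3$ success probability to $1-\beta$ by a $k=O(\log(1/\beta))$-fold majority vote (\Cref{thm:RDP-select}), uses this \textsf{RDP-Select} in place of \textsf{DP-Select} inside \textsf{Pick-Heavy}, and otherwise reuses the proof of \Cref{lem:rep-to-DP} verbatim. You should invoke that known primitive rather than attempt to re-derive it.
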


The proof will rely on the following lemma. 

 \begin{restatable}{lemma}{reptoRDP}\label{lem:rep2-to-RDP}
    Given a $(0.1,0.1)$-replicable algorithm $\mcM : X^n \to Y$ solving a statistical task $\mcT$ with failure probability $\beta$, there exists is $(2,\epsilon)$-Rényi differentially private algorithm $\mcM':X^m \to Y$ solving task $\mcT$ with failure probability $5 \beta$ and using $$m=n \cdot \tilde{O}\Bigg( \frac{\log^2(\beta^{-1}) \log(\log(|Y|))}{\epsilon}\Bigg) \text{ samples. }$$  
\end{restatable}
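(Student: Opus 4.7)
The plan is to mirror the proof of \Cref{lem:rep-to-DP} almost verbatim, replacing the approximate-DP subroutine \textsf{Pick-Heavy} with a Rényi-DP analogue, and to choose sample parameters so that the overall Rényi DP guarantee comes from a single invocation of this subroutine. Concretely, I would draw $k = O(\log(1/\beta))$ random seeds $\br_1,\ldots,\br_k$ for $\mcM$, and for each $i$ draw $\ell$ fresh size-$n$ batches $\bS_{i,1},\ldots,\bS_{i,\ell} \iid \mcD^n$, compute the outputs $\by_{i,j} = \mcM(\bS_{i,j}, \br_i)$, run a $(2,\eps)$-Rényi-DP heavy-hitter on the multiset $\bT^i = (\by_{i,1},\ldots,\by_{i,\ell})$ to get $\bh^i \in Y \cup \{\bot\}$, and finally return a uniformly random non-$\bot$ entry (defaulting to a uniform $y\in Y$ if all are $\bot$). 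The total sample count is $k \ell n$, so provided we can take $\ell = \tilde O(\log(1/\beta)\,\log\log|Y|/\eps)$, the claimed bound is met.

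The privacy analysis is shorter than in \Cref{lem:rep-to-DP}: because the $k\ell$ batches use disjoint portions of the input, a single neighboring change in the global dataset alters exactly one $\bS_{i,j}$, hence exactly one $\by_{i,j}$, hence affects exactly one call to the Rényi-DP heavy-hitter. The outer aggregation (choosing a non-$\bot$ $\bh^i$) is then post-processing, which preserves Rényi DP by \Cref{fact:RDP}. Thus no composition is needed and the $(2,\eps)$-RDP guarantee of the full reduction is inherited directly from the $(2,\eps)$-RDP guarantee of a single call.

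The correctness analysis is essentially the case split from \Cref{lem:rep-to-DP}. By replicability, a random seed $\br$ is $0.1$-good with probability $\geq 0.9$, in which case a Chernoff bound implies that at least $0.8\ell$ of the entries in $\bT^i$ coincide with the canonical output $s_{\br}$, so the heavy-hitter returns $s_{\br}$ with high probability. Among such good seeds, the canonical answer is a valid response with probability $\geq 1-O(\beta)$ by correctness of $\mcM$. For seeds that are not $0.1$-good but also not ``bad'' (i.e.\ whose canonical answer is wrong), a Chernoff bound rules out any invalid element from appearing $\geq 0.6\ell$ times, so the heavy-hitter returns $\bot$ with high probability. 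Picking $k = \Theta(\log(1/\beta))$ then makes the probability that every $\bh^i = \bot$ negligible, and the failure rate adds up to $5\beta$ as in the approximate-DP case.

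The main obstacle is constructing the $(2,\eps)$-Rényi-DP heavy-hitter with the stated $\ell = \tilde O(\log(1/\beta)\,\log\log|Y|/\eps)$ sample complexity (and the three-way behavior of \textsf{Pick-Heavy}: return the majority element if one occupies $\geq 0.8\ell$ slots, return $\bot$ if no element occupies $\geq 0.6\ell$, and never return a non-heavy element outside these regimes with probability more than $\beta$). Pure-DP selection costs $\log|Y|/\eps$ and approximate-DP selection costs $\log(1/\delta)/\eps$; here I would use a propose-test-release style mechanism in which the count of the plurality element is perturbed by Gaussian noise of scale $1/\eps$ and compared against a noisy threshold, so that the plurality is revealed only when the count is large enough to make it stable under any single-point change. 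The $\log\log|Y|$ dependence then comes from iterating a constant-sample Rényi-DP stability test across $O(\log\log|Y|)$ doubling thresholds (so that by a union bound the test fails to detect a spurious winner anywhere in $Y$ with sufficiently small probability for the RDP moment bound to go through), and verifying this moment bound is the delicate step. The rest of the proof is then a direct translation of the accounting in \Cref{lem:rep-to-DP}.
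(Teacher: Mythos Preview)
Your overall architecture is exactly what the paper does: run the reduction of \Cref{fig:RepToDP} with the approximate-DP subroutine \textsf{Pick-Heavy} swapped for a R\'enyi-DP analogue, set $k=O(\log(1/\beta))$ and $\ell=\tilde O(\log(1/\beta)\log\log|Y|/\eps)$, and reuse the case analysis of \Cref{lem:rep-to-DP} verbatim. Your privacy accounting (a neighboring change touches exactly one $\bT^i$, so no composition across the $k$ calls, and the rest is post-processing) and correctness analysis are both fine and match the paper.

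The one real gap is the construction of the R\'enyi-DP heavy hitter itself. The paper does not build it from scratch: it invokes the stable-selection algorithm of \cite{BDRS18} (stated in the paper as \Cref{thm:stable_select}), instantiates it with $\rho=\eps/2k$ and $\omega=O(1/\rho)$ to get a $(2,\eps/k)$-RDP selector on $\ell=O((1/\rho)\log(1+\log|Y|))=\tilde O((\log(1/\beta)/\eps)\log\log|Y|)$ samples, boosts its $2/3$ success probability to $1-\beta$ by majority over $k$ repetitions on the \emph{same} input (composing to $(2,\eps)$-RDP), and then plugs this \textsf{RDP-Select} into \textsf{Pick-Heavy} in place of \textsf{DP-Select}. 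The Laplace-noised max-frequency test survives unchanged because pure DP implies $(2,\eps)$-RDP (last item of \Cref{fact:RDP}), so the full \textsf{Pick-Heavy} is $(2,2\eps)$-RDP.

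Your alternative sketch (PTR with Gaussian noise on the plurality count, then ``iterate over $O(\log\log|Y|)$ doubling thresholds'') does not obviously produce the $\log\log|Y|$ bound. Naive PTR/stable-histogram arguments for selection over a domain of size $|Y|$ under Gaussian noise need the threshold high enough that a union bound over $|Y|$ spurious elements succeeds, which yields $\sqrt{\log|Y|}$ rather than $\log\log|Y|$; the $\log\log|Y|$ in \cite{BDRS18} comes from a more delicate argument (roughly, the number of ``near-ties'' that can occur is bounded independently of $|Y|$, so the effective union bound is over far fewer candidates), and your doubling-threshold story does not capture this. If you want to avoid citing \cite{BDRS18} as a black box you would need to reproduce that argument; otherwise, just cite \Cref{thm:stable_select} as the paper does and the rest of your proof goes through.
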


We delay the proof of the above, from which \Cref{thm:axioms-to-RDP} follows easily.  

\begin{proof}[Proof of \Cref{thm:axioms-to-RDP}]
     Let $\rho^\star$ be the constant of \Cref{lem:tv-to-rep}. We first apply \Cref{thm:axioms-to-tv-overview} to $\mcM$ to get a $(\beta,\beta')$-equivalent, $\beta'=O(\beta)$, and $\rho^\star$-TV-Stable algorithm $\mcM': X ^{m} \to Y$. We have that $\mcM'$ uses $$m=\tilde{O}(r^2 \cdot n^2/\beta^2) \text{ where }r=\max\left( n \cdot p(n, 1/\beta), n^{\frac{1}{1-c}} \right)\text{ samples. }$$

    The proof then follows by applying \Cref{lem:rep2-to-RDP} to $\mcM'$. 
\end{proof}

\begin{theorem}[Stable selection from \cite{BDRS18}]\label{thm:stable_select}
    For every $0<\rho<1$, and $\omega \geq 1/\sqrt{\rho}$, there exists an algorithm $\mcM:X^n \to X$ for the  such that for every $1 < \alpha < \omega$, $\mcM$ is $(\alpha, \rho \alpha)$-Rényi-DP.  The algorithm uses 
    $$n=O\left( \omega \log\left(1+\frac{\log\left(|Y|\right)}{\rho \omega}\right)\right) \text{ samples,}$$

    and if the most common element appears at least $0.2n$ more times than any other element, the algorithm outputs this element with probability at least $2/3$. 
\end{theorem}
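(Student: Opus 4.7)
The plan is to construct the algorithm via the standard two-stage ``stable histogram plus noisy-max selection'' paradigm, and then to refine the first stage so as to obtain the doubly-logarithmic dependence on $|Y|$. The first stage is a stable histogram: compute the count $c_y$ of each $y \in X$, add i.i.d.\ Gaussian noise of scale $\sigma_1 = \Theta(1/\sqrt{\rho})$, and retain only those $y$ whose noisy count exceeds a threshold $T$. Since neighboring datasets change exactly one true count by $1$, the noisy histogram has $\ell_2$-sensitivity $1$, so this release is $(\alpha,\alpha/(2\sigma_1^2)) = (\alpha,\rho\alpha/2)$-R\'enyi DP, and thresholding is post-processing by \Cref{fact:RDP}. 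The second stage applies a report-noisy-max with fresh Gaussian noise of scale $\sigma_2 = \Theta(1/\sqrt{\rho})$ on the surviving candidate set, costing another $(\alpha,\rho\alpha/2)$-RDP; by the non-adaptive composition bullet of \Cref{fact:RDP}, the overall algorithm is $(\alpha,\rho\alpha)$-R\'enyi DP.

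For correctness under the $0.2n$-gap assumption, I would set $T$ high enough that, with probability at least $5/6$, no element of true count zero has its noisy count exceed $T$. A Gaussian tail bound together with a union bound over $|Y|$ elements yields $T = \Theta(\sigma_1 \sqrt{\log |Y|})$. The candidate set then has size at most $n/T$, and the most common element $h^\star$ (which exceeds every other count by at least $0.2n$) survives the threshold provided $n \gg T$. The noisy-max stage correctly outputs $h^\star$ whenever $0.2n$ dominates $\sigma_2 \sqrt{\log(n/T)}$. Chaining these constraints gives a naive sample bound of only $n = O(\sqrt{\log |Y|/\rho})$ in the base regime $\omega = 1/\sqrt{\rho}$.

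The main obstacle is sharpening this naive bound to the claimed $n = O(\omega \log(1+\log|Y|/(\rho\omega)))$, which replaces a $\sqrt{\log|Y|}$ factor by a $\log\log|Y|$ factor. I would obtain this by replacing the first stage with a hierarchical Sparse-Vector-style mechanism: partition $X$ into a geometric hierarchy of blocks and run a R\'enyi-DP above-threshold test at each level to identify which block contains any truly heavy element. Each above-threshold test has privacy cost independent of the number of elements tested, provided only a few ``above'' outcomes are produced, and $O(\log|Y|)$ such tests suffice to drill down to a single heavy element. Balancing the noise scales across the hierarchy via R\'enyi composition, and carefully accounting for the union-bound tail probabilities at each level, then yields the claimed sample complexity. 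The technical crux lies in verifying that the hierarchical above-threshold procedure composes as cleanly under R\'enyi DP as the classical Sparse Vector Technique does under approximate DP, and in tuning $T$, $\omega$, and the hierarchy depth so that the $0.2n$-gap is exploited optimally to give the full $\omega \log(1+\log|Y|/(\rho\omega))$ bound across all admissible $\omega \ge 1/\sqrt{\rho}$.
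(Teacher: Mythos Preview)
The paper does not prove this theorem; it is imported verbatim from \cite{BDRS18} (Bun--Dwork--Rothblum--Steinke, ``Composable and Versatile Privacy via Truncated CDP''). So there is no in-paper proof to compare against, and your proposal should be judged against the actual construction in \cite{BDRS18}.

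Your first stage (Gaussian stable histogram plus noisy max) is a reasonable warm-up, and you correctly identify that it only yields $n = O(\sqrt{\log|Y|/\rho})$ rather than the claimed bound. The real gap is in your fix. A hierarchical Sparse-Vector scheme does not obviously do better: the ``pay only for the positive answers'' property of SVT is specific to $(\eps,\delta)$-DP and does \emph{not} carry over cleanly to R\'enyi DP (this is a known obstruction; see, e.g., the R\'enyi-SVT literature). Even granting an idealized R\'enyi SVT, your sketch composes $\Theta(\log|Y|)$ above-threshold tests; splitting the $\rho$ budget across them forces noise scale $\Theta(\sqrt{\log|Y|/\rho})$ at each level, and you are back to the naive bound. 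Nothing in your outline explains how the $\sqrt{\log|Y|}$ collapses to $\log\log|Y|$.

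The actual mechanism in \cite{BDRS18} is completely different and does not use SVT at all. It is a one-shot report-noisy-max where the noise is drawn from the \emph{sinh-normal} distribution (density proportional to $\exp(-\tfrac{1}{2}\operatorname{arsinh}(x/\omega)^2)$, up to scaling). This distribution is engineered so that (i) near the origin it looks Gaussian, which yields the $(\alpha,\rho\alpha)$-RDP guarantee for all $\alpha<\omega$, and (ii) in the tails it decays only exponentially (Laplace-like), so that the maximum of $|Y|$ i.i.d.\ samples is $O(\omega\log(1+\log|Y|/(\rho\omega)))$ rather than $O(\sqrt{\log|Y|/\rho})$. The entire $\log\log|Y|$ saving comes from this tail behavior, not from any hierarchical decomposition. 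Your proposal is missing this idea, and without it the claimed sample bound is out of reach.
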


We will need the following corollary:

\begin{corollary}\label{thm:RDP-select}
    Let $0<\epsilon<1$, there exists an $(2,\epsilon)$-Rényi differentially private algorithm $\textsf{RDP-Select}$ such that on input $S \in X^n$, the algorithm has the following property: If $$n \geq \tilde{O}\left(\frac{\log(1/\beta)}{\epsilon} \log(\log(|Y|)\right),$$and the most common of the input dataset appears at least $0.2n$ more times than any other element, the algorithm $\textsf{RDP-Select}$  outputs that element with probability at least $1-\beta$.
\end{corollary}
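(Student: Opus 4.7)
The plan is to derive $\textsf{RDP-Select}$ by invoking the base algorithm of Theorem~\ref{thm:stable_select} $k \coloneqq \Theta(\log(1/\beta))$ times on the same input, each with $1/k$ of the allotted RDP budget, and then returning the plurality winner among the $k$ outputs. Concretely, set $\rho_0 \coloneqq \epsilon/(2k)$ and $\omega_0 \coloneqq \max(3, 1/\sqrt{\rho_0})$. Invoking Theorem~\ref{thm:stable_select} with these parameters and specializing to $\alpha = 2$ (which is valid since $\omega_0 > 2$ and $\omega_0 \geq 1/\sqrt{\rho_0}$) yields a ``base selector'' $\mcM_0$ which is $(2, 2\rho_0) = (2, \epsilon/k)$-RDP and which, on any input whose most common element has the $0.2n$ margin, returns that element with probability at least $2/3$. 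Its per-call sample size is $n_0 = O(\omega_0 \log(1 + \log|Y|/(\rho_0 \omega_0)))$; a quick calculation with $\omega_0 \approx \sqrt{2k/\epsilon}$ shows $n_0 = \tilde{O}(\sqrt{\log(1/\beta)/\epsilon}\cdot \log\log|Y|)$, which is comfortably within the $n$ in the statement.

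The algorithm $\textsf{RDP-Select}$ then runs $\mcM_0(S)$ exactly $k$ times on the same input $S$ with independent internal randomness, and outputs a mode of the resulting multiset of $k$ outputs (breaking ties arbitrarily). For privacy, each of the $k$ calls is $(2, \epsilon/k)$-RDP; since they are non-adaptive (their inputs and choices do not depend on each other's outputs), iterated application of the non-adaptive composition clause of Fact~\ref{fact:RDP} shows the joint output is $(2, k \cdot \epsilon/k) = (2, \epsilon)$-RDP. The final mode computation is a deterministic function of this joint output, so by the post-processing clause of Fact~\ref{fact:RDP}, $\textsf{RDP-Select}$ is itself $(2, \epsilon)$-RDP.

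For correctness, the assumption that the most common element of $S$ appears at least $0.2n$ more times than any other lets us apply the success guarantee of Theorem~\ref{thm:stable_select} to each call of $\mcM_0$ independently: each call returns the correct element with probability at least $2/3$. A standard Chernoff bound (with the constant hidden in $k = \Theta(\log(1/\beta))$ chosen appropriately) then shows that with probability at least $1 - \beta$, strictly more than $k/2$ of the calls return the correct element; when this happens the correct element is the unique mode, and $\textsf{RDP-Select}$ outputs it.

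The proof is essentially a bookkeeping exercise: the only point needing care is verifying that the twin constraints $\omega_0 > \alpha = 2$ and $\omega_0 \geq 1/\sqrt{\rho_0}$ of Theorem~\ref{thm:stable_select} are simultaneously satisfiable under our choice of parameters, which is exactly why $\omega_0$ is defined via a $\max$ with $3$. Once that is handled, non-adaptive RDP composition and post-processing (both from Fact~\ref{fact:RDP}) plus the Chernoff bound on the majority vote deliver the corollary.
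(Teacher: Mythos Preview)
Your proof is correct and mirrors the paper's own argument: both run $k=\Theta(\log(1/\beta))$ independent copies of the base selector from Theorem~\ref{thm:stable_select} on the same input with per-call budget $(2,\epsilon/k)$, take the plurality, and invoke non-adaptive composition plus post-processing (Fact~\ref{fact:RDP}) for privacy and a Chernoff bound for correctness. The only cosmetic difference is your choice $\omega_0=\max(3,1/\sqrt{\rho_0})$ versus the paper's $\omega=O(1/\rho)$, which yields a slightly smaller per-call sample requirement but is immaterial to the corollary as stated.
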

\begin{proof}
  Let $k=O(\log(1/\beta)), \rho=\epsilon/2k, \omega=O(1/\rho)$ and 

  $$n=O\left( \omega \log\left(1+\frac{\log\left(|Y|\right)}{\rho \omega}\right)\right)=\tilde{O}\left(\frac{\log(1/\beta)}{\epsilon} \log(\log(|Y|)\right).$$

  Let $\mcM:X^n \to X$ be the algorithm obtained from \Cref{thm:RDP-select} using  $\rho, \omega$ as above. We have that $\mcM$ is $(2,\epsilon/k)$-Rényi DP. We will consider the following algorithm $\mcM':X^n \to X$, which on input $S \in X^n$ runs $k$ copies of $\mcM(S)$ in parallel to obtain $(\bh_1, \ldots, \bh_k)$. And finally outputs the most frequent element  among $(\bh_1, \ldots, \bh_k)$ (breaking ties arbitrarily). 

  First, by Non-adaptive Composition and Postprocessing (See \Cref{fact:RDP}), it's easy to see that $\mcM'$ is $(2,\epsilon)$-Rényi DP. Now assume, that there exists some element $h^\star$ which appears at least $0.2n$ more times than any other elements in $S$. It follows by a standard Chernoff bound (and setting the hidden constant in $k$ to be large enough) that with probability at least $1-\beta$ strictly more than half of the $k$ copies of $\mcM$ returned $h^\star$, in which case $\mcM'$ returns $h^\star$. 
\end{proof}

With the above algorithm, the proof of \Cref{lem:rep2-to-RDP} follows the same pattern as the proof of \Cref{lem:rep-to-DP}. We thus only sketch the proof. 
\begin{proof}[Proof sketch of \Cref{lem:rep2-to-RDP}]

If we replace \textsf{DP-Selection} in \textsf{Pick-Heavy} by \textsf{RDP-Selection}, using \Cref{fact:RDP} it's easy to see that the resulting algorithm would be $(2,2\epsilon)$-RDP. Furthermore, the resulting algorithm the same correctness guarantees as the one we gave in \Cref{lem:Pick-Heavy-full} (except the error probability is now $1-2\beta$). Finally, given a $(0.1,0.1)$-replicable algorithm $\mcM:X^n \to Y$ we can use the same reduction as the one of \Cref{fig:RepToDP}, but with the Rényi-DP version of \textsf{RDP-Selection} instead, and setting $\ell=\tilde{O}\left(\frac{\log(1/\beta)}{\epsilon} \log(\log(|Y|)\right)$. By appropriately setting the hidden constants in $\ell$ and $k$ the resulting algorithm would be $(\beta,5\beta)$-equivalent to $\mcM$, and it would use 
\begin{equation*}
n \cdot \tilde{O}\left(\frac{\log(1/\beta)^2}{\epsilon} \log(\log(|Y|)\right) \text{ samples}. \qedhere \end{equation*}
\end{proof}

\subsection{Rényi DP satisfies the axioms}
\label{subsec:Reyni-fits-axioms}

We now define our stability measure $\mcP_{\mathrm{RDP}}$ as follows:
\begin{definition}
    For an algorithm $\mcM$ taking $n$ samples let $\epsilon$ be the smallest value such that $\calM$ is $(2,\epsilon)$-RDP. We set $$\mcP_{\mathrm{RDP}}(\mcM)= \sqrt{ \epsilon}.$$ 
\end{definition}   

\begin{lemma}
    $\mcP_{\mathrm{RDP}}$ respects \Cref{axiom:preprocessing-formal}.  
\end{lemma}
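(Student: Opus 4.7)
The plan is to show that if $\mcM:X^n\to Y$ is $(2,\eps)$-RDP, then for any permutation $\pi:[n]\to[n]$ and any mapping $\sigma:X\to X$, both $\mcM\circ\pi$ and $\mcM\circ\sigma$ are also $(2,\eps)$-RDP. Since $\mcP_{\mathrm{RDP}}(\mcM)=\sqrt{\eps}$ is defined as the square root of the smallest valid $\eps$, this immediately yields $\mcP_{\mathrm{RDP}}(\mcM\circ\pi)\le\mcP_{\mathrm{RDP}}(\mcM)$ and $\mcP_{\mathrm{RDP}}(\mcM\circ\sigma)\le\mcP_{\mathrm{RDP}}(\mcM)$, which is exactly the content of \Cref{axiom:preprocessing-formal}.

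The key observation, which is essentially the same as in the analogous proof for $\mcP_{\mathrm{DP}}$, is that preprocessing preserves the neighboring-pair structure that RDP quantifies over. Concretely, fix any $S,S'\in X^n$ differing in exactly one coordinate. First I would handle reordering: for any permutation $\pi$, the tuples $(S_{\pi(1)},\ldots,S_{\pi(n)})$ and $(S'_{\pi(1)},\ldots,S'_{\pi(n)})$ differ in exactly one coordinate (namely $\pi^{-1}(i)$, where $i$ is the unique index on which $S,S'$ disagreed). Therefore the Rényi divergence $D_2(\mcM\circ\pi(S)\,\|\,\mcM\circ\pi(S'))$ equals $D_2(\mcM(\pi(S))\,\|\,\mcM(\pi(S')))$, which is at most $\eps$ by the assumed RDP of $\mcM$ on this neighboring pair. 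Second, for remapping: for any $\sigma:X\to X$, the tuples $\sigma(S)$ and $\sigma(S')$ differ in \emph{at most} one coordinate (they might even be identical if $\sigma$ collapses the two distinct values). In the case they are identical the divergence is $0\le\eps$, and in the case they differ in exactly one coordinate the divergence is bounded by $\eps$ by the RDP guarantee of $\mcM$. Either way, $\mcM\circ\sigma$ satisfies the RDP condition at level $\eps$ on the pair $(S,S')$.

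Taking the supremum over all neighboring $(S,S')$ shows that $\mcM\circ\pi$ and $\mcM\circ\sigma$ are themselves $(2,\eps)$-RDP. Because the smallest $\eps$ witnessing RDP for $\mcM\circ\pi$ (resp.\ $\mcM\circ\sigma$) is no larger than the smallest $\eps$ witnessing RDP for $\mcM$, and the square root is monotone, we conclude $\mcP_{\mathrm{RDP}}(\mcM\circ\pi)\le\mcP_{\mathrm{RDP}}(\mcM)$ and $\mcP_{\mathrm{RDP}}(\mcM\circ\sigma)\le\mcP_{\mathrm{RDP}}(\mcM)$, as required.

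There is no real obstacle here; the only point worth being slightly careful about is the remapping case, where $\sigma(S)$ and $\sigma(S')$ can agree. This is handled trivially since the RDP definition imposes no constraint when $S=S'$ (the Rényi divergence is $0$), so the bound $\eps$ holds vacuously.
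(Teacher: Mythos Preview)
Your proposal is correct and follows exactly the same approach as the paper: both argue that reordering and remapping send neighboring datasets to datasets that differ in at most one coordinate, so the $(2,\eps)$-RDP guarantee of $\mcM$ transfers directly to $\mcM\circ\pi$ and $\mcM\circ\sigma$. The paper's proof is simply a terser version of yours, and your extra care about the case $\sigma(S)=\sigma(S')$ and the monotonicity of $\sqrt{\cdot}$ is fine but not essential.
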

\begin{proof}
We observe for any $S, S'$ that differ in one coordinate and permutation $\pi:[n] \to [n]$, that $\pi(S)$ and $\pi(S')$ still differ in one coordinate. Similarly, for any $\sigma:X \to X$, $\sigma(S)$ and $\sigma(S')$ differ in (at most) one coordinate. Therefore, if $\mcM$ is $(2,\epsilon)$-RDP, it is still $(2,\epsilon)$-RDP after preprocessing.
\end{proof}

To prove RDP respects \Cref{axiom:blalant-formal} we will need the following fact:
\begin{fact}\label{fact:RDP-adversary}
    Let $\mcM:X^n \to Y$, $\mcA:Y \to X^n$ be algorithms. Fix $S \in X^n, i \in [n]$ and $x \in X$. If $\mcM$ is $(\alpha, \epsilon)$-Rényi differentially private then we have:
    $$\Pr\bracket*{S_i \in \mcA(\mcM(S))} \leq \big(e^{\epsilon} \Pr\bracket*{S_i \in \mcA(\mcM(S_{x \to i})) }\big)^{\frac{\alpha-1}{\alpha}}$$
    where $S_{x \to i}$ denotes $S$ with $i$-th element set to $x$. 
\end{fact}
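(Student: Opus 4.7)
The plan is to reduce this directly to the first property of \Cref{fact:RDP} (the $(\alpha,\epsilon)$-RDP analog of the standard DP probability bound) via postprocessing, exactly mirroring the structure of the proof of \Cref{fact:DP-adversary}. The two key observations are: (i) $S$ and $S_{x \to i}$ differ in at most one coordinate (namely the $i$-th), so the RDP guarantee applies to the pair $(\mcM(S), \mcM(S_{x \to i}))$; and (ii) since $S$, $i$, and $x$ are fixed (not random) in the statement, we can fold the adversary $\mcA$ and the indicator ``$S_i \in \cdot$'' into a single postprocessing map.

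Concretely, I would define the (randomized) postprocessing function $f \colon Y \to \{0,1\}$ by $f(y) \coloneqq \mathbbm{1}[S_i \in \mcA(y)]$, treating the internal randomness of $\mcA$ as part of $f$'s own randomness. By the postprocessing property of Rényi DP (second bullet of \Cref{fact:RDP}), the composed algorithm $f \circ \mcM \colon X^n \to \{0,1\}$ is itself $(\alpha,\epsilon)$-RDP. Applying the first bullet of \Cref{fact:RDP} to $f \circ \mcM$ with the neighboring datasets $S$ and $S_{x \to i}$ and the subset $Y' = \{1\} \subseteq \{0,1\}$ yields
\begin{equation*}
    \Pr[f(\mcM(S)) = 1] \;\leq\; \bigl(e^{\epsilon}\,\Pr[f(\mcM(S_{x \to i})) = 1]\bigr)^{\frac{\alpha-1}{\alpha}}.
\end{equation*}
Rewriting $\Pr[f(\mcM(\cdot))=1] = \Pr[S_i \in \mcA(\mcM(\cdot))]$ on both sides gives the claimed inequality.

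The only conceptual subtlety is the handling of $\mcA$'s randomness, but since the probability in question is taken over the joint randomness of $\mcM$ and $\mcA$, this is already absorbed into the postprocessing: one can equivalently think of drawing $\mcA$'s coins first and then applying a deterministic postprocessing, and averaging. There is no real obstacle; this is a one-line invocation of postprocessing plus the RDP-to-probability bound, so the proof proposal is essentially a bookkeeping exercise rather than a technical challenge.
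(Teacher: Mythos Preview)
Your proposal is correct and follows essentially the same approach as the paper: observe that $S$ and $S_{x \to i}$ are neighbors, invoke postprocessing for RDP (second bullet of \Cref{fact:RDP}), and then apply the first bullet of \Cref{fact:RDP} to the composed algorithm. The only cosmetic difference is that the paper applies postprocessing to $\mcA \circ \mcM$ and then takes the event $Y' = \{T \in X^n : S_i \in T\}$, whereas you fold the indicator into the postprocessing map itself; these are equivalent.
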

\begin{proof}
    $S$ and $S_{x \to i}$ differ on at most $1$ coordinate. By postprocessing (second item of \Cref{fact:RDP}) meaning $\mcA \circ \mcM$ is also $(\alpha,\epsilon)$-RDP. The result then follows from the first item of \Cref{fact:RDP}. 
\end{proof}

\begin{lemma}
    $\mcP_{\mathrm{RDP}}$ respects \Cref{axiom:blalant-formal}. 
\end{lemma}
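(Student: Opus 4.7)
The plan is to closely follow the proof that $\mcP_{\mathrm{DP}}$ satisfies \Cref{axiom:blalant-formal}, replacing the DP group-privacy inequality with its R\'enyi counterpart \Cref{fact:RDP-adversary}. The only structural wrinkle is that at $\alpha=2$ the latter bound involves a square root, which I will absorb using a single application of Jensen's inequality.

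Fix any distribution $\mcD$ over $X$ with $\mcD(x) \leq 1/(100n^2)$ for all $x$, any adversary $\mcA : Y \to X^n$, and suppose $\mcM : X^n \to Y$ satisfies $\mcP_{\mathrm{RDP}}(\mcM) \leq 1$, i.e.\ $\mcM$ is $(2,\epsilon)$-RDP with $\epsilon \leq 1$. Since RDP is closed under post-processing (\Cref{fact:RDP}), $\mcA \circ \mcM$ is also $(2,\epsilon)$-RDP. The goal is to upper bound
\begin{equation*}
    \Ex_{\substack{\bS \iid \mcD^n \\ \bS' \leftarrow \mcA(\mcM(\bS))}}\bracket*{\sum_{x \in \bS} \Ind[x \in \bS']} = \sum_{i=1}^{n} \Ex_{\bS}\bracket*{\Pr[\bS_i \in \mcA(\mcM(\bS))]}
\end{equation*}
by something strictly less than $0.9n$.

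For each $i \in [n]$, \Cref{fact:RDP-adversary} applied to $\mcA \circ \mcM$ gives, for every fixed $S \in X^n$ and $x \in X$,
\begin{equation*}
    \Pr[S_i \in \mcA(\mcM(S))] \leq \sqrt{e^{\epsilon} \cdot \Pr[S_i \in \mcA(\mcM(S_{x \to i}))]}.
\end{equation*}
I then take expectation over both $\bS \sim \mcD^n$ and an independent $\bx \sim \mcD$; the left side is constant in $\bx$, and on the right I pull the expectation inside the square root by Jensen (concavity of $\sqrt{\cdot}$). Exactly as in the DP proof, the joint distribution of $(\bS_i, \bS_{\bx \to i})$ equals that of an independent pair $(\by, \bT)$ with $\by \sim \mcD$ and $\bT \sim \mcD^n$, since $\bx$ overwrites $\bS_i$. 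This yields
\begin{equation*}
    \Ex_{\bS}\bracket*{\Pr[\bS_i \in \mcA(\mcM(\bS))]} \leq \sqrt{e^{\epsilon} \cdot \Prx_{\substack{\bT \sim \mcD^n \\ \by \sim \mcD}}[\by \in \mcA(\mcM(\bT))]}.
\end{equation*}

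To finish, I bound the inner probability using the entropy assumption: since $\mcA(\mcM(\bT))$ is an $n$-tuple (hence contains at most $n$ distinct elements), for any realization $\Prx_{\by \sim \mcD}[\by \in \mcA(\mcM(\bT))] \leq n \cdot 1/(100n^2) = 1/(100n)$. Summing the per-$i$ bound gives $\sum_i \Ex[\Pr[\bS_i \in \bS']] \leq n\sqrt{e^{\epsilon}/(100n)} = \sqrt{n e^{\epsilon}/100}$, which for $\epsilon \leq 1$ is strictly less than $0.9n$ whenever $n \geq 1$. Therefore $\mcM$ is not blatantly non-private, establishing \Cref{axiom:blalant-formal}. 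I do not anticipate any real obstacle: the only non-routine step is the Jensen manipulation, and the square-root weakening relative to the DP case is comfortably absorbed by the $1/(100n^2)$ entropy bound on $\mcD$.
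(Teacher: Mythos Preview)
Your proof is correct and follows essentially the same approach as the paper's own argument: apply \Cref{fact:RDP-adversary} at $\alpha=2$, average over $\bS$ and an independent $\bx\sim\mcD$, use the exchangeability of $(\bS_i,\bS_{\bx\to i})$ with an independent pair $(\by,\bT)$, and bound $\Pr_{\by}[\by\in S']\le 1/(100n)$ to get the final estimate $e^{\epsilon/2}\sqrt{n}/10<0.9n$. The only cosmetic difference is that you make the Jensen step (pulling the expectation inside the square root) explicit, whereas the paper leaves it implicit.
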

\begin{proof}
Let $\mcD$ an arbitrary distribution over $X$ with $||\mcD||_\infty \leq 1/100n^2$.
    Let $\mcM : X^n \to Y$ be an $(2,\epsilon)$-Rényi differentially private algorithm and let $\mcA:Y \to X^n$.
    We will show that:
    $$\Ex_{\substack{\bS \iid \mcD^n \\ \bS' \leftarrow \mcA(\mcM(\bS))}}\bracket*{\sum_{x \in \bS}\mathbbm{1}[x \in \bS']} \leq \frac{e^{\epsilon/2}\sqrt{n}}{10}.$$

In particular, if $\mcP_{\mathrm{RDP}}(\mcM) \leq 1$, which implies $\mcM$ is $(2, 1)$-RDP, we have $\frac{e^{\epsilon/2}\sqrt{n}}{10} \leq 0.2n$ proving the lemma. We want to bound 
    $$ \Ex_{\substack{\bS \iid \mcD^n \\ \bS' \leftarrow \mcA(\mcM(\bS))}}\bracket*{\sum_{x \in \bS}\mathbbm{1}[x \in \bS']} = \sum_{i=1}^n \Prx_{\substack{\bS \iid \mcD^n \\ \bS' \leftarrow \mcA(\mcM(\bS))}}[\bS_i \in \bS'] $$

   Since $\mcM$ is $(2, \epsilon)$-RDP, for any $i \in [n]$ and $x \in X$ we have, by \Cref{fact:RDP-adversary}, that:

    $$\Prx_{\substack{\bS \iid \mcD^n }}\bracket*{\bS_i \in \mcA\left(\mcM(\bS)\right)}  \leq e^{\epsilon/2} \left(\Prx_{\substack{\bS \iid \mcD^n }}\bracket*{\bS_i \in \mcA\left(\mcM(\bS_{x \to i})\right)}\right)^{1/2}.$$
     Where $S_{x \to i}$ is the set obtained by setting the $i$-th element of $S$ to $x$. This implies that: 
     \begin{align*}
         \Prx_{\substack{\bS \iid \mcD^n \\ \bS' \leftarrow \mcA(\mcM(\bS))}}[\bS_i \in \bS'] &= \Prx_{\substack{\bS \iid \mcD^n }}[\bS_i \in \mcA(\mcM(\bS))] \\
         &\leq e^{\epsilon/2} \left(\Prx_{\substack{\bS \iid \mcD^n\\ \bx \sim \mcD }}[\bS_i \in \mcA(\mcM(\bS_{\bx \to i}))]\right)^{1/2}. \\
         &= e^{\epsilon/2} \left(\Prx_{\substack{\bS \iid \mcD^n\\ \bx \sim \mcD }}[\bx \in \mcA(\mcM(\bS))]\right)^{1/2}.\\
     \end{align*}
    Where the last line follows from the symmetry of $\bS_i$ and $\bx$. Hence, we have that: 

     \begin{align*}
         \Ex_{\substack{\bS \iid \mcD^n \\ \bS' \leftarrow \mcA(\mcM(\bS))}}\bracket*{\sum_{x \in \bS}\mathbbm{1}[x \in \bS']} &= \sum_{i=1}^n e^{\epsilon/2} \left(\Prx_{\substack{\bS \iid \mcD^n\\ \bx \sim \mcD }}[\bx \in \mcA(\mcM(\bS))] \right)^{1/2}  \\
         &\leq n \cdot e^{\epsilon/2} \left(\sup_{S \in X^n}\bigg(\Pr_{\bx \sim \mcD}[\bx \in S]\bigg)\right)^{1/2}. \\
     \end{align*} 
    
Since $\linf{D} \leq \frac{1}{100n^2}$, we have that for any $S \in X^n$, $\Pr_{\bx \sim \mcD}[\bx \in S] \leq \frac{n}{100n^2}$. From which we can conclude that for all $n \in \mathbb{N}$, \Cref{axiom:blalant-formal} holds since:   
\begin{equation*}
    \Ex_{\substack{\bS \iid \mcD^n \\ \bS' \leftarrow \mcA(\mcM(\bS))}}\bracket*{\sum_{x \in \bS}\mathbbm{1}[x \in \bS']} \leq \frac{e^{\epsilon/2}\sqrt{n}}{10} \leq 0.2n. \qedhere
\end{equation*}
\end{proof}

\begin{lemma}
     $\mcP_{\mathrm{RDP}}$ respects \Cref{axiom:strong-composition-formal} with composition constant $c=1/2$.
\end{lemma}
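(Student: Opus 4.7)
The plan is to simply apply non-adaptive composition of Rényi DP (the third item of \Cref{fact:RDP}) and translate it through the definition of $\mcP_{\mathrm{RDP}}$. Suppose $\mcM^1, \ldots, \mcM^\ell : X^n \to Y$ all satisfy $\mcP_{\mathrm{RDP}}(\mcM^i) \leq \eps$. By definition of $\mcP_{\mathrm{RDP}}$, each $\mcM^i$ is $(2, \eps^2)$-RDP. I would then invoke non-adaptive composition of Rényi DP iteratively (or in one shot), which states that the composition of two $(\alpha, \eps_0)$-RDP mechanisms is $(\alpha, 2\eps_0)$-RDP. Hence the composed algorithm $\mcM' : X^n \to Y^\ell$ which outputs $(\mcM^1(S), \ldots, \mcM^\ell(S))$ is $(2, \ell \cdot \eps^2)$-RDP.

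It follows that
\begin{equation*}
    \mcP_{\mathrm{RDP}}(\mcM') \leq \sqrt{\ell \cdot \eps^2} = \eps \cdot \sqrt{\ell}.
\end{equation*}
By the definition of $c$-strong composition with $c = 1/2$, we have $\eps' = \tilde{O}(\eps \cdot \sqrt{\ell})$, and we only need to verify $\mcP_{\mathrm{RDP}}(\mcM')\leq 1$ under the assumption $\eps' \leq 1$. Since $\eps\sqrt{\ell} \leq \eps' \leq 1$ (up to the constants and $\polylog(n)$ factors hidden in $\tilde{O}(\cdot)$, which only make the conclusion easier), we conclude $\mcP_{\mathrm{RDP}}(\mcM') \leq 1$, i.e. $\mcM'$ is $\mcP_{\mathrm{RDP}}$-private.

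There is no real obstacle here: the whole argument is a one-line application of Rényi-DP composition combined with the square-root rescaling built into $\mcP_{\mathrm{RDP}}$. This is exactly the reason Rényi DP fits the axioms more cleanly than $(\eps,\delta)$-DP --- unlike the $\mcP_{\mathrm{DP}}$ case, there is no $\delta$ to track, so no rescaling of the form $\eps^{4/5}$ is needed and the composition constant $c = 1/2$ comes out directly from the fact that the Rényi parameter composes linearly while $\mcP_{\mathrm{RDP}}$ is its square root.
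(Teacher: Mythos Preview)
Your proposal is correct and takes essentially the same approach as the paper: both apply non-adaptive Rényi composition (\Cref{fact:RDP}) to conclude the composed mechanism is $(2,\ell\eps^2)$-RDP, then use the square-root rescaling in $\mcP_{\mathrm{RDP}}$ to get $\mcP_{\mathrm{RDP}}(\mcM')\le \eps\sqrt{\ell}$. Your version is slightly more explicit in deriving the general bound $\mcP_{\mathrm{RDP}}(\mcM')\le\eps\sqrt{\ell}$ before specializing to the threshold $\eps'\le 1$, whereas the paper jumps directly to the boundary case $\eps=\ell^{-1/2}$, but the content is identical.
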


\begin{proof} 
We define $\epsilon':= \ell^{1/2} \epsilon$. Say we have algorithms ${\calM}^1, \ldots, {\calM}^\ell$ each taking $n$ samples and with $\ell^{-1/2} \geq \mcP_{\mathrm{RDP}}(\mcM ^i)$. By definition, this implies each $\mcM^i$ is $(2, \frac{1}{\ell})$-RDP. Then, by \Cref{fact:RDP} (Non-adaptive composition), we have that the algorithm $\mcM'=(\mcM^1, \ldots, \mcM^\ell)$ is $(2, 1)$-Renyi differentially private. Thus, $\mcP_{\mathrm{RDP}}(\mcM') \leq 1$, meaning $\mcM'$ is $\mcP_{\mathrm{RDP}}$-private.
\end{proof}

Before proving $\mcP_{\mathrm{RDP}}$ fits \Cref{axiom:amplification-formal}, we will need the following result about amplification for Rényi-differential privacy:\\

\medskip\noindent{\bf Theorem} (Rényi Amplification by subsampling, \cite{pmlr-v89-wang19b})
  {\it Given an algorithm} $\mcM:X^n \to Y$.{\it Let }$m \geq n$ {\it and consider the randomized algorithm} $\mcM' : X^m \to Y$ {\it which on input }$S \in X^m$ {\it subsamples without replacement }$n$ {\it elements from $S$ and runs }$\mcM$ {\it the subsampled dataset.  If }$\mcM$ {\it is }$(2, \epsilon)${\it-RDP, then }$\mcM'${\it is }$(2,\epsilon')$-{\it RDP where} 

  $$\epsilon' \leq \log \left(1+4\left( \frac{n}{m} \right)^2 \cdot \left(e^{\epsilon}-1\right)\right).$$

\medskip

We can now prove the following:
\begin{lemma}\label{lem:}
    $\mcP_{\mathrm{RDP}}$ respects \Cref{axiom:amplification-formal} with $p(n,\frac{1}{\beta})=1$. For any choice of $\beta$, the resulting algorithm $\mcM'$ is $(\beta,\beta)$-equivalent to $\mcM$.
\end{lemma}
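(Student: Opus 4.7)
The plan is to define $\mcM' : X^m \to Y$, where $m = kn$, as the natural subsampling algorithm: on input $S \in X^m$, $\mcM'$ draws a uniformly random size-$n$ subsample $\bT$ of $S$ (without replacement) and outputs $\mcM(\bT)$. With this construction, both properties required by \Cref{axiom:amplification-formal} fall out from two facts: (i) the R\'enyi amplification-by-subsampling theorem quoted just above, and (ii) the observation that subsampling without replacement from an i.i.d.\ sample is itself i.i.d.

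\textbf{Scaling of the privacy parameter.} Since $\mcM$ is $\mcP_{\mathrm{RDP}}$-private, by definition it is $(2,\eps)$-RDP for some $\eps \leq 1$. Applying the R\'enyi amplification theorem with parameters $n$ and $m=kn$, we obtain that $\mcM'$ is $(2,\eps')$-RDP where
\[
\eps' \;\leq\; \log\!\paren*{1 + 4\paren*{\tfrac{n}{m}}^2(e^{\eps}-1)} \;\leq\; \log\!\paren*{1 + \tfrac{4(e-1)}{k^2}} \;\leq\; \tfrac{4(e-1)}{k^2},
\]
using $\log(1+x) \leq x$ and $\eps \leq 1$. Consequently $\mcP_{\mathrm{RDP}}(\mcM') = \sqrt{\eps'} \leq O(1/k)$, which is the desired privacy bound. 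This part requires essentially no work beyond plugging into the quoted theorem, so $p(n,1/\beta)$ can be taken to be the constant function $1$.

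\textbf{Equivalence.} Fix any statistical task $\mcT$ that $\mcM$ solves with failure probability $\beta$, and any distribution $\mcD$. If $\bS \sim \mcD^m$ and $\bT$ is a uniformly random size-$n$ subset of $\bS$ drawn without replacement, then $\bT \sim \mcD^n$, since each coordinate of $\bT$ is an independent draw from $\mcD$ (exchangeability of $\bS$ plus no element is re-used). Hence
\[
\Prx_{\bS \sim \mcD^m}[\mcM'(\bS) \in \mcT(\mcD)] \;=\; \Prx_{\bT \sim \mcD^n}[\mcM(\bT) \in \mcT(\mcD)] \;\geq\; 1-\beta,
\]
which shows $\mcM'$ is $(\beta,\beta)$-equivalent to $\mcM$ for every $\beta$.

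\textbf{Main obstacle.} There is no real obstacle: the quoted R\'enyi subsampling theorem does all the heavy lifting, and the equivalence is a one-line computation exploiting that subsampling an i.i.d.\ sample is i.i.d. The only care needed is in the algebra verifying that applying the theorem with $\eps \leq 1$ and $m/n = k$ yields $\eps' = O(1/k^2)$, hence $\mcP_{\mathrm{RDP}}(\mcM') = O(1/k)$ as required.
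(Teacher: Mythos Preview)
Your proposal is correct and essentially identical to the paper's proof: both define $\mcM'$ by subsampling $n$ points without replacement from the size-$m=kn$ input, invoke the quoted R\'enyi amplification-by-subsampling theorem to get $\eps' = O(1/k^2)$ and hence $\mcP_{\mathrm{RDP}}(\mcM') = O(1/k)$, and then observe that subsampling an i.i.d.\ sample yields an i.i.d.\ sample to conclude $(\beta,\beta)$-equivalence. The only cosmetic difference is the constants carried through the algebra.
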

\begin{proof}
    Let $\mcM:X^n \to Y$ be $\mcP_{\mathrm{RDP}}$-private. We have that $\mcM$ is $(2,1)$-RDP. Let such that Let $k \geq n$, by applying Rényi amplification by subsampling with $m=nk$, we get an algorithm $\mcM':X^{m} \to Y$ which is $(2, \epsilon)$-RDP where 

    $$\epsilon\leq \log \left(1+4\left( \frac{n}{m} \right)^2 \cdot \left(e-1\right)\right) \leq \log\left(1+8\left(\frac{n}{m}\right)^2\right) \leq 16(n/m)^2=\frac{16}{k^2}.$$

    As such we have that $\mcP_{\mathrm{RDP}}(\mcM') \leq \sqrt{\frac{16}{k^2}}=\frac{4}{k}.$

    Furthermore, $\mcM'$ simply runs $\mcM$ on a subset of its input which is subsampled without replacement. So we clearly have that the distribution $\mcM'(\bS')$ where $\bS' \iid \mcD^m$  and $\mcM(\bS)$ $\bS \iid \mcD^n$ are identical for any distribution $\mcD$ over $X$. So $\mcM'$ and $\mcM$ are $(\beta,\beta)$ equivalent for every choice of $\beta$. 
\end{proof}

\subsection{Tightness of \Cref{thm:axioms-to-RDP}}

\Cref{thm:axioms-to-RDP} converts an algorithm $\mcM:X^n \to Y$ that is $\mcP$-private into a new an equivalent RDP algorithm with a new sample size $m'$ depends on $\log \log |Y|$. Here, we show that $\log \log |Y|$ term is necessary.
\begin{lemma}[Separation between $(\eps,\delta)$-DP and RDP]
    \label{lem:sep-DP-RDP}
    For every output domain $Y$, there is a statistical task $\mcT$ satisfying
    \begin{enumerate}
        \item For every $\eps, \delta$ and $n \coloneqq \poly(1/\eps, \log(1/\delta), \log(1/\beta))$, there is an $(\eps,\delta)$-DP algorithm $\mcM:X^n \to Y$ solving $\mcT$ with failure probability $0$.
        \item Any $(\alpha=2,\eps=1)$-RDP algorithm $\mcM':X^m \to Y$ that solves $\mcT$ with failure probability $1/2$ must use $m = \Omega(\log \log |Y|)$ samples.
    \end{enumerate}
\end{lemma}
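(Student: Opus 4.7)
The plan is to use the ``point-mass identification'' task: set $X = Y$, let the family of input distributions be $\{\mcD_y\}_{y\in Y}$ where $\mcD_y$ is the point mass on $y$, and declare the unique valid response on $\mcD_y$ to be $y$ itself. I will interpret the statement as allowing an arbitrarily small failure probability $\beta$ (the sample complexity already lists $\log(1/\beta)$), since strict zero-error is incompatible with any nontrivial $(\eps,\delta)$-DP primitive.

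For the upper bound, I would use a propose-test-release style selection. On input $S$, compute the empirical mode $\hat y$ and the gap $\hat g$ between the top two frequencies; release $\hat y$ if $\hat g + \Lap(2/\eps) \geq T$ for $T = \Theta(\log(1/\delta)/\eps)$, and otherwise output a canonical default. Since $\hat g$ has sensitivity $2$ the noisy test is $(\eps,0)$-DP, and except with probability $\leq \delta$ the test passes only when the true gap exceeds $2$, in which case the empirical mode is stable under a single-sample change --- so the full algorithm is $(\eps,\delta)$-DP. On $\mcD_y$ the true gap equals $n$, so for $n = O((\log(1/\delta)+\log(1/\beta))/\eps) = \poly(1/\eps,\log(1/\delta),\log(1/\beta))$ the output equals $y$ with probability $\geq 1-\beta$, and critically there is no $|Y|$ dependence.

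For the lower bound, suppose $\mcM:X^m\to Y$ is $(2,1)$-RDP and that $p_y := \Pr[\mcM(y^m)=y] \geq 1/2$ for every $y$, where $y^m = (y,\ldots,y)$. Cauchy--Schwarz applied to the RDP definition yields, for every pair of neighbors $S,S'$ and every $A\subseteq Y$,
\[
\Pr[\mcM(S)\in A]^2 \;=\; \Bigl(\Ex_{\by\sim\mcM(S')}\Bigl[\tfrac{\Pr[\mcM(S)=\by]}{\Pr[\mcM(S')=\by]}\mathbbm{1}[\by\in A]\Bigr]\Bigr)^2 \;\leq\; e\cdot \Pr[\mcM(S')\in A].
\]
Fixing any $y'\neq y$, form the chain $S_0 = (y')^m, S_1,\ldots, S_m = y^m$ where $S_i$ has $y$ in its first $i$ coordinates and $y'$ elsewhere; consecutive $S_i$ are neighbors. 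Setting $A=\{y\}$ and $q_i = \Pr[\mcM(S_i)=y]$, the inequality gives $q_{i-1} \geq q_i^2/e$, so $u_i := \ln(1/q_{m-i})$ satisfies $u_0 \leq \ln 2$ and $u_i \leq 2u_{i-1}+1$. Unrolling, $u_m \leq (\ln 2 + 1)\cdot 2^m$, hence $\Pr[\mcM((y')^m)=y] \geq \exp(-C\cdot 2^m)$ for $C := \ln 2 + 1$. Summing over $y\in Y$ against the trivial bound $\sum_y \Pr[\mcM((y')^m)=y]\leq 1$ forces $|Y|\cdot e^{-C\cdot 2^m} \leq 1$, i.e.\ $m = \Omega(\log\log|Y|)$.

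I expect the main obstacle to be justifying that the upper-bound sample complexity is genuinely independent of $|Y|$: the exponential-mechanism style DP-Selection subroutine cited earlier (\Cref{thm:select-DP}) carries a $\log|Y|$ penalty, and the whole point is to avoid that by exploiting $\delta>0$ through propose-test-release. A secondary subtlety is the constant in the iterated Cauchy--Schwarz recursion; the substitution $u_i = \ln(1/q_i)$ linearizes it cleanly, so this is mechanical once set up. A final minor point is reconciling the ``failure probability $0$'' phrasing in the statement with the inherent randomness of approximate-DP primitives; under the natural reading ``failure probability at most $\beta$'' the plan above delivers the claimed sample complexity.
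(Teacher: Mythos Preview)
Your proposal is correct and follows essentially the same route as the paper: the same point-mass identification task on $X=Y$, and the same lower bound via iterated Cauchy--Schwarz along a chain of $m$ neighbors (the paper packages this as a separate weak-group-privacy claim and gets the identical constant $(e^{-1}/2)^{2^m}$).

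Two of your stated concerns are unfounded, however. First, \Cref{thm:select-DP} as stated in the paper already has sample complexity $O(\log(1/\delta)/\eps)$ with \emph{no} $|Y|$ dependence --- it is the stability-based selection of \cite{10.1145/1526709.1526733,10.1145/2840728.2840747,BDRS18}, not the exponential mechanism --- so the paper simply invokes it for the upper bound, and your hand-rolled propose-test-release is redundant (though correct). Second, that theorem also guarantees probability-$1$ correctness whenever the mode leads by $0.2n$; on a point-mass input the gap is $n$, so failure probability exactly $0$ is achieved and your reinterpretation to failure probability $\beta$ is unnecessary. (Concretely, $(\eps,\delta)$-DP permits bounded-support noise such as truncated Laplace, which is what makes probability-$1$ guarantees possible; your claim that ``strict zero-error is incompatible with any nontrivial $(\eps,\delta)$-DP primitive'' is false.)
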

Since $(\eps,\delta)$-DP fits our axioms (\Cref{thm:DP-fits-axioms-intro}), \Cref{lem:sep-DP-RDP} implies that the $\log \log |Y|$ term cannot be removed from \Cref{thm:axioms-to-RDP}. This separation is based on stable selection and is similar to the lower bound from \cite{BDRS18}, though we specialize to the $\alpha = 2$ case. \Cref{lem:sep-DP-RDP} follows from the following claim.
\begin{claim}[Weak group privacy for RDP]
    \label{claim:group-privacy-RDP}
    Let $\mcM:X^m \to Y$ be any $(\alpha=2,\eps)$-RDP algorithm. Then, for any $S, S'$ differing in $k$ coordinates and $y\in Y$,
    \begin{equation*}
        \Pr[\mcM(S') = y] \geq (\Pr[\mcM(S) = y])^{2^k} \cdot e^{-(2^k-1)\eps}.
    \end{equation*}
\end{claim}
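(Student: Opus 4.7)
The plan is to prove \Cref{claim:group-privacy-RDP} by induction on $k$, where the base case $k=1$ extracts the needed inequality directly from the definition of Rényi divergence, and the inductive step chains the base case through an intermediate dataset.

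For the base case, let $S, S'$ differ on exactly one coordinate, and write $p(\by) \coloneqq \Pr[\mcM(S) = \by]$ and $q(\by) \coloneqq \Pr[\mcM(S') = \by]$. The $(2,\eps)$-RDP hypothesis gives
\begin{equation*}
    \sum_{\by \in Y} q(\by) \cdot \paren*{\frac{p(\by)}{q(\by)}}^2 = \sum_{\by \in Y} \frac{p(\by)^2}{q(\by)} \leq e^{\eps}.
\end{equation*}
Since every term on the left is nonnegative, a single term satisfies $p(y)^2/q(y) \leq e^{\eps}$ for any fixed $y$, which rearranges to $q(y) \geq p(y)^2 \cdot e^{-\eps}$. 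This is exactly the $k=1$ case, since $2^1 = 2$ and $2^1 - 1 = 1$.

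For the inductive step, assume the statement holds for $k-1$ and let $S, S'$ differ in $k$ coordinates. Choose an intermediate $S'' \in X^m$ that agrees with $S'$ on one of those $k$ differing coordinates and with $S$ on the remaining $k-1$, so that $\dist(S, S'') = k-1$ and $\dist(S'', S') = 1$. Applying the inductive hypothesis to $(S, S'')$ yields
\begin{equation*}
    \Pr[\mcM(S'') = y] \geq \Pr[\mcM(S) = y]^{2^{k-1}} \cdot e^{-(2^{k-1}-1)\eps},
\end{equation*}
and applying the base case to $(S'', S')$ yields
\begin{equation*}
    \Pr[\mcM(S') = y] \geq \Pr[\mcM(S'') = y]^{2} \cdot e^{-\eps}.
\end{equation*}
Substituting the first bound into the second produces
\begin{equation*}
    \Pr[\mcM(S') = y] \geq \Pr[\mcM(S) = y]^{2^{k}} \cdot e^{-(2(2^{k-1}-1) + 1)\eps} = \Pr[\mcM(S) = y]^{2^k} \cdot e^{-(2^k - 1)\eps},
\end{equation*}
completing the induction.

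I do not anticipate any genuine obstacle: the main thing to be careful about is to note that a single term of a sum of nonnegative quantities is upper bounded by the whole sum, which is what unlocks the base case from the expectation form of Rényi divergence. The rest is arithmetic bookkeeping on the exponent $2(2^{k-1}-1)+1 = 2^k - 1$.
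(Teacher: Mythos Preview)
Your proof is correct and follows essentially the same inductive approach as the paper's. The only cosmetic difference is that the paper starts the induction at $k=0$ and cites the first item of \Cref{fact:RDP} for the single-step inequality, whereas you start at $k=1$ and derive that inequality directly from the Rényi divergence definition by bounding a single nonnegative summand by the whole sum; both are equivalent and the exponent bookkeeping is identical.
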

\begin{proof}
    We prove this by induction on $k$. If $k = 0$, we have the $ \Pr[\mcM(S') = y] \geq  \Pr[\mcM(S) = y]$ which clearly holds because $S$ and $S'$ must be the same. 
    
    For $k \geq 1$, let $S^{\mathrm{(mid)}}$ be a sample that differs from $S'$ in $1$ coordinate and from $S$ in $k-1$ coordinates. Then, by rearranging the first item of \Cref{fact:RDP},
    \begin{equation*}
        \Pr[\mcM(S')=y] \geq e^{-\eps} \cdot \Pr[\mcM(S^{(mid)})=y]^2.
    \end{equation*}
    We apply the inductive hypothesis and obtain
    \begin{align*}
        \Pr[\mcM(S')=y] &\geq e^{-\eps} \cdot \Pr[\mcM(S^{(mid)})=y]^2 \\
        &\geq e^{-\eps} \cdot \paren*{(\Pr[\mcM(S) = y])^{2^{k-1}} \cdot e^{-(2^{k-1}-1)\eps}}^2 \\
        &= (\Pr[\mcM(S) = y])^{2^k} \cdot e^{-(2^k-1)\eps}. \qedhere
    \end{align*}
\end{proof}

\begin{proof}[Proof of \Cref{lem:sep-DP-RDP}]
    Fix any output domain $Y$ and let $X = Y$. Consider the statistical task $\mcT$ that is defined for any distribution $\mcD$ where there is some $x^\star$ with all the mass (i.e. $\mcD(x)  = 1$). For such $\mcD$, the only valid response of the task $\mcD$ is this $x^{\star}$. Put differently, for an algorithm $\mcM:X^n \to Y$ to solve $\mcT$ with failure probability $\beta$ is equivalent to $\mcM(\set{x,x,\ldots, x}) = x$ with probability at least $1-\beta$ for all choices of $x$.

    The first requirement of \Cref{lem:sep-DP-RDP} is satisfied by the selection algorithm of \Cref{thm:select-DP}. Therefore, all that remains is to prove a lower bound against RDP algorithms. Fix any $(\alpha=2,\eps=1)$-RDP algorithm $\mcM':X^m \to Y$ that solves the task $\mcT$. Then, for each $y \in Y$, we have that $\mcM'(\set{y,y,\ldots, y}) = y$ with probability at least $1/2$. Then, since any set $S \in X^m$ differs from the set $\set{y,y,\ldots, y}$ in at most $m$ coordinates, we can apply \Cref{claim:group-privacy-RDP} to obtain for any set $S$ and $y \in Y$,
    \begin{equation*}
        \Pr[\mcM(S) = y] \geq (e^{-1}/2)^{2^m}.
    \end{equation*}
    Since it must be the case that $\sum_{y \in Y}\Pr[\mcM(S) = y] = 1$, we have that,
    \begin{equation*}
        |Y| \cdot  (e^{-1}/2)^{2^m} \leq 1.
    \end{equation*}
    This implies that $m \geq \Omega(\log \log |Y|)$, as desired.
\end{proof}

\end{document}